\documentclass[manuscript,screen]{acmart}

\AtBeginDocument{%
  }

\setcopyright{acmlicensed}
\copyrightyear{2018}
\acmYear{2018}
\acmDOI{XXXXXXX.XXXXXXX}

\acmConference[Conference acronym 'XX]{Make sure to enter the correct
  conference title from your rights confirmation emai}{June 03--05,
  2018}{Woodstock, NY}
\acmISBN{978-1-4503-XXXX-X/18/06}

\usepackage{graphicx} 
\usepackage{amsmath} 

\usepackage{proof}
\usepackage{booktabs} 
\usepackage{amsthm} 
\usepackage{enumerate} 

\usepackage{hyperref}

\newtheorem{definition}{Definition}[section]
\newtheorem{theorem}{Theorem}[section]

\newtheorem{lemma}{Lemma}[section]
\newtheorem{prop}{Proposition}[section]

\newcommand{\rNum}[1]{\expandafter{\romannumeral #1\relax}}
\newcommand{\rNUM}[1]{\uppercase\expandafter{\romannumeral #1\relax}}

\newcommand{\bff}[1]{\mathbf{#1}}
\newcommand{\mbb}[1]{\mathbb{#1}}
\newcommand{\mcl}[1]{\mathcal{#1}}
\newcommand{\spc}[1]{\begin{spacing}{#1}}
\newcommand{\spce}{\end{spacing}}

\newcommand{\la}[0]{\langle}
\newcommand{\ra}[0]{\rangle}
\newcommand{\mfr}[1]{\mathfrak{#1}}

\newcommand{\dddef}[0]{=_{df}}

    \newcommand{\Conf}[0]{\bff{L}}

    \newcommand{\true}[0]{\mathit{true}}
    \newcommand{\false}[0]{\mathit{false}}

    \newcommand{\cnt}[0]{\mathit{C}}

    \newcommand{\suf}[0]{\preceq_s}
    \newcommand{\psuf}[0]{\prec_s}

    \newcommand{\mult}[0]{\preceq_m}
    \newcommand{\pmult}[0]{\prec_m}
    \newcommand{\multr}[0]{\succeq_m}
    \newcommand{\pmultr}[0]{\succ_m}
    
    \newcommand{\PDL}[0]{\mbox{PDL}}
    \DeclareMathOperator{\seq}{;}
    \DeclareMathOperator{\cho}{\cup}

    \newcommand{\Sub}[0]{{\textit{sub}}}
    \newcommand{\lup}[0]{*}
    
    \newcommand{\Var}[0]{\mathit{Var}}

    \newcommand{\lm}[0]{\mfr{m}}
    
    \newcommand{\EX}[0]{\textit{MCT}}
    
    \newcommand{\mex}[0]{\textit{MS}}
    
    \newcommand{\cat}[0]{\diamond}
    \newcommand{\Wd}[0]{\mcl{S}}
    \newcommand{\I}[0]{\mcl{I}}
    \newcommand{\LM}[0]{\bff{M}}
    \newcommand{\LVar}[0]{\mathit{LVar}}
    \newcommand{\GiiiPPL}[0]{\mbox{G3PPL}}
    \newcommand{\GiiiPPLcyc}[0]{\mbox{G3PPL}}
    \newcommand{\GiiiFOPL}[0]{\mbox{G3FOPL}}
    \newcommand{\GiiiFOPLcyc}[0]{\mbox{G3FOPL}}
    \DeclareMathOperator{\ofst}{\mathbf{f}}
    \DeclareMathOperator{\osuf}{\mathbf{suf}}
    \DeclareMathOperator{\op}{\mathit{op}}
    \newcommand{\str}[0]{str}
    
    \newcommand{\Arith}[0]{\mbox{$\mfr{A}$}}
    \newcommand{\fo}[0]{{\textit{fo}}}
    \newcommand{\WVar}[0]{\textit{WVar}}
    \newcommand{\TVar}[0]{\textit{TVar}}
    \newcommand{\Nec}[0]{\textit{nec}}
    \newcommand{\Upd}[0]{\textbf{U}}
    \newcommand{\wm}[0]{\mfr{w}}
    \newcommand{\CET}[0]{\textit{CT}}

    \newcommand{\upd}[0]{\mcl{U}}
    \DeclareMathOperator{\R}{R}
    
    \DeclareMathOperator{\nxt}{\mathbf{n}}
    \DeclareMathOperator{\unt}{U}
    
    \DeclareMathOperator{\ev}{\Diamond}
    \DeclareMathOperator{\last}{\mathbf{last}}
    \newcommand{\tr}[0]{\mfr{T}}
   \newcommand{\PPL}[0]{\mbox{PPL}}
   
    \newcommand{\FOPL}[0]{\mbox{FOPL}}
    \DeclareMathOperator{\bnxt}{\bar{\nxt}}
    \DeclareMathOperator{\fin}{\mathbf{fin}}
    \newcommand{\FODL}[0]{\mbox{FODL}}
    \newcommand{\DLT}[0]{\mbox{DLT}}

    \newcommand{\diffTL}[0]{\mbox{DTL$_d$}}
    \newcommand{\diffTLii}[0]{\mbox{DTL$^2_d$}}
\newcommand{\MOD}{\textbf{[MOD]}}
\newcommand{\EMOD}{\textbf{[EMOD]}}
\newcommand{\A}[1]{\mbox{ \textbf{#1} }}
\newcommand{\MO}{\A{MO}}
\newcommand{\EMO}{\A{EMO}}

\begin{document}

\title{Labelled Process Logic}

\author{Yuanrui Zhang}
\email{yuanruizhang@nuaa.edu.cn}
\orcid{0000-0002-0685-6905}
\affiliation{%
  \institution{College of Software, Nanjing University of Aeronautics and Astronautics}
  \city{Nanjing}
  \state{Jiangsu}
  \country{China}
}

\renewcommand{\shortauthors}{Y. Zhang}

\begin{abstract}
This paper develops a cyclic labelled proof-theoretic framework for process logic --- an extension of dynamic logic in which formulas specify properties of execution traces rather than only final states.  The main difficulty is that first-order process logic must reason about concrete computations while preserving temporal information along regular-program traces. Existing compositional calculi cover important fragments, but do not provide a complete treatment of full first-order process logic over regular programs.  We address this difficulty by enriching process-logic formulas with labels that explicitly record trace and update information during derivations.  Based on this construction, we define cyclic labelled proof systems for propositional and first-order process logic, respectively denoted by \GiiiPPLcyc\ and \GiiiFOPLcyc.  
We prove the soundness by using the cyclic conditions to obtain an infinite descent in a well-founded multiset ordering, and prove the completeness by showing that the labelled systems can derive the established proof rules of process logic and first-order dynamic logic.  The result is a uniform framework for process logic in which for the first time,  trace-based program properties and first-order computations can be handled within the same proof structure. 

\ifx
\MO We prove the soundness by showing that every cyclic derivation satisfying the cyclic condition induces an infinite descent in a well-founded multiset ordering, and prove the completeness separately for each system: the completeness for \GiiiPPLcyc\ is established by deriving every axiom and rule of the Hilbert-style proof system for \PPL; completeness of \GiiiFOPLcyc\ follows by adapting the classical arithmetical completeness argument for \FODL\ to the labelled trace-based setting.  The result is a uniform framework in which both trace-based program properties and first-order computations can be verified within the same labelled cyclic proof structure. \EMO
\fi
\end{abstract}

\begin{CCSXML}
<ccs2012>
   <concept>
       <concept_id>10003752.10003790.10003793</concept_id>
       <concept_desc>Theory of computation~Modal and temporal logics</concept_desc>
       <concept_significance>500</concept_significance>
       </concept>
   <concept>
       <concept_id>10003752.10003790.10003806</concept_id>
       <concept_desc>Theory of computation~Programming logic</concept_desc>
       <concept_significance>500</concept_significance>
       </concept>
   <concept>
       <concept_id>10003752.10003790.10002990</concept_id>
       <concept_desc>Theory of computation~Logic and verification</concept_desc>
       <concept_significance>500</concept_significance>
       </concept>
   <concept>
       <concept_id>10003752.10003790.10003792</concept_id>
       <concept_desc>Theory of computation~Proof theory</concept_desc>
       <concept_significance>300</concept_significance>
       </concept>
 </ccs2012>
\end{CCSXML}

\ccsdesc[500]{Theory of computation~Modal and temporal logics}
\ccsdesc[500]{Theory of computation~Programming logic}
\ccsdesc[500]{Theory of computation~Logic and verification}
\ccsdesc[300]{Theory of computation~Proof theory}

\keywords{Process Logic, Program Logic, Temporal Properties, Labelled Proof Systems, Cyclic Proofs, Program Verification}


\maketitle

\section{Introduction}
\label{section:Introduction}

Process logic~\cite{Harel82} extends dynamic logic~\cite{Harel00} from reasoning about the final states of program executions to reasoning about the whole execution process. In ordinary dynamic logic, a formula $[\alpha]\phi$ states that every terminating execution of program $\alpha$ ends in a state satisfying $\phi$. This view has supported a wide range of program logics, including logics for process algebra~\cite{Milner82}, programming languages~\cite{Harel00}, synchronous systems~\cite{Berry92}, hybrid systems~\cite{Platzer18}, and many related verification frameworks.  Process logic strengthens this setting by allowing $\phi$ to describe trace properties using linear temporal logic (LTL~\cite{TemporalLogic}): a formula $[\alpha]\phi$ states that every execution trace of $\alpha$ satisfies the temporal property $\phi$. $\phi$ can speak about what happens during the execution rather than only after termination~\cite{Harel82}. This makes process logic a natural formalism for specifications in which intermediate behavior is essential.

The propositional theory of process logic (\PPL) has been thoroughly studied, including decidability and completeness~\cite{Harel82}.  The first-ordered level, however, is more delicate.  First-order process logic (\FOPL) interprets atomic programs as concrete computations, such as assignments over an arithmetic structure, and interprets atomic propositions as predicates over program states.  This expressiveness is necessary for reasoning about actual computational systems, but it also exposes a structural difficulty: the proof system of \PPL\ does not lift smoothly to the first-order case (cf.~\cite{Harel82}).  In particular, for a sequential program $\alpha\seq\beta$, a proof of $[\alpha\seq\beta]\phi$ would require a compositional way to split the trace property $\phi$ across the executions of $\alpha$ and $\beta$.  Such a split depends heavily on the shape of the LTL formula involving operator $\nxt$ (next), $\Box$ (always), $\ev$ (eventually), or $\unt$ (until).

Several lines of work address fragments of this difficulty.  Beckert's sequent calculus for first-order dynamic logic (\FODL) with trace modalities studies formulas of the form $[|\alpha|]\phi$, which correspond to the special process-logic case $[\alpha]\Box\phi$~\cite{Beckert01}.
Its proof system is complete for deterministic while programs. 
The crucial compositional rule for sequence programs reduces $[\alpha\seq\beta]\Box\phi$ to obligations about the trace of $\alpha$ and the subsequent behavior of $\beta$, but this rule is specific to the always modality $\Box$.  
Later, \cite{PlatzersTDL} employs the trace modality of \DLT\ in a differential temporal dynamic logic (to avoid ambiguity, we name it \diffTL) for the more general regular programs, but it remains open whether the proof system of DLT is complete for this broader class.  \cite{Zhang24} points out that the proof system of DLT is insufficient to derive every valid temporal properties of general sequential regular programs, and thus needs to be improved. 
A variant of \diffTL\ which we name \diffTLii~\cite{DTL2} extends \DLT\ to a richer collection of temporal modalities for reasoning about properties such as $[\alpha]\Diamond\phi$. Moreover, \cite{Ahmad21} establishes the completeness of \diffTL\ for the deterministic fragment of regular programs.  However, \diffTL$^2$ does not handle temporal operators such as until ($\unt$).  Based on \DLT, dynamic trace logic (DTL)~\cite{Beckert13} is the first to propose a complete calculus for all LTL operators, but only for deterministic while programs rather than the nondeterministic regular programs.

Consequently, none of these systems above provides a complete proof-theoretic treatment of full first-order process logic over regular programs. 

\ifx
\MO Several lines of work address fragments of this difficulty.  Dynamic logic with trace modalities (\DLT,~\cite{Beckert01}) extends \FODL\ with a trace modality $[|\alpha|]\phi$ that captures the intermediate worlds visited during execution. Its proof system is complete for deterministic while programs.  A subsequent compositional rule $([\seq]-\Box)$ reduces $[\alpha\seq\beta]\Box\phi$ to separate obligations on the $\alpha$- and $\beta$-traces, but this rule is specific to the always modality $\Box$ and does not extend directly to other temporal operators.  Differential temporal dynamic logic (dTL$^2$,~\cite{DTL2}) extends this treatment by adding a dedicated compositional rule for ``eventually'' properties, though completeness for general regular programs with genuine nondeterminism remains open.  Dynamic trace logic (DTL,~\cite{Beckert13}) achieves completeness for arbitrary LTL trace properties, but only for deterministic while programs.  When the program model is extended from while programs to nondeterministic regular programs, the key compositional axiom of \DLT\ is no longer valid in general: a test in a sequential composition can block a trace of the first program from continuing as a trace of the second, which invalidates the direction $[\alpha\seq\beta]\Box\phi\rightarrow [\alpha]\Box\phi\wedge [\alpha](\last [\beta]\Box\phi)$~\cite{Zhang24}.  Consequently, no existing system provides a complete proof-theoretic treatment of full first-order process logic over regular programs. \EMO
\fi

\textbf{This paper} takes a different route. Instead of searching for increasingly complex compositional rules for temporal formulas, we enrich process-logic formulas with \emph{labels} that explicitly record the relevant execution information.  A labelled formula has the form $\sigma:\phi$, where $\sigma$ represents a trace-like program configuration and $\phi$ is a process-logic formula.  Labels make intermediate execution information available inside the proof system, so that the proof no longer needs to split every sequential trace by a special-purpose temporal rule at the point where the sequence is encountered.  The labelled form is conservative over ordinary process logic: an unlabelled formula can be recovered by using a free trace variable as its label.

Based on this idea, we develop labelled proof systems for both propositional and first-order process logic.  The propositional system, \GiiiPPL, extends labelled sequent techniques for modal logics~\cite{Negri05} and for non-wellfounded labelled proof systems for dynamic logic~\cite{Docherty19}, but adapts them to the trace semantics of process logic.  Since regular programs include iteration, proof search may generate cyclic derivations.  We therefore equip the labelled systems with cyclic proof conditions in the style of cyclic proof theory~\cite{Stirling91,Brotherston07,Brotherston08}.  The cyclic condition provides the mechanism that distinguishes sound cyclic reasoning from arbitrary finite graphs with unsound back-links. 

To our best knowledge, our work is the first to provide a complete logical framework for first-order process logic over regular programs. 

The contributions of the paper are as follows.
\begin{itemize}
\item We introduce labelled versions of process-logic formulas, defining trace labels for \PPL\ and update-based labels for \FOPL.
\item We define cyclic labelled proof systems \GiiiPPLcyc\ and \GiiiFOPLcyc, sharing a common labelled framework while using rules specific to propositional and first-order process logic.
\item We prove the soundness for the cyclic labelled systems by reducing every progressive infinite derivation path to an infinite descent in a well-founded multiset ordering.
\item We prove the completeness of \GiiiPPLcyc\ by deriving the proof system of \PPL, and establish the corresponding first-order result for \GiiiFOPLcyc\ through the usual arithmetical encoding of \FODL.
\end{itemize}

The rest of this paper is organized as follows.  Section~\ref{section:Brief Introduction to Process Logic} first recalls the syntax, semantics, and proof-theoretic background of process logic.  
Section~\ref{section:Overview} then provides an overview: it motivates our labelled approach by showing how
classical and existing trace-based program logics are embed into process logic and what are their limitations in details, and describes the high-level design of our work.  
Section~\ref{section:Labelled Propositional Process Logic} introduces labelled \PPL\ formulas and the proof system \GiiiPPLcyc, followed by the analysis and proofs of its soundness and completeness in Section~\ref{section:Soundness of GiiiPPLcyc} and~\ref{section:Completeness of GiiiPPLcyc} respectively.  The first-order labelled system \GiiiFOPLcyc\ is developed afterwards in Section~\ref{section:Labelled First-Order Process Logic - 2}, where we also analyze its soundness and completeness.  The appendix contains the detailed derivations and auxiliary proofs used by the main theorems. 

\ifx Old version, not useful anymore
Currently there is no first-ordered process logic that has been fully studied. 
The main reason is that formulas like $[x:=e]\phi$ cannot be properly dealt with as there is no universal rules to deal with the situation of $x := e$. 
(The same for programs like $\alpha^*$?)   
The essential of this arises from that the execution of $x:=e$ should depend on the cases of the forms of LTL formulas (whether it is $\Box\phi$, $\ev \phi$, $\nxt\phi$ or $\phi\unt\psi$). But in $[x:=e]\phi$ however, we cannot tell deal with the complexity of $\phi$. 

This non-compositionality of regular expressions is the essential of the difficulties of proposing a suitable first-order process logic. 
On the other hand, however, most programs interested in the domain of software engineering have operational semantics as their nature. 
Reasoning based on programs' transitional behaviours can circumvent the non-compositionality problem of process logic. 
Also by transitional-behaviour-based reasoning, we can benefit from that our framework

We provide a solution of this by proposing a cyclic labelled proof system for reasoning about process logical formulas based on their transitional behaviours. 
The essential idea is that 
we study a labelled formula like $\tr : [x:=e]\phi$, in which we can temporally store the result of the computation of $x := e$ in a trace structure $\tr$, delaying the analysis of different cases of target temporal formulas. 
\fi

\section{Brief Introduction to Process Logic}
\label{section:Brief Introduction to Process Logic}

This section recalls the parts of process logic that are used in the rest of the paper.  We first review propositional process logic~\cite{Harel82}, including its syntax, trace semantics, and proof system, and then describe the first-order instance over arithmetic that is used when defining labelled \FOPL\ formulas.

\subsection{Propositional Process Logic}

\begin{definition}[Syntax of \PPL\ Formula]
\label{def:Syntax of PPL Formula}
    Let $a$ range over atomic programs and $p$ range over atomic propositions.  The syntax of programs, trace formulas, and state formulas is given by:
$$
\begin{aligned}
    \alpha \dddef &\ a\ |\ \phi?\ |\ \alpha\seq \alpha\ |\ \alpha\cho\alpha\ |\ \alpha^*,\\
    \pi \dddef &\ \phi\ |\ \ofst \pi\ |\ \pi\osuf \pi, \\
    \phi\dddef &\ \true\ |\ p\ |\ \neg\phi\ |\ \phi\wedge \phi\ |\ [\alpha]\pi, 
\end{aligned}
$$
\end{definition}

Here $a$ is an atomic program, also called an \emph{action}.  The program $\phi?$ is a test, $\alpha\seq\beta$ is sequential composition, $\alpha\cho\beta$ is nondeterministic choice, and $\alpha^*$ is finite iteration.  The modality $[\alpha]\pi$ states that every trace generated by executing $\alpha$ satisfies the trace formula $\pi$.

The symbol $p$ denotes an atomic proposition.  A trace formula $\pi$ is evaluated over a finite trace rather than only over a single state.  The formula $\ofst\pi$ states that $\pi$ holds at the first state of the current trace, while $\pi_1\osuf\pi_2$ is the process-logic until-like suffix connective (cf.~\cite{Harel82}): along a proper suffix of the current trace, $\pi_2$ eventually holds and $\pi_1$ holds on the intermediate suffixes.  
The standard temporal operators of LTL can be expressed by $\ofst$ and $\osuf$ (cf.~\cite{Harel82}), for instance: 
\begin{center}
$\nxt\phi \equiv (\neg\true)\osuf\phi$, $\phi\unt\psi \equiv \psi\vee(\phi\wedge\phi\osuf\psi)$,  
\end{center}
where $\nxt$ denotes the \emph{next} operator and $\unt$ denotes the \emph{until} operator of LTL~\cite{Harel82}.  Intuitively, $\nxt\phi$ holds on a trace when $\phi$ holds from the second state onward, while $\phi\unt\psi$ holds when $\psi$ eventually holds and $\phi$ holds on all intermediate steps.  
Other Boolean connectives, such as $\vee$ and $\to$, are treated as abbreviations in the standard way.  We also use the dual modality $\la\alpha\ra\pi$, defined as $\neg[\alpha]\neg\pi$, to state that there exists an execution trace of $\alpha$ satisfying $\pi$.

The semantics of \PPL\ is based on \emph{worlds}, which abstract program states.  A \emph{trace} $tr$ over a set $\Wd$ of worlds is a finite non-empty sequence of worlds.  Given two traces $tr_1,tr_2\in\Wd^*$ with $tr_1\dddef s_1\ldots s_n$ and $tr_2\dddef t_1\ldots t_m$, their concatenation $tr_1\cdot tr_2$ is defined as $s_1\ldots s_nt_2\ldots t_m$ when $s_n=t_1$.  For a trace $tr$, let $tr_b$ and $tr_e$ denote its first and last elements.  We write $tr_1\suf tr_2$ when $tr_1$ is a suffix of $tr_2$, and $tr_1\psuf tr_2$ when it is a proper suffix.

A \emph{Kripke frame} is a pair $(\Wd,\I)$, where $\Wd$ is a set of worlds and $\I$ interprets each atomic program as a set of length-two traces and each proposition as a set of worlds.

\begin{definition}[Semantics of \PPL]
    \label{Semantics of PPL}
Given a Kripke frame $(\Wd, \I)$, the semantics of a program and a formula are defined as an extension of function $\I$. It is defined by simultaneous inductions on the syntactic structures of both programs and formulas: 
\begin{enumerate}
\item $\I(\phi?)\dddef \{ss\ |\ s\in \I(\phi)\}$.
\item $\I(\alpha\seq\beta)\dddef \I(\alpha) \cdot \I(\beta)$, where $\I(\alpha) \cdot \I(\beta)\dddef \{tr_1tr_2\ |\ tr_1\in \I(\alpha), tr_2\in \I(\beta)\}$.
\item $\I(\alpha\cho\beta)\dddef \I(\alpha)\cup \I(\beta)$.
\item $\I(\alpha^*)\dddef \bigcup_{n\ge 0} \I(\alpha^n)$, where $\alpha^0\dddef \true?$, $\alpha^n\dddef \alpha\seq\alpha^{n-1}$ ($n\ge 1$). 
\item $\I(\true)\dddef S$.
\item $\I(\neg\phi)\dddef S \setminus \I(\phi)$.
\item $\I(\phi\wedge \psi)\dddef \I(\phi)\cap \I(\psi)$.
\item $\I(\ofst \phi)\dddef \{tr\ |\ tr_b\in \I(\phi)\}$.
\item $\I(\phi\osuf \psi)\dddef \{tr\ |\ \exists tr'. tr'\psuf tr\wedge tr'\in \I(\psi) \wedge (\forall tr''. tr'\psuf tr''\psuf tr \to tr'\in \I(\phi))\}$.  
\item $\I([\alpha]\phi)\dddef \{tr\ |\ \mbox{for all $tr'\in \I(\alpha)$, $trtr'\in \I(\phi)$}\}$.
\end{enumerate}
\end{definition}

The clauses above follow the intended trace interpretation of regular programs and process formulas.  Tests produce stuttering traces when their condition holds; Sequence composes compatible traces; Choice takes union; And iteration takes the union over all finite unfoldings.  Boolean connectives are interpreted set-theoretically.  The trace operators inspect the beginning and suffixes of a trace, and $[\alpha]\phi$ holds of a trace precisely when appending any trace generated by $\alpha$ yields a trace satisfying $\phi$.

Given a Kripke frame $(\Wd,\I)$, a trace $tr\in\Wd^*$ \emph{satisfies} a formula $\phi$, written $tr\models_{(\Wd,\I)}\phi$ or simply $tr\models\phi$, if $tr\in\I(\phi)$.  State satisfaction is included as the special case of traces of length one.

The proof system of \PPL\ is recalled in Table~\ref{table:ppl-proof-system}.  It consists of the usual propositional dynamic logic (PDL) axioms and rules (cf.~\cite{Harel00}), together with the additional rules for process-logic trace operators.
For the explanations of these additional rules, one can refer to~\cite{Harel82}. 

\begin{table}[t]
\centering
\small
\begin{tabular}{p{0.09\linewidth}p{0.84\linewidth}}
\toprule
No. & Rule or axiom \\
\midrule
\multicolumn{2}{l}{\textbf{PDL part}}\\
(1) & $\begin{aligned}[\alpha](\phi\to\psi)\to([\alpha]\phi\to[\alpha]\psi)\end{aligned}$\\
(2) & $\begin{aligned}[\alpha](\phi\wedge\psi)\leftrightarrow[\alpha]\phi\wedge[\alpha]\psi\end{aligned}$\\
(3) & $\begin{aligned}[\alpha\cho\beta]\phi\leftrightarrow[\alpha]\phi\wedge[\beta]\phi\end{aligned}$\\
(4) & $\begin{aligned}[\alpha\seq\beta]\phi\leftrightarrow[\alpha][\beta]\phi\end{aligned}$\\
(5) & $\begin{aligned}[\psi?]\phi\leftrightarrow(\psi\to\phi)\end{aligned}$\\
(6) & $\begin{aligned}[\alpha^*]\phi\leftrightarrow\phi\wedge[\alpha][\alpha^*]\phi\end{aligned}$\\
(7) & $\begin{aligned}\phi\wedge[\alpha^*](\phi\to[\alpha]\phi)\to[\alpha^*]\phi\end{aligned}$\\
(MP) & $\begin{aligned}\infer[^{(\textit{MP})}]{\phi}{\psi & \psi\to\phi}\end{aligned}$\\
(Gen) & $\begin{aligned}\infer[^{(\textit{Gen})}]{[\alpha]\phi}{\phi}\end{aligned}$\\
\midrule
\multicolumn{2}{l}{\textbf{\PPL-specific part}}\\
(i) & $\begin{aligned}\ofst(\phi\vee\psi)\leftrightarrow\ofst\phi\vee\ofst\psi\end{aligned}$\\
(ii) & $\begin{aligned}\ofst\neg\phi\leftrightarrow\neg\ofst\phi\end{aligned}$\\
(iii) & $\begin{aligned}(\phi\osuf\psi)\vee(\phi\osuf\chi)\leftrightarrow\phi\osuf(\psi\vee\chi)\end{aligned}$\\
(iv) & $\begin{aligned}&(\phi\osuf\psi)\wedge(\chi\osuf\omega)\leftrightarrow(\phi\wedge\chi)\osuf(\psi\wedge\omega)\vee(\phi\wedge\chi)\osuf(\phi\wedge\omega\wedge\phi\osuf\psi)\vee (\phi\wedge\chi)\osuf(\psi\wedge\chi\wedge\chi\osuf\omega)\end{aligned}$\\
(v) & $\begin{aligned}\neg(\phi\osuf\psi)\leftrightarrow\neg(\true\osuf\psi)\vee(\neg\psi)\osuf(\neg\phi\wedge\neg\psi)\end{aligned}$\\
(vi) & $\begin{aligned}\phi\osuf\psi\leftrightarrow\nxt\psi\vee\nxt(\phi\wedge(\phi\osuf\psi))\end{aligned}$\\
(vii) & $\begin{aligned}\phi\osuf(\phi\wedge(\phi\osuf\psi))\leftrightarrow\nxt(\phi\wedge(\phi\osuf\psi))\end{aligned}$\\
(viii) & $\begin{aligned}\phi\osuf(\phi\wedge(\phi\osuf\psi))\leftrightarrow\phi\osuf(\phi\wedge\nxt\psi)\end{aligned}$\\
(ix) & $\begin{aligned}\ofst\nxt\true\leftrightarrow\false\end{aligned}$\\
(x) & $\begin{aligned}\neg\phi\wedge\ofst\phi\to\nxt\true\end{aligned}$\\
(xi) & $\begin{aligned}p\leftrightarrow\ofst p\end{aligned}$, where $p$ is an atomic proposition\\
(xii) & $\begin{aligned}\ofst\phi\wedge\la\alpha\ra\psi\leftrightarrow\la\alpha\ra(\ofst\phi\wedge\psi)\end{aligned}$\\
(xiii) & $\begin{aligned}\nxt\la\alpha\ra\phi\leftrightarrow\nxt\true\wedge\la\alpha\ra\bnxt\phi\end{aligned}$\\
(xiv) & $\begin{aligned}\la\alpha\ra\true\to\fin\end{aligned}$\\
(xv) & $\begin{aligned}((\nxt\phi\to\phi)\osuf\phi)\to\nxt\phi\end{aligned}$\\
\bottomrule
\end{tabular}
\caption{Axioms and rules of propositional process logic.}
\label{table:ppl-proof-system}
\end{table}

The following result is directly from~\cite{Harel82}.

\begin{prop}
The proof system of $\PPL$ in Table~\ref{table:ppl-proof-system} is sound and complete. 
\end{prop}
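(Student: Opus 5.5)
The plan splits the statement into soundness and completeness. The paper itself attributes the result to~\cite{Harel82}, so the aim is to reconstruct the shape of that argument rather than invent a new one.

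\emph{Soundness.} I would argue by induction on the length of a derivation, using the trace semantics of Definition~\ref{Semantics of PPL}.
\begin{itemize}
\item For the PDL part (1)--(7), validity follows from the clauses for $\I(\alpha\seq\beta)$, $\I(\alpha\cho\beta)$ and $\I(\alpha^*)$. One must check that these arguments go through with traces in place of state pairs. For instance, (4) uses associativity of trace concatenation: $tr\cdot(tr_1\cdot tr_2)=(tr\cdot tr_1)\cdot tr_2$. Axiom (7) is proved by induction on $n$ in $\I(\alpha^n)$.
\item Rules (MP) and (Gen) preserve validity directly.
\item Each \PPL-specific axiom (i)--(xv) is checked by unfolding the clauses for $\ofst$ and $\osuf$ over finite traces. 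The key observations are:
\begin{itemize}
\item the proper-suffix order $\psuf$ on a finite trace is a finite linear order, which gives (iv), (v), (vi) and (xv);
\item atomic propositions are evaluated at $tr_b$, which gives (xi);
\item $[\alpha]$ only appends to the end of a trace, so the first state is unchanged, which gives (xii) and (xiii).
\end{itemize}
\end{itemize}

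\emph{Completeness.} I would follow the Fischer--Ladner style filtration argument, adapted to traces. Given a consistent formula $\phi$, form its closure $FL(\phi)$, extended to contain the $\ofst$- and $\osuf$-subformulas together with their unfoldings from (vi)--(viii). Take the maximal consistent subsets of $FL(\phi)$, which are finite in number. Instead of single states, these sets describe whole traces. The model is then built as follows.
\begin{enumerate}
\item Use axioms (i), (ii), (ix)--(xi) to decompose each atom into a ``first-state'' component and a ``rest-of-trace'' component, the latter given by its $\nxt$-formulas.
\item Use axioms (iv)--(viii) to show that the $\osuf$-formulas in an atom are determined by the $\nxt$-successor atom. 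This gives a transition relation on atoms whose finite paths are candidate traces.
\item Use (xiv), finiteness of traces, together with (xv), an induction principle on suffixes, to guarantee that every $\osuf$-eventuality is fulfilled within a finite path.
\item Interpret atomic programs by the pairs of consecutive states that are witnessed by $\la a\ra$-formulas, using (xii) and (xiii) to push $\ofst$ and $\nxt$ through diamonds.
\end{enumerate}
A truth lemma, proved by induction on formulas in $FL(\phi)$, then shows that $\phi$ is satisfied.

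\emph{Main obstacle.} The hard step is the truth lemma for $[\alpha^*]\pi$ combined with $\osuf$-eventualities. A trace of $\alpha^*$ must be assembled from finitely many atom paths so that both the PDL induction axiom (7) and the suffix eventualities are respected simultaneously. My first attempt would mimic the PDL proof: the set of atoms reachable by $\alpha^*$-paths is definable by a finite disjunction $\chi$ of atom conjunctions. Then $\chi\to[\alpha]\chi$ is provable, and (7) yields the invariant. I would handle eventualities separately, applying (xv) along the finite suffix chain. Since this is exactly the technical core of~\cite{Harel82}, in the paper I would cite it rather than reprove it.
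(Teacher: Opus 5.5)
You end where the paper does: the paper gives no argument beyond citing \cite{Harel82}, and your completeness outline is likewise only a pointer to that construction. The problem is the part you propose to check yourself. Soundness does \emph{not} follow by unfolding Definition~\ref{Semantics of PPL}. There, $\I(\psi?)=\{ss\mid s\in\I(\psi)\}$ is a stuttering step and $\alpha^0=\true?$, so $tr\cdot(tr_e\,tr_e)$ is $tr$ followed by an extra copy of $tr_e$. Hence $\true?$ is not a unit for concatenation, and (5)--(7) fail for trace-sensitive formulas.

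For a counterexample, take $\phi:=\neg\nxt\true$, which holds exactly on one-world traces. Take a world $s$ and an atomic $a$ with no $a$-step out of $s$, and evaluate at the trace $s$. Both $\phi\wedge[a][a^*]\phi$ and $\phi\wedge[a^*](\phi\to[a]\phi)$ hold there. But $[a^*]\phi$ fails, because $ss\in\I(a^*)$ and $ss\models\nxt\true$. Likewise $\true\to\phi$ holds at $s$ while $[\true?]\phi$ does not. So your induction for (7) already breaks at $n=0$: $tr\models\phi$ does not give $tr\,tr_e\models\phi$. One-world tests would repair (6) and (7), but not (5): the test checks $\psi$ at $tr_e$, whereas by (xi) an atomic $\psi$ in $\psi\to\phi$ is checked at $tr_b$.

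This is not a gap a more careful write-up could close. Axioms (5)--(7) of Table~\ref{table:ppl-proof-system} are invalid under Definition~\ref{Semantics of PPL}, so the proposition is false for the paper's own semantics. Axiom (xiv) shows a similar mismatch. With the paper's $\fin\dddef\true\osuf(\true\wedge\neg\nxt\true)$, it fails at a one-world trace: $\la\true?\ra\true$ holds there, but there is no proper suffix. If instead $\fin$ just means ``finite'', (xiv) is vacuous over finite traces. Either way, this suggests that \cite{Harel82} does not use the semantics of Definition~\ref{Semantics of PPL}. The correct form of your plan is to state the result for the path semantics of \cite{Harel82} and cite it. Any self-contained soundness check has to be carried out against that semantics, not against Definition~\ref{Semantics of PPL}.
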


\subsection{First-order Process Logic}

First-order process logic (\FOPL) enriches \PPL\ by interpreting atomic programs as concrete assignment steps over an arithmetic structure and atomic propositions as first-order predicates over program states. 
This expressiveness is sufficient to capture a wide range of computational systems: first-order dynamic logic (\FODL) underlies verification frameworks for Java~\cite{Beckert2016}, hybrid systems~\cite{Platzer18}, and synchronous programs, among others. 

When studying \FODL, it often considers a first-order structure where first-order quantification is allowed~\cite{Harel00}. 
In this paper, for simplicity of discussion, we consider a simple version of the arithmetic theory over integers, named \Arith, as the first-order structure for the first-ordered level of process logic. 
The arithmetic theory over integers is expressive enough to capture most interesting computational structures in computer systems. 
The result of this paper can also be easily extended to other first-order structures. 

In the following of this paper, let $\Var$ be the set of variables over the integer set $\mbb{Z}$. 
We use letters $u,v$ to range over the variables of $\Var$. 
An \emph{arithmetic expression} (simply ``expression'') $e$ is defined recursively as follows:
$$
e \dddef u\ |\ n\ |\ e \bowtie e,
$$
where $u\in\Var$, $n\in \mbb{Z}$ is a constant, $\bowtie\in \{+,-,*,\div\}$ represents the usual operators in $\Arith$. 
In \Arith, a world is a mapping from $\Var$ to $\mbb{Z}$.
Given a world $s : \Var\to \mbb{Z}$,  
$s[u\mapsto i]$ returns a mapping that maps $u$ to value $i\in \mbb{Z}$ and maps any other variable $v$ to the value $s(v)$.   
Given an expression $e$, the \emph{evaluation} of $e$ by $s$, denoted by $s(e)$, is defined in the usual way: $s(n) \dddef n$; $s(e_1\bowtie e_2)\dddef s(e_1)\bowtie s(e_2)$ for any $e_1, e_2$.  

A formula in first-order process logic (\FOPL) is defined just as in Definition~\ref{def:Syntax of PPL Formula}, except that an atomic program is an assignment $u := e$, where $u\in \Var$ is a variable and $e$ is an expression; An atomic formula $p$ is a predicate of the form $e_1 \triangle e_2$, where $\triangle\in \{<, \le, >, \ge, =\}$ are the usual binary integer relations; Given an \FOPL\ formula $\phi$, $\forall u.\phi(u)$ is also a formula quantified by operator $\forall$ on variable $u\in \Var$. 
The evaluation of $p$ by $s$ is defined in the usual way: $s(e_1\triangle e_2)\dddef s(e_1)\triangle s(e_2)$.

The \emph{substitution} in \FOPL\ is defined in a usual way. 
Given a formula $\phi$, $\phi[e/u]$ denotes the result obtained by replacing each free occurrence (not bound by any quantifier) of variable $u$ with term $e$ in $\phi$. 
We always assume that a substitution is \emph{admissible}, in the sense that the substitution does not affect the freeness of the other existing variables in $\phi$ except $u$.

The special theory \Arith\ fixes the Kripke frame in \FOPL, namely $(\Wd_{\fo}, \I_{\fo})$, where $\Wd_{\fo}$ is the set of all worlds in $\Arith$, and $\I_\fo$ interprets an assignment and a proposition in \Arith\ as follows:
\begin{enumerate}
\item $\I_\fo(u := e) = \{ss' \mid s' = s[u\mapsto s(e)]\}$. 
\item $\I_\fo(p) = \{s \ |\ s(p)\mbox{ is true in theory $\Arith$}\}$. 
\end{enumerate} 
The semantics of \FOPL\ is defined by extending the definitions of $(\Wd_\fo, \I_\fo)$ to other programs and formulas, which is defined just as in Definition~\ref{Semantics of PPL}, except that for a quantified formula of the form $\forall u.\phi$, we have
$$
\I_\fo(\forall u.\phi) = \{s \mid \forall i\in \mbb{Z}. s[u\mapsto i](\phi)\}.
$$

\ifx
\textbf{Proof System of \FODL}
(I) All rules from \PDL
(II) $[x:=e]\phi\leftrightarrow \phi[e/x]$
(III)
$
\begin{aligned}
    \infer[^{(\textit{Con})}]
    {\phi(n)\to \la\alpha*\ra\phi(0)}
    {\phi(n+1)\to\la\alpha\ra\phi(n)}
\end{aligned}
$

\fi



\section{Overview}
\label{section:Overview}

Having recalled the basic definitions of process logic, we now step back and ask: \emph{why do we need a new proof framework, and how does our approach work?}  This section gives an informal tour of the main ideas before the technical development begins.

Section~\ref{section:Motivations} motivates our work by showing that classical and trace-based program logics can be embedded into process logic, making a complete proof system for it broadly useful.  
\ifx
!++@rewrite @keep
Section~\ref{section:Labelled System and Our Approach} introduces the core idea of enriching process-logic formulas with labels that explicitly record execution-trace information, enabling uniform reasoning without complex compositional rules.  Section~\ref{section:Analysis of Soundness and Completeness} then outlines the soundness and completeness arguments for the proposed systems. 
++!
\fi
Section~\ref{section:Labelled System and Our Approach} introduces the core idea of the cyclic labelled approach and outlines the soundness and completeness arguments for the proposed labelled systems. 

\subsection{Motivations of Process Logic}
\label{section:Motivations}

Process logic is highly expressive: its trace-based formulas subsume classical Hoare triples, dynamic-logic modalities, and the richer LTL-flavoured properties studied in recent traced program logics.  Providing a \emph{complete} proof system for it is therefore both theoretically significant and broadly applicable, yet non-trivial --- especially at the first-order level.  
In the subsections below we make this motivation concrete by showing how the key logics from the literature each embed into process logic.

\subsubsection{From State Properties to Trace Properties: Hoare Logic and Dynamic Logic}

Because process logic evaluates formulas over execution \emph{traces}, it is strictly more expressive than purely state-based logics.  In particular, it subsumes the classical Hoare-logic specification $\{\phi\}\alpha\{\psi\}$, which asserts that if a program state satisfies the precondition $\phi$ then, after executing program $\alpha$, the resulting state satisfies the postcondition $\psi$.

In process logic, this correctness condition is captured by the formula
$$
\phi\to [\alpha]\Box(\L_0\to \psi), 
$$
where the auxiliary abbreviations are defined as follows~\cite{Harel82}:
$$
\begin{aligned}
    \Box \varphi&\dddef \neg(\Diamond \neg \varphi),\\
    \Diamond \varphi&\dddef \true \unt \varphi,\\
    \L_0&\dddef \neg \nxt \true.
\end{aligned}
$$
Here, $\Box\varphi$ (``always $\varphi$'') means that $\varphi$ holds at \emph{every} state along the current trace; $\Diamond\varphi$ (``eventually $\varphi$'') means that $\varphi$ holds at \emph{some} state along the trace; and $\L_0$ (``last state'') is a formula that holds precisely at the \emph{final} state of a trace, since $\neg\nxt\true$ is satisfied exactly when there is no next step.  Thus $\Box(\L_0\to\psi)$ asserts that at the last state of every execution trace of $\alpha$ the formula $\psi$ holds --- which is exactly the Hoare postcondition. 

Process logic also subsumes classical dynamic logic.  In dynamic logic, every formula is a \emph{state formula} --- it is satisfied by a single world, not by a trace.  The box modality $[\alpha]\phi$ therefore asserts that the world obtained after executing $\alpha$ (in any possible way) satisfies $\phi$.  Because process logic evaluates formulas over traces rather than over single states, the equivalent process-logic statement must single out the \emph{last} state of the execution trace.  This is achieved by the formula
$$
    [\alpha]\Box(\L_0\to \phi),
$$
which, by the same reasoning as above, says that at the final state of every execution trace of $\alpha$ the formula $\phi$ holds.  

\subsubsection{Existing Trace-Based Dynamic Logics and Their Limitations}

Process logic also subsumes the various trace-semantic-based dynamic logics discussed in Section~\ref{section:Introduction}.  For each, we briefly explain its core idea and show how its key formulas and rules can be expressed in process logic.  We also highlight the limitations of each approach, which motivates our own labelled framework introduced in Section~\ref{section:Labelled System and Our Approach}. 

\textbf{Dynamic Logic with Trace Modalities (DLT).}  DLT~\cite{Beckert01} extends \FODL\ with a trace modality $[|\cdot|]$ that captures the intermediate worlds visited during program execution.  
In \DLT, the underlying program model consists of deterministic while programs, and the proof system is proved to be complete for this class~\cite{Beckert01}. 
The formula $[|\alpha|]\phi$ means that $\phi$ (a state formula) holds at \emph{every} intermediate world along every execution trace of $\alpha$.  In process logic, this is captured precisely by
$$
    [\alpha]\Box \phi,
$$
using the ``always'' operator $\Box$ introduced earlier.

For sequential programs, \cite{Beckert01} proposes the following compositional rule, which we call $([\seq]-\Box)$:
$$
\infer[]
{[|\alpha\seq\beta|]\phi}
{[|\alpha|]\phi\wedge [|\alpha|][|\beta|]\phi}
$$
Expressed as a process-logic axiom, this rule becomes:
$$
[\alpha\seq\beta]\Box\phi\leftrightarrow [\alpha]\Box\phi\wedge [\alpha](\last [\beta]\Box\phi), 
$$
where $\last \varphi\dddef \Diamond(\varphi\wedge \L_0)$ for any formula $\varphi$.  Intuitively, $\last\varphi$ means that $\varphi$ holds \emph{at the last state of the current trace}: $\Diamond(\varphi\wedge\L_0)$ requires that there is some state on the trace that is simultaneously the final state ($\L_0$) and satisfies $\varphi$.  The axiom then says: every trace of $\alpha\seq\beta$ satisfies $\Box\phi$ if and only if every trace of $\alpha$ already satisfies $\Box\phi$, \emph{and} at the end of every trace of $\alpha$, every continuation trace of $\beta$ also satisfies $\Box\phi$.  In other words, the axiom splits the sequential execution into two parts and ensures that $\phi$ is maintained throughout both segments.  In process logic, this rule is more complex than the simpler rule $(4)$ of Table~\ref{table:ppl-proof-system} for handling sequential programs.

The subsequent work~\cite{PlatzersTDL} proposes the logic \diffTL, which adopts the trace modalities of \DLT\ as an extension to differential dynamic logic~\cite{Platzer07b}, applying them to the more general regular programs.  However, it remains open whether the system fragment of \DLT\ in \diffTL\ is complete for this broader program class.  Moreover, \cite{Zhang24} shows that for regular programs the direction $[\alpha\seq\beta]\Box\phi\rightarrow [\alpha]\Box\phi\wedge [\alpha](\last [\beta]\Box\phi)$ does not hold: a test in $\alpha\seq\beta$ may block a trace of $\alpha$ from continuing as a trace of $\beta$, thereby invalidating the compositional split.  Consequently, rule $([\seq]-\Box)$ is too strong and is thus insufficient to derive all valid trace properties for arbitrary regular programs. 

\textbf{Differential Temporal Dynamic Logic \diffTLii.}  It was later observed in~\cite{PlatzersTDL} that rule $([\seq]-\Box)$ cannot be directly applied to reasoning about $[\alpha]\Diamond\phi$ for the ``eventually'' operator $\Diamond$: the rule breaks down because distributing the $\Diamond$ modality over a sequential composition does not follow the same pattern as $\Box$.
To handle ``eventually'' formulas, \cite{DTL2} introduces the logic \diffTLii, which enriches formula $[\alpha]\Diamond\phi$ to the richer form $[\alpha](\psi\sqcup \Diamond\phi)$ and proposes a compositional rule $([\seq]-\Diamond)$:
$$
\infer[]
{[\alpha\seq\beta](\psi\sqcup \Diamond\phi)}
{[\alpha]([\beta](\psi\sqcup\Diamond\phi)\sqcup \Diamond\phi)}. 
$$
In process logic, the formula $[\alpha](\psi\sqcup \Diamond\phi)$ is captured as
$$
[\alpha](\psi\vee \Diamond\phi), 
$$
which means that each trace of $\alpha$ satisfies either $\psi$ or $\Diamond\phi$. 
And the rule $([\seq]-\Diamond)$ becomes the axiom:
$$
[\alpha\seq\beta](\psi\vee \Diamond\phi) \leftrightarrow [\alpha](\last ([\beta](\psi\vee \Diamond\phi))\vee \Diamond\phi). 
$$
Informally, the axiom decomposes the ``eventually'' property of the sequential program $\alpha\seq\beta$: the combined execution satisfies $\psi\vee\Diamond\phi$ if and only if, at the end of every $\alpha$-trace, either (i) $\Diamond\phi$ has already been witnessed within that $\alpha$-trace, or (ii) the subsequent $\beta$-execution will witness $\psi\vee\Diamond\phi$ from that terminal state. 

A later work~\cite{Ahmad21} proves that \diffTLii\ is complete for deterministic program behaviours, i.e., when the choice operator $\cho$ in a regular program does not yield non-deterministic behaviours, e.g., $\phi?\seq a\cho \neg\phi?\seq b$ for any $\phi, a, b$.  
However, the completeness for general regular programs with genuine nondeterminism remains open; moreover, it is not clear whether any fixed set of compositional rules suffices to handle all LTL operators uniformly.

\textbf{Dynamic Trace Logic (DTL).}  An earlier work~\cite{Beckert13} proposes another variant of trace-based dynamic logic called dynamic trace logic (DTL). 
The program models of DTL are restricted to while programs, which are deterministic: given a fixed starting world, there is only one possible execution trace.  Under this restriction, the proof system of DTL\ can derive every valid formula $[\alpha]\phi$ for an \emph{arbitrary} LTL formula $\phi$.  The key insight is that compositional rules for a single deterministic trace work uniformly for all types of temporal properties.  However, this argument breaks down for regular programs in general, where nondeterminism allows multiple possible execution traces and the compositional rules must handle branching.

\subsection{Labelled System \& Our Approach}
\label{section:Labelled System and Our Approach}
\subsubsection{An Labelled Approach for Process Logic}
We adopt an alternative approach to provide a complete logical framework for process logic at both the propositional and first-order levels.  Our approach draws on the rich tradition of \emph{labelled proof systems} for modal and dynamic logics~\cite{Negri05,Docherty19}, which augment formulas with explicit structural information --- called \emph{labels} --- that records the relational or semantic context in which a formula is evaluated.  By making this context part of the formula itself, labelled systems often admit cleaner, more modular inference rules than their unlabelled counterparts.

Following the main idea from prior work on labelled dynamic-logic systems~\cite{Docherty19,Zhang25-2,zhang2025parameterizeddynamiclogic}, we introduce a label $\sigma$ to capture the current execution-trace information accumulated during a derivation.  A labelled process-logic formula takes the form $\sigma : [\alpha]\phi$, meaning that under the trace information recorded in $\sigma$, every execution trace of $\alpha$ satisfies formula $\phi$.  This labelled form is strictly more general than the plain formula $[\alpha]\phi$: when $\sigma$ is a free trace variable $X$ with no additional constraints, $\sigma : [\alpha]\phi$ and $[\alpha]\phi$ share the same validity.

To see why labels help, consider the formula $$\{x:=1\} : [x:=2 \seq x := 3]\Box (x\ge 0),$$ which means that given the execution trace information $\{x:=1\}$ (recording that $x$ was set to $1$ at the first step), every continuation via program $x:=2\seq x:=3$ produces a whole trace satisfying $\Box (x\ge 0)$.  By applying a labelled version of the sequential-composition rule, we can unfold this derivation step by step, accumulating execution information in the label:
$$
\infer[]
{\{x:=1\} : [x:=2\seq x:=3]\Box (x\ge 0)}
{
    \infer[]
    {\{x:=1\} : [x:=2][x:=3]\Box (x\ge 0)}
    {
        \infer[]
        {\{x:=1\}\cat \{x:=2\} : [x:=3]\Box (x\ge 0)}
        {
            \{x:=1\}\cat \{x:=2\}\cat \{x:=3\} : \Box (x\ge 0)
        }
    }
}, 
$$
where the label $\{x:=1\}\cat \{x:=2\}\cat \{x:=3\}$ records the full execution history.  Once all programs have been unfolded, the verification goal reduces to checking whether this concrete label satisfies the temporal formula $\Box(x\ge 0)$ --- a much simpler task.  This contrasts with the traditional approach using rule $([\seq]-\Box)$, which would split $[x:=2\seq x:=3]\Box (x\ge 0)$ into separate sub-goals $[x:=2]\Box (x\ge 0)$ and $[x:=3](\last ([x:=3]\Box (x\ge 0)))$ that must be handled independently.

The formal construction of the labelled systems for \PPL\ and \FOPL\ --- namely \GiiiPPL\ and \GiiiFOPL\ --- is carried out in Sections~\ref{section:Labelled PPL Formulas}, \ref{section:A Proof System for Labelled PPL}, \ref{section:Labelled First-Order Process Logic}, and~\ref{section:System GiiiFOPLcyc}, respectively. 

Similar to~\cite{Docherty19,zhang2025parameterizeddynamiclogic}, our labelled system for process logic does not directly support reasoning about iterative regular programs: when unwrapping the iterative structure, the same program form repeats during a derivation, producing an infinite derivation.
The following derivation illustrates this: 
$$
\begin{aligned}
    \infer[]
    {X : [(x:=1)^*]\true}
    {
        \infer[]
        {\{x:=1\}\cat X : [(x:=1)^*]\true}
        {
            \infer*[]
            {...}
            {
                \infer[]
                {...}
                {
                    \infer[]
                    {\{x:=n\}\cat...\cat\{x:=1\}\cat X : [(x:=1)^*]\true}
                    {...}
                }
            }
        }
    }, 
\end{aligned}
$$
where at each derivation step, $x:=1$ is symbolically executed and $(x:=1)^*$ is transformed into itself, starting a new loop.  

To solve this problem, we adapt the cyclic proof approach~\cite{Stirling91,Brotherston07,Brotherston08} to our labelled system. 
A cyclic proof (Definition~\ref{def:Cyclic Proofs of GiiiPPL}) admits a certain type of derivation branches whose leaf nodes either terminate (closed by an axiom) or can ``backlink'' to some identified ancestor along the same branch. 
For instance, formula $X : [(x:=1)^*]\true$ can be closed by a suitable variable substitution to form a leaf node $Y : [(x:=1)^*]\true$ that backlinks to the root $X : [(x:=1)^*]\true$ (modulo free-variable renaming): 
$$
\infer[]
{X : [(x:=1)^*]\true}
{
    \infer[^{(\Sub)}]
    {\{x:=1\}\cat X : [(x:=1)^*]\true}
    {
        Y : [(x:=1)^*]\true
    }
}.
$$
We discuss the condition --- called ``cyclic condition'' (Definition~\ref{def:Progressive Step/Progressive Derivation Trace of GiiiPPLcyc}) --- under which such a cyclic derivation constitutes a legal cyclic proof that leads to a valid conclusion. 

Sections~\ref{section:Cyclic Proof Structure of GiiiPPLcyc} and~\ref{section:System GiiiFOPLcyc} present the details of the cyclic theories for \GiiiPPL\ and \GiiiFOPL, respectively.

\subsubsection{Analysis of Soundness and Completeness}
\label{section:Analysis of Soundness and Completeness}

As the major theoretical results, we fully prove the soundness and completeness of the proposed labelled systems for process logic.

\textbf{Soundness.}  We establish soundness at two levels.  First, each individual rule of the labelled system is locally sound (Theorem~\ref{theo:Soundness of Each Rule of GiiiPPL}).  Local soundness alone does not suffice, however, because a cyclic derivation may contain infinite derivation branches, and local soundness does not directly guarantee that such a branch leads to a valid conclusion.  We therefore establish a second, global soundness result: every cyclic derivation satisfying the cyclic condition yields a valid conclusion.  This is discussed in detail in Section~\ref{section:Soundness of GiiiPPLcyc}.  For both \GiiiPPL\ and \GiiiFOPL, the soundness arguments follow the main strategy from~\cite{zhang2025parameterizeddynamiclogic}, adapted with non-trivial technical details specific to the trace semantics of process logic.

\textbf{Completeness.}  We establish completeness using different proof methods for \GiiiPPL\ and \GiiiFOPL, discussed in Sections~\ref{section:Completeness of GiiiPPLcyc} and~\ref{section:Soundness and Completeness of GiiiFOPLcyc} respectively.  For \GiiiPPLcyc, the key idea is to show that the labelled system can simulate every axiom and rule of the established Hilbert-style proof system for \PPL\ displayed in Table~\ref{table:ppl-proof-system}.  We present the technical details for the critical ``Necessitation'' rule and the PDL axioms (rules (1)--(7), (MP), (Gen)), leaving most of the proofs for the process-logic-specific axioms (rules (i)--(xv)) to the appendix.  For \GiiiFOPLcyc, we adapt the classical completeness argument for \FODL~\cite{Harel79,Harel00} to the labelled setting with trace semantics, providing the necessary technical modifications.

\section{Labelled Propositional Process Logic}
\label{section:Labelled Propositional Process Logic}
This section develops the labelled propositional process logic and its proof system.  We first define \emph{labels} and \emph{labelled formulas}, then present the labelled sequent rules and the cyclic proof condition required for regular-program iteration.

\subsection{Labelled \PPL\ Formulas}
\label{section:Labelled PPL Formulas}
\begin{definition}[Labels of \PPL]
\label{def:Labels and Label Mappings}
A label $\sigma$ is a trace defined as follows:
$$
\sigma \dddef \varepsilon\ |\ x\ |\ X\ |\ \sigma\cat \sigma, 
$$
where $x$ is a label variable for worlds; $X$ is a label variable for traces of worlds;
$\cat$ is a concatenation operator between label variables; 
$\varepsilon$ is called the ``empty label'', which satisfies that $\sigma \cat \varepsilon = \varepsilon\cat \sigma = \sigma$ for any label $\sigma$. 

\end{definition}

Use $\Conf$ to denote the set of all labels. 
Use $\WVar$ to express the set of all label variables for worlds; use $\TVar$ to express the set of all label variables for traces. 
Define $\LVar\dddef \WVar\cup \TVar$ be the set of all label variables. 

The \emph{substitution} on labels is defined in a natural way: for a label $\sigma$, $\sigma[\sigma'/U]$ returns the label by replacing each free occurrence of variable $U\in \LVar$ with label $\sigma'$. 
Since no quantifiers for label variables exist, all occurrences of label variables are free.

\begin{definition}[Label Mappings of \PPL]
Given a Kripke frame $(\Wd, \I)$, 
a world mapping $\wm$ in \PPL\ is a function from $\WVar$ to $\Wd$. 

A label mapping $\lm$ is a function from $\Conf$ to the set $\Wd^*$ of world traces, defined as follows:
\begin{enumerate}[i]
\item $\lm$ maps a variable $X\in\TVar$ to a trace of worlds in $\Wd^*$.
\item $\lm(\varepsilon)\dddef \epsilon$.
\item $\lm(x)\dddef \wm(x)$.
\item $\lm(\sigma_1\cat \sigma_2)\dddef \lm(\sigma_1)\cdot\lm(\sigma_2)$.
\end{enumerate} 
\end{definition}

Denote the set of all label mappings (w.r.t. $(\Wd, \I)$) as $\LM(\Wd, \I)$ (simply $\LM$). 

\begin{definition}[Labelled \PPL\ Formulas]
\label{def:Labelled PPL Formulas}
A ``labelled \PPL\ formula'' is of the form: $$\sigma : \phi,$$ 
where $\sigma\in \Conf$, $\phi$ is a \PPL\ formula. 
\end{definition}

\begin{definition}[Semantics of Labelled \PPL\ Formulas]
   Given a Kripke frame $(\Wd, \I)$ and a set $\LM$ of label mappings w.r.t. $(\Wd, \I)$, the satisfaction relation $\lm\models_{\LM} \sigma:\phi$ of a labeled formula $\sigma:\phi$ by a label mapping $\lm$ (w.r.t. $\LM$) is defined such that 
   $$
   \lm\models_{\LM} \sigma:\phi, \mbox{ if }\lm(\sigma) \models_{(\Wd, \I)} \phi. 
   $$
\end{definition}
Simply write $\models_{\LM}$ as $\models$ if from the context $\LM$ and $(\Wd, \I)$ are clear. 

\subsection{A Proof System for Labelled \PPL}
\label{section:A Proof System for Labelled PPL}
A \emph{sequent} is a logical argumentation of the form: 
$$\Gamma\Rightarrow \Delta,$$ where $\Gamma$ and $\Delta$ are finite multi-sets of formulas, called the \emph{left side} and the \emph{right side} of the sequent respectively. 
We use dot $\cdot$ to express $\Gamma$ or $\Delta$ when they are empty sets. 
Intuitively, a sequent $\Gamma\Rightarrow \Delta$ means that if all formulas in $\Gamma$ hold, then one of formulas in $\Delta$ holds. 
We use $\nu$ to represent a sequent. 
An \emph{inference rule} is of the form
$$
\begin{aligned}
\infer[]
{
    \nu
}
{
    \nu_1
    &
    ...
    &
    \nu_n
}
,\end{aligned}$$
with the premises $\nu_1,...,\nu_n$ and the conclusion $\nu$. 
The semantics of the rule is that 
the validity of sequents $\nu_1, ..., \nu_n$ implies the validity of sequent $\nu$. 
We call a formula pair $(\tau_1, \tau_2)$ a \emph{conclusion-premise} (CP) pair if $\tau_1$ and $\tau_2$ belong to nodes $\nu, \nu_i$ for some $1\le i\le n$ respectively. 
In this paper, we use a double-lined inference form:
$$
\begin{aligned}
   \infer=[]
{\phi}
{\phi_1 & ... & \phi_n} 
\end{aligned}
$$
as a shorthand to represent both rules
\begin{center}
$
\begin{aligned}
    \infer[]
{
    \Gamma\Rightarrow \phi, \Delta
}
{
    \Gamma\Rightarrow \phi_1, \Delta
    &
    ...
    &
    \Gamma\Rightarrow \phi_n, \Delta
}
\mbox{ and }
\infer[]
{
    \Gamma, \phi\Rightarrow \Delta
}
{
    \Gamma, \phi_1\Rightarrow  \Delta
    &
    ...
    &
    \Gamma, \phi_n\Rightarrow \Delta
}, 
\end{aligned}
$
\end{center}
provided any context $\Gamma$ and $\Delta$. 

Given a proof system, i.e., a set of inference rules, a \emph{proof} of a sequent is a finite \emph{proof tree} formed by deriving the sequent by applying these rules backwardly (i.e., from the conclusions to the premises). 
In this tree, each node is a sequent; the root node is the sequent itself (the conclusion); each leaf node is \emph{terminal} in the sense that it is the conclusion of an axiom. 
We say that a sequent can be proved if there is a proof for it. 

The proof system of labelled \PPL\ is displayed in Table~\ref{table:giiippl-regular-rules}, \ref{table:giiippl-trace-rules} and~\ref{table:giiippl-propositional-rules}, which consists of 3 parts: the rules for regular programs (Table~\ref{table:giiippl-regular-rules}), which are extended from the G3 sequent calculus for modal logics~\cite{Negri05}, the rules for trace formulas (Table~\ref{table:giiippl-trace-rules}), which are special in process logic, and the rules for labelled version of proposition logical formulas (Table~\ref{table:giiippl-propositional-rules}), which are augmented based on the classical sequent calculus for propositional logic. 
We call the resulting proof system $\GiiiPPL$. 

\begin{table}[t]
\centering
\small
\begin{tabular}{p{0.18\linewidth}p{0.74\linewidth}}
\toprule
Rule & Schematic form \\
\midrule
$([a]R)$ & $\begin{aligned}\infer[]{\Gamma\Rightarrow \sigma\cat x:[a]\phi,\Delta}{\Gamma,x\R_a y\Rightarrow \sigma\cat x\cat y:\phi,\Delta}\end{aligned}$, where $y\in\WVar$ is fresh\\[1.2em]
$([a]L)$ & $\begin{aligned}\infer[]{\Gamma,\sigma\cat x:[a]\phi,x\R_a y\Rightarrow\Delta}{\Gamma,\sigma\cat x\cat y:\phi\Rightarrow\Delta}\end{aligned}$, for some $y\in\WVar$\\[1.2em]
$([\phi?]R)$ & $\begin{aligned}\infer=[]{\Gamma\Rightarrow\sigma\cat x:[\psi?]\phi,\Delta}{\Gamma,x\cat x:\psi\Rightarrow\sigma\cat x\cat x:\phi,\Delta}\end{aligned}$\\[1.2em]
$([\phi?]L)$ & $\begin{aligned}\infer=[]{\Gamma,\sigma\cat x:[\psi?]\phi,x\cat x:\psi\Rightarrow\Delta}{\Gamma,\sigma\cat x\cat x:\phi\Rightarrow\Delta}\end{aligned}$\\[1.2em]
$([\seq])$ & $\begin{aligned}\infer=[]{\sigma:[\alpha\seq\beta]\phi}{\sigma:[\alpha][\beta]\phi}\end{aligned}$\\[1.2em]
$([\cho])$ & $\begin{aligned}\infer=[]{\sigma:[\alpha\cho\beta]\phi}{\sigma:[\alpha]\phi\wedge[\beta]\phi}\end{aligned}$\\[1.2em]
$([\lup])$ & $\begin{aligned}\infer=[]{\sigma:[\alpha^\lup]\phi}{\sigma:[\true?\cho\alpha\seq\alpha^\lup]\phi}\end{aligned}$\\[1.2em]
$(\Sub)$ & $\begin{aligned}\infer=[]{\Gamma[\sigma/U]\Rightarrow\Delta[\sigma/U]}{\Gamma\Rightarrow\Delta}\end{aligned}$, where $U\in\LVar$ is fresh w.r.t. $\Gamma$ and $\Delta$\\
\bottomrule
\end{tabular}
\caption{Rules of \GiiiPPL\ for regular programs and label substitution.}
\label{table:giiippl-regular-rules}
\end{table}

\begin{table}[t]
\centering
\small
\begin{tabular}{p{0.18\linewidth}p{0.74\linewidth}}
\toprule
Rule & Schematic form \\
\midrule
$(p)$ & $\begin{aligned}\infer=[]{x\cat\sigma:p}{x:p}\end{aligned}$\\[1.2em]
$(\ofst)$ & $\begin{aligned}\infer=[]{x\cat\sigma:\ofst\phi}{x:\phi}\end{aligned}$\\[1.2em]
$(\osuf)$ & $\begin{aligned}\infer=[]{x\cat\sigma:\phi\osuf\psi}{\sigma:\psi\vee(\phi\wedge\phi\osuf\psi)}\end{aligned}$\\[1.2em]
$(\osuf\ x)$ & $\begin{aligned}\infer[]{\Gamma,x:\phi\osuf\psi\Rightarrow\Delta}{}\end{aligned}$\\
\bottomrule
\end{tabular}
\caption{Rules of \GiiiPPL\ for trace formulas.}
\label{table:giiippl-trace-rules}
\end{table}

In the displayed rules of \GiiiPPL, a \emph{target pair} is the CP pair explicitly transformed in the rule, apart from the unchanged contexts $\Gamma$ and $\Delta$. E.g., $(\sigma\cat x : [a]\phi, \sigma\cat x\cat y:\phi)$ is a target pair of an instance of rule $([a]R)$, where $\sigma\cat x\cat y:\phi$ is transformed from $\sigma\cat x : [a]\phi$. 
While $(\sigma\cat x : [a]\phi, x\R_a y)$ is not a target pair since $x\R_a y$ is new generated, not transformed from $\sigma\cat x : [a]\phi$. 
The structural rules $(\Sub)$, $(ax)$, and $(cut)$ do not have target pairs.

\begin{table}[t]
\centering
\small
\begin{tabular}{p{0.18\linewidth}p{0.74\linewidth}}
\toprule
Rule & Schematic form \\
\midrule
$(\textit{ax})$ & $\begin{aligned}\infer[^{(\textit{ax})}]{\Gamma,\sigma:\phi\Rightarrow\sigma:\phi,\Delta}{}\end{aligned}$\\[1.2em]
$(\textit{cut})$ & $\begin{aligned}\infer[^{(\textit{cut})}]{\Gamma\Rightarrow\Delta}{\Gamma\Rightarrow\sigma:\phi,\Delta & \Gamma,\sigma:\phi\Rightarrow\Delta}\end{aligned}$\\[1.2em]
$(\textit{wk})$ & $\begin{aligned}\infer=[^{(\textit{wk})}]{\sigma:\phi}{}\end{aligned}$\\[1.2em]
$(\textit{ctr})$ & $\begin{aligned}\infer=[^{(\textit{ctr})}]{\sigma:\phi}{\sigma:\phi,\sigma:\phi}\end{aligned}$\\[1.2em]
$(\neg R)$ & $\begin{aligned}\infer[^{(\neg R)}]{\Gamma\Rightarrow\sigma:\neg\phi,\Delta}{\Gamma,\sigma:\phi\Rightarrow\Delta}\end{aligned}$\\[1.2em]
$(\neg L)$ & $\begin{aligned}\infer[^{(\neg L)}]{\Gamma,\sigma:\neg\phi\Rightarrow\Delta}{\Gamma\Rightarrow\sigma:\phi,\Delta}\end{aligned}$\\[1.2em]
$(\wedge R)$ & $\begin{aligned}\infer[^{(\wedge R)}]{\Gamma\Rightarrow\sigma:\phi\wedge\psi,\Delta}{\Gamma\Rightarrow\sigma:\phi,\Delta & \Gamma\Rightarrow\sigma:\psi,\Delta}\end{aligned}$\\[1.2em]
$(\wedge L)$ & $\begin{aligned}\infer[^{(\wedge L)}]{\Gamma,\sigma:\phi\wedge\psi\Rightarrow\Delta}{\Gamma,\sigma:\phi,\sigma:\psi\Rightarrow\Delta}\end{aligned}$\\
\bottomrule
\end{tabular}
\caption{Primitive propositional and structural rules of \GiiiPPL.}
\label{table:giiippl-propositional-rules}
\end{table}

The rules for regular programs in Table~\ref{table:giiippl-regular-rules} follow the standard labelled-sequent treatment of modal operators~\cite{Negri05}, adapted to the trace semantics of \PPL.  Rules $([a]R)$ and $([a]L)$ handle the box modality over atomic programs: to prove $\sigma\cat x : [a]\phi$, rule $([a]R)$ introduces a fresh world variable $y$ representing the successor of $x$ under action $a$, reducing the goal to $\sigma\cat x\cat y : \phi$ with the relational atom $xR_a y$ in the antecedent; rule $([a]L)$ uses an existing relational atom to consume the modality from the left.  Rules $([\phi?]R)$ and $([\phi?]L)$ handle tests: a test $\psi?$ produces a stuttering trace, so the rule appends $x$ to itself and requires the test condition to be satisfied.  Rules $([\seq])$, $([\cho])$, and $([\lup])$ unfold sequential composition, choice, and iteration, respectively, matching the standard PDL axioms (Table~\ref{table:ppl-proof-system}).  Rule $(\Sub)$ substitutes a fresh label variable $U$ with a concrete label $\sigma$ throughout a sequent; it is the key mechanism for constructing back-links in cyclic proofs.

The rules for trace formulas in Table~\ref{table:giiippl-trace-rules} capture the semantics of the temporal operators in process logic.  Rule $(p)$ states that an atomic proposition $p$ evaluated on a trace $x\cat\sigma$ depends only on the first world $x$: if $x : p$ holds, then so does $x\cat\sigma : p$ for any $\sigma$.  Rule $(\ofst)$ similarly reduces the evaluation of $\ofst\phi$ on $x\cat\sigma$ to the evaluation of $\phi$ at $x$ alone.  Rule $(\osuf)$ is the key rule for the suffix connective: on a trace $x\cat\sigma$, the formula $\phi\osuf\psi$ reduces to the disjunction $\psi \vee (\phi\wedge\phi\osuf\psi)$ on $\sigma$, reflecting that either the suffix $\sigma$ already satisfies $\psi$, or it satisfies $\phi$ and the same condition propagates along $\sigma$.  Finally, rule $(\osuf\ x)$ is an axiom that closes a branch when the trace reduces to a single world variable $x$: since $x$ has no proper suffix, $x : \phi\osuf\psi$ is vacuously false and the branch terminates. 

The rules for labelled propositional logical formulas (Table~\ref{table:giiippl-propositional-rules}) have the same meaning as their unlabelled counterparts for the classical propositional logic.

Soundness of Rules in \GiiiPPL\ is stated as the following theorem. 

\begin{theorem}
    \label{theo:Soundness of Each Rule of GiiiPPL}
Each rule in Tables~\ref{table:giiippl-regular-rules},~\ref{table:giiippl-trace-rules}, and~\ref{table:giiippl-propositional-rules} for \GiiiPPL\ is sound.

\end{theorem}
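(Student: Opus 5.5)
We will prove the theorem rule by rule, after first fixing what soundness means for a labelled sequent. Given a Kripke frame $(\Wd,\I)$ and a label mapping $\lm\in\LM$, we say $\lm$ satisfies the sequent $\Gamma\Rightarrow\Delta$ if either some formula of $\Gamma$ fails under $\lm$ or some formula of $\Delta$ holds under $\lm$. The relational atom $x\R_a y$ is interpreted as $\lm(x)\lm(y)\in\I(a)$, and $x\cat x:\psi$ is read through $\lm(x)\lm(x)=\lm(x)$ (concatenation glues the shared world). A sequent is valid if every $\lm$ satisfies it. A rule is sound if validity of all premises implies validity of the conclusion.

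For all rules except $([a]R)$ and $(\Sub)$, we will prove the stronger local property: for each fixed $\lm$, if $\lm$ satisfies the premises then $\lm$ satisfies the conclusion. For double-lined rules, this reduces to showing that the target formula and its premise formula(s) are semantically equivalent under every $\lm$ (for multi-premise rules, the conjunction of the premise formulas). Since the context is shared, equivalence gives both the left and right versions at once.

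The rules then split into three groups.

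\textbf{Propositional and structural rules.} The rules $(ax)$, $(cut)$, $(wk)$, $(ctr)$, $(\neg L/R)$ and $(\wedge L/R)$ follow immediately from clauses 6–7 of Definition~\ref{Semantics of PPL}, exactly as in the classical case.

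\textbf{Program rules.} The rules $([\seq])$, $([\cho])$ and $([\lup])$ follow from clauses 2–4 together with the semantics of $[\alpha]$. Rule $([\seq])$ uses associativity of trace concatenation: $tr\cdot(tr_1 tr_2)=(tr\,tr_1)tr_2$. Rule $([\lup])$ uses $\I(\alpha^*)=\I(\true?)\cup\I(\alpha)\cdot\I(\alpha^*)$, obtained by splitting $n=0$ from $n\ge1$.

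For $([\phi?]R)$ and $([\phi?]L)$, note that $\I(\psi?)$ consists of the stuttering traces $ss$ with $s\in\I(\psi)$. Hence $\lm(\sigma\cat x)\models[\psi?]\phi$ iff $\lm(x)\models\psi$ implies $\lm(\sigma\cat x\cat x)\models\phi$. Here $\lm(x\cat x)=\lm(x)$, so $x\cat x:\psi$ means $\psi$ holds at $\lm(x)$.

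For $([a]L)$, given $\lm(x)\lm(y)\in\I(a)$ and $\lm(\sigma\cat x)\models[a]\phi$, instantiating the box gives $\lm(\sigma\cat x\cat y)\models\phi$.

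For $([a]R)$ the argument is non-local. Suppose $\lm$ falsifies the conclusion. Then there is $s'$ with $\lm(x)s'\in\I(a)$ and $\lm(\sigma\cat x)\cdot \lm(x)s'\not\models\phi$. Define $\lm'$ as $\lm$ with $y\mapsto s'$. Because $y$ is fresh, $\lm'$ agrees with $\lm$ on $\Gamma$, $\Delta$ and $\sigma$, so $\lm'$ falsifies the premise.

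For $(\Sub)$ we will prove a substitution lemma: $\lm\models\tau[\sigma/U]$ iff $\lm^U_{\lm(\sigma)}\models\tau$, by induction on labels. Then any countermodel of the conclusion yields a countermodel of the premise. The main subtlety is a world variable $U=x$ with $\sigma$ not denoting a single world. Here we read the rule as requiring substitutions to respect sorts, i.e.\ $\sigma$ is a world label when $U\in\WVar$. Since $\lm(\sigma)$ is then a single world, the modified mapping is still a legitimate label mapping.

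\textbf{Trace rules.} These need the most care. We write $s=\lm(x)$ and $t=\lm(\sigma)$.

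For $(p)$ and $(\ofst)$ the trace is $s\cdot t$, whose first world is $s$ by the definition of concatenation. If $\sigma$ is nonempty we additionally need $t_b=s$ for the concatenation to be defined. We will adopt the convention, implicit in the definition of $\cdot$, that labels are only interpreted when their concatenations are defined. Rule $(p)$ also needs the convention that atomic propositions on traces are read at the first state, as axiom (xi) asserts.

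For $(\osuf)$ we will show $s\cdot t\models\phi\osuf\psi$ iff $t\models\psi\vee(\phi\wedge\phi\osuf\psi)$. This is a case split on whether the witnessing proper suffix $tr'$ of $s\cdot t$ equals $t$ (the maximal proper suffix, assuming $\lm(x\cat\sigma)=s\,t$ with $t$ its tail) or is a proper suffix of $t$. In the second case the intermediate-suffix condition forces $t\models\phi$, and the remaining witness shows $t\models\phi\osuf\psi$. The converse direction reassembles the witness in the same way. We will also need to read the intermediate condition in clause 9 as $tr''\in\I(\phi)$, correcting the apparent typo $tr'$.

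For $(\osuf\ x)$, a length-one trace has no proper suffix, so $x:\phi\osuf\psi$ is never satisfied on the left.

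We expect the main obstacle to be pinning down these conventions precisely enough for the trace rules to hold exactly as stated: whether $x\cat\sigma$ denotes $s$ followed by $t$ or glued at $s=t_b$, how the empty label behaves, and how world- and trace-variable substitution interact with concatenation. We will first try the convention that $\lm(x\cat\sigma)$ is the trace whose head is $\lm(x)$ and whose tail is $\lm(\sigma)$. We will then check $(\osuf)$ and $(\ofst)$ against it, and afterwards check that the same convention does not break $([a]R)$, $([a]L)$ and $([\phi?]R)$, $([\phi?]L)$.
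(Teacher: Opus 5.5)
\textbf{The gap is in the test rules.} Your overall route is the right one, and it is all the paper offers: the paper only says the result ``directly follows from the semantics'' and omits the proof. That route is a rule-by-rule semantic check, with the non-local fresh-variable argument for $([a]R)$, a sort-respecting substitution lemma for $(\Sub)$, the head/tail case split for $(\osuf)$, and the first-state reading of atoms. These cases do go through under the head/tail convention you settle on, provided the empty trace satisfies nothing.

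The failure is in $([\phi?]R)$ and $([\phi?]L)$. Your key claim there is that $\lm(\sigma\cat x)\models[\psi?]\phi$ iff $\lm(x)\models\psi$ implies $\lm(\sigma\cat x\cat x)\models\phi$. This claim uses two incompatible readings of $\cat$ at once:
\begin{itemize}
\item for the test condition you use gluing, $\lm(x\cat x)=\lm(x)$;
\item for the postcondition you need the stuttered trace $\lm(\sigma\cat x\cat x)=\lm(\sigma\cat x)\lm(x)$, since that is what $\lm(\sigma\cat x)\cdot\lm(x)\lm(x)$ equals.
\end{itemize}
Take $\sigma=\varepsilon$ to see the tension is built in. The premise then reads $\Gamma, x\cat x:\psi\Rightarrow x\cat x:\phi,\Delta$, so the same label must denote both the world $\lm(x)$ and the two-element trace $\lm(x)\lm(x)$.

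\textbf{A counterexample under either reading.} Let $\chi:=[\true?]\nxt\nxt\true$, which is a legal state formula. At a one-world trace $s$, the test yields $ss$, which is too short for $\nxt\nxt\true$. At $ss$, the test yields $sss$. So $\chi$ is false at $s$ and true at $ss$.
\begin{itemize}
\item Under gluing, the premise $x\cat x:\true\Rightarrow x\cat x:\neg\chi$ of $([\phi?]R)$ is valid. Its conclusion $\Rightarrow x:[\true?]\neg\chi$ is not, because the only continuation gives $ss$, which falsifies $\neg\chi$.
\item Under your head/tail convention, the premise $x\cat x:\neg\chi\Rightarrow x\cat x:\false$ is vacuously valid, since $ss\models\chi$. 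Its conclusion $\Rightarrow x:[(\neg\chi)?]\false$ is not.
\item $([\phi?]L)$ fails in the same way. Under head/tail, $x\cat x:\false\Rightarrow\cdot$ is valid but $x:[\chi?]\false,\ x\cat x:\chi\Rightarrow\cdot$ is not. Under gluing, $x\cat x:\chi\Rightarrow\cdot$ is valid but $x:[\true?]\chi,\ x\cat x:\true\Rightarrow\cdot$ is not.
\end{itemize}
So the check you defer to the end cannot succeed: the statement as written is false for the two test rules.

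\textbf{How to repair it.} The rules themselves must change. One option is to label the test condition by the single world, writing $x:\psi$ instead of $x\cat x:\psi$. Then the head/tail convention validates every rule, because $\lm(\sigma\cat x\cat x)$ is exactly the stuttered trace the semantics of $\psi?$ produces. The other option is to restrict tests to formulas that cannot distinguish $s$ from $ss$, such as Boolean combinations of atoms.
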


The proof directly follows from the semantics of labelled \PPL\ formulas.  We omit it in this paper. 

\begin{table}[t]
\centering
\small
\begin{tabular}{p{0.18\linewidth}p{0.74\linewidth}}
\toprule
Rule & Schematic form \\
\midrule
$(\vee R)$ & $\begin{aligned}\infer[^{(\vee R)}]{\Gamma\Rightarrow\sigma:\phi\vee\psi,\Delta}{\Gamma\Rightarrow\sigma:\phi,\sigma:\psi,\Delta}\end{aligned}$\\[1.2em]
$(\vee L)$ & $\begin{aligned}\infer[^{(\vee L)}]{\Gamma,\sigma:\phi\vee\psi\Rightarrow\Delta}{\Gamma,\sigma:\phi\Rightarrow\Delta & \Gamma,\sigma:\psi\Rightarrow\Delta}\end{aligned}$\\[1.2em]
$(\to R)$ & $\begin{aligned}\infer[^{(\to R)}]{\Gamma\Rightarrow\sigma:\phi\to\psi,\Delta}{\Gamma,\sigma:\phi\Rightarrow\sigma:\psi,\Delta}\end{aligned}$\\[1.2em]
$(\to L)$ & $\begin{aligned}\infer[^{(\to L)}]{\Gamma,\sigma:\phi\to\psi\Rightarrow\Delta}{\Gamma\Rightarrow\sigma:\phi,\Delta & \Gamma,\sigma:\psi\Rightarrow\Delta}\end{aligned}$\\
\bottomrule
\end{tabular}
\caption{Derived propositional rules used in \GiiiPPL.}
\label{table:giiippl-derived-propositional-rules}
\end{table}

\subsection{Cyclic Proof Structure of \GiiiPPLcyc}
\label{section:Cyclic Proof Structure of GiiiPPLcyc}

Due to the iterative programs in regular programs, the derivation of a formula in \GiiiPPL\ does not always terminate, as the same formula form may appear repeatedly in a derivation.  
Here is an example: 
$$
\infer[^{(ctr)}]
{\cdot\Rightarrow x : [a^*]p}
{
    \infer[^{([*],[\cho],\wedge R)}]
    {\cdot\Rightarrow x : [a^*]p, x : [a^*]p}
    {
        \infer[^{(wk)}]
        {\cdot\Rightarrow x : [a^*]p, x : [\true?]p}
        {
            \cdot\Rightarrow x : [a^*]p
        }
        &
        \infer[^{(wk)}]
        {\cdot\Rightarrow x : [a^*]p, x : [a\seq a^*]p}
        {
            \cdot\Rightarrow x : [a^*]p
        }
    }
}
$$
In this example, using the rules in \GiiiPPL\ yields the repeated root node in two branches. 
Continuing using the same rules lead to infinite derivation paths.

In cyclic proof approach, a \emph{preproof} is a finite proof tree in which there exist a special type of non-terminal leaf nodes, called \emph{buds} (cf.~\cite{Brotherston07}). 
A bud is identical to one of its ancestors in its syntactic form in the tree. 
A bud and its one of its identical ancestors together is called a \emph{back-link}. 
A \emph{derivation path} in a proof tree is a sequence of nodes $\nu_1\nu_2...\nu_m...$ ($m\ge 1$) starting from the root node $\nu_1$, 
where each node pair $(\nu_i, \nu_{i+1})$ ($i\ge 1$) is an instance of a rule. 
In a preproof, there always exists an infinite derivation path over one or more derivation branches which contain back-links.

A preproof may not be a valid proof. 
As in the example shown above, formula $x : [a^*]p$ is concluded in the preproof, but it may not be true under certain Kripke frames (e.g. $x : [a^*]\false$). 
This induces that we need to specify a particular \emph{cyclic condition} for \GiiiPPL, to ensure that a preproof must conclude a sound conclusion.

Below we define the notion of derivation traces in \GiiiPPL, then introduce the progressive derivation traces for \GiiiPPL. Based on these concepts, we finally give the cyclic condition for \GiiiPPL\ in Definition~\ref{def:Cyclic Proofs of GiiiPPL}. 

\begin{definition}[Derivation Traces]
\label{def:Derivation Traces}
    A ``derivation trace'' over a derivation path $\mu_1\mu_2...\mu_k\nu_1\nu_2...\nu_m...$ ($k \ge 0, m\ge 1$) is a sequence $\tau_1\tau_2...\tau_m...$ of formulas, where each $\tau_i$ ($1\le i\le m$) is a formula in node $\nu_i$. 
    Each CP pair $(\tau_i, \tau_{i+1})$ ($i\ge 1$) of derivation $(\nu_i, \nu_{i+1})$ satisfies special conditions as follows according to $(\nu_i, \nu_{i+1})$ being the different instances of rules from $\GiiiPPLcyc$:
    \begin{enumerate}
        \item If $(\nu_i, \nu_{i+1})$ is an instance of rule $([a]R)$,  $([a]L)$, $([\phi?R])$, $([\phi?L])$, $(\neg R)$, $([\seq])$, $[\cho]$, $[\lup]$, $(p)$, $(\ofst)$, $(\osuf)$, $(\osuf\ x)$, $(\neg L)$, $(\wedge R)$ or $(\wedge L)$, 
        then either $(\tau_i, \tau_{i+1})$ is a target pair of the rule, or $\tau_i = \tau_{i+1}$;

        \item If $(\nu_i, \nu_{i+1})$ is an instance of rule $(\Sub)$, 
        then $\tau_i = \Sub(\sigma) : \phi$ and $\tau_{i+1} = \sigma : \phi$ for some $\sigma\in \Conf$ and formula $\phi$;
        
        \item If $(\nu_i, \nu_{i+1})$ is an instance of the other rules (that do not have target pairs), 
        then $\tau_i = \tau_{i+1}$. 
    \end{enumerate}
\end{definition}

\begin{definition}[Progressive Derivation Traces of \GiiiPPLcyc]
\label{def:Progressive Step/Progressive Derivation Trace of GiiiPPLcyc}
In a preproof of system $\GiiiPPLcyc$, given a derivation trace $\tau_1\tau_2...\tau_m...$ over a derivation path $...\nu_1\nu_2...\nu_m...$ ($m\ge 1$) starting from $\tau_1$ in node $\nu_1$, 
a CP pair $(\tau_i, \tau_{i+1})$ ($1\le i\le m$) of derivation $(\nu_i, \nu_{i+1})$ is called a ``progressive step'', if $(\tau_i, \tau_{i+1})$ is one of the following CP pairs for different situations of $(\nu_i, \nu_{i+1})$: 
\begin{enumerate}
\item when $(\nu_i, \nu_{i+1})$ is an instance of rule $([a]R)$:
   $$
    \begin{gathered}
        \infer[]
        {\Gamma\Rightarrow \sigma\cat x : [a]\phi, \Delta,}
        {
        \Gamma, x R_a y\Rightarrow \sigma\cat x\cat y : \phi, \Delta
        }
    \end{gathered}
    ,
    $$
    then $(\tau_i, \tau_{i+1})$ is the pair $(\sigma\cat x : [a]\phi, \sigma\cat x\cat y : \phi)$. 
\ifx
\item when $(\nu_i, \nu_{i+1})$  is an instance of rule $([a]L)$:
    $$
    \begin{gathered}
        \infer[^{([a]L)}]
        {\Gamma, \sigma\cat x : [a]\phi, x R_a y\Rightarrow \Delta,}
        {
        \Gamma, \sigma\cat x\cat y : \phi\Rightarrow \Delta
        }
    \end{gathered}
    ;
    $$
\fi
\item when $(\nu_i, \nu_{i+1})$  is an instance of rule $([\phi?]R)$:
       $$
    \begin{gathered}
        \infer[^{([\phi? R])}]
        {\Gamma\Rightarrow \sigma \cat x: [\psi?]\phi, \Delta}
        {\Gamma, x\cat x: \psi\Rightarrow \sigma\cat x\cat x : \phi, \Delta}
            \end{gathered}
    ,
    $$
    then $(\tau_i, \tau_{i+1})$ is the pair $(\sigma \cat x: [\psi?]\phi, \sigma\cat x\cat x : \phi)$.
\ifx
\item when $(\nu_i, \nu_{i+1})$  is an instance of rule $([\phi?]L)$:
       $$
    \begin{gathered}
        \infer[^{([\phi? L])}]
        {\Gamma, \tau_i :: (\sigma \cat x: [\psi?]\phi), x\cat x : \psi\Rightarrow \Delta}
        {
            \Gamma, \tau_{i+1} :: (\sigma\cat x\cat x : \phi)\Rightarrow \Delta
        }
    \end{gathered}
    $$
\fi
\item when $(\nu_i, \nu_{i+1})$  is an instance of rule $(\osuf)$:
           $$
    \begin{gathered}
         \infer=[^{(\osuf)}]
            {x\cat \sigma : \phi\osuf\psi}
            {\sigma : \psi \vee (\phi \wedge \phi\osuf \psi)}
    \end{gathered}
    ,
    $$
    then $(\tau_i, \tau_{i+1})$ is the pair $(x\cat \sigma : \phi\osuf\psi, \sigma : \psi \vee (\phi \wedge \phi\osuf \psi))$.
\end{enumerate}
If a derivation trace is finite or it has an infinite number of progressive steps, we say that the trace is ``progressive''. 

Say a derivation path is ``progressive'' if it has a progressive derivation trace. 
\end{definition}

\begin{definition}[Cyclic Proofs of \GiiiPPL]
\label{def:Cyclic Proofs of GiiiPPL}
A preproof of \GiiiPPL\ is called a ``cyclic proof'', if there is a progressive derivation trace over every infinite derivation path. 
\end{definition}

Section~\ref{section:Soundness of GiiiPPLcyc} shows that a cyclic proof concludes a valid conclusion. 

In \GiiiPPL, we also say that a sequent is proved if there is a cyclic proof of it. 
For a labelled \PPL\ formula $\phi$, we write $\GiiiPPL\vdash \phi$ (or simply $\vdash \phi$) if sequent $\cdot \Rightarrow \phi$ can be proved in \GiiiPPLcyc. 

\section{Soundness of \GiiiPPLcyc}
\label{section:Soundness of GiiiPPLcyc}

We analyze the soundness of \GiiiPPLcyc\ for any \PPL\ formulas. 
In \GiiiPPLcyc, a valid \PPL\ formula $\phi$ can be captured by its equivalent labeled form: $X : \phi$, with $X\in \TVar$.
Indeed, $\phi$ is valid if and only if every trace satisfies it: $\phi$ is valid iff for every trace $tr$, $tr \models \phi$, iff for every label mapping $\lm$, $\lm(X) \models \phi$, iff for every label mapping $\lm$, $\lm \models X : \phi$.  Thus the validity of $X : \phi$ in labeled process logic\ implies the validity of $\phi$ in process logic, and vice versa.

\begin{theorem}[Soundness of \GiiiPPLcyc]
    \label{theo:Soundness of GiiiPPLcyc}
For any labelled \PPL\ formula $X : \phi$ with variable $X\in \TVar$, 
if $\vdash X : \phi$, then $\models X : \phi$. 
\end{theorem}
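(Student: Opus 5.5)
We argue by contradiction, following the standard infinite-descent method for cyclic proofs as in \cite{Brotherston07,zhang2025parameterizeddynamiclogic}. Suppose $\cdot\Rightarrow X:\phi$ has a cyclic proof $\mathcal{P}$ but $\not\models X:\phi$, so some label mapping $\lm_1$ falsifies the root sequent. A sequent $\Gamma\Rightarrow\Delta$ is falsified by $\lm$ when $\lm$ satisfies every formula and relational atom in $\Gamma$ and no formula in $\Delta$.

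\textbf{Step 1: building an infinite falsified path.} By Theorem~\ref{theo:Soundness of Each Rule of GiiiPPL}, every rule is locally sound. Hence whenever a conclusion $\nu_i$ is falsified by $\lm_i$, some premise $\nu_{i+1}$ is falsified by some $\lm_{i+1}$. We need to choose these mappings carefully. For $([a]R)$, $\lm_{i+1}$ extends $\lm_i$ by sending the fresh $y$ to a witnessing successor world. For $(\Sub)$ from $\Gamma[\sigma/U]\Rightarrow\Delta[\sigma/U]$ to $\Gamma\Rightarrow\Delta$, we set $\lm_{i+1}(U)\dddef\lm_i(\sigma)$. For the other rules, $\lm_{i+1}=\lm_i$. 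Terminal leaves (axioms such as $(\textit{ax})$ and $(\osuf\ x)$) are valid, so they can never be falsified. Whenever we reach a bud, we jump to its companion, which is syntactically identical and thus falsified by the same mapping. This produces an infinite derivation path $\nu_1\nu_2\ldots$ with falsifying mappings $\lm_1\lm_2\ldots$.

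\textbf{Step 2: following the progressive trace.} By Definition~\ref{def:Cyclic Proofs of GiiiPPL}, some suffix of this path carries an infinite progressive derivation trace $\tau_j\tau_{j+1}\ldots$ with infinitely many progressive steps. We show by induction along the trace that each $\tau_k$ is falsified by $\lm_k$ in the appropriate sense: false under $\lm_k$ if it lies on the right, true under $\lm_k$ if it lies on the left.

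\textbf{Step 3: a well-founded measure.} To each $\tau_k=\sigma_k:\psi_k$ we assign a measure $m_k$, a finite multiset of natural numbers ordered by the multiset ordering $\pmult$, which is well-founded. The intended measure records, for each unresolved $[\alpha^*]$ or $\osuf$ obligation in $\psi_k$, the length of the shortest witness trace. For a right-hand box, this witness is a shortest trace $tr'\in\I(\alpha)$ with $\lm_k(\sigma_k)tr'\notin\I(\psi)$. For an $\osuf$ formula, it is the length of the remaining suffix $\lm_k(\sigma_k)$. The mappings $\lm_{k+1}$ are chosen to follow such minimal witnesses: in $([\cho])$ and $([\lup])$ we take the disjunct actually containing the minimal counterexample. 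With these choices, we check three things:
\begin{itemize}
\item non-progressive steps never increase $m_k$ (using $\mult$);
\item the progressive steps $([a]R)$, $([\phi?]R)$ and $(\osuf)$ strictly decrease $m_k$, since one world of the witness is consumed or the suffix shortens;
\item $(\Sub)$ preserves $m_k$, because $\lm_{k+1}(\sigma)=\lm_k(\Sub(\sigma))$ by construction.
\end{itemize}
Infinitely many strict decreases in a well-founded order is impossible, which contradicts the existence of the path and proves soundness.

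\textbf{Main obstacle.} The hardest part is Step 3: defining the measure so that it is genuinely monotone. For $([\phi?]R)$, the stuttering step $x\cat x$ adds no new world, so the measure must count program steps in the unfolding of $\alpha$, including test steps, rather than worlds. The measure must also stay compatible with the falsification invariant across $(\neg R)$ and $(\neg L)$, which move the traced formula from one side of the sequent to the other. My first attempt will index the measure by the pair consisting of the remaining length of the minimal counterexample execution (counting test steps) and the remaining suffix length, combined as a multiset. I would then verify the three conditions above rule by rule.
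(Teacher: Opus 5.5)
Your outline is the paper's: contradiction, an infinite falsified path from local soundness, a progressive trace on that path, and descent in a well-founded order. Your Step~3 plays the role of Lemma~\ref{lemma:infinite descent sequence}; the paper measures a traced formula by the finite set $\EX(\lm(\sigma),\phi)$ of minimal counter-example traces under $\mult$, and you use a multiset of lengths. The gap is Step~3, which carries all the substance and, for the measure you describe, is false. Box steps lengthen labels, and label length is exactly what your $\osuf$ component counts. Follow $\sigma\cat x:[a](\phi\osuf\psi)$. On the left, the non-progressive step $([a]L)$ replaces the component $m=|\lm(\sigma\cat x)|$ by $m+1$ and nothing else changes, since your measure assigns no witness to a left-hand box. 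On the right, the progressive step $([a]R)$ sends $\{m\}$ to $\{m+1\}$, an increase in the multiset order. The same happens if you also count the one-step witness of $[a]$, since $\{1,m\}$ also goes to $\{m+1\}$. The test rules behave the same way. So both ``non-progressive steps never increase $m_k$'' and ``progressive steps strictly decrease $m_k$'' fail. You also cannot borrow the paper's lemma to fill this. Its $([a]R)$ case in the appendix has the same defect: with $\phi,\psi$ modality-free, the set after the step is $\{\lm(\sigma\cat x\cat y)\}$. That set does not consist of proper suffixes of the length-two traces forming $\EX(\lm(\sigma\cat x),[a](\phi\osuf\psi))$.

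What is missing is either a nesting-sensitive measure or, more simply, the observation that monotonicity is needed only eventually. Formula components along a trace range over a finite Fischer--Ladner-style closure, and the only regenerating steps are $([\lup])$-unfoldings and $(\osuf)$. So an infinite trace eventually stays inside a single regeneration cycle, of one of two kinds.
\begin{itemize}
\item It returns infinitely often to one $\osuf$ formula. Between returns only $(\osuf)$, propositional rules and $(\Sub)$ act on its label. Hence $|\lm_k(\sigma_k)|$ never increases and drops at each $(\osuf)$ step, whichever side the formula is on.
\item It cycles through formulas $[\gamma_1]\cdots[\gamma_k][\alpha^*]\theta$ using only box, $([\seq])$, $([\cho])$, $([\lup])$ and $\wedge$ steps, none of which moves a formula across the sequent. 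On the left such a cycle has no progressive step, so that case is excluded. On the right, the length of a shortest counter-example for the whole box prefix (stutter steps included) is non-increasing and drops at each $([a]R)$ and $([\phi?]R)$.
\end{itemize}
Increases before the trace settles are harmless.

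There is also a circularity between Steps~1 and~3. The choices that follow shortest counter-examples are the value of the fresh $y$ in $([a]R)$ and the premise of $(\wedge R)$; $([\cho])$ and $([\lup])$ have a single premise, so there is no choice there. These choices must be fixed while building the path in Step~1, because the progressive trace is extracted only afterwards. This is coherent because each choice depends only on the principal formula and leaves every other formula's value unchanged. As written, Step~1 picks arbitrary witnesses and Step~3 re-chooses them along the traced formula.
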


We mainly follow the idea behind~\cite{Brotherston08,zhang2025parameterizeddynamiclogic} to prove Theorem~\ref{theo:Soundness of GiiiPPLcyc}. 
We proceed by contradiction, assuming that the conclusion of a cyclic proof is invalid. 
Then by the soundness of the rules of \GiiiPPL\ there must exist an infinite \emph{invalid derivation path} in which each node is invalid. 
This violates the definition of the well-foundedness (cf.~\cite{Dershowitz79}) itself.


Our proof follows the general strategy of~\cite{zhang2025parameterizeddynamiclogic}, but relaxes the termination condition required there, since the present work operates over a more explicit program model---namely, regular programs.  The key technical contribution is the introduction of \emph{minimal counter-example traces} for \PPL\ formulas (Definition~\ref{def:Minimum Counter-example Traces}) together with a proof that such traces are always finite with respect to a given \PPL\ formula and a starting trace (Proposition~\ref{prop:finite execution strings}).  This finiteness result is what drives the well-founded descent argument central to the soundness proof.

Below we firstly introduce the concept of well-foundedness and the well-founded relation we rely on, then we focus on the main skeleton of proving Theorem~\ref{theo:Soundness of GiiiPPLcyc}. 
Other proof details are given in Appendix~\ref{section:About the Soundness of GiiiPPLcyc}.

\begin{definition}[Well-foundedness]
Given a set $S$ and a partial-order relation $\preceq$ on $S$, 
    $\preceq$ is called a \emph{well-founded relation} over $S$, if for any element $a$ in $S$, there is no infinite descent sequence: $a \succ a_1 \succ a_2 \succ ...$ in $S$. 
   Set $S$ is called a \emph{well-founded} set w.r.t. $\preceq$. 
\end{definition}

\begin{definition}[Relation $\mult$]
\label{def:Relation mult}
Given two finite sets $\cnt_1$ and $\cnt_2$ of strings (i.e. a sequence of alphabets),  
    $\cnt_1\mult \cnt_2$ is defined if either (1) $\cnt_1 = \cnt_2$; or (2) set $\cnt_1$ can be obtained from $\cnt_2$ by replacing one or more elements of $\cnt_2$ each with a finite number of elements, such that
    for each replaced element $\str$, its replacements $\str_1,...,\str_n$ ($n\ge 1$) in $\cnt_1$ are proper suffixes of $\str$.  
\end{definition}

\begin{prop}
\label{prop:well-foundedness of relation pmult}
    Relation $\mult$ is a  well-founded relation. 
\end{prop}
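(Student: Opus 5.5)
The plan is to reduce the claim to the Dershowitz--Manna theorem on multiset orderings~\cite{Dershowitz79}. First, observe that the proper-suffix relation $\psuf$ on finite strings is well-founded. If $\str'\psuf\str$, then $|\str'|<|\str|$, so any descending chain $\str_0\succ_s\str_1\succ_s\cdots$ yields a strictly decreasing sequence of natural numbers. Such a chain is therefore finite.

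Second, identify $\pmult$ (the strict part of $\mult$) with a restriction of the multiset extension of $\psuf$. By Definition~\ref{def:Relation mult}, $\cnt_1\pmult\cnt_2$ means that $\cnt_1$ arises from $\cnt_2$ by removing a nonempty finite collection of elements. Each removed element $\str$ is replaced by finitely many elements, each strictly $\psuf$-below $\str$. This is exactly an instance of the Dershowitz--Manna ordering $\cnt_1 <_{mul}\cnt_2$: some nonempty $Y\subseteq\cnt_2$ is removed and a finite $Z$ is added, with every $z\in Z$ dominated by some $y\in Y$. Since the ambient objects are finite sets, one can read them as multisets. Any identification of coinciding replacements only removes elements, and the result stays within $<_{mul}$. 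Hence $\pmult\subseteq <_{mul}$.

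Third, invoke the Dershowitz--Manna theorem: if $\prec$ is well-founded on $A$, then $<_{mul}$ is well-founded on finite multisets over $A$. Any infinite descent $\cnt\pmultr\cnt_1\pmultr\cnt_2\pmultr\cdots$ would then be an infinite $<_{mul}$-descent, which is a contradiction.

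To keep the argument self-contained, include a direct König's-lemma proof. Given an infinite descent, build a forest whose roots are the elements of $\cnt$. Each replacement step attaches the replacements of $\str$ as children of $\str$, and unreplaced elements persist. The forest is finitely branching, since each step introduces finitely many replacements and $\cnt$ is finite. It is also infinite, because every step replaces at least one element and so adds at least one node, or at least marks a node as expanded. By König's lemma there is an infinite branch, which is an infinite $\psuf$-chain and contradicts the first step.

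The main obstacle is bookkeeping, not mathematics. One must check that $\mult$ really is a partial order, so that the strict part is well defined and antisymmetric; this again follows from the length argument. One must also handle the set-versus-multiset mismatch when replacements collide with existing elements, and take care that "infinitely many steps" yields infinitely many nodes in the forest.
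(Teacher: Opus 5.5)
Your proof is correct, and it takes a different, more careful route than the paper. The paper omits the proof and only offers the intuition that each replaced string is replaced by strictly shorter strings, so that ``the maximum length of a finite set cannot be infinitely decreasing.'' Read literally, that sketch does not suffice. Along a descent $\cnt_0\pmultr\cnt_1\pmultr\cdots$ the maximum length of the whole set need not drop at each step, because a step may replace only short strings and keep the longest one (e.g.\ $\{aaa,c\}\pmult\{aaa,bc\}$). What genuinely decreases is the multiset of lengths---equivalently, the vector of counts of strings of each length, compared lexicographically from the longest length down. Your observation that $\pmult$ is contained in the Dershowitz--Manna extension of $\psuf$, and your K\"onig's-lemma forest, are exactly the standard ways to make this rigorous. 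So your approach gives a complete argument that works for any well-founded base order, whereas the paper's gains brevity at the cost of a gap.

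One small point in your bookkeeping: the embedding into the multiset ordering gives antisymmetry, but it does not give transitivity of $\mult$ itself. The paper's definition of well-foundedness presupposes a partial order, so this needs its own easy check: compose the two replacement steps and use transitivity of $\psuf$.
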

We omit the proof of Proposition~\ref{prop:well-foundedness of relation pmult}. 
Intuitively, we observe that for two sets $\cnt_1$ and $\cnt_2$ such that $\cnt_1\pmult \cnt_2$, 
for each set $D_{\str}$ of the strings in $\cnt_1$ that replace an element $\str$ in $\cnt_2$, 
the maximum length of the elements of $D_{\str}$ is strictly smaller than that of $\str$. 
By that $\cnt_2$ is finite, $\cnt_1$ is finite. 
And the maximum length of a finite set cannot be infinitely decreasing.

Below we give the main skeleton of the proof of Theorem~\ref{theo:Soundness of GiiiPPLcyc}, deferring the details of the proof of Lemma~\ref{lemma:infinite descent sequence} to Appendix~\ref{section:About the Soundness of GiiiPPLcyc}.

Following the main idea above, we first introduce a class of well-founded sets called \emph{minimal counter-example traces of \PPL\ formulas}, which relates to the semantics of \PPL\ formulas along an invalid derivation path.  We then state Lemma~\ref{lemma:infinite descent sequence}, which is the key technical step in the proof of Theorem~\ref{theo:Soundness of GiiiPPLcyc} that follows. 

The introduction of the trace-associated action strings next is for computing the minimal traces in Definition~\ref{def:Minimum Counter-example Traces}. 

\begin{definition}[Trace-associated Action Strings]
Let $tr$ be a trace of a regular program $\alpha$. 
The ``action string'' $\str$ associated to $tr$ (w.r.t. $\alpha$) is a sequence of actions over the transitions of $tr$. 

We ambiguously use $(tr, \str)\in \I(\alpha)$ to denote that $tr$ is a trace of $\alpha$ while $\str$ is the associated action string of $tr$.
\end{definition}

For example, let $a_1$, $a_2$ be two actions, and let $tr\in \I(a_1\seq a_2)$, then $a_1a_2$ is the action string of $tr$. 
For a trace and a program, the action string of the trace is unique.

\ifx
\begin{definition}[Execution Strings of Programs]
Given a trace $w\in \Wd^+$ and a regular program $\alpha$, 
an ``execution string'' of $\alpha$ w.r.t. $w$ is a sequence of atomic programs over an execution trace of $\alpha$ starting from $w_e$.
Use $\EX(w, \alpha)$ to denote the set of all execution strings of $\alpha$ w.r.t. $w$. 
Call a string $\str$ ``minimal'' w.r.t. a set $A$ if there is no other suffixes of $\str$ in $A$. 
We use $\mex(w,\alpha)$ to denote the set of all minimal strings of $\EX(w, \alpha)$.
\end{definition}

\begin{prop}
    \label{prop:finite execution strings}
For a trace $tr\in \Wd^+$ and a regular program $\alpha$, $\I(\alpha)$ is finite. 
\end{prop}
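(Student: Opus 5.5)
The plan is to prove the statement by structural induction on $\alpha$, reading $\I(\alpha)$ as the set of traces of $\alpha$ that can be appended to $tr$, namely those $tr'\in\I(\alpha)$ with $tr'_b = tr_e$. Without this restriction the set is plainly infinite, for instance whenever $\I(a)$ is. I will also assume that the frame $(\Wd,\I)$ is image-finite: for each world $s$ and action $a$, only finitely many $ss'\in\I(a)$ start at $s$. Without that hypothesis the base case already fails. Write $T_s(\alpha)\dddef\{tr'\in\I(\alpha)\mid tr'_b=s\}$. The goal is to show $T_s(\alpha)$ is finite for every world $s$, and then instantiate $s=tr_e$.

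The base and closure cases are as follows.
\begin{itemize}
\item For an atomic program $a$, finiteness of $T_s(a)$ is exactly image-finiteness.
\item For a test $\phi?$, $T_s(\phi?)$ is either $\{ss\}$ or $\emptyset$, depending on whether $s\in\I(\phi)$.
\item For a choice, $T_s(\alpha\cho\beta)=T_s(\alpha)\cup T_s(\beta)$ by clause~3 of Definition~\ref{Semantics of PPL}.
\item For a sequence, clause~2 gives $T_s(\alpha\seq\beta)=\bigcup_{t_1\in T_s(\alpha)}\{t_1t_2\mid t_2\in T_{(t_1)_e}(\beta)\}$. This is a finite union of finite sets, by the induction hypotheses applied at $s$ and at each of the finitely many end worlds $(t_1)_e$. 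The induction hypothesis must therefore be quantified over all starting worlds.
\end{itemize}

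The main obstacle is the iteration case. Clause~4 gives $T_s(\alpha^*)=\bigcup_{n\ge0}T_s(\alpha^n)$. Each $T_s(\alpha^n)$ is finite by $n$-fold use of the sequence case, but the union over $n$ is infinite in general. Already $a^*$ with a self-loop $ss\in\I(a)$ yields $s, ss, sss,\ldots$. So the literal claim cannot go through for programs containing $^*$, and I would not try to force it.

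What I would do at this step is make explicit the additional restriction that the iteration case needs, and show that under it the union stabilises at some finite $N$. The most natural candidate is to count only unfoldings whose successive $\alpha$-segments start at pairwise distinct worlds, over a finite world set. Such unfoldings have at most $|\Wd|$ segments, so $T_s(\alpha^*)\subseteq\bigcup_{n\le|\Wd|}T_s(\alpha^n)$, which is finite by the sequence case. This restricted version is the one that would serve later in the argument built on Definition~\ref{def:Minimum Counter-example Traces}, since a minimal counter-example never needs to repeat a loop state. Checking that this restriction is compatible with how the soundness proof uses finiteness is the step I expect to be hardest.
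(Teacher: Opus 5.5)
Your analysis is right, and the gap lies in the paper rather than in your proposal. The statement is false, so it cannot be proved. The paper's only support for it is the remark that it holds ``by the finite expressiveness of regular programs'', with the proof omitted, and your $a^*$ example refutes exactly that.

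\begin{itemize}
\item Read literally, $tr$ plays no role, and $\I(\alpha)$ is infinite already for a single assignment in \FOPL, because $\Wd_\fo$ is infinite.
\item Under your charitable reading (traces starting at $tr_e$, image-finite frames) it still fails exactly at iteration. This happens in the paper's own first-order setting even though assignments are deterministic: from any world $s$, $(x:=1)^*$ has the pairwise distinct traces $ss,\ ss's',\ ss's's',\ldots$ with $s'=s[x\mapsto 1]$.
\item A small slip in your example: $\alpha^0=\true?$ contributes the stuttering trace $ss$, so a self-loop gives $ss, sss, \ldots$ rather than $s, ss, \ldots$. Nothing else changes.
\end{itemize}

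Your star-free induction is correct, and it is essentially all that survives. The only finiteness regular programs genuinely provide is syntactic: finitely many programs arise when $\alpha$ is unfolded, since its Fischer--Ladner closure is finite. That is what makes back-links possible, but it says nothing about $\I(\alpha)$. Consistently with this, the paper itself discards this formulation (it sits in a disabled block). The active text instead uses a finiteness claim for the minimum counter-example traces $\EX(tr,\phi)$.

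Your proposed repair, however, would not serve the soundness argument as you hope, for two reasons.

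\begin{itemize}
\item It assumes $\Wd$ finite. That fails for $\Wd_\fo$, and it is not assumed for \PPL\ either, since soundness is claimed over arbitrary frames.
\item Your premise that a minimal counter-example never repeats a loop state is false for trace formulas, because they can count steps. Take $\Wd=\{s\}$, $\I(a)=\{ss\}$ and the formula $[a^*]\neg\nxt\nxt\nxt\true$, which says every run from the current world has at most three worlds. It fails at $s$. But every counter-example trace has length at least four, so it uses at least two $a$-segments, both starting at $s$.
\end{itemize}

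Your restriction allows at most $|\Wd|=1$ segment, so it leaves no counter-example at all for a falsified formula. That breaks the descent lemma, whose progressive cases (e.g.\ $([a]R)$) rely on the counter-example set of a falsified box formula being non-empty. Whatever finiteness the descent argument needs must come from selecting particular counter-examples for the given mapping. It cannot come from a finiteness property of $\I(\alpha)$ or of a loop-free restriction of it.
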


The proof of the proposition is by the finite expressiveness of regular programs. 
Omits the proof here.
\fi

\begin{definition}[Counter-example Traces]
    \label{def:Counter-example Traces}
Given a Kripke frame $(\Wd, \I)$ and a dynamic formula $\phi$, the set of ``counter-example traces'' of $\phi$ beginning with a trace-string pair $(tr, \str)$, denoted by $\CET(tr, \str, \phi)$, is the set of trace-string pairs (that can be used to violate formula $\phi$), which is inductively defined as follows:
\begin{enumerate}
    \item $\CET(tr, \str, \psi)\dddef \{(tr, \str)\}$, if $\psi$ is a non-dynamic formula;
    \item $\CET(tr, \str, [\la\alpha\ra]\psi)\dddef \{(tr', \str')\ |\ (tr',\str')\in \I(\alpha), tr_e = tr'_b\}$, if $\psi$ is a non-dynamic formula;
    \item $\CET(tr, \str, [\la\alpha\ra]\psi)\dddef \{(tr'\cdot tr'', \str'\cdot \str'')\ |\ (tr',\str')\in \I(\alpha), tr_e = tr'_b, (tr'',\str'')\in \CET(tr\cdot tr', \str\cdot \str', \psi)\}$, if $\psi$ is a dynamic formula;
    \item $\CET(tr, \str, \psi\wedge \varphi)\dddef \CET(tr, \str,\psi)\cup \CET(tr, \str, \varphi)$;
    \item $\CET(tr, \str, \psi\vee\varphi)\dddef \CET(tr, \str, \psi)\cap \CET(tr, \str, \varphi)$.
\end{enumerate}
\end{definition}

Note that it is possible that a dynamic formula has some non-dynamical sub-formulas. 
For example, in a formula $\phi = (p\wedge [\alpha]q)$, $\phi$ is dynamical, but $p$ is non-dynamical.

\ifx
\begin{definition}[Counter-example Traces]
Given a Kripke frame $(\Wd, \I)$ and a dynamic formula $\phi$, the set of ``counter-example traces`` of $\phi$, denoted by $\CET(\phi)$, is the set of trace-action-string pairs whose traces violate formula $\phi$, which is inductively defined as follows:
\begin{enumerate}
    \item $\CET(\psi)\dddef \{(tr, \epsilon)\ |\ tr\in \I(\neg\psi)\}$, if $\psi$ is a non-dynamic formula; 
    \item $\CET([\alpha]\psi)\dddef \{(tr', \str')\ |\ (tr',\str')\in \CET(\psi), (tr',\str')\text{ has $(tr, \str)\in \I(\alpha)$ as its prefix}\}$;
    \item $\CET(\la\alpha\ra\psi)\dddef \{(tr', \str')\ |\ (tr',\str')\in \CET(\psi), (tr',\str')\text{ has $(tr, \str)\in \I(\alpha)$ as its prefix}\}$;
    \item $\CET(\psi\wedge \varphi)\dddef \CET(\psi)\cup \CET(\varphi)$;
    \item $\CET(\psi\vee\varphi)\dddef \CET(\psi)\cap \CET(\varphi)$.
\end{enumerate}
Given a trace $tr\in \Wd^*$, the set of ``counter-example traces of $\phi$ beginning with $tr$'', denoted by $\CET(tr,\phi)$, is defined such that $$\CET(tr, \phi)\dddef \{tr'\ |\ tr'\in \CET(\phi), tr'\text{ has $tr$ as its prefix}\}. $$
\end{definition}
\fi

\begin{definition}[Minimum Counter-example Traces]
\label{def:Minimum Counter-example Traces}
    Given a Kripke frame $(\Wd, \I)$ and a dynamic formula $\phi$, the set of ``minimum counter-example traces'' of $\phi$ beginning with a trace $tr\in \I^*$, denoted by $\EX(tr, \phi)$, is defined such that
    $$
    \EX(tr, \phi)\dddef \{tr'\ |\ (tr', \str)\in \CET(tr, \epsilon, \phi), \str\text{ is minimal}\}, 
    $$
    where $\epsilon$ is the empty string, satisfying that $w\cdot \epsilon = \epsilon \cdot w = \epsilon$ for any string $w$; ``minimal'' means that $\str$ is a string with minimal length among all other strings $\str'$ such that $(tr', \str')\in \CET(tr, \epsilon, \phi)$. 
\end{definition}

\ifx
\begin{definition}[Execution Strings of \PPL\ Formulas]
Given a trace $w\in \Wd^+$ and a dynamic formula $\phi$, the set $\EX(w, \phi)$ of ``execution strings'' of $\phi$ w.r.t. $w$ is inductively defined as follows:
\begin{enumerate}
    \item $\EX(w,\psi)\dddef \{\epsilon\}$, if $\psi$ is a non-dynamic formula, where $\epsilon$ is the empty string; 
    \item $\EX(w, [\alpha]\psi)\dddef \mex(w, \alpha)$, where $psi$ is a non-dynamic formula;
    \item $\EX(w, [\alpha]\psi)\dddef \{\str_1\cdot \str_2\ |\ \str_1\in \mex(w, \alpha), \str_2\in \EX(\str_1, \psi)\}$;
    \item $\EX(w, \neg\psi)\dddef \EX(w, \psi)$;
    \item $\EX(w, \psi\wedge \varphi)\dddef \EX(w, \psi)\cup \EX(w, \varphi)$.
    \end{enumerate}
\end{definition}

Note that although $\phi$ is a dynamic formula, its sub-formula can be non-dynamical. For example, in $\phi = (p\wedge [\alpha]q)$, $p$ is non-dynamical.
The empty string $\epsilon$ satisfies that $\str\cdot \epsilon = \epsilon \cdot \str = \epsilon$ for any string $\str$. 
\fi

The next proposition is crucial to have a well-founded relation $\mult$. 

\begin{prop}
    \label{prop:finite execution strings}
$\EX(tr, \phi)$ is finite for any trace $tr\in \Wd^+$ and dynamic formula $\phi$. 
\end{prop}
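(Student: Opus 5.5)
The plan is to argue by structural induction on the dynamic formula $\phi$, following the clauses of Definition~\ref{def:Counter-example Traces}. The base and Boolean cases are immediate. If $\phi$ is non-dynamic, then $\CET(tr,\epsilon,\phi)=\{(tr,\epsilon)\}$, so $\EX(tr,\phi)$ has at most one element. For $\psi\wedge\varphi$ and $\psi\vee\varphi$, the set $\CET$ is a union, respectively an intersection, of the sets for the components. The minimal-string traces of a union are contained in the union of the minimal-string traces of the parts, and the same holds for an intersection. Finiteness is therefore inherited from the induction hypothesis. All the work is in the modal clauses (2) and (3), which I reduce to a statement about a single program $\alpha$ started at the world $tr_e$.

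\textbf{Step 1 (program case).} I would show that, for a fixed world $w$, the set of traces $tr'$ with $tr'_b=w$ and $(tr',\str')\in\I(\alpha)$ with $\str'$ of minimal length is finite. To do this, I would translate $\alpha$ into a finite automaton $\mathcal{A}_\alpha$ over the alphabet of atomic programs and tests, by the standard Thompson construction, and read a trace of $\alpha$ as a run of the product of $\mathcal{A}_\alpha$ with the frame. Suppose a run visits the same pair (automaton state, world) twice. Then the loop between the two visits can be excised. This yields a run of $\alpha$ with a strictly shorter action string, which reaches the same end configuration and produces the same contribution to a counter-example. Consequently, a minimal string never repeats a product configuration along its run, and so the length of minimal strings is bounded by $|\mathcal{A}_\alpha|$ times the number of worlds that are relevant from $w$.

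\textbf{Step 2 (composition).} Clause (3) concatenates a trace of $\alpha$ with an element of $\CET(tr\cdot tr',\str\cdot\str',\psi)$. By the induction hypothesis applied to $\psi$ from the new starting trace, each choice of $tr'$ yields only finitely many minimal continuations. A minimal concatenated string must have a minimal $\alpha$-prefix; otherwise the prefix could be shortened by Step 1 without changing the continuation. Hence $\EX(tr,[\alpha]\psi)$ is contained in a finite union of finite sets.

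\textbf{Main obstacle.} The main obstacle is Step 1 in the presence of infinitely many worlds, or of unbounded branching under $\I(a)$. The pumping bound uses the number of reachable product configurations, and König's lemma requires finite branching. If the frame is not image-finite, infinitely many distinct length-one traces $wy$ already have minimal strings, and the statement as literally phrased could fail. I would first try to close this gap by reading ``minimal'' relative to the label structure used in Lemma~\ref{lemma:infinite descent sequence}: there, only the action strings matter for the ordering $\mult$, and the set of \emph{action strings} of bounded length over the finitely many atomic programs occurring in $\phi$ is finite. I would therefore prove finiteness of the set of minimal strings, which is what Proposition~\ref{prop:well-foundedness of relation pmult} needs. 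If the statement is meant about traces proper, I would add an image-finiteness assumption on $(\Wd,\I)$ and run the König-plus-pumping argument above.
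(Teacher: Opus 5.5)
The gap is in Step~1, and it cannot be repaired as planned. In the definition of $\EX$, the minimality of $\str$ is judged among the strings $\str'$ with $(tr',\str')\in\CET(tr,\epsilon,\phi)$ for that \emph{same} trace $tr'$. Excising a loop produces a different trace, i.e.\ a different element of $\CET$, so it does not show that the original string is non-minimal. Nor does ``the same contribution to a counter-example'' correspond to anything in the definition: clause~(2) puts every trace of $\alpha$ from $tr_e$ into $\CET$ without filtering. Concretely, take a single world $w$ with $\I(a)=\{ww\}$, $\phi=[a^*]p$ and $tr=w$. Then $\CET(w,\epsilon,\phi)$ contains $(w^{n+1},a^n)$ for every $n\ge 1$. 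For $n\ge 2$ this is the only string attached to $w^{n+1}$, so every such trace lies in $\EX(w,\phi)$. The frame is finite and image-finite, yet $\EX(w,\phi)$ is infinite, and the runs of these minimal strings do repeat a product configuration. So under the literal reading the proposition is false, and the obstruction is not only the infinitely many worlds or unbounded branching you name. Your fallback fails here too, since the minimal action strings $\{a^n\}$ form an infinite set. Moreover, in Lemma~\ref{lemma:infinite descent sequence} the order $\mult$ compares the sets $\EX(\cdot,\cdot)$ of world traces, not action strings.

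The statement becomes true if ``minimal'' means minimum length over \emph{all} of $\CET(tr,\epsilon,\phi)$. This is the reading under which the paper's three-sentence sketch (``finitely many minimum action strings, hence finitely many traces'') goes through. But then the length bound you try to obtain by pumping is given by the definition itself, and the proof has three steps:
\begin{itemize}
\item All minimal strings share one length $m$.
\item There are finitely many strings of length $m$ over the atomic programs and tests of $\phi$. Tests must count as letters; otherwise $[(\true?)^*]p$ already yields every stuttering trace $w^k$, $k\ge 2$, with the empty string.
\item Each such string is realized by finitely many traces from $tr_e$, provided the frame is image-finite.
\end{itemize}
That hypothesis, which the paper leaves implicit and you rightly flag, holds in $(\Wd_{\fo},\I_{\fo})$ because assignments and tests are functional. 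Your route yields no bound there: from any world, $(u:=u+1)^*$ reaches infinitely many worlds. So ``image-finiteness plus K\"onig plus pumping'' fails exactly in the first-order case the paper needs, whereas the direct argument covers it. Under this reading you should also drop the induction on $\phi$. For $\psi\vee\varphi$, the least length over $\CET(tr,\epsilon,\psi)\cap\CET(tr,\epsilon,\varphi)$ can exceed both component minima, so its minimal traces need not lie in $\EX(tr,\psi)\cup\EX(tr,\varphi)$. Likewise, if ``minimal $\alpha$-prefix'' in Step~2 means a shortest $\alpha$-string, the claim fails: such a prefix may end in a world from which the continuation has no traces at all.
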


The proof of Proposition~\ref{prop:finite execution strings} follows from the finite syntactic forms of regular expressions during execution. 
For any regular program $\alpha$, the number of its minimum trace-associated action strings is finite. 
So starting from a trace, there can only be finitely many distinct execution traces whose action strings are minimum. 

\ifx
\MOD The proof of Proposition~\ref{prop:finite execution strings} follows from the finite branching structure of regular programs: for any regular program $\alpha$, the number of its minimum trace-associated action strings is finite, because regular programs have finitely many syntactic forms reachable during execution, and thus only finitely many distinct minimum execution traces can be generated from any given starting trace.  We omit the detailed proof here. \EMOD
\fi

We call $\lm\in \LM$ a \emph{counter-example mapping} of a node $\nu$, if it is one of the mappings that make $\nu$ invalid. 

\begin{lemma}
    \label{lemma:infinite descent sequence}
    In a cyclic proof, 
    let $(\sigma:\phi, \sigma':\phi')$ be a step of a derivation trace over a derivation $(\nu, \nu')$ of an invalid derivation path.  
    For any set $\EX(\lm(\sigma), \phi)$ of $\sigma : \phi$ w.r.t a counter-example mappiong $\lm$, 
    there exists a counter-example mapping $\lm'$ and a set $\EX(\lm'(\sigma'), \phi')$ of $\sigma':\phi'$ such that 
    $$\EX(\lm'(\sigma'), \phi')\mult \EX(\lm(\sigma), \phi).$$ Moreover, if $(\sigma:\phi, \sigma':\phi')$ is a progressive step, then 
    $\EX(\lm'(\sigma'), \phi')\pmult \EX(\lm(\sigma), \phi)$. 
\end{lemma}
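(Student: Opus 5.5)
We prove the lemma by a case analysis on the rule instance $(\nu,\nu')$ that carries the derivation-trace step $(\sigma:\phi,\sigma':\phi')$, following the three clauses of Definition~\ref{def:Derivation Traces}. For each case we build $\lm'$ from $\lm$ using local soundness (Theorem~\ref{theo:Soundness of Each Rule of GiiiPPL}). Since $\lm$ falsifies $\nu$, the proof of local soundness actually produces a mapping $\lm'$ falsifying $\nu'$. It agrees with $\lm$ on every label variable of $\nu$ and interprets any fresh variable by a witness extracted from the failure of $\nu$.

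We then compare $\EX(\lm'(\sigma'),\phi')$ with $\EX(\lm(\sigma),\phi)$ through Definitions~\ref{def:Counter-example Traces} and~\ref{def:Minimum Counter-example Traces}. Throughout we view each element as the action string underlying a minimal counter-example trace, so that $\mult$ applies.

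\textbf{Non-progressive cases.}
\begin{itemize}
\item When $\tau_i=\tau_{i+1}$ (context formulas, and the rules $(\textit{ax})$, $(\textit{cut})$, $(\textit{wk})$, $(\textit{ctr})$), we take $\lm'$ extending $\lm$. Then $\lm'(\sigma')=\lm(\sigma)$, the two sets coincide, and $\mult$ holds by clause (1) of Definition~\ref{def:Relation mult}.
\item For $(\Sub)$ we set $\lm'(U)\dddef\lm(\sigma)$ and otherwise keep $\lm$. Then $\lm'(\sigma')=\lm(\sigma[\ldots])$ and the sets are again equal.
\item For the purely logical rules ($\neg$, $\wedge$, $(p)$, $(\ofst)$), the label is unchanged and $\phi'$ is an immediate subformula (or its negation). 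Clauses (4)--(5) of Definition~\ref{def:Counter-example Traces} show that the minimal counter-example traces of the component violated under $\lm'$ form a subset of those of $\phi$. We therefore choose, among the premise's disjuncts or conjuncts, the one actually falsified by $\lm'$. A subset relation counts as $\mult$ if we read dropped elements as a degenerate replacement; otherwise we keep the set equal.
\item For $([\seq])$, $([\cho])$ and $([\lup])$, the programs $\alpha\seq\beta$ versus $[\alpha][\beta]$, the union, and the unfolding generate the same traces and action strings by Definition~\ref{Semantics of PPL}. Hence the minimal sets are equal, or a subset in the case of a chosen conjunct after $([\cho])$.
\item For $([a]L)$ and $([\phi?]L)$, $\lm$ already satisfies $xR_a y$ (respectively $x\cat x:\psi$). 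Each minimal counter-example for $\phi'$ therefore prefixes to one for $[a]\phi$, so a subset relation holds.
\end{itemize}

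\textbf{Progressive cases.}
\begin{itemize}
\item For $([a]R)$, the falsity of $\lm(\sigma\cat x):[a]\phi$ gives a world $t$ with $\lm(x)t\in\I(a)$ and $\lm(\sigma\cat x)\,t\not\models\phi$. We set $\lm'(y)=t$. Every minimal counter-example string for $\sigma\cat x\cat y:\phi$ under $\lm'$ then arises from one for $\sigma\cat x:[a]\phi$ by deleting the leading action $a$. This is a proper suffix, giving clause (2) of Definition~\ref{def:Relation mult} and hence $\pmult$.
\item For $([\phi?]R)$ the same argument applies with the stuttering step $\psi?$, now counted as one transition of the action string.
\item For $(\osuf)$ the label loses its first world $x$. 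Every counter-example pair for $\sigma:\psi\vee(\phi\wedge\phi\osuf\psi)$ is a proper suffix of one for $x\cat\sigma:\phi\osuf\psi$, so again $\pmult$ holds.
\end{itemize}

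\textbf{Main obstacle.} The delicate point is the progressive cases, specifically ensuring a strict proper-suffix replacement in the sense of Definition~\ref{def:Relation mult}, and making the ``minimal'' selection in Definition~\ref{def:Minimum Counter-example Traces} compatible before and after the step. Strings in $\EX(\lm(\sigma),\phi)$ that do not begin with the consumed action (arising from other conjuncts) must be shown to be either unchanged or removed. Minimality must also be preserved under stripping the first action: if a shorter string existed for $\phi'$, prefixing $a$ would yield a shorter one for $[a]\phi$, contradicting minimality. I would establish this prefix/strip correspondence once, as an auxiliary claim about $\CET$ under one-step unfolding of $[\alpha]$. Proposition~\ref{prop:finite execution strings} then guarantees that all sets involved are finite, so $\mult$ is applicable.
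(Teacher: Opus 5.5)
Your case analysis matches the paper's: a witness world $t$ for $([a]R)$, equal sets for $(\Sub)$, proper-suffix descent for $(\osuf)$, and non-strict steps elsewhere. However, the measure it relies on is not consistent across the two kinds of progressive step, and with your explicit choice the $(\osuf)$ case fails.

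You compare elements of $\EX$ via their action strings. The rule $(\osuf)$ executes no program. By clause (1) of Definition~\ref{def:Counter-example Traces}, the pairs for $x\cat\sigma:\phi\osuf\psi$ and for $\sigma:\psi\vee(\phi\wedge\phi\osuf\psi)$ therefore carry the same string ($\epsilon$, for modality-free $\phi,\psi$). Only the trace component shrinks, and ``a proper suffix of a pair'' is not something $\mult$ measures, so there is no strict descent.

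Switching to world traces does not help, because then $([a]R)$ breaks. For modality-free $\phi$, $\EX(\lm(\sigma\cat x),[a]\phi)$ consists of length-two continuations $\lm(x)t$. The premise's set is $\{\lm'(\sigma\cat x\cat y)\}=\{\lm(\sigma\cat x)\,t\}$, which extends rather than shortens them. If you instead count label plus continuation, $([a]R)$ leaves the trace unchanged.

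The paper's proof only looks uniform because it uses continuations/action strings in the $([a]R)$ case and the full label trace in the $(\osuf)$ case. $(\osuf)$ removes a world from the front of the label, while $([a]R)$ and $([\phi?]R)$ move a step of the remaining program onto the back of the label. As a result, none of the quantities these definitions provide (action strings, continuation traces, full traces) decreases strictly under both rules. You need a genuinely combined measure, e.g.\ lexicographic on (total trace length, remaining action length), or a weight such as $|\lm(\sigma)|+2|\str|$. Every case must then be rechecked against it.

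Two further repairs are needed.

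First, a proper subset is not $\mult$ under Definition~\ref{def:Relation mult}, since each replaced element needs $n\ge 1$ replacements. ``Keeping the set equal'' is not an option either, because the set is determined by $\lm'$, $\sigma'$, $\phi'$. Extend $\mult$ to allow deletions, as in the Dershowitz--Manna ordering, which remains well-founded.

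Second, your $\lm'$ is extracted from the local soundness argument, so it falsifies whichever premise $\lm$ selects. At a branching rule such as $(\wedge R)$ or $(\textit{cut})$, a path fixed in advance may go to the other premise, where no counter-example related to $\lm$ need exist. This is the situation in the statement and in the proof of Theorem~\ref{theo:Soundness of GiiiPPLcyc}. The invalid path and the sequence of counter-example mappings have to be constructed together, and the lemma stated for that joint construction.
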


The proof of this lemma is given in Appendix~\ref{section:About the Soundness of GiiiPPLcyc}. 

\begin{proof}[Proof of Theorem~\ref{theo:Soundness of GiiiPPLcyc}]
    Let $\nu = (\cdot \Rightarrow X : \phi)$. 
    By contradiction, suppose $\not\models X : \phi$. 
    Then by the soundness of \GiiiPPLcyc, there exists an infinite invalid derivation path $P$ from $\nu$ (in which every sequent is invalid). 
    Since $\vdash \nu$ and it forms a cyclic proof, 
    let $\tau_1\tau_2...\tau_k...$ be a progressive trace over $P$ of the form: 
    $\nu...\nu_1\nu_2...\nu_k...$ ($k\ge 1$), where each formula $\tau_i$ is in $\nu_i$ ($i\ge 1$). 
    Let $\tau_i\dddef \sigma_i : \phi_i$.  

    Since $\nu_1$ is invalid, let $\lm_1$ be one of its counter-example mappings. 
    By Lemma~\ref{lemma:infinite descent sequence}, 
    from $\EX(\lm_1(\sigma_1), \phi_1)$, there exists an infinite sequence of sets $\EX_1,...,\EX_k,...$ $(k\ge 1)$, 
    where each $\EX_i\dddef \EX(\lm_i(\sigma_i), \phi_i)$ $(i\ge 1)$ with $\lm_i$ a counter-example mapping of node $\nu_i$, 
    and which satisfies that 
    $\EX_1\multr...\multr \EX_k\multr...$.
    Moreover, since trace $\tau_1\tau_2...\tau_k...$ is progressive, 
    there must be an infinite number of $j\ge 1$ 
    such that $\EX_j\pmultr \EX_{j+1}$. 
    This thus forms an infinite descent sequence w.r.t. $\pmult$, violating 
    the well-foundedness of relation $\mult$ (Proposition~\ref{prop:well-foundedness of relation pmult}). 
    
\end{proof}

\section{Completeness of \GiiiPPLcyc}
\label{section:Completeness of GiiiPPLcyc}

We analyze the completeness of \GiiiPPLcyc\ for any \PPL\ formula $\phi$, captured by its equivalent labelled form $X : \phi$ with $X\in \TVar$.

\begin{theorem}[Completeness of \GiiiPPLcyc]
    Given a labelled \PPL\ formula $X : \phi$ with $X\in \TVar$, if $\models X : \phi$, then $\vdash X : \phi$. 
\end{theorem}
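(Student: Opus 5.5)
We reduce the statement to the completeness of the Hilbert-style system for \PPL\ (the Proposition following Table~\ref{table:ppl-proof-system}, taken from~\cite{Harel82}). If $\models X:\phi$, then $\phi$ is valid, so $\phi$ has a Hilbert derivation $\phi_1,\dots,\phi_n=\phi$. In each $\phi_i$ is an instance of one of the axioms (1)--(7) or (i)--(xv), or is obtained from earlier formulas by (MP) or (Gen). We prove by induction on $i$ that $\vdash X:\phi_i$ for a fresh $X\in\TVar$, i.e., that the sequent $\cdot\Rightarrow X:\phi_i$ has a cyclic proof in \GiiiPPLcyc. It therefore suffices to show three things: every axiom instance is derivable in its labelled form, and the two rules preserve labelled derivability. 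Cyclic proofs of premises will be combined by grafting. Grafting preserves the cyclic condition, because any infinite path in the combined preproof eventually stays inside one of the component proofs, and progressiveness is a tail property.

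\textbf{Rules.} For (MP), we take proofs of $\cdot\Rightarrow X:\psi$ and $\cdot\Rightarrow X:\psi\to\phi$. We cut on $X:\psi$ and on $X:\psi\to\phi$, and then close the remaining goal by $(\to L)$ and $(ax)$. For (Gen), we first establish a necessitation lemma: if $\vdash X:\phi$, then $\vdash \sigma:\phi$ for every label $\sigma$. This follows from $(\Sub)$, since $X$ is fresh. Next we prove $\cdot\Rightarrow X:[\alpha]\phi$ by induction on $\alpha$, unfolding with $([\seq])$, $([\cho])$, $([\phi?]R)$ and $([a]R)$ until the goal has the form $\sigma:\phi$. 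That goal is then closed by $(\Sub)$ instantiating the proof of $X:\phi$. The case $\alpha=\beta^*$ needs a cycle. We unfold with $([\lup])$ and $([\cho])$; the $\true?$ branch is closed as above. On the $\beta\seq\beta^*$ branch we decompose $\beta$ with $([a]R)$/$([\phi?]R)$, which yields $\sigma':[\beta^*]\phi$ with a longer label. Applying $(\Sub)$ to a fresh $Y$ then back-links to the node $Y:[\beta^*]\phi$. Every pass around this cycle contains an $([a]R)$ or $([\phi?]R)$ step on the traced formula, which is a progressive step. If $\beta$ contains only tests, we first rewrite it with a standard trick, replacing it by an equivalent loop whose traces are stuttering.

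\textbf{Axioms.} The PDL axioms (1)--(6) are handled by direct unfolding on both sides of the sequent. The right side uses $([a]R)$ to create a fresh $y$ together with $xR_ay$. The left side uses that relational atom via $([a]L)$, and the derivation finishes with propositional rules and $(ax)$; the structure is as in~\cite{Negri05}, and any iteration programs are handled by matched cycles. The induction axiom (7) is the hard PDL case. We keep $X:[\alpha^*](\phi\to[\alpha]\phi)$ on the left and prove $X:[\alpha^*]\phi$ on the right by a cycle. At each pass the left $\alpha^*$-formula is advanced in lockstep with the right one via $([\lup])$/$([a]L)$, and $(\Sub)$ is used to rename back to the root form. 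The progressive trace follows the right-hand $[\alpha^*]\phi$ through its $([a]R)$ steps. The \PPL-specific axioms (i)--(xv) are mostly handled by case analysis on the label shape: the label is either a single world $x$, or has the form $x\cat\sigma$ for $X$ written as $x\cat Y$ (obtained via $(\Sub)$). We then apply $(p)$, $(\ofst)$, $(\osuf)$ and the axiom $(\osuf\ x)$. Axioms (iv), (v) and (xv), which involve repeated $\osuf$, need cycles, and their traces are progressive through $(\osuf)$ steps.

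\textbf{Main obstacle.} I expect the main obstacle to be the axioms in which $\osuf$ interacts with programs or with induction, namely (xiii), (xiv) and above all (xv). For (xv) the difficulty is to find a cycle in which a \emph{right-hand} formula makes infinitely many $(\osuf)$-progress steps. For $\nxt\phi$ on the right this fails directly, because $\nxt$ only unfolds once. My first attempt will be to trace the left-hand $(\nxt\phi\to\phi)\osuf\phi$ instead. This requires extending the progressive-step notion to left $(\osuf)$ instances, which Definition~\ref{def:Progressive Step/Progressive Derivation Trace of GiiiPPLcyc} already permits, since it does not restrict the side. I then need to check that the descent argument of Lemma~\ref{lemma:infinite descent sequence} covers this case.
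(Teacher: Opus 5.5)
Your route is the paper's: simulate the Hilbert system of Table~\ref{table:ppl-proof-system} inside \GiiiPPLcyc. You use $(\textit{cut})$ for (MP), a necessitation construction for (Gen) and the modal axioms, and back-links whose progress comes from $([a]R)$, $([\phi?]R)$ or $(\osuf)$. Your plan for (xv), tracing the left-hand $(\nxt\phi\to\phi)\osuf\phi$, is exactly the paper's cycle, and Definition~\ref{def:Progressive Step/Progressive Derivation Trace of GiiiPPLcyc} indeed does not restrict $(\osuf)$ to one side.

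The place where your plan is genuinely insufficient is the necessitation step. You only have the closed form: if $\vdash X:\phi$ then $\vdash\sigma:\phi$, hence $\vdash X:[\alpha]\phi$ by induction on $\alpha$. In the case $\alpha=\beta^*$ that induction hypothesis is useless, because applying it to $[\beta][\beta^*]\phi$ presupposes $\vdash X:[\beta^*]\phi$, which is the goal itself. Instead you decompose $\beta$ with $([a]R)$/$([\phi?]R)$, but this does not reach $\sigma':[\beta^*]\phi$ once $\beta$ contains a star, since the inner iteration needs back-links of its own.

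What is needed is the paper's Lemma~\ref{lemma:Necessitation}, an open-leaf, contextual statement. From $X:[\alpha]\Gamma\Rightarrow X:[\alpha]\Delta$ it produces a derivation whose remaining leaves are $X:\Gamma\Rightarrow X:\Delta$, with a progressive trace on every path to them. With this as induction hypothesis, the star case applies it to $\alpha_1$ with $\Gamma=\{[\alpha_1^*]\phi\}$ and $\Delta=\{[\alpha_1^*]\varphi\}$, so its open leaves become buds. Your ``lockstep'' and ``matched cycles'' for (1), (2), (7) are this lemma inlined. The paper also reuses it for the sequence and star cases of (xii) and (xiii), which your plan does not address at all.

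Your sketch of (7) also omits the step in which the invariant is actually used. Advancing $X:[\alpha^*](\phi\to[\alpha]\phi)$ in lockstep with $X:[\alpha^*]\phi$ only yields $\sigma':[\alpha^*](\phi\to[\alpha]\phi)\Rightarrow\sigma':[\alpha^*]\phi$. That sequent is neither the companion nor valid (take $\phi=\false$); you also need $\sigma':\phi$. The paper obtains it as follows:
\begin{enumerate}
\item unfold the left $\alpha^*$-formula and convert its $\true?$-component into $X:\phi\to[\alpha]\phi$ (cut with Lemma~\ref{lemma:lemma for proving rule (inv)});
\item use $(\to L)$ against $X:\phi$ to get $X:[\alpha]\phi$;
\item only then apply necessitation to $X:[\alpha]\phi, X:[\alpha][\alpha^*](\phi\to[\alpha]\phi)\Rightarrow X:[\alpha][\alpha^*]\phi$, which produces the bud.
\end{enumerate}

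Smaller points:
\begin{itemize}
\item The ``standard trick'' for test-only loops is unnecessary, because $([\phi?]R)$ is already a progressive step.
\item In the paper, (iii), (vii), (viii) and (xiv) also need back-links, not only (iv), (v) and (xv). This happens wherever the same $\osuf$-subformula must be matched on both sides.
\item $(\Sub)$ instantiates variables of its premise, so read backwards it cannot split $X$ into $x\cat Y$. The paper makes the same move under ``without loss of generality'', so this is not a divergence, but $(\Sub)$ does not justify it.
\end{itemize}
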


The proof follows the standard approach for establishing the completeness of labelled proof systems~(cf.~\cite{Docherty19,Brotherston08}): we show that \GiiiPPLcyc\ is capable of deriving every axiom and rule of the proof system of \PPL\ (Table~\ref{table:ppl-proof-system}).  Since the proof system of \PPL\ is complete~\cite{Harel00}, this suffices to conclude the completeness of \GiiiPPLcyc. 

Below we explain how to derive each rule in Table~\ref{table:ppl-proof-system}.  We first establish a Necessitation Lemma (Lemma~\ref{lemma:Necessitation}) as a generalisation of rule $(\textit{Gen})$.
 This lemma is required when deriving the other rules.  We then derive the PDL rules, following the proof strategy of~\cite{Docherty19} adapted to the trace labels of process logic.  Finally, we derive the rules specific to \PPL, which form the core of the completeness proof for the process-logic part.


\begin{lemma}[Necessitation]
    \label{lemma:Necessitation}
For any sequent $X:[\alpha]\Gamma\Rightarrow X:[\alpha]\Delta$, where $X:[\alpha]A\dddef \{X:[\alpha]\phi\ |\ \phi\mbox{ is a \PPL\ formula in $A$}\}$, there is a derivation from it which is ``almost'' a cyclic proof, except that it has leaf nodes of the form $X :\Gamma\Rightarrow X:\Delta$. Moreover, there is a progressive trace along each derivation path from the root to a leaf node $X :\Gamma\Rightarrow X:\Delta$.
\end{lemma}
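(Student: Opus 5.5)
Our plan is to argue by structural induction on the program $\alpha$, proving a slightly stronger statement in which the label $X$ is replaced by an arbitrary label of the form $\sigma\cat x$ (or a trace variable). The induction must thread labels through the derivation: after processing a prefix of $\alpha$, the sequent has the form $\sigma':[\beta]\Gamma\Rightarrow\sigma':[\beta]\Delta$ for a longer label $\sigma'$. Once the program is fully consumed we reach $\sigma':\Gamma\Rightarrow\sigma':\Delta$, and we close back to the required leaf form $X:\Gamma\Rightarrow X:\Delta$ by one application of $(\Sub)$ with $[\sigma'/X]$, using that $X$ is fresh in $\Gamma,\Delta$ because these are unlabelled \PPL\ formulas. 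Throughout, left formulas are handled by the left rules and right formulas by the right rules. Since every rule in Table~\ref{table:giiippl-regular-rules} is applied uniformly to all members of $[\alpha]\Gamma$ and $[\alpha]\Delta$, the context sets remain parallel. When a rule needs a world variable at the end of the label, as $([a]R)$ and $([\phi?]R)$ do, we first apply $(\Sub)$ to write $X$ as $Y\cat x$.

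The cases are as follows.
\begin{itemize}
\item \textbf{Atomic $a$.} Apply $([a]R)$ to one right formula $Y\cat x:[a]\phi$, introducing a fresh $y$ and the atom $xR_ay$. Then use $([a]L)$ with this same $y$ on every left formula, and for the remaining right formulas use $([a]R)$ with $y$ again. This is sound because each instance only adds an atom and a fresh variable; to enforce a single fresh variable we first contract and weaken, or else we apply $(\Sub)$ to identify the fresh variables. The result is $Y\cat x\cat y:\Gamma\Rightarrow Y\cat x\cat y:\Delta$.
\item \textbf{Test $\psi?$.} Apply $([\phi?]R)$ to the right formulas, producing $x\cat x:\psi$ on the left. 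That atom is then exactly what $([\phi?]L)$ needs for the left formulas.
\item \textbf{Composition and choice.} $([\seq])$ reduces $\alpha\seq\beta$ to nested boxes; we apply the induction hypothesis for $\alpha$ with $\Gamma,\Delta$ replaced by $[\beta]\Gamma,[\beta]\Delta$, then the hypothesis for $\beta$ at each resulting leaf. $([\cho])$ followed by $(\wedge R)$ and $(\wedge L)$, together with weakening, splits the derivation into the two subprograms.
\item \textbf{Iteration $\alpha^*$.} Unfold with $([\lup])$ and $([\cho])$. The $\true?$ branch yields a leaf of the required form after the test case. The $\alpha\seq\alpha^*$ branch uses the induction hypothesis for $\alpha$, which leads to $\sigma':[\alpha^*]\Gamma\Rightarrow\sigma':[\alpha^*]\Delta$. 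Applying $(\Sub)$ renames this to $X:[\alpha^*]\Gamma\Rightarrow X:[\alpha^*]\Delta$, which is a bud back-linked to the node obtained just after unfolding.
\end{itemize}

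Progressiveness is checked as follows. On each path from the root to a leaf $X:\Gamma\Rightarrow X:\Delta$, we follow a right formula $X:[\alpha]\delta$ (choosing some $\delta\in\Delta$; if $\Delta$ is empty, a left formula is followed instead, and such a trace is allowed since it is finite). Such a trace is finite, hence progressive by Definition~\ref{def:Progressive Step/Progressive Derivation Trace of GiiiPPLcyc}; it also passes through $([a]R)$ or $([\phi?]R)$ steps. For infinite paths created by the back-links in the $\alpha^*$ case, we follow the trace of a right formula $[\alpha^*]\delta$ around the cycle. Each traversal consumes at least one $([a]R)$ or $([\phi?]R)$ target step, because the $\alpha$-part is nonempty, and $(\Sub)$ steps preserve the trace by clause 2 of Definition~\ref{def:Derivation Traces}. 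This gives infinitely many progressive steps. Nested stars are handled by taking the outermost star that is unfolded infinitely often.

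The main obstacle is the iteration case, in two respects. First, the back-links must really return to syntactically identical ancestors after the label has grown, which forces the careful use of $(\Sub)$. Second, when $\Delta$ is empty, an infinite path may have no right formula to follow. For that second issue we would first try to argue that the left formulas $[\alpha^*]\gamma$ still carry a trace that is well-defined under the paper's criterion. Failing that, we would require $\Delta\neq\emptyset$ by adding the dummy right formula $X:[\alpha]\true$ through weakening.
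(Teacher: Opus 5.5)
\textbf{The gap.} The problem is your extension to an arbitrary succedent $\Delta$. It cannot be repaired, because that version of the claim is false.

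\emph{Multiple right formulas.} In your atomic case, $([a]R)$ demands a fresh $y$ at every application. So two right formulas yield $x\R_a y_1$, $x\R_a y_2$ and labels ending in $y_1\neq y_2$. Neither of your fixes works.
\begin{itemize}
\item Read upward, $(\Sub)$ only generalises: the current sequent must be a substitution instance of the new premise. It can therefore never merge $y_1$ and $y_2$. (Nor, incidentally, can it turn $X$ into $Y\cat x$; the paper simply takes $X=Y\cat x$ without loss of generality.)
\item Weakening the succedent to one formula produces leaves $X:\Gamma\Rightarrow X:\delta$. These are not the required $X:\Gamma\Rightarrow X:\Delta$ and are not implied by it.
\end{itemize}
Concretely, let $s$ have two $a$-successors, only one of which satisfies $p$. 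Any label mapping with $X\mapsto s$ refutes $\cdot\Rightarrow X:[a]\nxt p,\ X:[a]\neg\nxt p$. Yet the leaf $\cdot\Rightarrow X:\nxt p,\ X:\neg\nxt p$ is provable by $(\neg R)$ and $(\textit{ax})$. Closing it would give a finite proof of an invalid sequent, contradicting Theorem~\ref{theo:Soundness of Each Rule of GiiiPPL}.

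\emph{Empty succedent.} The case $\Delta=\emptyset$ fails in the same way. $X:[a]\theta, X:[a]\neg\theta\Rightarrow\cdot$ is refuted at a world with no $a$-successors, while $X:\theta, X:\neg\theta\Rightarrow\cdot$ is provable. So the trouble there is not merely a missing progressive trace on infinite paths. With no right formula, no atom $x\R_a y$ is ever produced for $([a]L)$ to consume. Your fallback of adding $X:[\alpha]\true$ on the right ``through weakening'' is also not a legal step. Upward, $(\textit{wk})$ only deletes formulas, and validity of the enlarged premise would not entail the conclusion anyway.

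\textbf{The correct statement and comparison.} The true statement has exactly one formula on the right; any finite multiset on the left is harmless. This is what the paper's proof actually treats. It works with one formula $\phi$ on the left and one $\varphi$ on the right, applying $([a]R)$ once and then $([a]L)$ with the same $y$. Every application of (\textit{nec}) elsewhere in the paper likewise has a single formula on the right.

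With that restriction, your argument is essentially the paper's: structural induction on $\alpha$, $(\Sub)$ to restore the label at the leaves, and in the $\alpha^*$ case a bud $X:[\alpha^*]\Gamma\Rightarrow X:[\alpha^*]\delta$. That bud should be linked to the root of the star derivation, not to the node after unfolding, which is not syntactically identical to it.

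One point of yours improves on the paper and is worth keeping. Finite traces are progressive by definition, so ``a progressive trace to each leaf'' says nothing. What the outer cycle in the star case really needs is your stronger invariant: on every root-to-leaf path, the traced right formula undergoes at least one $([a]R)$ or $([\phi?]R)$ target step.
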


In this paper below, we use $(nec)$ to represent the derivation of Lemma~\ref{lemma:Necessitation}. 

\begin{proof}
We prove by induction on the structure of regular programs. 

When $\alpha$ is an atomic program $a$, without loss of generality, let $X = Y\cat x$. 
We have the following derivation:

$$
\infer[^{([a]R)}]
{Y\cat x : [a]\phi\Rightarrow Y\cat x:[a]\varphi }
{
    \infer[^{([a]L)}]
    {x\R_a y, Y\cat x:[a]\phi\Rightarrow Y\cat x\cat y:\varphi}
    {
        \infer[^{(\Sub)}]
        {Y\cat x\cat y:\phi\Rightarrow Y\cat x\cat y:\varphi}
        {
            \infer[^{(\Sub)}]
            {Y\cat y : \phi\Rightarrow Y\cat y :\varphi}
            {
                Y\cat x:\phi\Rightarrow Y\cat x:\varphi
            }
        }
    }
}
$$
The derivation above is progressive since rule $([a]R)$ is applied (see Definition~\ref{def:Progressive Step/Progressive Derivation Trace of GiiiPPLcyc}). 

When $\alpha$ is a test $\psi?$. Let $X = Y\cat x$, we have
$$
\infer[^{([\phi?]R)}]
{Y \cat x: [\psi?]\phi\Rightarrow Y \cat x : [\psi?]\varphi }
{
    \infer[^{([\phi?]L)}]
    {Y\cat x: [\psi?]\phi, x\cat x:\psi\Rightarrow Y\cat x\cat x:\varphi}
    {
            \infer[^{(\Sub)}]
            {Y\cat x\cat x: \phi\Rightarrow Y\cat x\cat x:\varphi}
            {
                \infer[^{(\Sub)}]
                {Z\cat x : \phi\Rightarrow Z \cat x: \varphi}
                {Y\cat x : \phi\Rightarrow Y \cat x: \varphi}
            }
    }
}
$$
The derivation is progressive since rule $([\phi?]R)$ is applied. 

When $\alpha$ is a sequence program $\alpha_1\seq\alpha_2$, we have
$$
\infer[^{([\seq])}]
{X:[\alpha_1\seq\alpha_2]\phi\Rightarrow X:[\alpha_1\seq\alpha_2]\varphi}
{
    \infer[^{\textit{IH}}]
    {X:[\alpha_1][\alpha_2]\phi\Rightarrow X:[\alpha_1][\alpha_2]\varphi}
    {
        \infer[^{\textit{IH}}]
        {X:[\alpha_2]\phi\Rightarrow X:[\alpha_2]\varphi}
        {
            X:\phi\Rightarrow X:\varphi
        }
    }
}, 
$$
where steps $\textit{IH}$ are by induction hypothesis on $\alpha_1$ and $\alpha_2$ separately. 

When $\alpha$ is a choice program $\alpha_1\cho\alpha_2$, we have
$$
\infer[^{([\cup]))}]
{X:[\alpha_1\cho\alpha_2]\phi\Rightarrow X:[\alpha_1\cho\alpha_2]\varphi}
{
    \infer[^{(wk)}]
    {X:[\alpha_1]\phi, X:[\alpha_2]\phi\Rightarrow X:[\alpha_1]\varphi}
    {
        \infer[^{\textit{IH}}]
        {X:[\alpha_1]\phi\Rightarrow X:[\alpha_1]\varphi}
        {
            X:\phi\Rightarrow X:\varphi
        }
    }
    &
    \infer[^{(wk)}]
    {X:[\alpha_1]\phi, X:[\alpha_2]\phi\Rightarrow X:[\alpha_2]\varphi}
    {
        \infer[^{\textit{IH}}]
        {X:[\alpha_2]\phi\Rightarrow X:[\alpha_2]\varphi}
        {
            X:\phi\Rightarrow X:\varphi
        }
    }
}
$$

When $\alpha$ is a choice program $\alpha^*_1$, let $X = Y\cat x$. We have
$$
\infer[^{([\lup])}]
{Y\cat x:[\alpha^*_1]\phi\Rightarrow Y\cat x:[\alpha^*_1]\varphi}
{
    \infer[^{([\lup])}]
    {Y\cat x:[\alpha^*_1]\phi\Rightarrow Y\cat x:[\true?]\varphi}
    {   
        \infer[^{([\phi?]L, [\phi?]R)}]
        {Y\cat x:[\true?]\phi, Y\cat x:[\alpha_1][\alpha^*_1]\phi\Rightarrow Y\cat x:[\true?]\varphi}
        {
            \infer[^{(wk)}]
            {Y\cat x\cat x : \phi, Y\cat x : [\alpha_1][\alpha^\lup_1]\phi\Rightarrow Y\cat x\cat x:\varphi}
            {
                \infer[^{(\Sub)}]
                {Y\cat x\cat x : \phi\Rightarrow Y\cat x\cat x : \varphi}
                {
                    \infer[^{(\Sub)}]
                    {Z\cat x : \phi\Rightarrow Z\cat x : \varphi}
                    {Y \cat x : \phi\Rightarrow Y\cat x : \varphi}
                }
            }
        }
    }
    &
    \mbox{Cont.}
}, 
$$
$$
\infer[^{[\lup]}]
{\mbox{Cont.} : Y\cat x : [\alpha^*_1]\phi \Rightarrow Y\cat x : [\alpha_1][\alpha^*_1]\varphi}
{
    \infer[^{(wk)}]
    {Y\cat x:[\true?]\phi, Y\cat x:[\alpha_1][\alpha^*_1]\phi\Rightarrow Y\cat x : [\alpha_1][\alpha^*_1]\varphi}
    {
        \infer[^{\textit{IH}}]
        {Y\cat x:[\alpha_1][\alpha^*_1]\phi\Rightarrow Y\cat x : [\alpha_1][\alpha^*_1]\varphi}
        {
            Y\cat x : [\alpha^*_1]\phi\Rightarrow Y\cat x : [\alpha^*_1]\varphi \mbox{ (bud)}
        }
    }
}, 
$$
where the left derivation branch is progressive since rule $[\phi?]R$ is applied; 
on the right derivation branch (from ``Cont.''), by induction-hypothesis step $\textit{IH}$, we know that the derivation path from the root node to its bud is progressive. 

\end{proof}

The Necessitation lemma is in fact a general form of the rule $(\textit{Gen})$. 
Thus from it we can trivially derive $(\textit{Gen})$. 
For the other rules of PDL shown in Table~\ref{table:ppl-proof-system}, the derivations of the rules (3) - (6) follow directly by applying the rules $([\cho])$, $([\seq])$, $([\phi?])$ and $([\lup])$ of \GiiiPPLcyc. 
The derivation of rule $(\textit{MP})$ is a special case of rule $(cut)$ of \GiiiPPLcyc. 
Below, we only show the derivations for the rules (1), (2), (7) in \GiiiPPLcyc.

The derivation for formula $X : [\alpha](\phi\to \varphi)\to ([\alpha]\phi\to [\alpha]\varphi)$ (corresponding to rule (1)) with $X\in \TVar$ is shown as follows:
$$
\infer[^{(\to R)}]
{\cdot \Rightarrow X : [\alpha](\phi\to \varphi)\to ([\alpha]\phi\to [\alpha]\varphi)}
{
    \infer[^{(\to R)}]
    {X : [\alpha](\phi\to \varphi)\Rightarrow X : ([\alpha]\phi\to [\alpha]\varphi)}
    {
        \infer[^{\Nec}]
        {X: [\alpha](\phi\to \varphi), X:[\alpha]\phi\Rightarrow X : [\alpha]\varphi}
        {
            \infer[^{(\to L)}]
            {X:(\phi\to\varphi), X : \phi\Rightarrow X : \varphi}
            {
                \infer[^{(ax)}]
                {X:\phi\Rightarrow X:\phi, X:\varphi}
                {}
                &
                \infer[^{(ax)}]
                {X:\varphi, X:\phi\Rightarrow X:\varphi}
                {}
            }
        }
    }
}
$$

The derivation for formula $X : [\alpha](\phi\wedge \varphi) \leftrightarrow ([\alpha]\phi \wedge [\alpha]\varphi)$ with $X\in \TVar$ (corresponding to rule (2)) is shown as follows for both directions separately:

$$
\infer[^{(\to R)}]
{\cdot \Rightarrow X : [\alpha](\phi\wedge \varphi) \to ([\alpha]\phi \wedge [\alpha]\varphi)}
{
    \infer[^{(\wedge R)}]
    {X:[\alpha](\phi\wedge \varphi)\Rightarrow X : ([\alpha]\phi \wedge [\alpha]\varphi)}
    {
        \infer[^{\Nec}]
        {X:[\alpha](\phi\wedge \varphi)\Rightarrow X : [\alpha]\phi}
        {
            \infer[^{(\wedge L)}]
            {X:(\phi\wedge \varphi)\Rightarrow X : \phi}
            {
                \infer[^{(ax)}]
                {X:\phi, X:\varphi\Rightarrow X:\phi}
                {}
            }
        }
        &
        \infer[^{\Nec}]
        {X:[\alpha](\phi\wedge \varphi)\Rightarrow X : [\alpha]\varphi}
        {
            \infer[^{(\wedge L)}]
            {X:(\phi\wedge \varphi)\Rightarrow X : \varphi}
            {
                \infer[^{(ax)}]
                {X:\phi, X:\varphi\Rightarrow X:\varphi}
                {}
            }
        }
    }
}
$$

$$
\infer[^{(\to R)}]
{\cdot \Rightarrow X : ([\alpha]\phi \wedge [\alpha]\varphi) \to [\alpha](\phi\wedge \varphi)}
{
    \infer[^{(\wedge L)}]
    {X : ([\alpha]\phi \wedge [\alpha]\varphi)\Rightarrow X : [\alpha](\phi\wedge \varphi)}
    {
        \infer[^{\Nec}]
        {X:[\alpha]\phi, X:[\alpha]\varphi\Rightarrow X:[\alpha](\phi\wedge\varphi)}
        {
            \infer[^{(\wedge R)}]
            {X:\phi, X:\varphi\Rightarrow X:(\phi\wedge \varphi)}
            {
                \infer[^{(ax)}]
                {X:\phi, X:\varphi\Rightarrow X:\phi}
                {}
                &
                \infer[^{(ax)}]
                {X:\phi, X:\varphi\Rightarrow X:\varphi}
                {}
            }
        }
    }
}
$$

The derivation for formula $X : \phi\wedge [\alpha^*](\phi\to [\alpha]\phi)\to [\alpha^*]\phi$ with $X\in \TVar$ (Corresponding to rule (7)) is shown as follows:  
$$
\infer[^{(\to R)}]
{\cdot\Rightarrow X : \phi\wedge [\alpha^*](\phi\to [\alpha]\phi)\to [\alpha^*]\phi}
{
    \infer[^{(\wedge L)}]
    {X : \phi\wedge [\alpha^*](\phi\to [\alpha]\phi)\Rightarrow X: [\alpha^*]\phi}
    {
        \infer[^{([\lup])}]
        {X : \phi, X : [\alpha^*](\phi\to [\alpha]\phi)\Rightarrow X: [\alpha^*]\phi \mbox{ (companion)}}
        {
            \infer[]
            {X:\phi, X:[\alpha^*](\phi\to [\alpha]\phi)\Rightarrow X: [\true?]\phi}
            {\mbox{Lemma~\ref{lemma:lemma for proving rule (inv)}}}
            &
            \mbox{Cont.}
        }
    }
},
$$
$$
\infer[^{[\lup]}]
{\mbox{Cont. : } X:\phi, X:[\alpha^*](\phi\to [\alpha]\phi)\Rightarrow X: [\alpha][\alpha^*]\phi}
{
    \infer[^{\textit{lem}}]
    {X:\phi, X: [\true?](\phi\to [\alpha]\phi), X:[\alpha][\alpha^*](\phi\to [\alpha]\phi)\Rightarrow X: [\alpha][\alpha^*]\phi}
    {
        \infer[^{(\to L)}]
        {X:\phi, X: (\phi\to [\alpha]\phi), X:[\alpha][\alpha^*](\phi\to [\alpha]\phi)\Rightarrow X: [\alpha][\alpha^*]\phi}
        {
            \infer[^{(ax)}]
            {X : \phi, X : [\alpha][\alpha^*](\phi\to [\alpha]\phi)\Rightarrow X : [\alpha][\alpha^*]\phi, X : \phi}
            {}
            &
            \mbox{Cont. 2}
        }
    }
},
$$
$$
\infer[^{(wk)}]
{\mbox{Cont. 2 : } X : \phi, X : [\alpha]\phi, X : [\alpha][\alpha^*](\phi\to [\alpha]\phi)\Rightarrow X : [\alpha][\alpha^*]\phi}
{
    \infer[^{\Nec}]
    {X : [\alpha]\phi, X : [\alpha][\alpha^*](\phi\to [\alpha]\phi)\Rightarrow X : [\alpha][\alpha^*]\phi}
    {
        X : \phi, X : [\alpha^*](\phi\to [\alpha]\phi)\Rightarrow X : [\alpha^*]\phi \mbox{ (bud)}
    }
}
$$
The above derivation also relies on Lemma~\ref{lemma:lemma for proving rule (inv)} as follows 
\begin{lemma}
\label{lemma:lemma for proving rule (inv)}
For any $X\in \LVar$ and formula $\phi$, sequents $X: \phi\Rightarrow X : [\true?]\phi$ and $X : [\true?]\phi\Rightarrow X : \phi$ are provable in \GiiiPPLcyc. 
\end{lemma}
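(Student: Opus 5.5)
The plan is to build both derivations directly from the test rules $([\phi?]R)$ and $([\phi?]L)$, using $(\Sub)$ to collapse the stuttered label $\cdots\cat x\cat x$ back to the original one, and an axiom for $x\cat x:\true$. Since the test rules need the label to end in a world variable, I will write $X$ as $Y\cat x$ with $Y\in\TVar$, $x\in\WVar$, exactly as in the proof of Lemma~\ref{lemma:Necessitation}. This is legitimate because a sequent over $Y\cat x$ is an instance of the corresponding sequent over $X$ via $(\Sub)$ read upward, i.e.\ substituting $Y\cat x$ for $X$. If $X$ is itself a world variable, take $Y=\varepsilon$.

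For $Y\cat x:\phi\Rightarrow Y\cat x:[\true?]\phi$, I apply $([\phi?]R)$ to obtain $Y\cat x:\phi,\ x\cat x:\true\Rightarrow Y\cat x\cat x:\phi$ and weaken away $x\cat x:\true$. The key step is then to relate $Y\cat x\cat x$ to $Y\cat x$. As in the test case of Lemma~\ref{lemma:Necessitation}, I use two $(\Sub)$ steps. First, view $Y\cat x\cat x:\phi\Rightarrow\ldots$ as the instance $Z:=Y\cat x$ of $Z\cat x:\phi$ on the left with a fresh $Z$. The more careful route is to read $(\Sub)$ upward, replacing both occurrences $Y\cat x$ (left) and $Y\cat x\cat x$ (right) simultaneously. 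Because $Y$ is a trace variable, $\lm(Y)\cdot\lm(x)$ and $\lm(Y)\cdot\lm(x)\cdot\lm(x)$ are not in general semantically equal. So instead of relying on a syntactic identification, I plan to discharge the leaf using soundness plus the fact that the original sequent is valid. This follows because $\I(\true?)=\{ss\}$ and concatenation $tr\cdot ss=tr$ by the definition of $\cdot$ (the trailing stutter is absorbed). Hence $\lm(Y\cat x\cat x)=\lm(Y\cat x)\cdot\lm(x)$ with the concatenation semantics of $\cat$, which means the two labels denote the same trace.

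This suggests the cleaner route. By the label mapping clause $\lm(\sigma_1\cat\sigma_2)=\lm(\sigma_1)\cdot\lm(\sigma_2)$ and the overlap-concatenation $s_1\ldots s_n\cdot t_1\ldots t_m=s_1\ldots s_nt_2\ldots t_m$, we get $\lm(x\cat x)=\lm(x)$. So $Y\cat x\cat x$ and $Y\cat x$ have identical interpretations under every mapping. I would therefore treat $x\cat x$ as the label $x$ (a label identity like $\sigma\cat\varepsilon=\sigma$), and close the leaf $Y\cat x:\phi\Rightarrow Y\cat x:\phi$ by $(ax)$. The converse sequent $Y\cat x:[\true?]\phi\Rightarrow Y\cat x:\phi$ is handled symmetrically. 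I cut on (or derive via $(ax)$ after $(\wedge R)$-free reasoning) $x\cat x:\true$, which is obtained from $\neg L/\neg R$ as the usual provable $\Rightarrow\sigma:\true$ taking $\true$ as $\neg\false$ or as a primitive closed axiomatically. I then apply $([\phi?]L)$ to get $Y\cat x\cat x:\phi\Rightarrow Y\cat x:\phi$, and close it by $(ax)$ under the same identification.

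The main obstacle is justifying the identification $x\cat x\equiv x$ inside the calculus, since Table~\ref{table:giiippl-regular-rules} does not list it explicitly. My first attempt is to argue it is implicit in labels being traces modulo the concatenation semantics, as with $\varepsilon$. Failing that, I would mimic the Necessitation lemma's $(\Sub)$ pair: substitute a fresh $Z$ for $Y\cat x$ and then $Y$ for $Z$ to reach the leaf $Y\cat x:\phi\Rightarrow Y\cat x:\phi$, closed by $(ax)$. I would also note that no infinite paths arise, so the cyclic condition is vacuous and both derivations are genuine (finite) proofs.
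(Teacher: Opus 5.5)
The step your argument rests on is closing $Y\cat x:\phi\Rightarrow Y\cat x\cat x:\phi$ (and its converse) by $(\textit{ax})$ after identifying $x\cat x$ with $x$. That step is not available in \GiiiPPL\ and cannot soundly be added.

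\textbf{No syntactic route.} The calculus has no idempotence law for labels; only $\varepsilon$ is a unit. Moreover, $(\textit{ax})$ needs syntactically identical labelled formulas. Your $(\Sub)$ fallback does not help, because $(\Sub)$ applies one substitution uniformly to the whole sequent. Any premise whose instance is your goal therefore still carries two distinct labels and is never an axiom; your two substitutions actually lead to $Y:\phi\Rightarrow Y\cat x:\phi$. In the test case of Lemma~\ref{lemma:Necessitation} the $(\Sub)$ pair works only because both sides already carry the same label $Y\cat x\cat x$.

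\textbf{No semantic route.} The overlapping definition of $\cdot$ does literally give $\lm(x)\cdot\lm(x)=\lm(x)$. But that reading is incompatible with the rules. It would leave $\lm(\sigma\cat x\cat y)$ in $([a]R)$ undefined unless $\lm(x)=\lm(y)$. The rules also separate the two labels: $(\osuf\ x)$ makes $x:\false\osuf\true$ unsatisfiable, while $(\osuf)$ makes $x\cat x:\false\osuf\true$ equivalent to $x:\true\vee(\false\wedge\false\osuf\true)$, hence valid. Adding $x\cat x\equiv x$ would therefore make the system unsound. In this calculus $x\cat x$ is a genuinely stuttered trace, and the final stutter is visible to $\osuf$.

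\textbf{What the paper does instead.} The paper never manipulates the end of the label. After $([\phi?]R)$ and $(\textit{wk})$, it proves $Y\cat x:\phi\Rightarrow Y\cat x\cat x:\phi$ by induction on $\phi$, consuming both labels from the front.
Atoms and $\ofst\psi$ close by $(p)$ or $(\ofst)$ and then $(\textit{ax})$, since both labels start with the same world.
For $\phi_1\osuf\phi_2$ it writes $Y=y\cat Z$ and applies $(\osuf)$ on both sides. It then discharges the $\phi_1$- and $\phi_2$-branches by the induction hypothesis. The remaining sequent $Z\cat x:\phi_1\osuf\phi_2\Rightarrow Z\cat x\cat x:\phi_1\osuf\phi_2$ is closed by a back-link to the same shape, which is progressive because $(\osuf)$ was applied.
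So the derivation is genuinely cyclic, contrary to your remark that the cyclic condition is vacuous.

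\textbf{A caveat on the statement itself.} This stutter-sensitivity limits what can be proved at all. Take $X=x$ and $\phi=\neg(\false\osuf\true)$. The antecedent $x:\phi$ is valid by $(\osuf\ x)$. But $x:[\true?]\phi$ is unsatisfiable: by $([\phi?]R)$ it amounts to $x\cat x:\phi$, given that $x\cat x:\true$ holds. The paper's induction covers only atoms, $\ofst$ and $\osuf$, with $Y$ non-empty. Any correct argument must therefore go through the structure of $\phi$ and restrict it, rather than bypass $\phi$ with a label identity.
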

The proof of Lemma~\ref{lemma:lemma for proving rule (inv)} is given in Appendix~\ref{section:Completeness of GiiiPPLcyc}. 
The derivation step tagged by ``\textit{lem}'' according to Lemma~\ref{lemma:lemma for proving rule (inv)}. 
It is a shorthand of firstly applying $(cut)$ (with the additional formula $X : (\phi\to [\alpha]\phi)$) and then applying $X : [\true?]\phi\Rightarrow X : \phi$ in Lemma~\ref{lemma:lemma for proving rule (inv)}. 
The derivation path from node ``companion'' to node ``bud'' is progressive due to that a Necessitation lemma ($\Nec$) is applied. 

It remains to prove the special rules (i)--(xv) for \PPL\ in Table~\ref{table:ppl-proof-system}. 
As a representative, below we give the derivation for one direction of rule $(iii)$. 
Refer to Appendix~\ref{section:Completeness of GiiiPPLcyc} for the derivation of the other direction and the derivations of the other rules.

For one direction of rule $(iii)$, we need to prove 
$$
(iii.a)\ X : (\phi\osuf \psi)\vee (\phi\osuf \chi)\Rightarrow X : \phi\osuf (\psi\vee \chi). 
$$
Without loss of generality, let $X = x\cat Y$. By $(\vee L)$, we need
\begin{enumerate}
\item[$(iii.a1)$] $x\cat Y : (\phi\osuf \psi)\Rightarrow x\cat Y : \phi\osuf (\psi\vee \chi)$.
\item[$(iii.a2)$]  $x\cat Y  : (\phi\osuf \chi)\Rightarrow x\cat Y : \phi\osuf (\psi\vee \chi)$. 
\end{enumerate}

For (iii.a1), by applying $(\osuf)$ on both left and right sides, we have 
$$
Y : \psi \vee (\phi\wedge \phi\osuf \psi)\Rightarrow Y : \psi\vee \chi, Y : \phi\wedge \phi\osuf (\psi\vee \chi).
$$
Applying rule $(\vee L)$, we have
\begin{enumerate}
\item[$(iii.a11)$] $Y : \psi \Rightarrow Y : \psi\vee \chi, Y : \phi\wedge \phi\osuf (\psi\vee \chi)$, which can be proved by $(\vee R)$ and $(ax)$. 
\item[$(iii.a12)$]   $Y : (\phi\wedge \phi\osuf \psi)\Rightarrow Y : \psi\vee \chi, Y : \phi\wedge \phi\osuf (\psi\vee \chi)$, from which, by splitting $Y : \phi\wedge \phi\osuf (\psi\vee \chi)$ with rule $(\wedge R)$, we need to prove $Y : (\phi\wedge \phi\osuf \psi)\Rightarrow Y : \psi\vee \chi, Y : \phi$, which can be closed by $(ax)$, and 
$$
(iii.a121)\ \ Y : (\phi\wedge \phi\osuf \psi)\Rightarrow Y : \psi\vee \chi, Y : \phi\osuf (\psi\vee \chi). 
$$
By $(wk)$, we have
$$
(iii.a1211)\ \ Y : \phi\osuf \psi\Rightarrow Y : \phi\osuf (\psi\vee \chi), 
$$
which is exactly (iii.a1) (by replacing $Y$ with $X$). 
The derivation path from (iii.a1) to (iii.a1211) is progressive since during the process $(\osuf)$ is applied. 
\end{enumerate}

Case for (iii.a2) is quite similar. 
By applying $(\osuf)$ on both left and right sides, there are
\begin{enumerate}
\item[$(iii.a21)$] $Y : \chi \Rightarrow Y : \psi\vee \chi, Y : \phi\wedge \phi\osuf (\psi\vee \chi)$, which can be proved by $(\vee R)$ and $(ax)$. 
\item[$(iii.a22)$]   $Y : (\phi\wedge \phi\osuf \chi)\Rightarrow Y : \psi\vee \chi, Y : \phi\wedge \phi\osuf (\psi\vee \chi)$, from which, by splitting $Y : \phi\wedge \phi\osuf (\psi\vee \chi)$ with rule $(\wedge R)$, we need to prove $Y : (\phi\wedge \phi\osuf \chi)\Rightarrow Y : \psi\vee \chi, Y : \phi$, which can be closed by $(ax)$, and 
$$
(iii.a221)\ \ Y : (\phi\wedge \phi\osuf \chi)\Rightarrow Y : \psi\vee \chi, Y : \phi\osuf (\psi\vee \chi). 
$$
By $(wk)$, we have
$$
(iii.a2211)\ \ Y : \phi\osuf \chi\Rightarrow Y : \phi\osuf (\psi\vee \chi), 
$$
which is exactly (iii.a2) (by replacing $Y$ with $X$). 
The derivation path from (iii.a2) to (iii.a2211) is progressive since during the process $(\osuf)$ is applied. 
\end{enumerate}

\section{Labelled First-Order Process Logic}
\label{section:Labelled First-Order Process Logic - 2}

 At the first-ordered level of process logic, assignments generate program-state information that must be tracked explicitly throughout derivations.  We address this by replacing the trace world-variables of \PPL\ labels with \emph{program updates}, which record the cumulative effect of assignments as structured label components. 
 The following subsections define update-based labels and labelled \FOPL\ formulas, present the proof system \GiiiFOPLcyc, and analyse its soundness and relative completeness.

\subsection{Labelled \FOPL\ Formulas}
\label{section:Labelled First-Order Process Logic}
Program updates, introduced in~\cite{Beckert13} as a first-order label structure for capturing program configurations during derivations, provide the foundation for the labelled treatment of \FOPL.  An update $\upd$ records the accumulated effect of assignment steps as an explicit substitution applied to the current world variable, making the program configuration available inside the proof system without requiring rule-specific axioms for each assignment form.

\begin{definition}[Program Updates]
\label{def:Program Updates}
A program update $\upd$ in \FOPL\ is defined as a first-ordered configuration as follows:
$$
\upd\dddef x\ |\ \{u := e\}\upd,
$$
where $x\in \WVar$, $u\in \Var$ and $e\in \Arith$. 
\end{definition}

We use \Upd\ to represent the set of all updates in \FOPL. 
$\{u:=e\}$ captures the effect of assigning the value of $e$ to the variable $u$ in the current context. 

The labels of \FOPL\ enrich those of \PPL\ with the explicit structures of program updates $\upd$. 
We give the definitions of labels and label mappings of \FOPL\ in the following definitions.

\begin{definition}[Labels of \FOPL]
A label $\sigma$ in \FOPL\ is a trace defined as follows:
$$
\sigma \dddef \varepsilon\ |\ \upd\ |\ X\ |\ \sigma\cat\sigma,
$$
where $X\in \TVar$, $\varepsilon$ is the empty label, $\cat$ is the concatenation operator; 
$\upd$ is a program update. 

We use $\Conf_\fo$ to denote the set of all labels of \FOPL.
\end{definition}

\begin{definition}[Label mappings of \FOPL]
A world mapping $\wm$ in \FOPL\ is a function from $\Upd$ to $\Wd_{\fo}$, defined inductively as follows:
\begin{enumerate}
\item $\wm$ maps a variable $x\in \WVar$ to a world in $\Wd_{\fo}$. 
\item $\wm(\{u := e\}\upd)\dddef \lm(\upd)[u\mapsto \wm(\upd)(e)]$. 
\end{enumerate}
A label mapping $\lm$ is a function from $\Conf_\fo$ to $\Wd^*_{\fo}$, defined as follows:
\begin{enumerate}[i]
\item $\lm$ maps a variable $X\in\TVar$ to a trace of worlds in $\Wd^*_{\fo}$.
\item $\lm(\varepsilon)\dddef \epsilon$.
\item $\lm(\upd)\dddef \wm(\upd)$.
\item $\lm(\sigma_1\cat \sigma_2)\dddef \lm(\sigma_1)\cdot\lm(\sigma_2)$.
\end{enumerate} 
We use $\LM_\fo$ to denote the set of all label mappings from $\Conf_\fo$ to $\Wd^*_{\fo}$.
\end{definition}

A \emph{labeled \FOPL\ formula} is just defined as in \PPL\ (Definition~\ref{def:Labelled PPL Formulas}) (but replacing $\Conf$ as $\Conf_\fo$).

The semantics of labelled \FOPL\ formulas is introduced as follows.

\begin{definition}[Semantics of Labelled \FOPL\ Formulas]
Given the Kripke frame $(\Wd_\fo, \I_\fo)$ of \FOPL\ and the set $\LM_\fo$ of label mappings, the satisfaction relation $\lm\models_{\LM_\fo}\sigma : \phi$ of a labeled formula $\sigma : \phi$ by a label mapping $\lm$ (w.r.t. $\LM_\fo$) is defined such that
$$
\lm\models_{\LM_\fo}\sigma : \phi, \mbox{ if } \lm(\sigma) \models_{(\Wd_\fo, \I_\fo)} \phi. 
$$
\end{definition}

\subsection{System \GiiiFOPLcyc}
\label{section:System GiiiFOPLcyc}

The proof system \GiiiFOPL\ for labelled \FOPL\ formulas extends \GiiiPPL\ by specializing the rules for atomic programs and first-order arithmetic formulas, while keeping all other rules unchanged.  As shown in Table~\ref{table:giiifopl-rules}, the rule $([u:=e])$ replaces the pair $([a]R)$ and $([a]L)$ in \GiiiPPL: it appends the update $\{u:=e\}$ to the label, recording the assignment effect in the label structure rather than substituting into the formula.  The rule $(\textit{at})$ replaces rule $(p)$ for the evaluation of arithmetic predicates at an update: it strips the trailing label $\sigma$ and evaluates $e_1\le e_2$ at the update $\upd$ alone.  The rule $(\textit{ter})$ closes a derivation branch when all remaining formulas are arithmetic sequents of the form $\upd : e_1\le e_2$; such sequents can be discharged by an external arithmetic decision procedure or SMT solver.  
The rules $(\forall R)$ and $(\forall L)$ handle universal quantification in the standard way as in \FODL.  

All other rules of \GiiiFOPL\ are inherited from \GiiiPPL\ with labels interpreted in $\Conf_\fo$, and with each world variable $x$ in those rules replaced by a program update $\upd\in \Upd$.  For example, rule $(\osuf)$ in \GiiiFOPL\ takes the form
$$
  \begin{aligned}
 \infer=[^{(\osuf)}]
 {\upd\cat \sigma : \phi\osuf\psi}
 {\sigma : \psi \vee (\phi \wedge \phi\osuf \psi)}
 \end{aligned},
$$
with $\sigma\in \Conf_\fo$, following the same trace-reduction principle as in \GiiiPPL.

\begin{table}[t]
\centering
\small
\begin{tabular}{p{0.18\linewidth}p{0.74\linewidth}}
\toprule
Rule & Schematic form \\
\midrule
Inherited rules & All rules of \GiiiPPL\ except $([a]R)$ and $([a]L)$, with labels interpreted in $\Conf_\fo$\\[0.6em]
$([u:=e])$ & $\begin{aligned}\infer=[^{([u:=e])}]{\sigma\cat\upd:[u:=e]\phi}{\sigma\cat\upd\cat\{u:=e\}\upd:\phi}\end{aligned}$\\[1.2em]
$(\textit{at})$ & $\begin{aligned}\infer=[^{(\textit{at})}]{\upd\cat\sigma:e_1\le e_2}{\upd:e_1\le e_2}\end{aligned}$\\[1.2em]
$(\textit{ter})$ & $\begin{aligned}\infer[^{(\textit{ter})}]{\Gamma\Rightarrow\Delta}{}\end{aligned}$, where each formula of $\Gamma\Rightarrow\Delta$ is of the form $\upd:e_1\le e_2$\\[1.2em]
$(\forall R)$ & $\begin{aligned}\infer[^{(\forall R)}]{\Gamma\Rightarrow\forall u.\phi,\Delta}{\Gamma\Rightarrow\phi[v/u],\Delta}\end{aligned}$, where $v\in\Var$ is fresh w.r.t. $\Gamma,\Delta,\phi$\\[1.2em]
$(\forall L)$ & $\begin{aligned}\infer[^{(\forall L)}]{\Gamma,\forall u.\phi\Rightarrow\Delta}{\Gamma,\phi[e/u]\Rightarrow\Delta}\end{aligned}$, where $e$ is a term\\
\bottomrule
\end{tabular}
\caption{Rules specific to labelled first-order process logic.}
\label{table:giiifopl-rules}
\end{table}

\ifx
\ \ 
 $
  \begin{aligned}
 \infer=[^{(\phi)}]
 {xT : \phi}
 {x : \phi}
 \end{aligned}
$
, where $\phi$ a non-dynamical state formula. 
\\
\fi

The soundness of the \GiiiFOPL\ rules follows directly from the semantics of labelled \FOPL\ formulas.

\begin{theorem}
Each rule of \GiiiFOPL\ is sound.
\end{theorem}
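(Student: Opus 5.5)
The plan is to check each rule of \GiiiFOPL\ against the semantics of labelled \FOPL\ formulas under label mappings $\lm\in\LM_\fo$. For a one-directional rule this means showing that if every premise is valid then the conclusion is valid. For a double-lined rule we prove the stronger \emph{local invertibility}: for every $\lm$, the principal formula of the conclusion is satisfied by $\lm$ iff the principal formula of the premise is. This local equivalence then transfers to both the left-hand and the right-hand sequent versions, with $\Gamma$ and $\Delta$ unchanged. The inherited rules are handled by reusing the case analysis behind Theorem~\ref{theo:Soundness of Each Rule of GiiiPPL}. Every world variable $x$ there is replaced by an update $\upd$, and $\lm(\upd)=\wm(\upd)$ is again a single world. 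The arguments for $([\phi?])$, $(p)$, $(\ofst)$, $(\osuf)$, $(\osuf\ x)$, $([\seq])$, $([\cho])$, $([\lup])$ and the propositional rules only use that $x$ denotes a length-one trace, so they go through verbatim. For $(\Sub)$ we use that substituting a label for a fresh variable $U$ commutes with evaluation: $\lm(\sigma'[\sigma/U])=\lm_U(\sigma')$, where $\lm_U$ agrees with $\lm$ except that it sends $U$ to $\lm(\sigma)$. Since $\lm_U$ is again a label mapping, validity of the premise gives validity of the conclusion.

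The new rules are then checked one at a time. For $([u:=e])$, fix $\lm$ and let $s=\wm(\upd)$ and $tr=\lm(\sigma)\cdot s$. By $\I_\fo(u:=e)=\{ss'\mid s'=s[u\mapsto s(e)]\}$, the only trace of $u:=e$ starting at $s$ is $s\,s'$ with $s'=\wm(\{u:=e\}\upd)$. By the clause for $[\alpha]\pi$ in Definition~\ref{Semantics of PPL}, $tr\models[u:=e]\phi$ iff $tr\cdot s\,s'\models\phi$, and the latter equals $\lm(\sigma\cat\upd\cat\{u:=e\}\upd)$. This gives the equivalence. For $(\textit{at})$, atomic predicates are evaluated at the first world of a trace (as in rule $(p)$), and $\lm(\upd\cat\sigma)$ begins with $\wm(\upd)$, so the two sides coincide. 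For $(\forall R)$ and $(\forall L)$ we follow the standard first-order argument, with the world-level clause $\I_\fo(\forall u.\phi)=\{s\mid\forall i.\ s[u\mapsto i]\models\phi\}$. For $(\forall L)$, instantiation is sound because $s(e)$ is an integer and substitution is admissible. For $(\forall R)$, freshness of $v$ lets us vary the value of $v$ in the worlds denoted by the labels without affecting $\Gamma,\Delta$. For $(\textit{ter})$, soundness holds relative to the side condition that the arithmetic sequent is \Arith-valid, since the rule is only applied when that external check succeeds.

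The main obstacle is the treatment of $\forall$, together with the fact that the world mapping clause $\wm(\{u:=e\}\upd)$ is written using $\lm(\upd)$. Labels are updates built on world variables, so a fresh program variable $v$ may still occur implicitly in the worlds $\wm(x)$. We will therefore phrase the $(\forall R)$ argument as follows. Given a counter-model $\lm$ of the conclusion, we build $\lm'$ that modifies only the value of $v$ in each world $\wm(x)$ to match the falsifying witness $i$. We then check that this modification does not change the truth of $\Gamma,\Delta$ (freshness of $v$ and the inductive definition of $\wm$ on updates) but falsifies $\phi[v/u]$. Doing this carefully requires a small lemma, proved by induction on updates, that $\wm(\upd)$ and $\wm'(\upd)$ agree except possibly on $v$ when $v$ does not occur in $\upd$.
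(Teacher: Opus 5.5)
Your overall route, a rule-by-rule semantic check with local invertibility for the double-lined rules, is all the paper offers too. The paper simply asserts that soundness follows from the semantics. Your cases for $([u:=e])$ and $(\textit{at})$ are fine. The problems are in the quantifier cases and in the inherited cases.

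For $(\forall L)$ you argue that instantiation is sound because $s(e)$ is an integer. But $\sigma:\forall u.\phi$ is evaluated on the whole trace $\lm(\sigma)$, and the paper gives no clause for that, only the world-level one you quote. Along a trace, and along the program continuations inside $\phi$, the term $e$ has no single value. Under the natural trace reading that resets $u$ to $i$ in every world, take $\phi$ to be $\neg(u=0\wedge\nxt(u=1))$. Then $\forall u.\phi$ holds on every trace. But $\phi[w/u]=\neg(w=0\wedge\nxt(w=1))$ fails whenever $\lm(X)$ is a trace in which $w$ goes from $0$ to $1$. So the instance of $(\forall L)$ with conclusion $X:\forall u.\phi\Rightarrow X:\phi[w/u]$ has a valid premise (an instance of $(\textit{ax})$) and an invalid conclusion. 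The step can only be justified for terms whose value cannot change along the trace, such as constants. Under a first-world reading of $\forall$ it fails even for constants: compare $\forall u.\nxt(u=1)$ with $\nxt(5=1)$.

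For $(\forall R)$, your counter-model $\lm'$ changes $v$ only in the worlds $\wm(x)$. You must also change $v$ in every world of $\lm(X)$ for each trace variable $X$. Otherwise, for a label such as $X$ itself (the labels the system actually works with), $\lm'(\sigma)$ need not have $v=i$ in its worlds, and the premise is not falsified. You also have to commit to the uniform trace reading of $\forall$, because your final step uses that $u$ and $v$ agree in all worlds of the trace.

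The claim that the inherited rules only use that $x$ denotes a length-one trace is wrong for $([\phi?]R)$ and $([\phi?]L)$. The semantics $\I(\psi?)=\{ss\mid s\in\I(\psi)\}$ evaluates $\psi$ at the single world $s$. The rules instead record it as $x\cat x:\psi$, which is evaluated on the stuttered trace $ss$. This is harmless for first-order tests but not in general. Let $\psi$ be $\neg[\true?]\nxt\nxt\true$; it is true on every single world and false on every $ss$. Then $\cdot\Rightarrow\upd:[\psi?]\false$ is invalid, while its premise $\upd\cat\upd:\psi\Rightarrow\upd\cat\upd:\false$ is valid.

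Your $(\Sub)$ case needs $\lm_U$ to be a label mapping. That fails when $U\in\WVar$ is replaced by a label not denoting a single world; in \GiiiFOPL\ it also produces ill-formed updates. So a sort restriction is needed.

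In short, a rule-by-rule check can only succeed after adding side conditions that the paper's one-line justification also passes over:
\begin{itemize}
\item stutter-invariant tests (for example, first-order ones);
\item instantiation terms that are rigid along traces;
\item sort-respecting substitution;
\item and, as you already note, \Arith-validity as the side condition of $(\textit{ter})$.
\end{itemize}
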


The cyclic proof structure of \GiiiFOPL\ can be defined the same as that of \GiiiPPL, except that for its progressive derivation traces (Definition~\ref{def:Progressive Step/Progressive Derivation Trace of GiiiPPLcyc}), a progressive step can no longer be an instance of rule $([a]R)$, instead can be their replacements in \GiiiFOPL, i.e., rule $([u:=e])$ for the right side:
$$
\begin{aligned}
    \infer[^{([u:=e] R)}]
    {\Gamma\Rightarrow \sigma\cat \upd\cat \{u:=e\}\upd : \phi}
    {\Gamma\Rightarrow \sigma\cat\upd : [u:=e]\phi}
\end{aligned}
.$$

Similarly, in \GiiiFOPL, we say that a sequent is proved also when there is a cyclic proof of it. 
For a labelled \FOPL\ formula $\phi$, we write $\GiiiFOPLcyc\vdash \phi$ (or simply $\vdash \phi$) if sequent $\cdot \Rightarrow \phi$ can be proved in $\GiiiFOPLcyc$.

\subsection{Soundness and Relative Completeness of \GiiiFOPLcyc}
\label{section:Soundness and Completeness of GiiiFOPLcyc}

We discuss about the soundness and completeness of the proof system \GiiiFOPLcyc\ w.r.t all \FOPL\ formulas. 
The soundness proof for \GiiiFOPLcyc\ follows the same structure as that of \GiiiPPLcyc, adapting the cases for the rules $([u:=e])$, $(\forall R)$, and $(\forall L)$ that are specific to the first-order setting.  The completeness of \GiiiFOPLcyc\ is obtained by adapting the standard completeness argument for \FODL~\cite{Harel00} to the trace semantics of \FOPL. 


\begin{theorem}[Soundness of \GiiiFOPLcyc]
    \label{theorem:Soundness of GiiiFOPLcyc}
    For any labelled \FOPL\ formula $X :\phi$, where $X\in \TVar$, if $\vdash X : \phi$, then $\models X : \phi$.
\end{theorem}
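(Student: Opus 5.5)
The plan is to replay the argument for Theorem~\ref{theo:Soundness of GiiiPPLcyc} in the first-order setting: argue by contradiction, extract an infinite invalid derivation path, and map the progressive trace along it to an infinite $\pmult$-descent of minimum counter-example sets. Suppose $\vdash X:\phi$ and $\not\models X:\phi$. The rules of \GiiiFOPL\ are locally sound (the preceding theorem). Hence from every invalid node some premise is invalid, and a counter-example mapping of the conclusion can be transported to one of that premise. Since the cyclic proof is a finite tree with back-links, following invalid premises (and passing through buds to their companions) yields an infinite path $P$ of invalid sequents. By the cyclic condition $P$ carries a progressive derivation trace $\tau_1\tau_2\ldots$ with $\tau_i=\sigma_i:\phi_i$ in $\nu_i$.

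Next I will reinstate the measure. Definitions~\ref{def:Counter-example Traces} and~\ref{def:Minimum Counter-example Traces} are frame-generic, so they apply verbatim to $(\Wd_\fo,\I_\fo)$ with action strings whose letters are now assignments $u:=e$ and tests. Proposition~\ref{prop:finite execution strings} transfers as well: its justification rests only on the finitely many syntactic residuals of a regular program, not on the frame. In fact it is easier here, because $\I_\fo(u:=e)$ is functional, so the trace is determined by the start world and the action string. Proposition~\ref{prop:well-foundedness of relation pmult} is purely combinatorial and is reused unchanged.

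The core step is a first-order analogue of Lemma~\ref{lemma:infinite descent sequence}. For each step $(\sigma:\phi,\sigma':\phi')$ on an invalid path with counter-example mapping $\lm$, I must find a counter-example $\lm'$ of the next node with $\EX(\lm'(\sigma'),\phi')\mult\EX(\lm(\sigma),\phi)$, and with strict descent on progressive steps. Rules inherited from \GiiiPPL\ are handled as in the appendix proof, with world variables replaced by updates. For those rules $\lm'=\lm$ works except for $(\Sub)$, where $\lm'$ is $\lm$ composed with the substitution. The new rules need separate treatment:
\begin{itemize}
\item $(\textit{at})$ and $(\textit{ter})$ involve only non-dynamic formulas, so the set is $\{(tr,\epsilon)\}$ and stays unchanged.
\item $(\forall R)$ and $(\forall L)$ change $\lm$ only on a fresh integer variable, or instantiate a term. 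Since these formulas are state formulas, the minimum counter-example set relative to the adjusted world is unchanged.
\item $([u:=e])$ is the progressive case. With $\lm(\sigma\cat\upd)=tr$, the unique $u:=e$ step from $tr_e$ gives $\lm(\sigma\cat\upd\cat\{u:=e\}\upd)=tr\cdot tr_e\,tr_e[u\mapsto tr_e(e)]$. Every minimum counter-example string of $[u:=e]\phi$ starts with the letter $u:=e$. Stripping it gives exactly the minimum strings for $\phi$ from the extended trace, and these are proper suffixes.
\end{itemize}
One subtlety is the world-mapping clause $\wm(\{u:=e\}\upd)$: I will read it as $\wm(\upd)[u\mapsto\wm(\upd)(e)]$, so that the label faithfully mirrors $\I_\fo$.

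The main obstacle will be $(\forall L)$ together with $(\Sub)$. After instantiation or label substitution, the minimum counter-example set must be taken relative to a possibly different mapping, and I need it not to grow. I expect to handle this by noting that the strings in $\EX$ depend only on the program residuals and on the last world of the current trace. Choosing $\lm'$ to agree with $\lm$ on that last world then gives the same string set, which is the needed $\mult$ relation by clause (1) of Definition~\ref{def:Relation mult}. With the lemma in place, the chain $\EX_1\multr\EX_2\multr\cdots$ has infinitely many strict steps. This contradicts Proposition~\ref{prop:well-foundedness of relation pmult} and proves Theorem~\ref{theorem:Soundness of GiiiFOPLcyc}.
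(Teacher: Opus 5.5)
Your plan follows the paper's route: its proof of this theorem just replays the \GiiiPPLcyc\ argument with $([u:=e])$ in place of $([a]R)$ and treats the quantifier rules like $(\Sub)$. However, the descent lemma at its core does not go through as you set it up, and the paper's sketch has the same hole. You are effectively measuring sets of \emph{action strings}. That is how you handle $(\textit{at})$ (``the set stays unchanged'', although as traces $(\textit{at})$ replaces $\lm(\upd\cat\sigma)$ by its first world, a prefix rather than a suffix). It is also how you handle $(\Sub)$ and $(\forall L)$ (``the strings depend only on the residuals and the last world''), and $([u:=e])$ (stripping the leading letter). Under that reading the progressive $(\osuf)$ step is not strict. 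For non-dynamic formulas, clause~1 of Definition~\ref{def:Counter-example Traces} gives the string set $\{\epsilon\}$ both for $\upd\cat\sigma:\phi\osuf\psi$ and for $\sigma:\psi\vee(\phi\wedge\phi\osuf\psi)$. The appendix gets strictness for $(\osuf)$ by comparing \emph{traces}, $\lm(\sigma)\psuf\lm(\upd\cat\sigma)$. With traces, though, your $([u:=e])$ case fails. For non-dynamic $\phi$, clause~2 gives $\EX(tr,[u:=e]\phi)=\{tr_e\,s'\}$, while clause~1 gives $\EX(tr\,s',\phi)=\{tr\,s'\}$. A trace of length $|tr|+1\ge 2$ is never a proper suffix of one of length $2$. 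So $\EX_1\multr\EX_2\multr\cdots$ is not a descent in any single well-founded order. Concretely, cycles whose only progress is through $(\osuf)$ get no descent under your measure. These include the cycles for axioms (iii)--(v), (xiv), (xv) and Lemma~\ref{lemma:lemma for proving rule (inv)}, all inherited by \GiiiFOPLcyc. Any label-based measure, on the other hand, grows exactly at the $([u:=e])$ steps.

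Two further claims need repair. First, finiteness. With minimality taken per trace, clause~2 puts every trace of $(u:=u+1)^\lup$ into $\EX(tr,[(u:=u+1)^\lup]\psi)$, one per iteration count, so functionality does not help. With globally shortest strings, your functionality remark does give finiteness. But then the $(\wedge R)$ step after $([\lup])$ and $([\cho])$ can increase the set. When $[\true?]\psi$ holds, the path must follow $[\alpha\seq\alpha^\lup]\psi$, whose shortest strings are longer than the old one. The underlying problem is that $\CET$ never filters for genuine violations. Second, the quantifier rules. ``State formula'' in the paper's sense includes $[\alpha]\pi$, so the body of $\forall u.\phi$ may be dynamic, and $\CET$ has no clause for $\forall$.

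What is missing is a global argument that handles both kinds of progress. One option is to look at the largest formula unfolded infinitely often on the tail of a progressive trace. If it is an $\osuf$-formula, only propositional and $(\Sub)$ steps act on the trace between its unfoldings, so the denoted label trace strictly shrinks. If it is $[\alpha^\lup]\pi$, it must lie on the right, since on the left nothing between its unfoldings is progressive. Then all progress comes from $([u:=e])$ and test steps, each of which shortens a shortest genuine counterexample continuation by one. This works provided the invalid path sends each $(\wedge R)$ to a false conjunct with a shortest witness. Quantifier rules can act on the traced formula only finitely often, so for them freshness of $v$ is all you need.
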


The proof of Theorem~\ref{theorem:Soundness of GiiiFOPLcyc} proceeds along the same lines as the proof of soundness for \GiiiPPLcyc.  The argument is modified in two respects.  First, the case for progressive steps corresponding to rule $([a]R)$ in \GiiiPPLcyc\ is replaced by the case for rule $([u:=e]R)$ in \GiiiFOPLcyc; the proof carries through essentially unchanged, since the update rule records the effect of the assignment in the label and the same multiset-ordering argument applies.  Second, the cases for rules $(\forall R)$ and $(\forall L)$ must be handled; these are straightforward and proceed similarly to the case for rule $(\Sub)$ in \GiiiPPLcyc, as the quantifier rules introduce or eliminate a label variable in a way that respects the counter-example mappings.

 \begin{theorem}[Completeness of \GiiiFOPLcyc]
\label{theorem:Completeness of GiiiFOPLcyc}
    For a labelled FOPL formula $X : \phi$ with $X\in \TVar$ a label variable, if $\models X : \phi$, then $\vdash X : \phi$. 
\end{theorem}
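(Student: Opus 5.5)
The argument adapts the Harel-style relative completeness proof for \FODL~\cite{Harel79,Harel00} to the trace setting: validity is relative to the arithmetic structure \Arith, and every first-order arithmetic fact is treated as given, discharged by rule $(\textit{ter})$. The plan has four steps.

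\textbf{Step 1: reduce to program-free goals.} Let $\models X:\phi$. I would first show that every labelled \FOPL\ formula whose program part is loop-free can be reduced to arithmetic sequents closed by $(\textit{ter})$. The idea is to unfold programs with the structural rules, in the same way as for \GiiiPPLcyc:
\begin{itemize}
\item $([\seq])$, $([\cho])$ and $([\phi?]R/L)$ decompose the program;
\item $([u:=e])$ pushes assignments into the update label;
\item $(\osuf)$, $(\ofst)$ and $(\textit{at})$ consume the trace operators along the now concrete label $\upd_1\cat\cdots\cat\upd_n$;
\item $(\osuf\ x)$ closes the branches that end at a single update.
\end{itemize}
Since such a label is a finite, explicit sequence of updates, every trace formula evaluated on it is equivalent to a Boolean combination of atoms $\upd_i:e_1\le e_2$. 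The propositional rules together with $(\forall R)$ and $(\forall L)$ then reduce the goal to sequents of that shape. These are valid, so $(\textit{ter})$ closes them.

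\textbf{Step 2: handle $\alpha^*$ by expressibility.} Iteration needs the arithmetical encoding. By the standard G\"odel-coding argument, for every program $\alpha$ and trace formula $\pi$ there is a first-order arithmetic formula expressing ``$\lm(\sigma)\models[\alpha^*]\pi$''. The encoding quantifies over a code of the finite trace and over a counter $n$ of iterations. In particular, there is an invariant $I$, built from the weakest precondition of the remaining obligation, such that the following hold:
\begin{itemize}
\item $X:\phi$ together with the label context implies $I$;
\item $I$ entails the $\true?$-branch;
\item $I\to[\alpha]I$ holds as a valid formula about the trace continued by one $\alpha$-block.
\end{itemize}
Because $\alpha$ inside $\alpha^*$ has smaller program complexity, the third property is provable by the induction hypothesis on program structure.

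\textbf{Step 3: build the cyclic derivation for $\alpha^*$.} I would follow the pattern already used to derive rule (7): apply $(cut)$ with the encoded invariant, unfold with $([\lup])$, and use the necessitation lemma (Lemma~\ref{lemma:Necessitation}, transferred to \GiiiFOPL\ with $([u:=e])$ as the progressive step). Each unfolding then produces a bud, which is matched to its companion by $(\Sub)$ renaming the trace variable. Every path through the cycle passes through a $([u:=e])$ or $([\phi?]R)$ step, so the cyclic condition holds. For $\la\alpha^*\ra$-type goals on the left, I would use the dual convergence argument with a counter $n$, as in Harel's rule (Con), again with the counter formula expressed in \Arith.

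\textbf{Step 4: conclude by induction.} Combining the steps by induction on the number of nested modalities and the program structure yields $\vdash X:\phi$.

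The main obstacle is Step 2. The formula obtained by the encoding must describe temporal properties of the whole concatenated trace, not only its final state. The invariant therefore has to carry arithmetic information about the prefix already stored in the label $\sigma$, including which $\osuf$-obligations are still pending. My first attempt would be to normalise $\pi$ into a finite set of ``residual'' trace formulas, closed under the $(\osuf)$ unfolding, and to encode an invariant for each residual. This would also keep the number of distinct companion sequents finite, which the cyclic construction needs.
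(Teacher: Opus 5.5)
Your overall strategy (arithmetical expressibility, an invariance rule derived cyclically along the lines of rule (7), a counter-based convergence rule for $\la\alpha^*\ra$, and an outer induction) is the same Harel-style strategy the paper uses. However, the proposal has a genuine gap exactly where you locate the main obstacle. By rule $(\textit{at})$, a pure arithmetic formula at a label is evaluated at the label's first update only. Whether $\lm(\sigma)\models[\alpha^*]\pi$ holds, by contrast, depends on the whole prefix stored in $\sigma$, which starts with the trace variable $X$ and so has unknown length. Hence the arithmetic formula you claim to get from G\"odel coding does not exist in general. A usable invariant would have to combine residual obligations on the prefix with arithmetic conditions on the current update. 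That is precisely your residual construction, and it is only announced: you do not show that the residuals are finitely many, that each yields an arithmetically expressible invariant that is inductive for $\alpha$ and implies the $\true?$-branch, or that the resulting buds match companions. The same unknown prefix already defeats Step~1. A label $Y\cat x\cat\upd_1\cat\cdots\cat\upd_n$ is not a finite explicit update sequence. For instance, the valid sequent $X:\true\osuf(v\ge 0)\Rightarrow X:[u:=1](\true\osuf(v\ge 0))$ does not reduce to a Boolean combination of atoms $\upd_i:e_1\le e_2$. It needs cyclic unfolding through the trace variable, as in the \PPL\ derivations of rules (iii)--(viii), even though the program is loop-free.

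The paper never needs an invariant that carries temporal obligations. It reduces $X:\phi$ to clauses, and a clause $\psi_1\vee\op\psi_2$ to $X:\neg\psi_1\Rightarrow X:\op\psi_2$. It then trades $\neg\psi_1$ and $\psi_2$ for arithmetic $\phi_1^\flat$, $\phi_2^\flat$ via Lemma~\ref{lemma:completeness lemma 1}. It obtains $X:\phi_2^\flat\Rightarrow X:\psi_2$ from the outer induction on the number of modalities and lifts it under $\op$ with the Necessitation Lemma. Finally it cuts this against $X:\neg\psi_1\Rightarrow X:\op\phi_2^\flat$, which is handled by Lemma~\ref{lemma:completeness lemma 2} with purely arithmetic pre- and postconditions. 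All program reasoning happens in that lemma: $(\textit{ter})$ for assignments; an arithmetic intermediate assertion with Necessitation and $(\textit{cut})$ for $\alpha_1\seq\alpha_2$; and, for $\alpha_1^*$, the arithmetic equivalent $\varphi^\flat$ of $[\alpha_1^*]\psi^\flat$ (resp.\ $\varphi^\flat(n)$ of $\la\alpha_1^n\ra\psi^\flat$) fed into the derived rules $([\lup'])$ and $(\la\lup'\ra)$, which come from the cyclic derivations of $(\textit{inv})$ and $(\textit{con})$. Swapping the postcondition for an arithmetic equivalent under the modality is what lets the paper skip your Step~2. You should not simply import it, though. Lemma~\ref{lemma:completeness lemma 1} is only justified informally, and over traces it cannot hold literally: no pure arithmetic formula distinguishes $s_1s_2$ from $s_1s_2'$, whereas $[\true?]\nxt(v\ge 0)$ does whenever $s_2(v)\ge 0>s_2'(v)$. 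So the paper relocates your obstacle into Lemma~\ref{lemma:completeness lemma 1} rather than removing it. A completed residual analysis of the kind you sketch is what a rigorous argument would still require.
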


The completeness of \GiiiFOPLcyc\ follows the main approach used for the completeness of \FODL~\cite{Harel79,Harel00}.  The argument proceeds in two steps.  We first establish two auxiliary lemmas---Lemma~\ref{lemma:completeness lemma 1} and Lemma~\ref{lemma:completeness lemma 2}---that are required for the main completeness argument.  The full proof of Theorem~\ref{theorem:Completeness of GiiiFOPLcyc} then follows from these two lemmas and the structure of \FOPL\ formulas.

\begin{lemma}
\label{lemma:completeness lemma 1}
    For any FOPL formula $\phi$, there exists a pure arithmetical FOL formula $\psi^\flat$ such that $\models \psi^\flat \leftrightarrow \phi$.
\end{lemma}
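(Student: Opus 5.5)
We prove Lemma~\ref{lemma:completeness lemma 1} by adapting the classical expressiveness result for \FODL\ over arithmetic~\cite{Harel79,Harel00}: every first-order formula can be pushed into arithmetic using G\"odel coding of finite sequences. Because \FOPL\ formulas are evaluated over traces rather than states, we first fix what the arithmetical translation $\psi^\flat$ should mean. For a trace formula $\phi$ we build an arithmetical formula $\psi^\flat_\phi(t,\ell,\bar u)$. Here $t$ codes a finite sequence of valuations of the finitely many program variables $\bar u$ occurring in $\phi$, $\ell$ is its length, and the formula holds exactly when the decoded trace satisfies $\phi$. The statement of the lemma is the special case of a length-one trace, i.e.\ a state: set $\ell=1$ and take the single state to be the current values of $\bar u$. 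We will use Gödel's $\beta$-function to code sequences of tuples of integers; the operations $t[i]_j$ (component $j$ of state $i$) and the predicate ``$t'$ is a suffix of $t$'' are then arithmetically definable.

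The construction goes by simultaneous induction on programs and formulas. For programs $\alpha$ we define an arithmetical formula $R_\alpha(t,\ell)$ stating that the coded sequence $t$ of length $\ell$ is a trace in $\I_\fo(\alpha)$.
\begin{itemize}
\item For $u:=e$: $\ell=2$, the second state equals the first except that $u$ takes the value of $e$ evaluated in the first state.
\item For $\phi?$: $\ell=1$ (stuttering $ss$ collapses under concatenation, which we handle by representing it as length two with equal states) and $\psi^\flat_\phi$ holds.
\item For $\alpha\seq\beta$: there exists a split index $k$ such that the prefix up to $k$ satisfies $R_\alpha$ and the suffix from $k$ satisfies $R_\beta$.
\item For $\alpha\cho\beta$: the disjunction of the two.
\item For $\alpha^*$: there exist $n$ and a coded list of split indices $k_0=0\le k_1\le\dots\le k_n=\ell-1$ such that every segment satisfies $R_\alpha$ (or $n=0$ and the trace is a single state). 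This is the standard $\beta$-function encoding of iteration.
\end{itemize}

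For formulas the translation is as follows.
\begin{itemize}
\item Atomic predicates $e_1\triangle e_2$ are evaluated at $t[0]$.
\item Boolean connectives and $\forall u$ translate directly. For $\forall u$, quantify over the value and substitute it into every state of the trace; since the semantics given is state-based, the length-one case is what matters.
\item $\ofst\pi$ is $\pi^\flat$ applied to the length-one trace $t[0]$.
\item $\pi_1\osuf\pi_2$ is a bounded existential over suffix positions $0<i<\ell$ with $\pi_2^\flat$ holding on the suffix from $i$, together with a bounded universal over $0<j<i$ with $\pi_1^\flat$ holding on the suffix from $j$.
\item $[\alpha]\pi$ quantifies over all codes $t'$ and lengths $\ell'$ with $R_\alpha(t',\ell')$ whose first state equals the last state of $t$, and requires $\pi^\flat$ on the concatenated code, which is arithmetically definable.
\end{itemize}
Each clause mirrors Definition~\ref{Semantics of PPL} literally, so the correctness claim, that for every trace $tr$ with code $t$ we have $tr\models\phi$ iff $\psi^\flat_\phi(t,\ell)$ holds in $\Arith$, follows by the same simultaneous induction.

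The main obstacle is the iteration clause together with the trace-level quantification in $[\alpha]\pi$. We must check that the $\beta$-function coding of unboundedly long traces and of the sequence of split points is uniform, so that one formula works for all lengths, and that concatenation of codes, with its identification of the shared endpoint, is definable. This is the standard content of the Harel arithmetical-completeness argument, and we would cite it, only spelling out the additional bookkeeping for suffixes and concatenation. The remaining technical point is that $\Arith$ ranges over $\mbb{Z}$, whereas $\beta$-coding is usually done over $\mbb{N}$. We would restrict code quantifiers to nonnegative integers and encode each integer $z$ by a natural number, e.g.\ $2z$ for $z\ge 0$ and $-2z-1$ for $z<0$. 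Both the restriction and this encoding are definable in $\Arith$ by expressing nonnegativity as $0\le z$.
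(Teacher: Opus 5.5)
Your strategy (coding finite traces with the $\beta$-function, giving an arithmetical trace relation $R_\alpha$ for each program, and checking a clause-by-clause translation by simultaneous induction) is the natural way to make precise what the paper only sketches. The paper says only: ``temporal properties are first-order over trace positions, \FODL\ is arithmetically expressible, combine the two.'' Your version is more careful than that sketch. \FODL\ expressiveness as stated only concerns input--output behaviour, whereas $[\alpha]\pi$ quantifies over whole traces, and your $R_\alpha$ supplies exactly that.

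However, two program clauses do not match the paper's semantics. So your claim that every clause mirrors the definition literally is false, and the induction breaks.
\begin{itemize}
\item \emph{Tests.} The paper's concatenation does not collapse stutters: $tr\cdot ss=tr\,s$, which is why $([\phi?]R)$ produces the label $\sigma\cat x\cat x$. So $R_{\phi?}$ must be $\ell=2$ with $t[0]=t[1]$, not $\ell=1$ as your text also asserts.
\item \emph{Iteration (more serious).} Since $\alpha^0\dddef\true?$ and $\alpha^n\dddef\alpha\seq\alpha^{n-1}$, one has $\I(\alpha^0)=\{ss\}$. Every trace of $\alpha^*$ therefore has the form $tr_1\cdots tr_n\cdot ss$, ending in a repeated state. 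Your $R_{\alpha^*}$ instead admits the one-state trace for $n=0$ and omits the final stutter for $n\ge 1$.
\end{itemize}
The iteration mismatch changes truth values even under your own first-state reading. The formula $[(u:=u)^*]\nxt\true$ holds at every state in the paper's semantics, because every trace of $(u:=u)^*$ has at least two states. Your translation is false at every state, because of the one-state trace. The repair is local: require $\ell\ge 2$ and $t[\ell-2]=t[\ell-1]$, and split the prefix of length $\ell-1$ into $n\ge 0$ consecutive $R_\alpha$-segments, with $n=0$ meaning that this prefix is a single state.

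Two smaller points.
\begin{itemize}
\item \emph{Quantifiers.} Inside the induction, $\forall u.\phi$ can occur under $\osuf$ or inside $[\alpha]\pi$, where it is evaluated on traces longer than one state. So ``the length-one case is what matters'' is not right. Your ``substitute the value into every state'' is a convention you must state explicitly, since the paper leaves it undefined.
\item \emph{Scope of the result.} What you prove is equivalence on length-one traces only. That is the most one can get: a pure arithmetical formula depends only on the first state of a trace, whereas $[\true?]\nxt\nxt\true$ holds on a trace iff it has at least two states. So the restriction is forced, even though the paper later applies the lemma under an arbitrary trace label $X$. You should present it as the version you prove, not as ``the special case'' of the statement.
\end{itemize}
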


We usually write a formula $\phi$ as $\phi^\flat$ to address that it is a pure arithmetical FOL formula. 

We omit the proof of Lemma~\ref{lemma:completeness lemma 1}. 
Here is an informal explanation. 
It is well known that temporal properties can be encoded in first-order form over suitable trace positions (cf.~\cite{TemporalLogic}). 
On the other hand, any \FODL\ formula can be expressed by a pure arithmetical FOL formula. 
The result of Lemma~\ref{lemma:completeness lemma 1} is direct by combining these two facts.

\begin{lemma}
\label{lemma:completeness lemma 2}
    For any FOPL formulas $\phi$ and $\psi$ and label variable $X$, let $\op\in \{[\alpha], \la\alpha\ra\}$ for any program $\alpha$, 
    if $\models X : \phi^\flat\Rightarrow X : \op\psi^\flat$, then $\vdash X : \phi^\flat\Rightarrow X : \op\psi^\flat$. 
\end{lemma}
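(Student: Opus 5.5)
We argue by structural induction on the program $\alpha$, following the classical relative-completeness proof for \FODL~\cite{Harel79,Harel00}, with the labels carrying the trace information. Throughout we assume that valid sequents whose formulas are all pure arithmetical formulas labelled by updates are provable. This is exactly what rule $(\textit{ter})$ supplies, after pushing the labels inward with $(\textit{at})$ and the propositional and quantifier rules. Using Lemma~\ref{lemma:completeness lemma 1}, every intermediate formula of the form $[\beta]\chi$ or $\la\beta\ra\chi$ has a first-order equivalent $(\cdot)^\flat$. We may therefore use $(cut)$ to insert such equivalents and keep both sides of the sequent in the shape $\phi^\flat \Rightarrow \op\psi^\flat$. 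First we write $X = Y\cat x$ (via $(\Sub)$, as in the proof of Lemma~\ref{lemma:Necessitation}) so that the last world of the label is explicit; for \FOPL\ this world is an update $\upd$.

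The base cases go as follows. For an assignment $u:=e$, rule $([u:=e])$ turns the goal into $Y\cat\upd : \phi^\flat \Rightarrow Y\cat\upd\cat\{u:=e\}\upd : \psi^\flat$. Since $\psi^\flat$ is a trace formula, we unfold its $\ofst$ and $\osuf$ structure with $(\ofst)$, $(\osuf)$, $(\osuf\ x)$ and $(\textit{at})$. This reduces everything to finitely many arithmetic sequents over the concrete updates, which are closed by $(\textit{ter})$; validity is preserved at each step because the rules are invertible. A test $\chi?$ is handled the same way using $([\phi?]R)$ and $([\phi?]L)$.

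For the inductive cases, $\alpha_1\seq\alpha_2$ and $\alpha_1\cho\alpha_2$ follow from $([\seq])$ and $([\cho])$. For sequence we introduce, by cut, the intermediate formula $([\alpha_2]\psi)^\flat$, apply the induction hypothesis to $\alpha_1$, and use Lemma~\ref{lemma:Necessitation} to move under $[\alpha_1]$ and apply the induction hypothesis to $\alpha_2$. The iteration case $\alpha^*$ is the key one. For the box modality we take the invariant $I^\flat \equiv ([\alpha^*]\psi)^\flat$ given by Lemma~\ref{lemma:completeness lemma 1}. The valid sequents $\phi^\flat\Rightarrow I^\flat$ and $I^\flat \Rightarrow [\true?]\psi$ are closed using Lemma~\ref{lemma:lemma for proving rule (inv)} and the base case, and $I^\flat\Rightarrow[\alpha]I^\flat$ is closed by the induction hypothesis. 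We then build the cycle exactly as in the derivation of rule (7): unfold with $([\lup])$ and, after applying $(nec)$, back-link to the companion $X : I^\flat\Rightarrow X : [\alpha^*]\psi^\flat$. This path is progressive by Lemma~\ref{lemma:Necessitation}.

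For the diamond modality we use the convergence argument of \FODL. We cut in a formula $\exists n.\,\chi^\flat(n)$ expressing ``a witnessing trace exists within $n$ iterations'', proving $\chi^\flat(n+1)\Rightarrow\la\alpha\ra\chi^\flat(n)$ by the induction hypothesis, and the cycle descends on $n$.

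The main obstacle is the iteration case, and within it the diamond subcase. A plain cycle on $\la\alpha^*\ra$ is not progressive under Definition~\ref{def:Progressive Step/Progressive Derivation Trace of GiiiPPLcyc}, because only right-hand box steps and $(\osuf)$ count as progressive, and a cycle through the diamond need not contain them. It is also unclear that $(\forall L)$ and $(\Sub)$ can realise the descent on the numeric parameter $n$. I would first try to recast $\la\alpha^*\ra$ as the negation $\neg[\alpha^*]\neg$ on the left of the sequent, and then find a back-link whose trace passes through a right-hand $([u:=e])$ or $([\phi?]R)$ step. If that fails, I would replace the cycle by an arithmetic induction proved inside the first-order fragment and discharged by $(\textit{ter})$.

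A second delicate point arises when the trace formula $\psi$ mentions the prefix $Y$: the first-order equivalents must then quantify over prefix positions. I would handle this by pushing $\osuf$ through the label with $(\osuf)$ before cutting.
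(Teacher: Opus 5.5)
Apart from one case, your plan is the paper's proof. The shared steps are: induction on $\alpha$, first-order equivalents from Lemma~\ref{lemma:completeness lemma 1} inserted by $(\textit{cut})$, and $(\Nec)$ for sequencing. For $[\alpha^\lup]$ both use an invariant equivalent to $[\alpha^\lup]\psi^\flat$ with the cycle of rule (7), which the paper packages as the derived rules $(\textit{inv})$ and $([\lup'])$.

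The gap is the $\la\alpha^\lup\ra$ case, which you leave open. Your cut formula $\exists n.\chi^\flat(n)$ and the premise $\chi^\flat(n+1)\Rightarrow\la\alpha\ra\chi^\flat(n)$ are exactly the premises of the paper's $(\la\lup'\ra)$. What you never derive is the convergence step $\chi^\flat(n)\Rightarrow\la\alpha^\lup\ra\chi^\flat(0)$, and neither fallback supplies it.
\begin{itemize}
\item Take the natural companion $X:\chi^\flat(n)\Rightarrow X:\la\alpha^\lup\ra\chi^\flat(0)$. Every trace running around a cycle follows either the first-order formula, to which no progressive rule applies, or the diamond. The diamond, recast or not, becomes the left-hand box $X:[\alpha^\lup]\neg\chi^\flat(0)$ and is unfolded only by left-hand steps.
\item Those left-hand steps must stay non-progressive. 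Otherwise the cycle that unfolds $Y\cat x:[(u:=u)^\lup]\true$ on the left and back-links by $(\Sub)$ would ``prove'' the invalid sequent $Y\cat x:[(u:=u)^\lup]\true\Rightarrow\cdot$.
\item The arithmetic fallback fails because $(\textit{ter})$ closes only modality-free sequents, while the statement you would prove by induction on $n$ contains $\la\alpha^\lup\ra$. \GiiiFOPLcyc\ has no induction rule for integer variables, and $(\Sub)$ acts only on label variables.
\end{itemize}

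You cannot simply import the paper's route, because your suspicion is correct and applies to it as well. The paper reduces $(\la\lup'\ra)$ to $(\textit{con})$ (Lemma~\ref{lemma:inv and con rules}). It derives $(\textit{con})$ by a cycle with bud $X:\phi(n)\Rightarrow X:\la\alpha^\lup\ra\phi(0)$, attributing progress to $(\Nec)$.
\begin{itemize}
\item The only trace that crosses the $(\textit{cut})$ and reaches that bud follows $X:\la\alpha\ra\la\alpha^\lup\ra\phi(0)$, which under $(\Nec)$ is a left-hand box.
\item The progressive trace that $(\Nec)$ does provide follows the cut formula $X:\la\alpha\ra\phi(n)$. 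That formula is born at the cut and cannot be continued around the cycle.
\item The passage from $X:n\ge 1, X:\phi(n)$ to $X:\phi(n+1)$ silently renames $n$ to $n-1$, and no rule does this. So the bud is literally the root, and no descent on $n$ is recorded anywhere.
\end{itemize}
Closing this case needs a new ingredient. One option is a primitive convergence rule, proved sound directly by induction on $n$ as in Harel's axiomatisation of \FODL. The other is a cyclic condition that also counts a well-founded descent of an integer parameter as progress, with the soundness argument of Section~\ref{section:Soundness of GiiiPPLcyc} extended to it.

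Your second delicate point is also genuine. Already $X:[\alpha_2]\psi^\flat$ depends on the last world of $\lm(X)$, where $\alpha_2$ starts, while a first-order formula labelled by $X$ is insensitive to that world. However, the paper applies Lemma~\ref{lemma:completeness lemma 1} at exactly this point without comment, so this is not where your proposal falls short of it.
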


The proof of this lemma is given in Appendix~\ref{section:About the Completeness of GiiiFOPLcyc}. 


\begin{proof}[Proof of Theorem~\ref{theorem:Completeness of GiiiFOPLcyc}]
    For a labelled formula $X : \phi$, 
    $\phi$ is semantically equivalent to a conjunctive normal form: $C_1\wedge ...\wedge C_n$ $(n\ge 1)$.
    Each clause $C_i$ ($1\le i\le n$) is a disjunction of literals: 
    $C_i = l_{i,1}\vee ...\vee l_{i,m_i}$, where $l_{i,j}$ ($1\le i\le n, 1\le j\le m_i$) is an atomic FODL formula or its negation. 
    By the rule for FOL formulas (Table~\ref{table:giiippl-propositional-rules} plus the rules $(\forall R)$ and $(\forall L)$), 
    to prove formula $X : \phi$, 
    it is enough to show that for each clause $C_i$, $\models X : C_i$ implies $\vdash X : C_i$.  
    We prove by induction on the sum $n$ of the appearances of modalities $[\cdot]$ and $\la\cdot\ra$ in $C_i$. 

    If $n=0$, there are no appearances of $[\alpha]$ or $\la\alpha\ra$ in $C_i$, so $C_i$ is an FOL formula. By rule $(\textit{Ter})$, immediately we have $\vdash X : C_i$. 

    If $n > 0$, without loss of generality, let $C_i = \psi_1\vee \op \psi_2$ where $\op\in \{[\alpha], \la\alpha\ra\}$. 
    Now we prove $\vdash X : (\psi_1\vee \op \psi_2)$, which is equivalent to prove $$\vdash X : \neg \psi_1\Rightarrow X : \op \psi_2$$ according to $(\vee R)$ and $(\neg R)$. 
    By Lemma~\ref{lemma:completeness lemma 1}, there are pure arithmetical FOL formulas 
    $\phi^\flat_1$ and $\phi^\flat_2$ such that $\models \phi^\flat_1 \leftrightarrow \neg\psi_1$ and $\models \phi^\flat_2\leftrightarrow \psi_2$. 
    Therefore, we have both $\models X : \neg\psi_1\Rightarrow X : \op\phi^\flat_2$ and $\models (X: \phi^\flat_2\Rightarrow X : \psi_2)$. 
    By assumption, we have both 
    \begin{equation}
        \label{equ1}
        \vdash  X : \neg\psi_1\Rightarrow X : \op\phi^\flat_2
    \end{equation}
    and 
    \begin{equation}
        \label{equ2}
        \vdash X: \phi^\flat_2\Rightarrow X : \psi_2. 
    \end{equation}
    From (\ref{equ2}), by Necessitation Lemma, we have 
    \begin{equation}
        \label{equ3}
    \vdash  X : \op\phi^\flat_2\Rightarrow X : \op\psi_2.
    \end{equation}
    On the other hand, from (\ref{equ1}), by Lemma~\ref{lemma:completeness lemma 2}, we have
    \begin{equation}
    \label{equ4}
    \vdash X : \neg\psi_1\Rightarrow X : \op\phi^\flat_2.
    \end{equation}
    The result is directly obtained by (\ref{equ3}) and (\ref{equ4}).  
    
\end{proof}

\section{Conclusion}

This paper has developed a labelled cyclic proof-theoretic framework for process logic, encompassing both the propositional level (\GiiiPPLcyc) and the first-ordered level (\GiiiFOPLcyc).  The central contribution is a labelling discipline that makes trace-execution information explicit in derivations: propositional labels record traces over atomic actions, while first-order labels record sequences of program updates.  By making this information part of the label structure rather than encoding it into the formula, both systems admit uniform, modular inference rules for the trace connectives of process logic and the modalities of first-order dynamic logic.

The theoretic results establish that the labelled framework is both sound and complete.  Soundness follows from combining local soundness of the individual rules with a global cyclic condition: every infinite derivation path that satisfies the progress condition gives rise to an infinite descent in the well-founded multiset ordering on counter-example traces, which is a contradiction.  Completeness is established separately for each system: \GiiiPPLcyc\ is proved complete by showing that it can derive every axiom and rule of the Hilbert-style proof system for \PPL; while the completeness of \GiiiFOPLcyc\ is proved by adapting the classical completeness argument for \FODL\  to the trace-based setting of process logic. 

Together, these results show that labelling provides a principled and uniform mechanism for cyclic reasoning about trace-based program properties.  

Directions for future work include the application of \GiiiFOPLcyc\ to pratical programs or languages, the development of automated proof-search strategies and the mechanization for \GiiiPPLcyc\ and \GiiiFOPLcyc, and the extension of the framework to richer program models such as concurrent or probabilistic processes. 

\bibliographystyle{ACM-Reference-Format}
\bibliography{logic20260609}

\clearpage
\appendix

\section{Other Proofs of Main Theories}

\subsection{Soundness Proofs for \GiiiPPLcyc}
\label{section:About the Soundness of GiiiPPLcyc}

\textbf{Content of Lemma~\ref{lemma:completeness lemma 2}}:
    In a cyclic proof, 
    let $(\sigma:\phi, \sigma':\phi')$ be a step of a derivation trace over a derivation $(\nu, \nu')$ of an invalid derivation path.  
    For any set $\EX(\lm(\sigma), \phi)$ of $\sigma : \phi$ w.r.t a counter-example mappiong $\lm$, 
    there exists a counter-example mapping $\lm'$ and a set $\EX(\lm'(\sigma'), \phi')$ of $\sigma':\phi'$ such that 
    $$\EX(\lm'(\sigma'), \phi')\mult \EX(\lm(\sigma), \phi).$$ Moreover, if $(\sigma:\phi, \sigma':\phi')$ is a progressive step, then 
    $\EX(\lm'(\sigma'), \phi')\pmult \EX(\lm(\sigma), \phi)$.

\begin{proof}[Proof of Lemma~\ref{lemma:completeness lemma 2}]
It is enough to consider the cases when the derivation step $(\nu, \nu')$ is an instance of rule $([a]R)$, $([a]L)$, $([\phi?]R)$, $([\phi?]L)$, $(\osuf)$ and $(\Sub)$. Other cases are trivial. 

\textbf{Case for rule $([a]R)$: }
    If from node $\nu$ rule $([a]R)$ is applied with $\tau\dddef \sigma\cat x : [a]\phi$ the target formula, 
    let $\tau' = (\sigma\cat x\cat y : \phi)$ for some $y$, so $\nu=(\Gamma\Rightarrow \tau, \Delta)$ and $\nu' = (\Gamma, xR_a y\Rightarrow \tau', \Delta)$. 
    In this case, $(\nu, \nu')$ is a progressive step (Definition~\ref{def:Progressive Step/Progressive Derivation Trace of GiiiPPLcyc}). 
    Since $\lm$ is a counter-example of $\nu$, $\lm\not\models \tau$, so $\EX(\lm(\sigma\cat x), [a]\phi)\neq \emptyset$. 
    By the soundness of rule $([a]R)$, for each trace $s_1...s_n\in \EX(\lm(\sigma\cat x\cat y), \phi)$ ($n\ge 0$), 
    trace $ss_1...s_n\in \EX(\lm(\sigma\cat x), [a]\phi)$ with $ss_1\in \I(a)$ has $a_1...a_n$ as its proper suffix. 
    On the other hand, by Proposition~\ref{prop:finite execution strings}, 
    $\EX(\lm(\sigma\cat x), [a]\phi)$ and $\EX(\lm(\sigma\cat x\cat y), \phi)$
    are also finite. 
    Therefore $\EX(\lm(\sigma\cat x\cat y), \phi)\pmult \EX(\lm(\sigma\cat x), [a]\phi)$ by Definition~\ref{def:Relation mult}. 

\textbf{Case for rule $([a]L)$: }
    Similar to the case $([a]R)$, except that it is possible that $\EX(\lm(\sigma\cat x), [a]\phi) = \emptyset$ (when there is no trace of $\EX(\lm(\sigma\cat x), [a]\phi)$, $\lm$ can also be a counter-example mapping for $\nu$ too).
    So, we have  $\EX(\lm(\sigma\cat x\cat y), \phi)\mult \EX(\lm(\sigma\cat x), [a]\phi)$. 
    (Recall that in this case, $(\nu, \nu')$ is not a progressive step. )

    \textbf{Case for rule $([\phi?]R)$: } Similar to the case of $([a]R)$ and we omit it. 
    
    \textbf{Case for rule $([\phi?]L)$: } Similar to the case of $([a]L)$ and we omit it.

    \textbf{Case for rule $(\Sub)$: }
    If from node $\nu$ a substitution rule $(\Sub)$ is applied, 
    let $\tau = \sigma[\sigma'/U] : \phi$ be the target formula of $\nu$, where $\sigma'$ is a label, $U\in \LVar$ is a label variable. 
    Then $\tau' = \sigma : \phi$. 
    Let $\lm'$ be the label mapping such that $\lm'(V)\dddef \lm(\sigma')$; $\lm'(U)\dddef \lm(U)$ for any other $U\in \LVar$. 
    So, we have $\lm'(\sigma)=\lm(\sigma[\sigma'/V])$. 
    Hence $\EX(\lm(\sigma[\sigma'/V])), \phi) = \EX(\lm'(\sigma), \phi)$. 

    \textbf{Case for rule $(\osuf)$: }
    If from node $\nu$ rule $(\osuf)$ is applied, let $\tau = x\cat \sigma : (\phi\osuf \psi)$ be the target formula of $\nu$, then $\tau' = \sigma : \psi\vee (\phi\wedge \phi\osuf \psi)$. 
    By the soundness of rule $(\osuf)$ and the definition of counter-example traces (Definition~\ref{def:Counter-example Traces}), 
    we have $\EX(\lm(x\cat \sigma), \phi\osuf \psi) = \{\lm(x)\lm(\sigma)\}$ and $\EX(\lm(\sigma), \psi\vee (\phi\wedge \phi\osuf \psi)) = \{\lm(\sigma)\}$.  
    Since $\lm(\sigma)$ is a proper suffix of $\lm(x)\lm(\sigma)$, we obtain the result. 
\end{proof}

\subsection{Completeness Proofs of \GiiiPPLcyc}

\ifx
\begin{lemma}
\label{lemma:lemma for proving rule (inv)}
For any $X\in \LVar$ and formula $\phi$, sequents $X: \phi\Rightarrow X : [\true?]\phi$ and $X : [\true?]\phi\Rightarrow X : \phi$ are provable in \GiiiPPLcyc. 
\end{lemma}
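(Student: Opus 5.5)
The plan is to build both derivations directly, using the test rules $([\phi?]R)$ and $([\phi?]L)$, the substitution rule $(\Sub)$, and the axiom $(\textit{ax})$. No cyclicity is needed: each derivation will be a finite tree whose leaves are all closed by axioms.

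\textbf{Setting up the labels.} The test rules act only on labels of the form $\sigma\cat x$, so I first bring the label into that shape, as in the proof of Lemma~\ref{lemma:Necessitation}. Without loss of generality I write $X = Y\cat x$. Formally, I apply $(\Sub)$ backwards: the sequent with $X$ is the instance $[Y\cat x/X]$ of itself, where $Y$ and $x$ are fresh. The semantic justification is that every nonempty trace decomposes as a (possibly empty) prefix followed by its last world. The case of an empty trace is handled by the convention $\lm(\varepsilon)=\epsilon$ together with $Y$ possibly denoting $\epsilon$.

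\textbf{Direction $X:[\true?]\phi\Rightarrow X:\phi$.} Starting from $Y\cat x:[\true?]\phi\Rightarrow Y\cat x:\phi$, I proceed as follows.
\begin{enumerate}
\item Apply $(\textit{cut})$ on $x\cat x:\true$. The left premise $\Rightarrow x\cat x:\true$ holds because $\true$ holds on every trace; I close it by the derivable sequent $\cdot\Rightarrow\sigma:\true$, obtained from $\true\equiv\neg\false$ or, if $\true$ is primitive, from $(\textit{ax})$ on $x\cat x:\neg\true$ combined with $(\neg R)$.
\item In the right premise, apply $([\phi?]L)$ to obtain $Y\cat x\cat x:\phi\Rightarrow Y\cat x:\phi$.
\item Use $(\Sub)$ with $Z\mapsto Y\cat x$, exactly as in the test case of Lemma~\ref{lemma:Necessitation}. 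This reduces the goal to $Z\cat x:\phi\Rightarrow Y\cat x:\phi$.
\end{enumerate}
Step 3 is where the stuttering duplicate $x\cat x$ has to be identified with the single $x$.

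\textbf{Direction $X:\phi\Rightarrow X:[\true?]\phi$.} Here I apply $([\phi?]R)$ to reach $Y\cat x:\phi,\ x\cat x:\true\Rightarrow Y\cat x\cat x:\phi$, weaken away $x\cat x:\true$, and face the same label mismatch in the opposite direction.

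\textbf{Main obstacle.} In both directions the difficulty is closing sequents whose two sides carry the labels $Y\cat x$ and $Y\cat x\cat x$. Semantically these are different traces, because concatenation here appends the world rather than merging it: trace concatenation merges only across two labels, while $x\cat x$ under $\lm$ becomes $\wm(x)\cdot\wm(x)$, which collapses to a single world because $s_n=t_1$. I would therefore argue from the label mapping definition that $\lm(\sigma\cat x\cat x)=\lm(\sigma\cat x)$, since $\cdot$ glues on a shared endpoint. This makes $Y\cat x\cat x$ and $Y\cat x$ denote the same trace, so $(\textit{ax})$ should apply modulo this identification.

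To make this syntactic rather than semantic, I would first try to treat $x\cat x \equiv x$ as a label equation on a par with the unit law for $\varepsilon$, and then close by $(\textit{ax})$. If that is not admissible, the fallback is the $(\Sub)$ manoeuvre used in Lemma~\ref{lemma:Necessitation}: substitute the fresh trace variable $Z$ for $Y\cat x$, which rewrites $Y\cat x\cat x$ as $Z\cat x$, and then rename $Z$ back to $Y$ to reach $Y\cat x:\phi\Rightarrow Y\cat x:\phi$, closed by $(\textit{ax})$. Justifying this renaming step soundly is the delicate point of the whole proof.
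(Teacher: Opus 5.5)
Your proposal stops exactly where the work lies. Both directions reduce to $Y\cat x:\phi\Rightarrow Y\cat x\cat x:\phi$ or its converse, and neither of your two ways of closing such a sequent is available.

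\GiiiPPL\ has no label equation $x\cat x\equiv x$. The reading you invoke to justify it, taking $\cat$ to be the fusing concatenation $\cdot$ on single worlds, is incompatible with the rules. Under it, $\lm(x\cat y)$ is undefined whenever $x\R_a y$ relates distinct worlds. It also makes $([\phi?]R)$ unsound, because $\I(\true?)=\{ss\mid s\in\Wd\}$ means that $[\true?]$ talks about the trace with its last world repeated. For $([a]R)$, $([\phi?]R)$ and $(\osuf)$ to be sound, $\cat$ must be juxtaposition, so $Y\cat x\cat x$ genuinely denotes the stuttered trace.

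The $(\Sub)$ fallback fails because $(\Sub)$ is an instantiation rule: read backwards, it can only generalise a sequent, uniformly in each variable. Your intermediate goal $Z\cat x:\phi\Rightarrow Y\cat x:\phi$ is invalid and hence unprovable. ``Renaming $Z$ back to $Y$'' is not an instance of $(\Sub)$, since $Z\cat x:\phi\Rightarrow Y\cat x:\phi$ is not a substitution instance of $Y\cat x:\phi\Rightarrow Y\cat x:\phi$: a substitution sends both occurrences of $Y$ to the same label. In the test case of Lemma~\ref{lemma:Necessitation} the manoeuvre is harmless only because both sides carry the identical label $Y\cat x\cat x$.

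The paper never compares the two labels at their ends. It proves $Y\cat x:\phi\Rightarrow Y\cat x\cat x:\phi$ by induction on $\phi$ using the trace rules, which consume labels from the front. For atoms and $\ofst\psi$, the rules $(p)$ and $(\ofst)$ reduce both sides to the same first world, and $(\textit{ax})$ closes the branch. For $\phi_1\osuf\phi_2$ with $Y=y\cat Z$, applying $(\osuf)$ on both sides strips $y$. This leaves goals for $\phi_1$ and $\phi_2$ (handled by the induction hypothesis) and a bud $Z\cat x:\phi_1\osuf\phi_2\Rightarrow Z\cat x\cat x:\phi_1\osuf\phi_2$. That bud is a uniform renaming of the companion, reached through a progressive $(\osuf)$ step. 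So the intended derivation is genuinely cyclic, contrary to your claim that no cyclicity is needed.

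Be aware, however, that the obstacle you isolated is semantic, not merely syntactic. Since $Y\cat x\cat x$ is a stuttered trace, length-sensitive formulas separate the two sides. If $\lm(X)$ is a single world $s$, then $s\models[\true?]\nxt\true$ because $ss\models\nxt\true$, while $s\not\models\nxt\true$. Dually, $\phi=\neg\nxt\true$ refutes the first sequent. So the statement fails for arbitrary $\phi$, and by soundness no derivation exists in these cases.

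The paper's induction only treats atoms, $\ofst$ and $\osuf$, and only in the first direction; there is no case for $\neg$ or $[\alpha]\pi$. Its remark that the converse is similar breaks once the prefix is used up, at $x\cat x:\nxt\true\Rightarrow x:\nxt\true$. A correct version must either restrict $\phi$ (for the first sequent, e.g., to formulas built from atoms, $\ofst$, $\wedge$, $\vee$ and $\osuf$), or let tests denote length-one traces and adjust the test rules accordingly.
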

\fi

\textbf{Content of Lemma~\ref{lemma:lemma for proving rule (inv)}}:
For any $X\in \LVar$ and formula $\phi$, sequents $X: \phi\Rightarrow X : [\true?]\phi$ and $X : [\true?]\phi\Rightarrow X : \phi$ are provable in \GiiiPPLcyc. 

Lemma~\ref{lemma:lemma for proving rule (inv)} is subsequently used in the proof of Lemma~\ref{lemma:inv and con rules}. 

\begin{proof}[Proof of Lemma~\ref{lemma:lemma for proving rule (inv)}]
    We only prove $X: \phi\Rightarrow X : [\true?]\phi$. 
    The other direction is similar. 

Without loss of generality, let $X = Y\cat x$. 
By $([\phi?]R)$ and $(wk)$, we need
$$
(A2)\ \ Y\cat x : \phi\Rightarrow Y \cat x\cat x : \phi
$$
We prove $(A2)$ by induction on the structure of formula $\phi$. 

The base step is when $\phi$ is an atomic formula. Then $(A2)$ can be trivially derived by applying $(p)$ and $(ax)$. 

For the inductive step, consider then $\phi = \ofst \psi$ and $\phi = \phi_1\osuf \phi_2$. 
The case for $\phi = \ofst \psi$ is similar to the base step and we omit here. 

If $\phi = \phi_1\osuf\phi_2$, let $Y = y\cat Z$. Then from (A2), by applying $(\osuf)$ twice (firstly on the right and then on the left) and applying $(\vee R)$ and $(\vee L)$, we need to prove 
\begin{enumerate}
\item[(A2.1)] $Z\cat x : \phi_2\Rightarrow Z\cat x\cat x : \phi_2, Z\cat x\cat x : \phi_1\wedge \phi_1\osuf \phi_2$, from which, by $(wk)$ on the right, we obtain
$$
(A2.11)\ \ Z\cat x : \phi_2\Rightarrow Z\cat x\cat x : \phi_2.
$$
By firstly replacing $Z$ with $Y$ (using (\Sub)) and then the induction hypothesis on $\phi_2$, (A2.11) is provable. 

\item[(A2.2)] $Z\cat x : \phi_1\wedge \phi_1\osuf \phi_2\Rightarrow Z\cat x\cat x : \phi_2, Z\cat x\cat x : \phi_1\wedge \phi_1\osuf \phi_2$.  
\end{enumerate} 
By $(\wedge R)$ on $Z\cat x\cat x : \phi_1\wedge \phi_1\osuf \phi_2$, (A2.2) can be split into 
\begin{enumerate}
\item[(A2.21)] $Z\cat x : \phi_1\wedge \phi_1\osuf \phi_2\Rightarrow Z\cat x\cat x : \phi_2, Z\cat x\cat x : \phi_1$, which can be easily proved by induction hypothesis on $\phi_1$ after $(wk)$ and $(\wedge L)$. 
\item[(A2.22)] $Z\cat x : \phi_1\wedge \phi_1\osuf \phi_2\Rightarrow Z\cat x\cat x : \phi_1, Z\cat x\cat x : \phi_1\osuf\phi_2$, from which, after $(\wedge L)$ and $(wk)$ on both sides, we obtain
$$
(A2.221)\ \ Z\cat x : \phi_1\osuf\phi_2\Rightarrow Z\cat x\cat x : \phi_1\osuf\phi_2. 
$$
(A2.221) is exactly (A2) after variable replacement. Moreover, from (A2) to (A2.221) the derivation path is progressive since rule $(\osuf)$ has been used. 

\end{enumerate}

\end{proof}

Below we show that each special rule for process logic in \PPL\ (Table~\ref{table:ppl-proof-system}) is provable in \GiiiPPLcyc. We firstly describe each of them as a proposition, then prove them (Proposition~\ref{prop:Rule (i)} ---~\ref{prop:Rule (xv)}). 

\begin{prop}[Rule (i)]
    \label{prop:Rule (i)}
Rule (i):
 $\ofst(\phi \vee \psi) \leftrightarrow \ofst \phi \vee \ofst \psi$
 is derived in \GiiiPPLcyc. 
\end{prop}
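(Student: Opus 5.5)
To derive rule (i) we must prove the labelled sequents
$$
X : \ofst(\phi\vee\psi)\Rightarrow X : \ofst\phi\vee\ofst\psi
\quad\mbox{and}\quad
X : \ofst\phi\vee\ofst\psi\Rightarrow X : \ofst(\phi\vee\psi),
$$
with $X\in\TVar$. Then $(\to R)$ and $(\wedge R)$ assemble them into the biconditional, which abbreviates a conjunction of two implications. As in the proof of Lemma~\ref{lemma:lemma for proving rule (inv)} and of rule (iii), we may assume without loss of generality that $X = x\cat Y$ with $x\in\WVar$ and $Y\in\TVar$. The justification is that rule $(\Sub)$, read bottom-up, replaces $X$ by $x\cat Y$. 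Rule $(\ofst)$ strips the trailing part of the label and moves evaluation to the first world $x$. After that, only propositional rules are needed. Both directions give finite derivations with no buds, so the cyclic condition holds trivially.

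For the direction $(\rightarrow)$, we apply $(\vee R)$ on the right to obtain $x\cat Y:\ofst(\phi\vee\psi)\Rightarrow x\cat Y:\ofst\phi, x\cat Y:\ofst\psi$. We then apply $(\ofst)$ once on the left and twice on the right, which gives $x:\phi\vee\psi\Rightarrow x:\phi, x:\psi$. Finally, $(\vee L)$ splits this into two branches, each closed by $(ax)$.

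For the direction $(\leftarrow)$, we apply $(\vee L)$ on the left, giving the two sequents $x\cat Y:\ofst\phi\Rightarrow x\cat Y:\ofst(\phi\vee\psi)$ and the symmetric one with $\psi$. On the first sequent we apply $(\ofst)$ to both sides, yielding $x:\phi\Rightarrow x:\phi\vee\psi$. This is closed by $(\vee R)$, then $(wk)$ to discard the unused disjunct if needed, and $(ax)$. The $\psi$ branch is handled identically.

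No step here is a real obstacle. The only point needing care is the initial decomposition of $X$ as $x\cat Y$ through $(\Sub)$, because $(\ofst)$ applies only to labels whose head is a world variable. This decomposition is legitimate for the reason already used in the paper's earlier derivations: the formula holds only on non-empty traces, and the substitution rule is applied backwards.
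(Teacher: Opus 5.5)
Your derivation is essentially the paper's: it also takes $X = x\cat Y$, strips the labels with $(\ofst)$ to reach $x:\phi\vee\psi\Rightarrow x:\phi, x:\psi$, and closes by $(\vee L)$ and $(\textit{ax})$, and you additionally spell out the converse that the paper dismisses as similar. One correction: the paper merely asserts $X = x\cat Y$ ``without loss of generality'', and your appeal to $(\Sub)$ does not justify it, because $(\Sub)$ read bottom-up passes from an instance $\Gamma[\sigma/U]\Rightarrow\Delta[\sigma/U]$ to the more general $\Gamma\Rightarrow\Delta$, so it can turn $x\cat Y$ into $X$ but not $X$ into $x\cat Y$.
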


\begin{proof}
Let $X\in \TVar$. 
We only prove 
$$
(i.a)\ \ X : \ofst(\phi\vee\psi)\Rightarrow X : \ofst\phi\vee\ofst \psi.
$$
The other direction is similar. 

Without loss of generality, let $X=x\cat Y$. 
Then by $(\ofst)$, we need 
$$
(i.a1)\ \ x : \phi\vee\psi \Rightarrow x : \phi, x : \psi, 
$$
which is proved trivially by $(\vee L)$ and $(ax)$. 
\end{proof}

\begin{prop}[Rule (ii)]
Rule (ii):
$\ofst \neg \phi\leftrightarrow \neg\ofst \phi$
 is derived in \GiiiPPLcyc. 
\end{prop}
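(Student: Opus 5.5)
We derive rule (ii) following the pattern used for rule (i) in Proposition~\ref{prop:Rule (i)}: prove both directions of the biconditional separately as sequents over a trace variable $X\in\TVar$, i.e.
$$
(ii.a)\ \ X : \ofst\neg\phi\Rightarrow X : \neg\ofst\phi
\qquad\mbox{and}\qquad
(ii.b)\ \ X : \neg\ofst\phi\Rightarrow X : \ofst\neg\phi.
$$
As before, since traces are non-empty, we may assume without loss of generality that $X = x\cat Y$ with $x\in\WVar$, $Y\in\TVar$, so that rule $(\ofst)$ becomes applicable to any formula of the form $x\cat Y : \ofst\chi$ on either side of the sequent. The instantiation of $X$ as $x\cat Y$ is justified by $(\Sub)$, exactly as in the proof of Proposition~\ref{prop:Rule (i)}.

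For $(ii.a)$, first apply $(\neg R)$ to move $x\cat Y:\ofst\phi$ to the left, giving $x\cat Y:\ofst\neg\phi,\ x\cat Y:\ofst\phi\Rightarrow\cdot$. Then apply $(\ofst)$ on the left twice to obtain $x:\neg\phi,\ x:\phi\Rightarrow\cdot$. Finally, $(\neg L)$ yields $x:\phi\Rightarrow x:\phi$, which is closed by $(ax)$. For $(ii.b)$, dually apply $(\neg L)$ to obtain $\cdot\Rightarrow x\cat Y:\ofst\phi,\ x\cat Y:\ofst\neg\phi$. Then $(\ofst)$ on the right twice gives $\cdot\Rightarrow x:\phi,\ x:\neg\phi$, and $(\neg R)$ followed by $(ax)$ closes the branch. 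The two directions are combined into the biconditional by the derived rules $(\to R)$ and $(\wedge R)$.

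No cycles are needed, so the cyclic condition of Definition~\ref{def:Cyclic Proofs of GiiiPPL} holds trivially: every derivation path is finite. The only point needing care is that rule $(\ofst)$ requires the label to have the form $x\cat\sigma$ with a world variable in front. For a bare trace variable $X$ this has to be obtained via $(\Sub)$, and we should check that this decomposition is sound, i.e.\ that every $\lm(X)$ is non-empty and therefore has a first world. We expect this to be the only (mild) obstacle; it is handled exactly as in rule (i).
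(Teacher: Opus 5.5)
Your proof is correct and follows the paper's own argument: assume $X = x\cat Y$, use $(\neg R)$ and $(\ofst)$ to reach $x:\neg\phi, x:\phi\Rightarrow\cdot$, and close with $(\neg L)$ and $(\textit{ax})$. The converse is the dual derivation that the paper only calls ``similar'', and you correctly insert the $(\neg L)$ step the paper elides. One small correction: $(\Sub)$ derives the substituted sequent from the general one, so read backwards it generalizes a goal rather than specializing $X$ to $x\cat Y$. That step is really the semantic ``without loss of generality'' (every trace has a first world) that the paper also relies on; the paper's proof of rule (i) does not invoke $(\Sub)$ either.
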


\begin{proof}
Let $X\in \TVar$. 
We only prove 
$$
(ii.a)\ \ X : \ofst \neg \phi\Rightarrow X : \neg\ofst \phi. 
$$
The other direction is similar. 

Without loss of generality, let $X=x\cat Y$. 
By $(\ofst)$ and $(\neg R)$ we obtain
$$
x : \neg\phi, x : \phi\Rightarrow \cdot, 
$$
which is trivially proved by $(ax)$. 
\end{proof}

\begin{prop}[Rule (iii)]
Rule (iii):
$(\phi\osuf \psi)\vee (\phi\osuf \chi)\leftrightarrow \phi\osuf (\psi\vee \chi)$
 is derived in \GiiiPPLcyc. 
\end{prop}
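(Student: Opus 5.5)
Both directions of rule (iii) will be derived as cyclic proofs in \GiiiPPLcyc. The direction $(iii.a)$, $X : (\phi\osuf\psi)\vee(\phi\osuf\chi)\Rightarrow X : \phi\osuf(\psi\vee\chi)$, was already treated in Section~\ref{section:Completeness of GiiiPPLcyc}. It split by $(\vee L)$ into $(iii.a1)$ and $(iii.a2)$, each closed by a back-link whose path passes through $(\osuf)$. So the new work is the converse:
$$
(iii.b)\ \ X : \phi\osuf(\psi\vee\chi)\Rightarrow X : (\phi\osuf\psi)\vee(\phi\osuf\chi).
$$

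First I would case-split on the shape of $X$ using $(\Sub)$. If $X$ is instantiated to a single world variable $x$, the left formula $x : \phi\osuf(\psi\vee\chi)$ is closed immediately by the axiom $(\osuf\ x)$. Otherwise, let $X = x\cat Y$. I apply $(\vee R)$ on the right and then $(\osuf)$ to all three $\osuf$-formulas. This gives
$$
Y : (\psi\vee\chi)\vee(\phi\wedge\phi\osuf(\psi\vee\chi))\Rightarrow Y:\psi\vee(\phi\wedge\phi\osuf\psi),\ Y:\chi\vee(\phi\wedge\phi\osuf\chi).
$$
After $(\vee L)$ and $(\vee R)$, the leaves with $Y:\psi$ or $Y:\chi$ on the left close by $(ax)$. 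The remaining leaf is
$$
Y:\phi,\ Y:\phi\osuf(\psi\vee\chi)\Rightarrow Y:\phi\wedge\phi\osuf\psi,\ Y:\phi\wedge\phi\osuf\chi,\ Y:\psi,\ Y:\chi.
$$
Applying $(\wedge R)$ twice, every branch containing $Y:\phi$ on the right closes by $(ax)$. Only one branch survives:
$$
Y:\phi,\ Y:\phi\osuf(\psi\vee\chi)\Rightarrow Y:\phi\osuf\psi,\ Y:\phi\osuf\chi,\ Y:\psi,\ Y:\chi .
$$
Weakening with $(wk)$ and recombining with $(\vee R)$ turns this into $Y : \phi\osuf(\psi\vee\chi)\Rightarrow Y:(\phi\osuf\psi)\vee(\phi\osuf\chi)$. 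This is $(iii.b)$ with $Y$ in place of $X$, and it forms a bud via $(\Sub)$.

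The main obstacle is checking the cyclic condition. The infinite path repeatedly going through this bud must carry a progressive derivation trace. I will track the left formula $x\cat Y:\phi\osuf(\psi\vee\chi)$. Its $(\osuf)$ step to $Y:(\psi\vee\chi)\vee(\phi\wedge\phi\osuf(\psi\vee\chi))$ is progressive by Definition~\ref{def:Progressive Step/Progressive Derivation Trace of GiiiPPLcyc}. The subsequent $(\vee L)$, $(\wedge L)$ and $(\Sub)$ steps are legal trace steps by Definition~\ref{def:Derivation Traces}, ending at the bud formula. So each pass around the cycle contributes one progressive step, and every infinite path is progressive. A subtlety is that the trace passes through a left $\wedge$; if needed, I will apply $(\wedge L)$ explicitly rather than leaving the conjunction, so the target-pair conditions are met literally.

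The opposite direction's cycle has the same shape, so the two directions together derive rule (iii).
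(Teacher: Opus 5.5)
Your derivation of (iii.b) is essentially the paper's. After $(\vee R)$ and taking $X = x\cat Y$, you unfold the three $\osuf$-formulas, close every branch but one by $(ax)$, and weaken to a bud that repeats the companion with $Y$ for $X$; the cycle is progressive through the left $(\osuf)$ step.

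There are two minor differences. First, your extra single-world case via $(\osuf\ x)$ is something the paper simply absorbs into its ``without loss of generality''; note also that $(\Sub)$ only generalises labels when read bottom-up, so it cannot itself produce such a case split. Second, you ``recombine'' the two succedent formulas with $(\vee R)$, but that would apply $(\vee R)$ upside-down (conclusion above premise). Instead, take the sequent obtained after the first $(\vee R)$ (the paper's (iii.b1)) as the companion.
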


\begin{proof}
Let $X\in \TVar$. 
We need to prove 
\begin{enumerate}
\item[(iii.a)] $X : (\phi\osuf \psi)\vee (\phi\osuf \chi)\Rightarrow X : \phi\osuf (\psi\vee \chi)$.
\item[(iii.b)]  $X : \phi\osuf (\psi\vee \chi)\Rightarrow X : (\phi\osuf \psi)\vee (\phi\osuf \chi)$.
\end{enumerate}
(iii.a) is proved in Section~\ref{section:Completeness of GiiiPPLcyc}, here we only prove (iii.b). 
That is to prove
$$
(iii.b1)\ \ X : \phi\osuf (\psi\vee \chi)\Rightarrow X : (\phi\osuf \psi), X : (\phi\osuf \chi)
$$
by $(\vee R)$. 
Without loss of generality, let $X = x\cat Y$. 
By applying $(\osuf)$ and then $(\vee L)$ on the right , we split (iii.b1) into
\begin{enumerate}
\item[$(iii.b11)$] $Y : \psi\vee \chi\Rightarrow x\cat Y : (\phi\osuf \psi), x\cat Y : (\phi\osuf \chi)$
\item[$(iii.b12)$]  $Y : \phi\wedge \phi\osuf (\psi\vee \chi)\Rightarrow x\cat Y : (\phi\osuf \psi), x\cat Y : (\phi\osuf \chi)$
\end{enumerate}
For (iii.b11), by applying $(\osuf)$ on the right, we obtain
$$
(iii.b111)\ \ Y : \psi\vee \chi\Rightarrow Y : \psi, Y : \phi\wedge \phi\osuf\psi, Y : \chi, Y : \phi\wedge \phi\osuf \chi. 
$$
Observing that since the right side contains both $Y : \psi$ and $Y:\chi$, (iii.b111) can be proved by (ax) for different cases on the right side. 

For (iii.b12), by  applying $(\osuf)$ on the right, we obtain
$$
(iii.b121)\ \ Y : \phi\wedge \phi\osuf (\psi\vee \chi)\Rightarrow Y : \psi, Y : \phi\wedge \phi\osuf\psi, Y : \chi, Y : \phi\wedge \phi\osuf \chi. 
$$
By $(\wedge R)$ on $Y : \phi\wedge \phi\osuf \psi$ and $Y : \phi\wedge \phi\osuf\chi$, we split the right side into several branches. 
Since both sides contain $Y : \phi$, all branches but the following one can be proved by directly applying $(ax)$: 
$$
(iii.b1211)\ \ Y : \phi\wedge \phi\osuf (\psi\vee \chi)\Rightarrow Y : \psi, Y : \phi\osuf\psi, Y : \chi, Y : \phi\osuf \chi.
$$
From (iii.b1211), by $(wk)$, we have
$$
(iii.b12111)\ \ Y : \phi\osuf (\psi\vee \chi)\Rightarrow Y : \phi\osuf\psi, Y : \phi\osuf \chi, 
$$
which is exactly (iii.b1) (replacing $Y$ with $X$). 
The derivation from (iii.b1) to (iii.b12111) is progressive since rule $(\osuf)$ is applied. 
This makes the whole preproof a cyclic proof.

\end{proof}

\begin{prop}[Rule (iv)]
Rule (iv): 
$
(\phi\osuf \psi)\wedge (\chi\osuf \omega)\leftrightarrow (\phi\wedge \chi)\osuf (\psi\wedge \omega)\vee (\phi\wedge \chi)\osuf(\phi\wedge \omega\wedge \phi\osuf \psi)\vee (\phi\wedge \chi)\osuf (\psi\wedge \chi\wedge \chi\osuf\omega)
$
is derived in \GiiiPPLcyc. 
\end{prop}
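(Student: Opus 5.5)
We prove the two directions separately, in the same style as rule (iii). Let $X\in\TVar$ and, without loss of generality, write $X = x\cat Y$. Abbreviate the right-hand side as $R_1\vee R_2\vee R_3$, where $R_1 = (\phi\wedge\chi)\osuf(\psi\wedge\omega)$, $R_2 = (\phi\wedge\chi)\osuf(\phi\wedge\omega\wedge\phi\osuf\psi)$ and $R_3 = (\phi\wedge\chi)\osuf(\psi\wedge\chi\wedge\chi\osuf\omega)$. The leaf case where the label is a single world variable $x$ is closed directly by $(\osuf\ x)$ on the left, since both directions have an $\osuf$-formula on the left.

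\textbf{Direction ($\to$).} Start from
$$(iv.a)\ \ x\cat Y:\phi\osuf\psi,\ x\cat Y:\chi\osuf\omega\Rightarrow x\cat Y:R_1,\ x\cat Y:R_2,\ x\cat Y:R_3.$$
Apply $(\osuf)$ to all five $\osuf$-formulas, which moves every formula to label $Y$. On the left, $(\vee L)$ yields four branches.
\begin{enumerate}
\item[] If $Y:\psi$ and $Y:\omega$ hold, the right side contains $Y:\psi\wedge\omega$ from $R_1$; close by $(\wedge R)$ and $(ax)$.
\item[] If $Y:\psi$ holds together with $Y:\chi\wedge\chi\osuf\omega$, the disjunct $Y:\psi\wedge\chi\wedge\chi\osuf\omega$ coming from $R_3$ closes the branch.
\item[] Symmetrically, $Y:\phi\wedge\phi\osuf\psi$ together with $Y:\omega$ is closed by $R_2$.
\end{enumerate}
The remaining branch has $Y:\phi,\ Y:\phi\osuf\psi,\ Y:\chi,\ Y:\chi\osuf\omega$ on the left. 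On the right we select the disjuncts $Y:(\phi\wedge\chi)\wedge R_i$. After $(\wedge R)$, the $\phi\wedge\chi$ components close by $(ax)$, and we are left with
$$Y:\phi\osuf\psi,\ Y:\chi\osuf\omega\Rightarrow Y:R_1,\ Y:R_2,\ Y:R_3,$$
obtained by $(wk)$. This is $(iv.a)$ up to renaming via $(\Sub)$, so it is a bud. The path to the bud passes through $(\osuf)$ on the left formula $\phi\osuf\psi$, and we follow the derivation trace through that formula, so the cycle is progressive.

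\textbf{Direction ($\leftarrow$).} By $(\vee L)$ we need to prove each $x\cat Y:R_i\Rightarrow x\cat Y:\phi\osuf\psi$, and likewise with $\chi\osuf\omega$, using $(\wedge R)$ on the right. These are monotonicity-style facts. For example, for $R_1\Rightarrow\phi\osuf\psi$ we unfold with $(\osuf)$ on both sides and proceed as in the proof of (iii.a1): the base disjunct gives $\psi$, and the recursive disjunct gives $\phi$ (taken from $\phi\wedge\chi$), with a bud back to the same sequent through a progressive $(\osuf)$ step.

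The harder cases are $R_2\Rightarrow\phi\osuf\psi$ and $R_3\Rightarrow\chi\osuf\omega$, together with the cross cases $R_2\Rightarrow\chi\osuf\omega$ and $R_3\Rightarrow\phi\osuf\psi$. For $R_2\Rightarrow\phi\osuf\psi$, the base disjunct on the left is $Y:\phi\wedge\omega\wedge\phi\osuf\psi$, and we need the right side $Y:\psi\vee(\phi\wedge\phi\osuf\psi)$, which closes by $(ax)$ directly. For $R_2\Rightarrow\chi\osuf\omega$, the base disjunct provides $Y:\omega$, which closes the branch. The recursive disjuncts always provide $\phi\wedge\chi$ and lead to a bud.

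\textbf{Main obstacle.} The main difficulty is bookkeeping in the ($\to$) direction. The bud must match the companion exactly, including the presence of all three $R_i$ on the right. We must also make sure that the chosen derivation trace through $x\cat Y:\phi\osuf\psi$ continues through the subformula $Y:\phi\osuf\psi$ of the left disjunction after $(\vee L)$ and $(\wedge L)$, so that the path is progressive. If needed, we first apply $(ctr)$ to duplicate the right-hand $R_i$ before unfolding, so that copies remain available for the bud.
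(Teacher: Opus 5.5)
Your proposal is correct and takes essentially the same route as the paper. In the ($\to$) direction you unfold all five $\osuf$-formulas, split into the same four cases, close three by the base disjuncts of $R_1,R_2,R_3$, and reach the bud $Y:\phi\osuf\psi, Y:\chi\osuf\omega\Rightarrow Y:R_1,Y:R_2,Y:R_3$ from the fourth via $(\wedge R)$ and $(wk)$; in the ($\leftarrow$) direction you reduce to the same six sequents, each closed by a trivial base case and a progressive back-link. The $(ctr)$ precaution is unnecessary, since the $Y:R_i$ needed for the bud already come from the unfolded disjuncts $Y:(\phi\wedge\chi)\wedge R_i$; note also that the bud matches, via $(\Sub)$, the node with the undecomposed label $X$ rather than the $x\cat Y$ node, exactly as in the paper.
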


\begin{proof}
Let $X\in \TVar$, we prove that
\begin{enumerate}
\item[$(iv.a)$] $X : (\phi\osuf \psi)\wedge (\chi\osuf \omega)\Rightarrow X : (\phi\wedge \chi)\osuf (\psi\wedge \omega)\vee (\phi\wedge \chi)\osuf(\phi\wedge \omega\wedge \phi\osuf \psi)\vee (\phi\wedge \chi)\osuf (\psi\wedge \chi\wedge \chi\osuf\omega)$.
\item[$(iv.b)$] $X : (\phi\wedge \chi)\osuf (\psi\wedge \omega)\vee (\phi\wedge \chi)\osuf(\phi\wedge \omega\wedge \phi\osuf \psi)\vee (\phi\wedge \chi)\osuf (\psi\wedge \chi\wedge \chi\osuf\omega)\Rightarrow X : (\phi\osuf \psi)\wedge (\chi\osuf \omega)$. 
\end{enumerate}

\textbf{Case for proving $(iv.a)$}: 
Let 
$$
\begin{aligned}
A \dddef&\ X : (\phi\wedge \chi)\osuf (\psi\wedge \omega),\\
B \dddef&\ X : (\phi\wedge \chi)\osuf(\phi\wedge \omega\wedge \phi\osuf \psi),\\
C \dddef&\ X : (\phi\wedge \chi)\osuf (\psi\wedge \chi\wedge \chi\osuf\omega). 
\end{aligned}
$$
By applying the rules $(\wedge L)$ and $(\vee R)$ on both sides respectively, we obtain the sequent:
$$
(iv.a1)\ \ X : \phi\osuf \psi, X : \chi\osuf\omega\Rightarrow A , B , C, 
$$
Let $X = y\cat Y$, by applying the rule $(\osuf)$ on both sides, we have
$$
\begin{aligned}
(iv.a11)\ Y : \psi \vee (\phi\wedge \phi\osuf \psi), Y : \omega\vee (\chi\wedge \chi\osuf \omega)\Rightarrow 
A_1, A_2, B_1, B_2, C_1, C_2,
\end{aligned}
$$
where let 
$$
\begin{aligned}
A_1 \dddef&\ Y : \psi\wedge \omega, &&\ \mbox{obtained from $A$}\\
A_2 \dddef&\ Y : \phi\wedge \chi\wedge (\phi\wedge \chi)\osuf (\psi\wedge \omega), &&\ \mbox{obtained from $A$}\\
B_1 \dddef&\ Y : \phi\wedge \omega\wedge \phi\osuf \psi, &&\ \mbox{obtained from $B$}\\
B_2 \dddef&\ Y : (\phi\wedge \chi)\wedge (\phi\wedge \chi)\osuf(\phi\wedge \omega\wedge \phi\osuf \psi), &&\ \mbox{obtained from $B$}\\
C_1 \dddef&\ Y : \psi\wedge \chi\wedge \chi\osuf\omega, &&\ \mbox{obtained from $C$}\\
C_2 \dddef&\ Y : (\phi\wedge \chi)\wedge (\phi\wedge \chi)\osuf (\psi\wedge \chi\wedge \chi\osuf\omega), &&\ \mbox{obtained from $C$}\\
\end{aligned}
$$

\ifx
\begin{itemize}
\item $A_1 \dddef Y : \psi\wedge \omega$, $A_2 \dddef Y : \phi\wedge \chi\wedge (\phi\wedge \chi)\osuf (\psi\wedge \omega)$, obtained from $A$.
\item $B_1 = Y : \phi\wedge \omega\wedge \phi\osuf \psi$, $B_2 = Y : (\phi\wedge \chi)\wedge (\phi\wedge \chi)\osuf(\phi\wedge \omega\wedge \phi\osuf \psi)$, obtained from $B$.
\item $C_1=Y : \psi\wedge \chi\wedge \chi\osuf\omega$, $C_2 = Y : (\phi\wedge \chi)\wedge (\phi\wedge \chi)\osuf (\psi\wedge \chi\wedge \chi\osuf\omega)$, obtained from $C$.
\end{itemize}
\fi

By the rule $(\vee L)$, we split (iv.a11) into 4 cases: 
\begin{enumerate}
    \item[(iv.a111)] $Y : \psi, Y : \omega\Rightarrow A_1, A_2, B_1, B_2, C_1, C_2$
    \item[(iv.a112)] $Y : (\phi\wedge \phi\osuf \psi), Y : \omega\Rightarrow A_1, A_2, B_1, B_2, C_1, C_2$
    \item[(iv.a113)] $Y : \psi, Y : (\chi\wedge \chi\osuf \omega)\Rightarrow A_1, A_2, B_1, B_2, C_1, C_2$
    \item[(iv.a114)] $Y : (\phi\wedge \phi\osuf \psi), Y : (\chi\wedge \chi\osuf \omega)\Rightarrow A_1, A_2, B_1, B_2, C_1, C_2$
\end{enumerate}

The proofs of cases (iv.a111), (iv.a112), (iv.a113) are trivial and are all terminal branches (without buds), since in each of these cases a formula on the left side directly matches one of the right-side disjuncts $A_1$, $A_2$, $B_1$, $B_2$, $C_1$, or $C_2$ up to logical connectives, allowing the branch to be closed by $(\vee R)$, $(\wedge L)$, and $(ax)$ without any application of rule $(\osuf)$.

For case (iv.a114), by applying the rule $(\wedge R)$ on $A_2$, $B_2$, $C_2$ respectively, we obtain two parts of proof branches, named (iv.a1141) and (iv.a1142) respectively. 
Part (iv.a1141) contains 3 sequents of the form: 
$$
Y : (\phi\wedge \phi\osuf \psi), Y : (\chi\wedge \chi\osuf \omega)\Rightarrow ..., Y : (\phi\wedge \chi),..., 
$$
which can be easily proved by applying the rule $(\wedge L)$ on the left and the rule $(\textit{ax})$. 
Part (iv.a1142) is the sequent of the form:  
$$
Y : (\phi\wedge \phi\osuf \psi), Y : (\chi\wedge \chi\osuf \omega)\Rightarrow ..., Y : A, Y : B, Y : C, ....
$$
By applying the rule $(\wedge L)$, the rule $(\textit{wk})$ on both sides for several times, we can get 
$$
(iv.a11421)\ \ Y : \phi\osuf \psi, Y : \chi\osuf\omega\Rightarrow Y : A, Y : B, Y : C, 
$$
which is exactly (iv.a1). 
Note that along the derivation path from (iv.a1) to (iv.a11421), rule $(\osuf)$ has been applied, which guarantees the whole preproof is a cyclic proof. 

\textbf{Case for proving $(iv.b)$}: 
By applying the rules $(\vee L)$ and $(\wedge R)$ on the left and right sides of the sequent respectively, we split (iv.b) into 6 cases as follows:
\begin{enumerate}
\item[(iv.b1)] $X : (\phi\wedge \chi)\osuf (\psi\wedge \omega)\Rightarrow X : (\phi\osuf \psi)$
\item[(iv.b2)] $X : (\phi\wedge \chi)\osuf(\phi\wedge \omega\wedge \phi\osuf \psi)\Rightarrow X : (\phi\osuf \psi)$
\item[(iv.b3)] $X : (\phi\wedge \chi)\osuf (\psi\wedge \chi\wedge \chi\osuf\omega)\Rightarrow X : (\phi\osuf \psi)$
\item[(iv.b4)] $X : (\phi\wedge \chi)\osuf (\psi\wedge \omega)\Rightarrow X : (\chi\osuf \omega)$
\item[(iv.b5)] $X : (\phi\wedge \chi)\osuf(\phi\wedge \omega\wedge \phi\osuf \psi)\Rightarrow X : (\chi\osuf \omega)$
\item[(iv.b6)] $X : (\phi\wedge \chi)\osuf (\psi\wedge \chi\wedge \chi\osuf\omega)\Rightarrow X : (\chi\osuf \omega)$
\end{enumerate}

Case (iv.b1): Without loss of generality, let $X = y\cat Y$. 
From (iv.b1), by applying the rule $(\osuf)$, we can have 
$$
(iv.b11)\ \  Y : (\psi\wedge \omega) \vee (\phi\wedge\chi)\wedge (\phi\wedge\chi)\osuf (\psi\wedge \omega)\Rightarrow y\cat Y : (\phi\osuf \psi)
$$
From (iv.b11), we apply the rule $(\vee L)$, and obtain:
$$
\begin{aligned}
(iv.b111)\ \ & Y : (\psi\wedge \omega) \Rightarrow y\cat Y : (\phi\osuf \psi)\\
(iv.b112)\ \ & Y : (\phi\wedge\chi)\wedge (\phi\wedge\chi)\osuf (\psi\wedge \omega) \Rightarrow y\cat Y : (\phi\osuf \psi)
\end{aligned}
$$
The case for (iv.b111) is trivial (just extend $y\cat Y : (\phi\osuf \psi)$ on the right side by the rule $(\osuf)$ and can obtain the same item $Y : \psi$ as on the left side), we omit it. 
For case (iv.b112), we extending $y\cat Y : (\phi\osuf\psi)$ with the rule $(\osuf)$ and other rules on the left side, we obtain:
$$
\begin{aligned}
(iv.b1121)\ \ & Y : (\phi\wedge\chi), Y : (\phi\wedge\chi)\osuf (\psi\wedge \omega) \Rightarrow Y : \phi,\\
(iv.b1122)\ \ & Y : (\phi\wedge\chi), Y : (\phi\wedge\chi)\osuf (\psi\wedge \omega) \Rightarrow Y : \phi\osuf \psi
\end{aligned}
$$
where (iv.b1121) is obvious (omitted). 
From (iv.b1122), we obtain by applying the rule $(wk)$:
$$
(iv.b11221)\ \ Y : (\phi\wedge\chi)\osuf (\psi\wedge \omega) \Rightarrow Y : \phi\osuf \psi
$$
which is exactly (iv.b1), forming a back-link.  
The derivation path from (iv.b1) to (iv.b11221) is progressive since rule $(\osuf)$ is applied.  

Case (iv.b2): Without loss of generality, let $X = y\cat Y$.
From (iv.b2), by applying the rule $(\osuf)$, we can have
$$
(iv.b21)\ \  Y : (\phi\wedge \omega\wedge \phi\osuf \psi) \vee (\phi\wedge\chi)\wedge (\phi\wedge\chi)\osuf (\phi\wedge \omega\wedge \phi\osuf \psi)\Rightarrow y\cat Y : (\phi\osuf \psi)
$$
From (iv.b21), we apply the rule $(\vee L)$, and obtain:
$$
\begin{aligned}
(iv.b211)\ \ & Y : (\phi\wedge \omega\wedge \phi\osuf \psi) \Rightarrow y\cat Y : (\phi\osuf \psi)\\
(iv.b212)\ \ & Y : (\phi\wedge\chi)\wedge (\phi\wedge\chi)\osuf (\phi\wedge \omega\wedge \phi\osuf \psi) \Rightarrow y\cat Y : (\phi\osuf \psi)
\end{aligned}
$$
For case (iv.b211), by extending $y\cat Y : (\phi\osuf\psi)$ on the right side by the rule $(\osuf)$ and $(\wedge R)$, we need to show $Y : \psi$, $Y : \phi$ and $Y : \phi\osuf\psi$; all obtainable from the left side by $(\wedge L)$ and $(\textit{ax})$. We omit it.
For case (iv.b212), by extending $y\cat Y : (\phi\osuf\psi)$ with the rule $(\osuf)$ and other rules on the left side, we obtain:
$$
\begin{aligned}
(iv.b2121)\ \ & Y : (\phi\wedge\chi), Y : (\phi\wedge\chi)\osuf (\phi\wedge \omega\wedge \phi\osuf \psi) \Rightarrow Y : \phi,\\
(iv.b2122)\ \ & Y : (\phi\wedge\chi), Y : (\phi\wedge\chi)\osuf (\phi\wedge \omega\wedge \phi\osuf \psi) \Rightarrow Y : \phi\osuf \psi
\end{aligned}
$$
where (iv.b2121) is obvious (omitted).
From (iv.b2122), we obtain by applying the rule $(wk)$:
$$
(iv.b21221)\ \ Y : (\phi\wedge\chi)\osuf (\phi\wedge \omega\wedge \phi\osuf \psi) \Rightarrow Y : \phi\osuf \psi
$$
which is exactly (iv.b2), forming a back-link.
The derivation path from (iv.b2) to (iv.b21221) is progressive since rule $(\osuf)$ is applied.

Case (iv.b3): Without loss of generality, let $X = y\cat Y$.
From (iv.b3), by applying the rule $(\osuf)$, we can have
$$
(iv.b31)\ \  Y : (\psi\wedge \chi\wedge \chi\osuf\omega) \vee (\phi\wedge\chi)\wedge (\phi\wedge\chi)\osuf (\psi\wedge \chi\wedge \chi\osuf\omega)\Rightarrow y\cat Y : (\phi\osuf \psi)
$$
From (iv.b31), we apply the rule $(\vee L)$, and obtain:
$$
\begin{aligned}
(iv.b311)\ \ & Y : (\psi\wedge \chi\wedge \chi\osuf\omega) \Rightarrow y\cat Y : (\phi\osuf \psi)\\
(iv.b312)\ \ & Y : (\phi\wedge\chi)\wedge (\phi\wedge\chi)\osuf (\psi\wedge \chi\wedge \chi\osuf\omega) \Rightarrow y\cat Y : (\phi\osuf \psi)
\end{aligned}
$$
The case for (iv.b311) is trivial (just extend $y\cat Y : (\phi\osuf \psi)$ on the right side by the rule $(\osuf)$ and can obtain the same item $Y : \psi$ as on the left side), we omit it.
For case (iv.b312), by extending $y\cat Y : (\phi\osuf\psi)$ with the rule $(\osuf)$ and other rules on the left side, we obtain:
$$
\begin{aligned}
(iv.b3121)\ \ & Y : (\phi\wedge\chi), Y : (\phi\wedge\chi)\osuf (\psi\wedge \chi\wedge \chi\osuf\omega) \Rightarrow Y : \phi,\\
(iv.b3122)\ \ & Y : (\phi\wedge\chi), Y : (\phi\wedge\chi)\osuf (\psi\wedge \chi\wedge \chi\osuf\omega) \Rightarrow Y : \phi\osuf \psi
\end{aligned}
$$
where (iv.b3121) is obvious (omitted).
From (iv.b3122), we obtain by applying the rule $(wk)$:
$$
(iv.b31221)\ \ Y : (\phi\wedge\chi)\osuf (\psi\wedge \chi\wedge \chi\osuf\omega) \Rightarrow Y : \phi\osuf \psi
$$
which is exactly (iv.b3), forming a back-link.
The derivation path from (iv.b3) to (iv.b31221) is progressive since rule $(\osuf)$ is applied.

Case (iv.b4): Without loss of generality, let $X = y\cat Y$.
From (iv.b4), by applying the rule $(\osuf)$, we can have
$$
(iv.b41)\ \  Y : (\psi\wedge \omega) \vee (\phi\wedge\chi)\wedge (\phi\wedge\chi)\osuf (\psi\wedge \omega)\Rightarrow y\cat Y : (\chi\osuf \omega)
$$
From (iv.b41), we apply the rule $(\vee L)$, and obtain:
$$
\begin{aligned}
(iv.b411)\ \ & Y : (\psi\wedge \omega) \Rightarrow y\cat Y : (\chi\osuf \omega)\\
(iv.b412)\ \ & Y : (\phi\wedge\chi)\wedge (\phi\wedge\chi)\osuf (\psi\wedge \omega) \Rightarrow y\cat Y : (\chi\osuf \omega)
\end{aligned}
$$
The case for (iv.b411) is trivial (just extend $y\cat Y : (\chi\osuf \omega)$ on the right side by the rule $(\osuf)$ and can obtain the same item $Y : \omega$ as on the left side), we omit it.
For case (iv.b412), by extending $y\cat Y : (\chi\osuf\omega)$ with the rule $(\osuf)$ and other rules on the left side, we obtain:
$$
\begin{aligned}
(iv.b4121)\ \ & Y : (\phi\wedge\chi), Y : (\phi\wedge\chi)\osuf (\psi\wedge \omega) \Rightarrow Y : \chi,\\
(iv.b4122)\ \ & Y : (\phi\wedge\chi), Y : (\phi\wedge\chi)\osuf (\psi\wedge \omega) \Rightarrow Y : \chi\osuf \omega
\end{aligned}
$$
where (iv.b4121) is obvious (omitted).
From (iv.b4122), we obtain by applying the rule $(wk)$:
$$
(iv.b41221)\ \ Y : (\phi\wedge\chi)\osuf (\psi\wedge \omega) \Rightarrow Y : \chi\osuf \omega
$$
which is exactly (iv.b4), forming a back-link.
The derivation path from (iv.b4) to (iv.b41221) is progressive since rule $(\osuf)$ is applied.

Case (iv.b5): Without loss of generality, let $X = y\cat Y$.
From (iv.b5), by applying the rule $(\osuf)$, we can have
$$
(iv.b51)\ \  Y : (\phi\wedge \omega\wedge \phi\osuf \psi) \vee (\phi\wedge\chi)\wedge (\phi\wedge\chi)\osuf (\phi\wedge \omega\wedge \phi\osuf \psi)\Rightarrow y\cat Y : (\chi\osuf \omega)
$$
From (iv.b51), we apply the rule $(\vee L)$, and obtain:
$$
\begin{aligned}
(iv.b511)\ \ & Y : (\phi\wedge \omega\wedge \phi\osuf \psi) \Rightarrow y\cat Y : (\chi\osuf \omega)\\
(iv.b512)\ \ & Y : (\phi\wedge\chi)\wedge (\phi\wedge\chi)\osuf (\phi\wedge \omega\wedge \phi\osuf \psi) \Rightarrow y\cat Y : (\chi\osuf \omega)
\end{aligned}
$$
The case for (iv.b511) is trivial (just extend $y\cat Y : (\chi\osuf \omega)$ on the right side by the rule $(\osuf)$ and can obtain the same item $Y : \omega$ as on the left side), we omit it.
For case (iv.b512), by extending $y\cat Y : (\chi\osuf\omega)$ with the rule $(\osuf)$ and other rules on the left side, we obtain:
$$
\begin{aligned}
(iv.b5121)\ \ & Y : (\phi\wedge\chi), Y : (\phi\wedge\chi)\osuf (\phi\wedge \omega\wedge \phi\osuf \psi) \Rightarrow Y : \chi,\\
(iv.b5122)\ \ & Y : (\phi\wedge\chi), Y : (\phi\wedge\chi)\osuf (\phi\wedge \omega\wedge \phi\osuf \psi) \Rightarrow Y : \chi\osuf \omega
\end{aligned}
$$
where (iv.b5121) is obvious (omitted).
From (iv.b5122), we obtain by applying the rule $(wk)$:
$$
(iv.b51221)\ \ Y : (\phi\wedge\chi)\osuf (\phi\wedge \omega\wedge \phi\osuf \psi) \Rightarrow Y : \chi\osuf \omega
$$
which is exactly (iv.b5), forming a back-link.
The derivation path from (iv.b5) to (iv.b51221) is progressive since rule $(\osuf)$ is applied.

Case (iv.b6): Without loss of generality, let $X = y\cat Y$.
From (iv.b6), by applying the rule $(\osuf)$, we can have
$$
(iv.b61)\ \  Y : (\psi\wedge \chi\wedge \chi\osuf\omega) \vee (\phi\wedge\chi)\wedge (\phi\wedge\chi)\osuf (\psi\wedge \chi\wedge \chi\osuf\omega)\Rightarrow y\cat Y : (\chi\osuf \omega)
$$
From (iv.b61), we apply the rule $(\vee L)$, and obtain:
$$
\begin{aligned}
(iv.b611)\ \ & Y : (\psi\wedge \chi\wedge \chi\osuf\omega) \Rightarrow y\cat Y : (\chi\osuf \omega)\\
(iv.b612)\ \ & Y : (\phi\wedge\chi)\wedge (\phi\wedge\chi)\osuf (\psi\wedge \chi\wedge \chi\osuf\omega) \Rightarrow y\cat Y : (\chi\osuf \omega)
\end{aligned}
$$
For case (iv.b611), by extending $y\cat Y : (\chi\osuf\omega)$ on the right side by the rule $(\osuf)$ and $(\wedge R)$, we need to show $Y : \omega$, $Y : \chi$ and $Y : \chi\osuf\omega$; all obtainable from the left side by $(\wedge L)$ and $(\textit{ax})$. We omit it.
For case (iv.b612), by extending $y\cat Y : (\chi\osuf\omega)$ with the rule $(\osuf)$ and other rules on the left side, we obtain:
$$
\begin{aligned}
(iv.b6121)\ \ & Y : (\phi\wedge\chi), Y : (\phi\wedge\chi)\osuf (\psi\wedge \chi\wedge \chi\osuf\omega) \Rightarrow Y : \chi,\\
(iv.b6122)\ \ & Y : (\phi\wedge\chi), Y : (\phi\wedge\chi)\osuf (\psi\wedge \chi\wedge \chi\osuf\omega) \Rightarrow Y : \chi\osuf \omega
\end{aligned}
$$
where (iv.b6121) is obvious (omitted).
From (iv.b6122), we obtain by applying the rule $(wk)$:
$$
(iv.b61221)\ \ Y : (\phi\wedge\chi)\osuf (\psi\wedge \chi\wedge \chi\osuf\omega) \Rightarrow Y : \chi\osuf \omega
$$
which is exactly (iv.b6), forming a back-link.
The derivation path from (iv.b6) to (iv.b61221) is progressive since rule $(\osuf)$ is applied.

From these 6 cases, 
it is not hard to see that the whole derivation from (iv.b) is a cyclic proof. 
Thus we conclude the proof of (iv.b). 

\end{proof}

\begin{prop}[Rule (v)]
Rule (v): 
$\neg(\phi\osuf \psi) \leftrightarrow \neg(\true \osuf \psi)\vee (\neg \psi)\osuf (\neg\phi\wedge \neg\psi)$
is derived in \GiiiPPLcyc. 
\end{prop}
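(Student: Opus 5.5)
We will follow the same pattern as the derivations of rules (iii) and (iv): prove the two directions as separate sequents over a trace variable $X\in\TVar$, unfold both sides once with $(\osuf)$, and close the recursive branches by back-links whose paths pass through $(\osuf)$ and are therefore progressive. Concretely, with $X\in\TVar$, we prove
\begin{enumerate}
\item[(v.a)] $X : \neg(\phi\osuf\psi)\Rightarrow X : \neg(\true\osuf\psi)\vee(\neg\psi)\osuf(\neg\phi\wedge\neg\psi)$;
\item[(v.b)] $X : \neg(\true\osuf\psi)\vee(\neg\psi)\osuf(\neg\phi\wedge\neg\psi)\Rightarrow X : \neg(\phi\osuf\psi)$.
\end{enumerate}
For (v.a) we apply $(\neg L)$, $(\vee R)$ and $(\neg R)$ to reach
\[
X : \true\osuf\psi\Rightarrow X : \phi\osuf\psi,\ X : (\neg\psi)\osuf(\neg\phi\wedge\neg\psi).
\]
For (v.b) we apply $(\neg R)$ and $(\vee L)$, giving two sequents: $X:\phi\osuf\psi\Rightarrow X:\true\osuf\psi$, and $X:\phi\osuf\psi,\ X:(\neg\psi)\osuf(\neg\phi\wedge\neg\psi)\Rightarrow\cdot$.

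In each of these sequents we let $X=x\cat Y$ and apply $(\osuf)$ to every $\osuf$-formula, so that all formulas are labelled by $Y$, followed by the propositional rules. Take the first sequent of (v.b). Unfolding gives $Y:\psi\vee(\phi\wedge\phi\osuf\psi)\Rightarrow Y:\psi,\ Y:\true\wedge\true\osuf\psi$. The $\psi$-branch closes by $(ax)$. The other branch, after $(\wedge L)$, $(\wedge R)$ and $(wk)$, reduces to $Y:\phi\osuf\psi\Rightarrow Y:\true\osuf\psi$; this is a bud of the starting sequent modulo $(\Sub)$. The leftover branch $\Rightarrow Y:\true$ needs $\true$ to be derivable, which we handle by taking $\true$ as $\neg\false$ together with $(\neg R)$, or by appealing to $(ter)$-style triviality.

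The second sequent of (v.b) unfolds to $Y:\psi\vee(\phi\wedge\phi\osuf\psi),\ Y:(\neg\phi\wedge\neg\psi)\vee(\neg\psi\wedge(\neg\psi)\osuf(\neg\phi\wedge\neg\psi))\Rightarrow\cdot$. After $(\vee L)$ we get four cases:
\begin{itemize}
\item the three cases containing $Y:\psi$ together with $Y:\neg\psi$, or $Y:\phi$ together with $Y:\neg\phi$, close by $(\neg L)$ and $(ax)$;
\item the remaining case, $Y:\phi,\ Y:\phi\osuf\psi,\ Y:\neg\psi,\ Y:(\neg\psi)\osuf(\ldots)\Rightarrow\cdot$, reduces by $(wk)$ to a bud of its root.
\end{itemize}

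For (v.a), unfolding gives on the left $Y:\psi\vee(\true\wedge\true\osuf\psi)$. On the right we get $Y:\psi$, $Y:\phi\wedge\phi\osuf\psi$, $Y:\neg\phi\wedge\neg\psi$, and $Y:\neg\psi\wedge(\neg\psi)\osuf(\neg\phi\wedge\neg\psi)$. The $\psi$ case closes by $(ax)$. In the case $Y:\true\osuf\psi$, we split the right-hand conjunctions with $(\wedge R)$. The branches demanding $Y:\neg\psi$ close by $(\neg R)$, since $Y:\psi$ already sits on the right. The branches demanding $Y:\phi$ and $Y:\neg\phi$ close by $(\neg R)$ and $(ax)$ whenever they occur together in one branch. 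The single remaining branch, after $(wk)$, is again the starting sequent on $Y$, closed by a back-link.

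The main obstacle is the bookkeeping of the $(\wedge R)$ splits in (v.a). We must check that every branch except the recursive one contains a complementary pair: either $\phi$ and $\neg\phi$ both on the right, or $\psi$ both via $\neg\psi$ and directly. We must also check that each bud really has the root's shape up to $(\Sub)$. Soundness of the cycles needs no new argument: each root-to-bud path passes through the $(\osuf)$ step on a formula of the trace followed back to the bud, so it is progressive in the sense of Definition~\ref{def:Progressive Step/Progressive Derivation Trace of GiiiPPLcyc}. Branches where the label would become a single world variable $x$ are closed on the left by $(\osuf\ x)$, or avoided because the generic label $x\cat Y$ is unfolded.
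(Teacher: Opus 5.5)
Your proposal is correct and follows essentially the same route as the paper. The shared steps are: the reduction of (v.a) to $X:\true\osuf\psi\Rightarrow X:\phi\osuf\psi,\ X:(\neg\psi)\osuf(\neg\phi\wedge\neg\psi)$ and of (v.b) to the two sequents you list; one round of $(\osuf)$-unfolding on $X=x\cat Y$; the same four-way case split (three cases closed by complementary pairs, one weakened to a bud); and progressiveness coming from the $(\osuf)$ steps. Your explicit accounting of the $(\wedge R)$ splits in (v.a), including the split on $Y:\neg\phi\wedge\neg\psi$, is in fact more careful than the paper's ``etc.''

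The paper likewise declares the remaining branch $\Rightarrow Y:\true$ trivial. Neither of your suggested justifications for it is literally available in \GiiiPPL: there is no left rule for $\false$, and $(\textit{ter})$ belongs to \GiiiFOPL. So that step rests on an implicit axiom for $\true$, exactly as in the paper.
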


\begin{proof}
Let $X\in \TVar$. We need to prove
\begin{enumerate}
\item[$(v.a)$] $X : \neg(\phi\osuf \psi) \Rightarrow X : \neg(\true \osuf \psi)\vee (\neg \psi)\osuf (\neg\phi\wedge \neg\psi)$.
\item[$(v.b)$] $X : \neg(\true \osuf \psi)\vee (\neg \psi)\osuf (\neg\phi\wedge \neg\psi)\Rightarrow X : \neg(\phi\osuf \psi)$.
\end{enumerate}

\textbf{Case for proving $(v.a)$}: 
By applying $(\neg L)$ on the left and $(\vee R)$ on the right, we obtain the sequent:
$$
(v.a1)\ \ X : (\true\osuf \psi)\Rightarrow X : (\phi\osuf \psi), X : (\neg\psi)\osuf(\neg\phi\wedge \neg\psi)
$$
Without loss of generality, let $X = y\cat Y$. By applying the rule $(\osuf)$ on $X : (\true\osuf\psi)$, we have:
$$
(v.a11)\ \ Y : \psi\vee (\true\wedge \true\osuf\psi)\Rightarrow y\cat Y : (\phi\osuf \psi), y\cat Y : (\neg\psi)\osuf(\neg\phi\wedge \neg\psi)
$$
By the rule $(\vee L)$, we split (v.a11) into 2 cases:
\begin{enumerate}
    \item[(v.a111)] $Y : \psi\Rightarrow y\cat Y : (\phi\osuf \psi), y\cat Y : (\neg\psi)\osuf(\neg\phi\wedge \neg\psi)$
    \item[(v.a112)] $Y : \true\wedge\true\osuf\psi\Rightarrow y\cat Y : (\phi\osuf \psi), y\cat Y : (\neg\psi)\osuf(\neg\phi\wedge \neg\psi)$
\end{enumerate}

Case (v.a111) is trivial (just extend $y\cat Y : (\phi\osuf \psi)$ on the right side by the rule $(\osuf)$ and can obtain the same item $Y : \psi$ as on the left side), we omit it.

For case (v.a112), by $(\wedge L)$ and weakening, it reduces to:
$$
Y : \true\osuf\psi\Rightarrow y\cat Y : (\phi\osuf\psi), y\cat Y : (\neg\psi)\osuf(\neg\phi\wedge\neg\psi)
$$
By applying $(\osuf)$ on $y\cat Y : (\phi\osuf\psi)$ and $y\cat Y : (\neg\psi)\osuf(\neg\phi\wedge\neg\psi)$ on the right side, we obtain the succedent items:
$$
\begin{aligned}
D_1 \dddef&\ Y : \psi, &&\mbox{obtained from } y\cat Y : (\phi\osuf \psi)\\
D_2 \dddef&\ Y : \phi\wedge (\phi\osuf \psi), &&\mbox{obtained from } y\cat Y : (\phi\osuf \psi)\\
D_3 \dddef&\ Y : (\neg\phi\wedge\neg\psi), &&\mbox{obtained from } y\cat Y : (\neg\psi)\osuf(\neg\phi\wedge\neg\psi)\\
D_4 \dddef&\ Y : \neg\psi\wedge (\neg\psi)\osuf(\neg\phi\wedge\neg\psi), &&\mbox{obtained from } y\cat Y : (\neg\psi)\osuf(\neg\phi\wedge\neg\psi)
\end{aligned}
$$
So we need to show:
$$
(v.a1121)\ \ Y : \true\osuf\psi\Rightarrow D_1, D_2, D_3, D_4
$$
By $(\wedge R)$ on $D_2$ and $D_4$, the non-trivial remaining obligation after closing the branches for $Y : \phi$, $Y : \neg\psi$, etc.\ by the rules $(\wedge R)$, $(\neg R)$, $(ax)$ is:
$$
Y : \true\osuf\psi\Rightarrow Y : \psi, Y : \phi\osuf\psi, Y : (\neg\phi\wedge\neg\psi), Y : (\neg\psi)\osuf(\neg\phi\wedge\neg\psi)
$$
By weakening, this reduces to:
$$
(v.a11211)\ \ Y : \true\osuf\psi\Rightarrow Y : \phi\osuf\psi, Y : (\neg\psi)\osuf(\neg\phi\wedge\neg\psi)
$$
which is exactly (v.a1), forming a back-link.
The derivation path from (v.a1) to (v.a11211) is progressive since rule $(\osuf)$ is applied.

\textbf{Case for proving $(v.b)$}:
By applying $(\vee L)$ on the left, we split it into 2 cases:
\begin{enumerate}
    \item[(v.b1)] $X : \neg(\true\osuf\psi)\Rightarrow X : \neg(\phi\osuf \psi)$
    \item[(v.b2)] $X : (\neg\psi)\osuf(\neg\phi\wedge \neg\psi)\Rightarrow X : \neg(\phi\osuf \psi)$
\end{enumerate}

Case (v.b1): By $(\neg L)$ and $(\neg R)$:
$$
(v.b11)\ \ X : \phi\osuf\psi\Rightarrow X : \true\osuf\psi
$$
Without loss of generality, let $X = y\cat Y$. By applying $(\osuf)$ on the left, we have:
$$
Y : \psi\vee (\phi\wedge \phi\osuf\psi)\Rightarrow y\cat Y : \true\osuf\psi
$$
By $(\vee L)$:
\begin{enumerate}
    \item[(v.b111)] $Y : \psi\Rightarrow y\cat Y : \true\osuf\psi$
    \item[(v.b112)] $Y : \phi\wedge \phi\osuf\psi\Rightarrow y\cat Y : \true\osuf\psi$
\end{enumerate}
Case (v.b111) is trivial (extend $y\cat Y : \true\osuf\psi$ by $(\osuf)$ to get $Y : \psi$), we omit it.
For (v.b112), by $(\wedge L)$ and extending $y\cat Y : \true\osuf\psi$ by $(\osuf)$ and $(\wedge R)$, we have $Y : \psi$ and $Y : \true\wedge\true\osuf\psi$ on the right side. The non-trivial branch after weakening yields:
$$
(v.b1121)\ \ Y : \phi\osuf\psi\Rightarrow Y : \true\osuf\psi
$$
which is exactly (v.b11), forming a back-link. The derivation path is progressive since rule $(\osuf)$ is applied.

Case (v.b2): By $(\neg R)$:
$$
(v.b21)\ \ X : (\neg\psi)\osuf(\neg\phi\wedge \neg\psi), X : \phi\osuf\psi\Rightarrow \cdot
$$
Without loss of generality, let $X = y\cat Y$. By applying $(\osuf)$ on both formulas on the left, we have:
$$
Y : (\neg\phi\wedge\neg\psi)\vee (\neg\psi\wedge (\neg\psi)\osuf(\neg\phi\wedge\neg\psi)),\ Y : \psi\vee (\phi\wedge \phi\osuf\psi)\Rightarrow \cdot
$$
By $(\vee L)$, we split into 4 sub-cases:
\begin{enumerate}
    \item[(v.b2111)] $Y : \neg\phi\wedge\neg\psi,\ Y : \psi\Rightarrow \cdot$
    \item[(v.b2112)] $Y : \neg\phi\wedge\neg\psi,\ Y : \phi\wedge\phi\osuf\psi\Rightarrow \cdot$
    \item[(v.b2113)] $Y : \neg\psi\wedge (\neg\psi)\osuf(\neg\phi\wedge\neg\psi),\ Y : \psi\Rightarrow \cdot$
    \item[(v.b2114)] $Y : \neg\psi\wedge (\neg\psi)\osuf(\neg\phi\wedge\neg\psi),\ Y : \phi\wedge\phi\osuf\psi\Rightarrow \cdot$
\end{enumerate}

Case (v.b2111): By $(\wedge L)$, we have $Y : \neg\psi, Y : \psi\Rightarrow \cdot$, which closes by $(\neg L)$ and $(\textit{ax})$.

Case (v.b2112): By $(\wedge L)$, we have $Y : \neg\phi, Y : \phi\Rightarrow \cdot$, which closes by $(\neg L)$ and $(\textit{ax})$.

Case (v.b2113): By $(\wedge L)$, we have $Y : \neg\psi, Y : \psi\Rightarrow \cdot$, which closes by $(\neg L)$ and $(\textit{ax})$.

Case (v.b2114): By $(\wedge L)$ on both conjunctions:
$$
Y : \neg\psi, Y : (\neg\psi)\osuf(\neg\phi\wedge\neg\psi), Y : \phi, Y : \phi\osuf\psi\Rightarrow \cdot
$$
By weakening:
$$
(v.b21141)\ \ Y : (\neg\psi)\osuf(\neg\phi\wedge\neg\psi), Y : \phi\osuf\psi\Rightarrow \cdot
$$
which is exactly (v.b21), forming a back-link. The derivation path is progressive since rule $(\osuf)$ is applied.

\end{proof}

\begin{prop}[Rule (vi)]
Rule (vi):
$\phi\osuf \psi\leftrightarrow \nxt \psi \vee \nxt (\phi\wedge (\phi\osuf \psi))$
is derived in \GiiiPPLcyc. 
\end{prop}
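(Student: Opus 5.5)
We follow the same pattern as the previous propositions. Fix $X\in\TVar$ and prove both directions as sequents:
\begin{enumerate}
\item[(vi.a)] $X : \phi\osuf\psi\Rightarrow X : \nxt\psi\vee\nxt(\phi\wedge(\phi\osuf\psi))$,
\item[(vi.b)] $X : \nxt\psi\vee\nxt(\phi\wedge(\phi\osuf\psi))\Rightarrow X : \phi\osuf\psi$.
\end{enumerate}
Throughout we unfold $\nxt\chi$ as its defining abbreviation $(\neg\true)\osuf\chi$. Then every formula is built from $\osuf$, and rule $(\osuf)$ applies uniformly. Without loss of generality let $X = y\cat Y$.

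For (vi.a), apply $(\vee R)$ on the right. Then apply $(\osuf)$ to all three $\osuf$-formulas, namely $y\cat Y:\phi\osuf\psi$ on the left and both $\nxt$-formulas on the right. This gives
$$
Y:\psi\vee(\phi\wedge\phi\osuf\psi)\Rightarrow Y:\psi,\ Y:\neg\true\wedge\nxt\psi,\ Y:\phi\wedge(\phi\osuf\psi),\ Y:\neg\true\wedge\nxt(\phi\wedge(\phi\osuf\psi)).
$$
After $(\vee L)$, the branch with $Y:\psi$ closes by $(ax)$. The branch with $Y:\phi\wedge\phi\osuf\psi$ also closes by $(ax)$ against the third succedent formula. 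No bud is needed in either branch.

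For (vi.b), apply $(\vee L)$, giving two cases.
\begin{enumerate}
\item[(vi.b1)] $y\cat Y:(\neg\true)\osuf\psi\Rightarrow y\cat Y:\phi\osuf\psi$. Apply $(\osuf)$ on both sides, then $(\vee L)$. The $Y:\psi$ branch closes by $(ax)$.
\item[(vi.b2)] $y\cat Y:(\neg\true)\osuf(\phi\wedge\phi\osuf\psi)\Rightarrow y\cat Y:\phi\osuf\psi$. After $(\osuf)$ on both sides and $(\vee L)$, the first disjunct $Y:\phi\wedge(\phi\osuf\psi)$ closes directly against the succedent $Y:\phi\wedge\phi\osuf\psi$ by $(ax)$.
\end{enumerate}
In both cases the remaining disjunct is $Y:\neg\true\wedge\ldots$. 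By $(\wedge L)$ and $(\neg L)$ this yields $\true$ on the right. That closes once we have $\cdot\Rightarrow Y:\true$. This is available by taking $\true$ as an axiom instance, or equivalently by treating it as an atomic proposition valid at every world via $(p)$.

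The main obstacle is this residual $\neg\true$ branch, since the primitive rules of Table~\ref{table:giiippl-propositional-rules} contain no explicit rule for $\true$. If no such rule is available, I would derive $\Rightarrow Y:\true$ by reading $\true$ as the abbreviation $\neg(p\wedge\neg p)$ and closing with $(\neg R)$, $(\wedge L)$, $(\neg L)$ and $(ax)$. With this in hand, every branch of the derivation is finite and no back-links arise, so the derivation is trivially a cyclic proof in the sense of Definition~\ref{def:Cyclic Proofs of GiiiPPL}. If one preferred not to unfold $\nxt$ fully, the $\phi\osuf\psi$ subformula under $\nxt$ would instead be closed by $(ax)$ before any further unfolding, which also avoids needing a bud.
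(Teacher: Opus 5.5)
Your proof is correct and follows the paper's argument essentially step for step. The paper likewise uses $(\vee R)$/$(\vee L)$, unfolds $\nxt$ as a $\osuf$-formula, applies $(\osuf)$ with $X = y\cat Y$, and closes every branch by $(ax)$ without any buds.

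The only difference is that you spell out the residual $\neg\true\wedge\ldots$ branch, which the paper simply dismisses as ``trivially false''. Both arguments ultimately rely on $\cdot\Rightarrow Y:\true$ being provable. No listed rule gives this; in particular $(p)$ does not, since it only strips a trailing label. It therefore has to be treated as an implicit axiom, exactly as the paper does tacitly here and in Rules (x), (xiii) and (xiv).
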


\begin{proof}
Let $X\in \TVar$. 
Recall that $\nxt \theta \dddef \false\osuf \theta$.
We need to prove
\begin{enumerate}
\item[$(vi.a)$] $X : \phi\osuf \psi\Rightarrow X : \nxt \psi \vee \nxt (\phi\wedge (\phi\osuf \psi))$.
\item[$(vi.b)$] $X : \nxt \psi \vee \nxt (\phi\wedge (\phi\osuf \psi))\Rightarrow X : \phi\osuf \psi$.
\end{enumerate}

\textbf{Case for proving $(vi.a)$}: 
By applying $(\vee R)$, we need:
$$
(vi.a1)\ \ X : \phi\osuf\psi\Rightarrow X : \false\osuf\psi, X : \false\osuf(\phi\wedge(\phi\osuf\psi))
$$
Without loss of generality, let $X = y\cat Y$. By applying $(\osuf)$ on $X : \phi\osuf\psi$ on the left, we have:
$$
Y : \psi\vee (\phi\wedge \phi\osuf\psi)\Rightarrow y\cat Y : \false\osuf\psi, y\cat Y : \false\osuf(\phi\wedge(\phi\osuf\psi))
$$
By $(\osuf)$ on $y\cat Y : \false\osuf\psi$ on the right, we get $Y : \psi$ and $Y : \false\wedge \false\osuf\psi$; and by $(\osuf)$ on $y\cat Y : \false\osuf(\phi\wedge(\phi\osuf\psi))$ on the right, we get $Y : \phi\wedge(\phi\osuf\psi)$ and $Y : \false\wedge \false\osuf(\phi\wedge(\phi\osuf\psi))$. The items containing $\false\wedge(\cdot)$ cannot be proved (since they contain $\false$), so on the right side we can just leave $Y : \psi$ and $Y : \phi\wedge(\phi\osuf\psi)$ by weakening. 

By $(\vee L)$ on the left, we split into 2 cases:
\begin{enumerate}
    \item[(vi.a11)] $Y : \psi\Rightarrow Y : \psi, Y : \phi\wedge(\phi\osuf\psi)$
    \item[(vi.a12)] $Y : \phi\wedge\phi\osuf\psi\Rightarrow Y : \psi, Y : \phi\wedge(\phi\osuf\psi)$
\end{enumerate}
Case (vi.a11) is immediately closed by $(\textit{ax})$.
Case (vi.a12) is immediately closed by $(\textit{ax})$.

\textbf{Case for proving $(vi.b)$}: 
By $(\vee L)$, we split into 2 cases:
\begin{enumerate}
    \item[(vi.b1)] $X : \false\osuf \psi\Rightarrow X : \phi\osuf \psi$
    \item[(vi.b2)] $X : \false\osuf(\phi\wedge(\phi\osuf \psi))\Rightarrow X : \phi\osuf \psi$
\end{enumerate}

Case (vi.b1): Without loss of generality, let $X = y\cat Y$. By $(\osuf)$ on the left:
$$
Y : \psi\vee(\false\wedge \false\osuf\psi)\Rightarrow y\cat Y : \phi\osuf\psi
$$
Since $\false\wedge(\cdot)$ is trivially false, by $(\vee L)$ this reduces to:
$$
Y : \psi\Rightarrow y\cat Y : \phi\osuf\psi
$$
By $(\osuf)$ on the right side, we obtain the disjunct $Y : \psi$, which is on the left side. This is trivially closed.

Case (vi.b2): Without loss of generality, let $X = y\cat Y$. By $(\osuf)$ on the left:
$$
Y : (\phi\wedge(\phi\osuf\psi))\vee (\false\wedge \false\osuf(\phi\wedge(\phi\osuf\psi)))\Rightarrow y\cat Y : \phi\osuf\psi
$$
Since $\false\wedge(\cdot)$ is trivially false, by $(\vee L)$ this reduces to:
$$
Y : \phi\wedge(\phi\osuf\psi)\Rightarrow y\cat Y : \phi\osuf\psi
$$
By $(\osuf)$ on the right side and $(\vee R)$, we need to show $Y : \psi$ or $Y : \phi\wedge(\phi\osuf\psi)$. The latter is exactly the left side, so this is immediately closed by $(\textit{ax})$.

\end{proof}

\begin{prop}[Rule (vii)]
Rule (vii): 
$\phi\osuf(\phi\wedge (\phi\osuf \psi))\leftrightarrow \nxt (\phi\wedge (\phi\osuf \psi))$
is derived in \GiiiPPLcyc. 
\end{prop}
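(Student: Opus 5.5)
We prove rule (vii) in the same style as rule (vi): show both directions as labelled sequents with a label $X\in\TVar$, unfold $\osuf$ once with rule $(\osuf)$, and close every remaining branch by $(\textit{ax})$ or by a back-link. Recall that $\nxt\theta\dddef\false\osuf\theta$, and write $\theta\dddef\phi\wedge(\phi\osuf\psi)$. The two sequents are
\begin{center}
(vii.a) $X:\phi\osuf\theta\Rightarrow X:\false\osuf\theta$ \quad and \quad (vii.b) $X:\false\osuf\theta\Rightarrow X:\phi\osuf\theta$.
\end{center}
As before, we may assume without loss of generality that $X=y\cat Y$, using $(\Sub)$.

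Direction (vii.b) is the easy one and mirrors case (vi.b2).
\begin{itemize}
\item Apply $(\osuf)$ on the left to get $Y:\theta\vee(\false\wedge\false\osuf\theta)$.
\item The disjunct $\false\wedge\cdots$ closes after $(\wedge L)$, since $\false=\neg\true$ on the left and $(\neg L)$ leaves $\true$ on the right.
\item In the branch $Y:\theta$, apply $(\osuf)$ on the right to $y\cat Y:\phi\osuf\theta$, giving $Y:\theta\vee(\phi\wedge\phi\osuf\theta)$.
\item Close by $(\vee R)$ and $(\textit{ax})$.
\end{itemize}
No cycle is needed here.

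Direction (vii.a) needs a semantic observation. If $\theta=\phi\wedge\phi\osuf\psi$ holds at some proper suffix further along, then $\phi\osuf\psi$ propagates backwards to the first suffix $Y$. The steps are as follows.
\begin{itemize}
\item Apply $(\osuf)$ on the left to get $Y:\theta\vee(\phi\wedge\phi\osuf\theta)$, and apply $(\osuf)$ on the right to get $Y:\theta,\ Y:\false\wedge\false\osuf\theta$. Weaken away the second succedent formula.
\item The branch $Y:\theta$ closes by $(\textit{ax})$.
\item The remaining branch is $Y:\phi,\ Y:\phi\osuf\theta\Rightarrow Y:\theta$. By $(\wedge R)$ this splits into $Y:\phi$, closed by $(\textit{ax})$, and the key sequent
$$(\star)\ \ Y:\phi\osuf(\phi\wedge(\phi\osuf\psi))\Rightarrow Y:\phi\osuf\psi.$$
\end{itemize}

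We prove $(\star)$ by a cyclic argument of the same shape as (iv.b1).
\begin{itemize}
\item Write $Y=z\cat Z$ and apply $(\osuf)$ on both sides.
\item The left side splits by $(\vee L)$ into $Z:\theta$ and $Z:\phi\wedge\phi\osuf\theta$.
\item The right side becomes $Z:\psi,\ Z:\phi\wedge\phi\osuf\psi$.
\item In the first branch, $(\wedge L)$ gives $Z:\phi$ and $Z:\phi\osuf\psi$, and $(\wedge R)$ with $(\textit{ax})$ closes it.
\item In the second branch, after $(\wedge L)$ and $(\wedge R)$, the $Z:\phi$ part closes by $(\textit{ax})$. What remains, after $(\textit{wk})$, is $Z:\phi\osuf\theta\Rightarrow Z:\phi\osuf\psi$.
\item This is $(\star)$ up to renaming, so we back-link it via $(\Sub)$.
\end{itemize}
The path from $(\star)$ to its bud passes through a left $(\osuf)$ step whose target is $z\cat Z:\phi\osuf\theta$. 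By Definition~\ref{def:Progressive Step/Progressive Derivation Trace of GiiiPPLcyc}, $(\osuf)$ steps are progressive. Following the derivation trace along the left formula therefore gives infinitely many progressive steps on any infinite path, so the preproof is a cyclic proof.

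The main obstacle is the progressiveness bookkeeping for $(\star)$. The trace must follow the left-hand formula $\phi\osuf\theta$ through the $(\vee L)$ and $(\wedge L)$ steps, where the trace may stay on $Z:\phi\wedge\phi\osuf\theta$ and then on its conjunct, and through the $(\Sub)$ step. We must check that Definition~\ref{def:Derivation Traces} permits this trace; it does, since these are target pairs. We also need the degenerate case where $Y$ is a single world variable. There the left formula $x:\phi\osuf\theta$ closes immediately by the axiom $(\osuf\ x)$, and the same axiom handles $X=y$ in (vii.a) and (vii.b).
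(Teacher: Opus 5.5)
Your proposal is correct and follows essentially the same route as the paper: both directions use one $(\osuf)$ unfolding on each side, (vii.b) closes directly by $(\vee R)$ and $(\textit{ax})$, and (vii.a) reduces to deriving $\phi\osuf\psi$ from $\phi\osuf(\phi\wedge(\phi\osuf\psi))$ by a back-link whose path contains a progressive left $(\osuf)$ step. The only difference is cosmetic: you take the stripped sequent $(\star)$ as the companion, whereas the paper back-links to (vii.a2), $Y:\phi\wedge\phi\osuf(\phi\wedge(\phi\osuf\psi))\Rightarrow Y:\phi\wedge(\phi\osuf\psi)$, which still carries the conjunct $\phi$ on both sides.
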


\begin{proof}
Let $X\in \TVar$. 
Recall $\nxt(\phi\wedge(\phi\osuf\psi)) \dddef \false\osuf(\phi\wedge(\phi\osuf\psi))$. Without loss of generality, let $X = y\cat Y$. 
We need to prove
\begin{enumerate}
\item[$(vii.a)$] $X : \phi\osuf(\phi\wedge(\phi\osuf\psi))\Rightarrow X : \nxt(\phi\wedge(\phi\osuf\psi))$.
\item[$(vii.b)$] $X : \nxt(\phi\wedge(\phi\osuf\psi))\Rightarrow X : \phi\osuf(\phi\wedge(\phi\osuf\psi))$.
\end{enumerate}

\textbf{Case for proving $(vii.a)$}: 
By $(\osuf)$ on the left:
$$
Y : (\phi\wedge(\phi\osuf\psi))\vee (\phi\wedge \phi\osuf(\phi\wedge(\phi\osuf\psi)))\Rightarrow y\cat Y : \false\osuf(\phi\wedge(\phi\osuf\psi))
$$
By $(\osuf)$ on the right, we get $Y : (\phi\wedge(\phi\osuf\psi))$ and $Y : \false\wedge \false\osuf(\phi\wedge(\phi\osuf\psi))$. Since $\false\wedge(\cdot)$ is trivially false, the right side effectively reduces to $Y : \phi\wedge(\phi\osuf\psi)$.

By $(\vee L)$ on the left, we split into 2 cases:
\begin{enumerate}
    \item[(vii.a1)] $Y : \phi\wedge(\phi\osuf\psi)\Rightarrow Y : \phi\wedge(\phi\osuf\psi)$
    \item[(vii.a2)] $Y : \phi\wedge \phi\osuf(\phi\wedge(\phi\osuf\psi))\Rightarrow Y : \phi\wedge(\phi\osuf\psi)$
\end{enumerate}
Case (vii.a1) is immediately closed by $(\textit{ax})$.

For case (vii.a2), by $(\wedge L)$ on the left and $(\wedge R)$ on the right, we need to show $Y : \phi$ and $Y : \phi\osuf\psi$ from $Y : \phi$ and $Y : \phi\osuf(\phi\wedge(\phi\osuf\psi))$. The branch for $Y : \phi$ on the right is trivially obtained from $Y : \phi$ on the left. For $Y : \phi\osuf\psi$, let $Y = z\cat Z$. By $(\osuf)$ on $Y : \phi\osuf(\phi\wedge(\phi\osuf\psi))$:
$$
Z : (\phi\wedge(\phi\osuf\psi))\vee (\phi\wedge \phi\osuf(\phi\wedge(\phi\osuf\psi)))\Rightarrow z\cat Z : \phi\osuf\psi
$$
By $(\vee L)$:
\begin{enumerate}
    \item[(vii.a21)] $Z : \phi\wedge(\phi\osuf\psi)\Rightarrow z\cat Z : \phi\osuf\psi$
    \item[(vii.a22)] $Z : \phi\wedge \phi\osuf(\phi\wedge(\phi\osuf\psi))\Rightarrow z\cat Z : \phi\osuf\psi$
\end{enumerate}

Case (vii.a21): By $(\wedge L)$, we have $Z : \phi, Z : \phi\osuf\psi\Rightarrow z\cat Z : \phi\osuf\psi$. By $(\osuf)$ on $z\cat Z : \phi\osuf\psi$ on the right, we get $Z : \psi$ and $Z : \phi\wedge(\phi\osuf\psi)$. By $(\wedge R)$, the latter can be proved from the left. This case is trivially closed.

Case (vii.a22): By extending $z\cat Z : \phi\osuf\psi$ by $(\osuf)$ on the right, and then weakening, we obtain:
$$
(vii.a221)\ \ Z : \phi\wedge \phi\osuf(\phi\wedge(\phi\osuf\psi))\Rightarrow Z : \phi\wedge \phi\osuf\psi
$$
which has the same form as (ivv.a2). Note that the derivation path from (vii.a2) to (vii.a221) has applied $(\osuf)$, forming a back-link. The derivation path is progressive since rule $(\osuf)$ is applied.

\textbf{Case for proving $(vii.b)$}:
By $(\osuf)$ on the left:
$$
Y : (\phi\wedge(\phi\osuf\psi))\vee (\false\wedge\false\osuf(\phi\wedge(\phi\osuf\psi)))\Rightarrow y\cat Y : \phi\osuf(\phi\wedge(\phi\osuf\psi))
$$
Since $\false\wedge(\cdot)$ is trivially false, by $(\vee L)$ this simplifies to:
$$
(vii.b1)\ \ Y : \phi\wedge(\phi\osuf\psi)\Rightarrow y\cat Y : \phi\osuf(\phi\wedge(\phi\osuf\psi))
$$
By $(\osuf)$ on the right side, we get $Y : (\phi\wedge(\phi\osuf\psi))$ and $Y : \phi\wedge\phi\osuf(\phi\wedge(\phi\osuf\psi))$. By $(\vee R)$, we only need to show one disjunct. The first disjunct $Y : \phi\wedge(\phi\osuf\psi)$ is exactly the left side. This case is immediately closed by $(\textit{ax})$.

\end{proof}

\begin{prop}[Rule (viii)]
Rule (viii): 
 $\phi\osuf(\phi\wedge (\phi\osuf \psi))\leftrightarrow \phi\osuf(\phi\wedge \nxt \psi)$
is derived in \GiiiPPLcyc. 
\end{prop}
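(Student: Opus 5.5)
Let $X\in\TVar$. Following the pattern of Propositions for rules (vi) and (vii), I would prove the two directions
$$(viii.a)\ \ X : \phi\osuf(\phi\wedge(\phi\osuf\psi))\Rightarrow X : \phi\osuf(\phi\wedge\nxt\psi)$$
and
$$(viii.b)\ \ X : \phi\osuf(\phi\wedge\nxt\psi)\Rightarrow X : \phi\osuf(\phi\wedge(\phi\osuf\psi)),$$
where $\nxt\psi\dddef\false\osuf\psi$. In both cases I take $X=y\cat Y$ and unfold the outer $\osuf$ on both sides by rule $(\osuf)$. Then I split the left disjunction by $(\vee L)$ and the right one by $(\vee R)$, so every branch either closes or returns to the starting sequent. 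In the latter case it forms a back-link through $(\Sub)$ (renaming $Y$ to $X$), along a path that contains a progressive $(\osuf)$ step on the left formula.

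\textbf{Direction (viii.a).} After unfolding, the left side has the two cases $Y:\phi\wedge(\phi\osuf\psi)$ and $Y:\phi\wedge\phi\osuf(\phi\wedge(\phi\osuf\psi))$. The right side contains $Y:\phi\wedge\nxt\psi$ and $Y:\phi\wedge\phi\osuf(\phi\wedge\nxt\psi)$.

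\emph{Recursive case.} This is the second left case. Applying $(\wedge L)$ and $(\wedge R)$ on the second right disjunct leaves $Y:\phi\osuf(\phi\wedge(\phi\osuf\psi))\Rightarrow Y:\phi\osuf(\phi\wedge\nxt\psi)$, which is (viii.a) again and gives a back-link.

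\emph{Base case.} This is the first left case, and the obligation is $Y:\phi,Y:\phi\osuf\psi\Rightarrow Y:\phi\wedge\nxt\psi,\ Y:\phi\wedge\phi\osuf(\phi\wedge\nxt\psi)$. The $Y:\phi$ components close by $(ax)$. Writing $Y=z\cat Z$, what remains is $Z:\psi\vee(\phi\wedge\phi\osuf\psi)\Rightarrow Z:\psi,\ldots,\ Z:\phi\wedge\nxt\psi,\ Z:\phi\wedge\phi\osuf(\phi\wedge\nxt\psi)$.
\begin{itemize}
\item The $Z:\psi$ branch closes by $(ax)$.
\item The $Z:\phi\wedge\phi\osuf\psi$ branch reduces, after $(\wedge L)$, $(\wedge R)$ and $(wk)$, to $Z:\phi\osuf\psi\Rightarrow Z:\nxt\psi,\ Z:\phi\osuf(\phi\wedge\nxt\psi)$.
\end{itemize}
This residual is a genuinely new sequent, so I would treat it as its own companion $(\star)$ and prove it cyclically. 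Unfold $(\osuf)$ once more. In the $\psi$ case, $\nxt\psi$ closes directly. In the recursive case, $\phi\wedge\phi\osuf\psi$ on the left is matched against the unfolded $\phi\wedge\phi\osuf(\phi\wedge\nxt\psi)$ on the right, which returns to $(\star)$ progressively.

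\textbf{Direction (viii.b).} After unfolding, the left cases are $Y:\phi\wedge\nxt\psi$ and $Y:\phi\wedge\phi\osuf(\phi\wedge\nxt\psi)$.
\begin{itemize}
\item The second case loops back to (viii.b) exactly as in (viii.a).
\item For the first case, the right side must give $Y:\phi\wedge(\phi\osuf\psi)$. Here $Y:\phi$ closes by $(ax)$. For $Y:\phi\osuf\psi$, set $Y=z\cat Z$ and unfold $\nxt\psi=\false\osuf\psi$ on the left, which yields $Z:\psi$ or a $\false$ conjunct. The $\false$ branch closes by $(\wedge L)$ and $(\neg L)$. Unfolding $z\cat Z:\phi\osuf\psi$ on the right yields the disjunct $Z:\psi$, which closes the other branch.
\end{itemize}

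\textbf{Main obstacle.} I expect the difficulty to be the base case of (viii.a). The residual sequent $(\star)$ is not an instance of the root, so it needs its own cyclic sub-proof. I must check that every infinite path through the nested cycles, including paths that alternate between the (viii.a) cycle and the $(\star)$ cycle, carries a progressive trace. I would argue this from the fact that each cycle passes through an $(\osuf)$ step on a left-hand formula, and that this formula is threaded by the trace through the $(wk)$ and $(\Sub)$ steps.
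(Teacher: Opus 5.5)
Your proposal is correct and follows essentially the same route as the paper: the same unfolding of the outer $\osuf$ and back-link to the root in the recursive cases of both directions, the same closure of the $\nxt\psi$ case in (viii.b), and a separate inner cycle for the base case of (viii.a). The paper places that inner companion at $Y:\phi\wedge(\phi\osuf\psi)\Rightarrow x\cat Y:\phi\osuf(\phi\wedge\nxt\psi)$ and you place it at $(\star)$, which is the same loop entered at a different point; $(\star)$ already arises, after weakening $Y:\phi$, as soon as the $Y:\phi$ branches close, so your extra unfolding is unnecessary, and no path can alternate between the two cycles, since the sub-proof of $(\star)$ only buds back to $(\star)$.
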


\begin{proof}

Let $X\in \TVar$. 
Recall that $\nxt \psi \dddef \false\osuf \psi$.
We need to prove
\begin{enumerate}
\item[$(viii.a)$] $X : \phi\osuf(\phi\wedge(\phi\osuf\psi))\Rightarrow X : \phi\osuf(\phi\wedge \nxt\psi)$.
\item[$(viii.b)$] $X : \phi\osuf(\phi\wedge \nxt\psi)\Rightarrow X : \phi\osuf(\phi\wedge(\phi\osuf\psi))$.
\end{enumerate}

\textbf{Case for proving $(viii.a)$}: 
That is, $X : \phi\osuf(\phi\wedge(\phi\osuf\psi))\Rightarrow X : \phi\osuf(\phi\wedge \false\osuf\psi)$.

Without loss of generality, let $X = x\cat Y$. By applying $(\osuf)$ on the left:
$$
Y : (\phi\wedge(\phi\osuf\psi))\vee (\phi\wedge \phi\osuf(\phi\wedge(\phi\osuf\psi)))\Rightarrow x\cat Y : \phi\osuf(\phi\wedge \false\osuf\psi)
$$
By $(\vee L)$ on the left, we split into 2 cases:

\begin{enumerate}
    \item[(viii.a1)] $Y : \phi\wedge(\phi\osuf\psi)\Rightarrow x\cat Y : \phi\osuf(\phi\wedge \false\osuf\psi)$
    \item[(viii.a2)] $Y : \phi\wedge \phi\osuf(\phi\wedge(\phi\osuf\psi))\Rightarrow x\cat Y : \phi\osuf(\phi\wedge \false\osuf\psi)$
\end{enumerate}

Case (viii.a1): By applying $(\osuf)$ on $x\cat Y : \phi\osuf(\phi\wedge \false\osuf\psi)$ on the right, we obtain:
$$
Y : \phi\wedge(\phi\osuf\psi)\Rightarrow Y : (\phi\wedge \false\osuf\psi), Y : \phi\wedge \phi\osuf(\phi\wedge \false\osuf\psi)
$$
By applying $(\wedge R)$ on $Y : (\phi\wedge \false\osuf\psi)$ on the right:
\begin{enumerate}
    \item[(viii.a11)] $Y : \phi\wedge(\phi\osuf\psi)\Rightarrow Y : \phi, Y : \phi\wedge \phi\osuf(\phi\wedge \false\osuf\psi)$, which is trivially closed (by $(\wedge L)$ and $(\textit{ax})$ on $Y:\phi$).
    \item[(viii.a12)] $Y : \phi\wedge(\phi\osuf\psi)\Rightarrow Y : \false\osuf\psi, Y : \phi\wedge \phi\osuf(\phi\wedge \false\osuf\psi)$
\end{enumerate}
For case (viii.a12), by applying $(\wedge R)$ on $Y : \phi\wedge \phi\osuf(\phi\wedge \false\osuf\psi)$ on the right:
\begin{enumerate}
    \item[(viii.a13)] $Y : \phi\wedge(\phi\osuf\psi)\Rightarrow Y : \false\osuf\psi, Y : \phi$, which is trivially closed (by $(\wedge L)$ and $(\textit{ax})$ on $Y:\phi$).
    \item[(viii.a14)] $Y : \phi\wedge(\phi\osuf\psi)\Rightarrow Y : \false\osuf\psi, Y : \phi\osuf(\phi\wedge \false\osuf\psi)$
\end{enumerate}
For case (viii.a14), by $(\wedge L)$ on the left, we have $Y : \phi, Y : \phi\osuf\psi$. Rename $Y = y\cat Z$. By $(\osuf)$ on $y\cat Z : \false\osuf\psi$ on the right, we get $Z : \psi$ and $Z : \false\wedge \false\osuf\psi$; since $\false\wedge(\cdot)$ is trivially false, the right side effectively contributes only $Z : \psi$. So we need to show:
$$
y\cat Z : \phi, y\cat Z : \phi\osuf\psi\Rightarrow Z : \psi, y\cat Z : \phi\osuf(\phi\wedge \false\osuf\psi)
$$
By $(\osuf)$ on $y\cat Z : \phi\osuf\psi$ on the left, we obtain:
$$
Z : \psi\vee (\phi\wedge \phi\osuf\psi)\Rightarrow Z : \psi, y\cat Z : \phi\osuf(\phi\wedge \false\osuf\psi)
$$
By $(\vee L)$ on the left, we split into 2 sub-cases:
\begin{enumerate}
    \item[(viii.a141)] $Z : \psi\Rightarrow Z : \psi, y\cat Z : \phi\osuf(\phi\wedge \false\osuf\psi)$, which is closed by $(\textit{ax})$.
    \item[(viii.a142)] $Z : \phi\wedge \phi\osuf\psi\Rightarrow Z : \psi, y\cat Z : \phi\osuf(\phi\wedge \false\osuf\psi)$
\end{enumerate}
For case (viii.a142), by weakening on $Z : \psi$, it suffices to show:
$$
Z : \phi\wedge(\phi\osuf\psi)\Rightarrow y\cat Z : \phi\osuf(\phi\wedge \false\osuf\psi)
$$
which is exactly (viii.a1) (with $Z, y\cat Z$ in place of $Y, x\cat Y$), forming a back-link.
The derivation path from (viii.a1) to this back-link is progressive since rule $(\osuf)$ has been applied (on $y\cat Z : \phi\osuf\psi$ and on $y\cat Z : \false\osuf\psi$).

Case (viii.a2): By applying $(\osuf)$ on $x\cat Y : \phi\osuf(\phi\wedge \false\osuf\psi)$ on the right, we obtain:
$$
Y : \phi\wedge \phi\osuf(\phi\wedge(\phi\osuf\psi))\Rightarrow Y : (\phi\wedge \false\osuf\psi), Y : \phi\wedge \phi\osuf(\phi\wedge \false\osuf\psi)
$$
By applying $(\wedge R)$ on $Y : (\phi\wedge \false\osuf\psi)$ on the right:
\begin{enumerate}
    \item[(viii.a21)] $Y : \phi\wedge \phi\osuf(\phi\wedge(\phi\osuf\psi))\Rightarrow Y : \phi, Y : \phi\wedge \phi\osuf(\phi\wedge \false\osuf\psi)$, which is trivially closed (by $(\wedge L)$ and $(\textit{ax})$ on $Y:\phi$).
    \item[(viii.a22)] $Y : \phi\wedge \phi\osuf(\phi\wedge(\phi\osuf\psi))\Rightarrow Y : \false\osuf\psi, Y : \phi\wedge \phi\osuf(\phi\wedge \false\osuf\psi)$
\end{enumerate}
For case (viii.a22), by applying $(\wedge R)$ on $Y : \phi\wedge \phi\osuf(\phi\wedge \false\osuf\psi)$ on the right:
\begin{enumerate}
    \item[(viii.a23)] $Y : \phi\wedge \phi\osuf(\phi\wedge(\phi\osuf\psi))\Rightarrow Y : \false\osuf\psi, Y : \phi, ...$, which is trivially closed (by $(\wedge L)$ and $(\textit{ax})$ on $Y:\phi$).
    \item[(viii.a24)] $Y : \phi\wedge \phi\osuf(\phi\wedge(\phi\osuf\psi))\Rightarrow Y : \false\osuf\psi, Y : \phi\osuf(\phi\wedge \false\osuf\psi)$
\end{enumerate}
For case (viii.a24), by $(\wedge L)$ and weakening on the left, we obtain:
$$
(viii.a241)\ \ Y : \phi\osuf(\phi\wedge(\phi\osuf\psi))\Rightarrow Y : \phi\osuf(\phi\wedge \false\osuf\psi)
$$
which is exactly (viii.a), forming a back-link. The derivation path from (viii.a) to (viii.a241) is progressive since rule $(\osuf)$ is applied.

\textbf{Case for proving $(viii.b)$}:
That is, $X : \phi\osuf(\phi\wedge \false\osuf\psi)\Rightarrow X : \phi\osuf(\phi\wedge(\phi\osuf\psi))$.

Without loss of generality, let $X = x\cat Y$. By applying $(\osuf)$ on the left:
$$
Y : (\phi\wedge \false\osuf\psi)\vee (\phi\wedge \phi\osuf(\phi\wedge \false\osuf\psi))\Rightarrow x\cat Y : \phi\osuf(\phi\wedge(\phi\osuf\psi))
$$
By $(\vee L)$ on the left, we split into 2 cases:

\begin{enumerate}
    \item[(viii.b1)] $Y : \phi\wedge \false\osuf\psi\Rightarrow x\cat Y : \phi\osuf(\phi\wedge(\phi\osuf\psi))$
    \item[(viii.b2)] $Y : \phi\wedge \phi\osuf(\phi\wedge \false\osuf\psi)\Rightarrow x\cat Y : \phi\osuf(\phi\wedge(\phi\osuf\psi))$
\end{enumerate}

Case (viii.b1): By applying $(\osuf)$ on $x\cat Y : \phi\osuf(\phi\wedge(\phi\osuf\psi))$ on the right, we obtain:
$$
Y : \phi\wedge \false\osuf\psi\Rightarrow Y : (\phi\wedge(\phi\osuf\psi)), Y : \phi\wedge \phi\osuf(\phi\wedge(\phi\osuf\psi))
$$
By applying $(\wedge R)$ on $Y : (\phi\wedge(\phi\osuf\psi))$ on the right:
\begin{enumerate}
    \item[(viii.b11)] $Y : \phi\wedge \false\osuf\psi\Rightarrow Y : \phi, Y : \phi\wedge \phi\osuf(\phi\wedge(\phi\osuf\psi))$, which is trivially closed (by $(\wedge L)$ and $(\textit{ax})$ on $Y:\phi$).
    \item[(viii.b12)] $Y : \phi\wedge \false\osuf\psi\Rightarrow Y : (\phi\osuf\psi), Y : \phi\wedge \phi\osuf(\phi\wedge(\phi\osuf\psi))$
\end{enumerate}
For case (viii.b12), rename $Y = y\cat Z$. By $(\wedge L)$ on the left, we have $y\cat Z : \phi, y\cat Z : \false\osuf\psi$. By $(\osuf)$ on $y\cat Z : \false\osuf\psi$ on the left, since $\false\wedge(\cdot)$ is trivially false, we effectively get $Z : \psi$. By $(\osuf)$ on $y\cat Z : (\phi\osuf\psi)$ on the right, we get $Z : \psi$ and $Z : \phi\wedge(\phi\osuf\psi)$. Together we need:
$$
y\cat Z : \phi, Z : \psi\Rightarrow Z : \psi, Z : \phi\wedge(\phi\osuf\psi), y\cat Z : \phi\wedge \phi\osuf(\phi\wedge(\phi\osuf\psi))
$$
which is trivially closed by $(\textit{ax})$ on $Z : \psi$.

Case (viii.b2): By applying $(\osuf)$ on $x\cat Y : \phi\osuf(\phi\wedge(\phi\osuf\psi))$ on the right, we obtain:
$$
Y : \phi\wedge \phi\osuf(\phi\wedge \false\osuf\psi)\Rightarrow Y : (\phi\wedge(\phi\osuf\psi)), Y : \phi\wedge \phi\osuf(\phi\wedge(\phi\osuf\psi))
$$
By weakening on $Y : (\phi\wedge(\phi\osuf\psi))$, it suffices to show:
$$
Y : \phi\wedge \phi\osuf(\phi\wedge \false\osuf\psi)\Rightarrow Y : \phi\wedge \phi\osuf(\phi\wedge(\phi\osuf\psi))
$$
By applying $(\wedge R)$ on the right:
\begin{enumerate}
    \item[(viii.b21)] $Y : \phi\wedge \phi\osuf(\phi\wedge \false\osuf\psi)\Rightarrow Y : \phi$, which is trivially closed (by $(\wedge L)$ and $(\textit{ax})$ on $Y:\phi$).
    \item[(viii.b22)] $Y : \phi\wedge \phi\osuf(\phi\wedge \false\osuf\psi)\Rightarrow Y : \phi\osuf(\phi\wedge(\phi\osuf\psi))$
\end{enumerate}
For case (viii.b22), by $(\wedge L)$ and weakening on the left, we obtain:
$$
(viii.b221)\ \ Y : \phi\osuf(\phi\wedge \false\osuf\psi)\Rightarrow Y : \phi\osuf(\phi\wedge(\phi\osuf\psi))
$$
which is exactly (viii.b), forming a back-link. The derivation path from (viii.b) to (viii.b221) is progressive since rule $(\osuf)$ is applied.

\end{proof}

\begin{prop}[Rule (ix)]
Rule (ix): 
$\ofst\nxt \true \leftrightarrow \false$
is derived in \GiiiPPLcyc. 
\end{prop}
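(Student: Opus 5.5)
We need to derive $X:\ofst\nxt\true \leftrightarrow \false$. As with rules (i)--(viii), we treat the two directions separately, with $X\in\TVar$:
\begin{enumerate}
\item[(ix.a)] $X : \ofst\nxt\true\Rightarrow X : \false$;
\item[(ix.b)] $X : \false\Rightarrow X : \ofst\nxt\true$.
\end{enumerate}
Here $\false$ abbreviates $\neg\true$, and $\nxt\true\dddef(\neg\true)\osuf\true$.

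Direction (ix.b) is immediate. Applying $(\neg L)$ to $X:\neg\true$ on the left yields $\cdot\Rightarrow X:\true, X:\ofst\nxt\true$. This closes once $X:\true$ is provable on the right. If $\true$ is treated as an atomic proposition valid at every world, this is the same trivial closing step used implicitly in the earlier propositions. Otherwise, we read $\true$ as $\neg\false$ and close by $(\neg R)$ and $(ax)$.

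Direction (ix.a) is the real content. Without loss of generality, let $X = x\cat Y$, as in the proofs of rules (i) and (ii). Applying $(\ofst)$ on the left strips the tail $Y$, producing
$$x : \nxt\true \Rightarrow x\cat Y : \false,$$
that is, $x : (\neg\true)\osuf\true\Rightarrow x\cat Y:\false$. The antecedent is now a $\osuf$-formula labelled by a single world variable $x$. The axiom $(\osuf\ x)$ of Table~\ref{table:giiippl-trace-rules} closes any sequent containing such a formula on the left, with an arbitrary context $\Gamma,\Delta$, so the branch terminates. No back-links are needed, so the derivation is a finite proof and the cyclic condition holds trivially.

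The step I expect to need the most care is justifying the case split $X = x\cat Y$. The label $X$ is a bare trace variable, whereas $(\ofst)$ only applies to labels of the form $x\cat\sigma$. I would handle this exactly as the earlier propositions do: use $(\Sub)$, read bottom-up, to pass from the sequent with $x\cat Y$ to the one with $X$. Since $X$ is fresh, this instance of $(\Sub)$ is legitimate. Semantically it is harmless because traces are non-empty, so every trace decomposes as a first world followed by a possibly empty remainder; here we use that $\lm(\varepsilon)=\epsilon$ and that $\varepsilon$ is a unit for $\cat$. A second point is that $(\ofst)$ evaluates $\nxt\true$ at the one-world trace $x$, which is precisely the situation where $(\osuf\ x)$ encodes the fact that a single world has no proper suffix. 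That is exactly why $\ofst\nxt\true$ is unsatisfiable.
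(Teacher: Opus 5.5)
Your core derivation is the paper's own. For (ix.a) it also sets $X = x\cat Y$, applies $(\ofst)$ on the left to reach $x:\false\osuf\true\Rightarrow x\cat Y:\false$, and closes with $(\osuf\ x)$. For (ix.b) it declares the sequent closed because $X:\false$ is on the left. Two of your side justifications do not work, however, and should be replaced by the paper's conventions.

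\textbf{The $(\Sub)$ step.} The instance you describe, with conclusion over $x\cat Y$ and premise over $X$, is a legitimate application of the rule. But it goes the wrong way for your purpose. $(\Sub)$ always has the general sequent as premise and its substitution instance as conclusion. So it shows that the $x\cat Y$-sequent follows from the $X$-sequent. You need the converse: the goal over $X$ must follow from the $x\cat Y$-sequent that you actually close. The paper does not derive this split either. It simply asserts it ``without loss of generality'', and the only support is the semantic decomposition argument you also give: every trace is its first world followed by a possibly empty remainder.

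\textbf{The fallback for $X:\true$.} Reading $\true$ as $\neg\false$ is circular once $\false$ abbreviates $\neg\true$. Applying $(\neg R)$ to $X:\neg\false$ in $\cdot\Rightarrow X:\true, X:\ofst\nxt\true$ gives back exactly the original sequent $X:\neg\true\Rightarrow X:\ofst\nxt\true$, and no $(ax)$ instance appears. Instead, use the convention the paper relies on in (ix.b), (x) and (xiv): $X:\true$ on the right, or equivalently $X:\false$ on the left, closes a branch.
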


\begin{proof}

Let $X\in \TVar$. 
Recall that $\nxt \true \dddef \false\osuf \true$.
We need to prove
\begin{enumerate}
\item[$(ix.a)$] $X : \ofst(\false\osuf\true)\Rightarrow X : \false$.
\item[$(ix.b)$]  $X : \false\Rightarrow X : \ofst(\false\osuf\true)$.
\end{enumerate}

\textbf{Case for proving $(ix.a)$}: 
Without loss of generality, let $X = x\cat Y$. By the rule $(\ofst)$ on $x\cat Y : \ofst(\false\osuf\true)$:
$$
x : \false\osuf\true\Rightarrow x\cat Y : \false
$$
By $(\osuf\ x)$, the sequent is closed. 

\ifx
Since $x$ is a single label variable, the rule $(\osuf)$ requires a decomposition $x = y\cat Z$, but $x$ admits no such decomposition. Hence $x : \false\osuf\true$ cannot be unfolded and is stuck on the left side. Semantically, $\false\osuf\true$ at a single-element trace means there exists a proper suffix where $\true$ holds with all preceding elements satisfying $\false$; but a single-element trace has no proper non-empty suffix, so $x : \false\osuf\true$ is unsatisfiable. The sequent is therefore closed.
\fi

\textbf{Case for proving $(ix.b)$}: 
This is trivially closed since $X : \false$ is on the left side. 

\end{proof}

\begin{prop}[Rule (x)]
Rule (x): 
$\neg\phi\wedge \ofst \phi \rightarrow \nxt \true$
is derived in \GiiiPPLcyc. 
\end{prop}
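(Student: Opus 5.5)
To derive rule (x) I will prove the labelled sequent
$$
X : \neg\phi\wedge\ofst\phi\Rightarrow X : \false\osuf\true
$$
for $X\in\TVar$, using $\nxt\true\dddef\false\osuf\true$ and then $(\to R)$. By $(\wedge L)$ and $(\neg L)$ this becomes $X:\ofst\phi\Rightarrow X:\phi, X:\false\osuf\true$. The semantic intuition guides the proof. On a one-world trace, $\ofst\phi$ and $\phi$ coincide. On a longer trace, $\nxt\true$ holds. So I will case-split on the shape of the label.

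\textbf{Case $X=x$.} Here $X$ is a single world, as in the earlier ``without loss of generality, let $X=x\cat Y$'' steps, now with $Y=\varepsilon$. Rule $(\ofst)$ with $\sigma=\varepsilon$ turns the left formula into $x:\phi$. The sequent $x:\phi\Rightarrow x:\phi, x:\false\osuf\true$ then closes by $(ax)$.

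\textbf{Case $X=x\cat Y$ with $Y$ nonempty.} I apply $(\osuf)$ to the right formula $x\cat Y:\false\osuf\true$, which gives $Y:\true\vee(\false\wedge\false\osuf\true)$. Then $(\vee R)$ yields $Y:\true$ on the right. This is provable by $(\neg R)$ on $\true\equiv\neg\false$, or directly if $\true$ is taken as valid via $(ax)$ after unfolding $\true$ as $\neg\neg\true$. The branch then closes with no cycle needed.

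\textbf{Main obstacle.} The delicate step is justifying the case split on $X$ inside the calculus. The paper's proofs use ``without loss of generality $X=x\cat Y$'' via $(\Sub)$ with $Y$ possibly standing for $\varepsilon$. I would follow the same convention: apply $(\Sub)$ to instantiate $X$ as $x\cat Y$, treat $Y=\varepsilon$ as the degenerate instance handled by $(\ofst)$, and otherwise unfold by $(\osuf)$. If a uniform treatment is preferred, I would first apply $(\ofst)$ on the left to get $x:\phi$. I would then argue that $x:\phi$ and $x\cat Y:\phi$ agree when $Y=\varepsilon$. When $Y\neq\varepsilon$, the right disjunct $x\cat Y:\false\osuf\true$ reduces to $Y:\true$. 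Either way the derivation is finite, so the cyclic condition holds trivially.
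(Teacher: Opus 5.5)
Your argument is correct, and on the substantive point it is more complete than the paper's. The paper also sets $X = x\cat Y$ ``without loss of generality''. It applies $(\wedge L)$ and $(\ofst)$ on the left, then $(\osuf)$ and $(\vee R)$ on the right, and closes with $Y:\true$. That is exactly your second case and nothing else. This derivation never uses the antecedent, so read literally it would equally yield $\cdot\Rightarrow X:\nxt\true$, which fails on one-world traces (cf.\ rule $(\osuf\ x)$). Your case $X = x$ is precisely where $\neg\phi\wedge\ofst\phi$ does the work: $(\ofst)$ with $\sigma=\varepsilon$ turns $x:\ofst\phi$ into $x:\phi$, and $(ax)$ closes against the $x:\phi$ produced by $(\neg L)$. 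That case is what makes rule (x) valid, and the paper's proof is missing it.

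Two informalities remain, and both are shared with the paper. First, the split on the shape of $X$ is not an instance of any rule. $(\Sub)$ makes the conclusion a substitution instance of the premise, so it cannot be run backwards to split a general $X$ into two cases. You need the extra semantic fact that every trace is either a single world or a world followed by a nonempty trace. Second, closing $\Rightarrow Y:\true$ needs an axiom for $\true$, or reading $\true$ as an abbreviation such as $p\vee\neg p$. Your suggested justifications do not go through. Unfolding to $\neg\neg\true$ reduces by $(\neg R)$ and $(\neg L)$ straight back to $\Rightarrow Y:\true$. Going via $\neg\false$ requires a left axiom for $\false$ that the calculus does not provide.
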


\begin{proof}

Let $X\in\TVar$. 
We need to prove
$$X : \neg\phi\wedge \ofst \phi\Rightarrow X : \nxt\true.$$
That is, $X : \neg\phi\wedge \ofst \phi\Rightarrow X : \false\osuf\true$.

Without loss of generality, let $X = x\cat Y$. By $(\wedge L)$ on the left:
$$
x\cat Y : \neg\phi, x\cat Y : \ofst \phi\Rightarrow x\cat Y : \false\osuf\true
$$
By the rule $(\ofst)$ on $x\cat Y : \ofst\phi$:
$$
x\cat Y : \neg\phi, x : \phi\Rightarrow x\cat Y : \false\osuf\true
$$
By $(\osuf)$ on $x\cat Y : \false\osuf\true$ on the right side, we obtain:
$$
x\cat Y : \neg\phi, x : \phi\Rightarrow Y : \true\vee (\false\wedge \false\osuf\true)
$$
Since $\false\wedge(\cdot)$ is trivially false, by $(\vee R)$ it suffices to show:
$$
x\cat Y : \neg\phi, x : \phi\Rightarrow Y : \true
$$
which is trivially closed since $Y : \true$ is always provable.

\end{proof}

\begin{prop}[Rule (xi)]
Rule (xi): 
$p\leftrightarrow \ofst p$, where $p$ is an atomic proposition, 
is derived in \GiiiPPLcyc. 
\end{prop}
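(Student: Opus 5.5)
We need to derive both directions of $p\leftrightarrow\ofst p$ in the labelled system, i.e.\ the sequents
\begin{enumerate}
\item[(xi.a)] $X : p\Rightarrow X : \ofst p$,
\item[(xi.b)] $X : \ofst p\Rightarrow X : p$,
\end{enumerate}
for $X\in\TVar$. As in the proofs of Rules (i), (ii), (ix) and (x), we first assume without loss of generality that $X = x\cat Y$ with $x\in\WVar$, $Y\in\TVar$. This uses the $(\Sub)$ rule read upwards: the root $X:\ldots$ is obtained from a sequent over $x\cat Y$ by substituting for the fresh variable. Semantically, any non-empty trace $\lm(X)$ decomposes as a first world followed by a possibly trivial remainder, so no validity is lost.

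For (xi.a), we apply rule $(p)$ to the left formula $x\cat Y : p$, turning it into $x : p$. We then apply rule $(\ofst)$ to the right formula $x\cat Y : \ofst p$, turning it into $x : p$. Since both rules are double-lined, each may be used on either side of the sequent. This leaves $x : p\Rightarrow x : p$, which closes by $(\textit{ax})$.

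For (xi.b), the same two rules are used in the opposite positions: $(\ofst)$ on the left and $(p)$ on the right. This again reaches $x:p\Rightarrow x:p$, closed by $(\textit{ax})$. Combining the two directions by $(\wedge R)$ and $(\to R)$ on the biconditional gives the required derivation of $\cdot\Rightarrow X : p\leftrightarrow\ofst p$.

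The derivation is finite and contains no buds, so the cyclic condition of Definition~\ref{def:Cyclic Proofs of GiiiPPL} holds vacuously. The only delicate point is the initial decomposition $X = x\cat Y$. If $\lm(Y)$ is required to be non-empty, a length-one trace is not covered. In that case we also treat the instance $X = x$ separately: rule $(p)$ and rule $(\ofst)$ apply with $\sigma=\varepsilon$, using $x\cat\varepsilon = x$, so the same two-step closure by $(\textit{ax})$ works there too.
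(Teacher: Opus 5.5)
Your derivation is the same as the paper's: after the ``without loss of generality $X = x\cat Y$'' step, each of (xi.a) and (xi.b) closes by one application of $(p)$ and one of $(\ofst)$ on opposite sides, followed by $(\textit{ax})$. The order of the two rule applications is immaterial, and your separate treatment of $X = x$ is a harmless addition. One caveat: your justification of the decomposition via $(\Sub)$ runs the wrong way. Read upwards, $(\Sub)$ passes to a more general sequent of which the current one is an instance, so it cannot specialise $X$ to $x\cat Y$; like the paper, you are in effect relying on an unformalised case split on the shape of $X$.
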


\begin{proof}
Let $X\in \TVar$. 
We need to prove
\begin{enumerate}
\item[$(xi.a)$] $X : p\Rightarrow X : \ofst p$.
\item[$(xi.b)$] $X : \ofst p\Rightarrow X : p$.
\end{enumerate}

\textbf{Case for proving $(xi.a)$}: 
Without loss of generality, let $X = x\cat Y$. By the rule $(\ofst)$ on $x\cat Y : \ofst p$ on the right:
$$
x\cat Y : p\Rightarrow x : p
$$
By the rule $(p)$ on $x\cat Y : p$ on the left:
$$
x : p\Rightarrow x : p
$$
which is immediately closed by $(\textit{ax})$.

\textbf{Case for proving $(xi.b)$}: 
Without loss of generality, let $X = x\cat Y$. By the rule $(\ofst)$ on $x\cat Y : \ofst p$ on the left:
$$
x : p\Rightarrow x\cat Y : p
$$
By the rule $(p)$ on $x\cat Y : p$ on the right:
$$
x : p\Rightarrow x : p
$$
which is immediately closed by $(\textit{ax})$.
\end{proof}

\begin{prop}[Rule (xii)]
Rule (xii): 
$\ofst \phi\wedge \la\alpha\ra\psi\leftrightarrow \la\alpha\ra(\ofst\phi\wedge \psi)$
is derived in \GiiiPPLcyc. 
\end{prop}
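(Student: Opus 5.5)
We derive both directions of Rule (xii) in \GiiiPPLcyc, unfolding the diamond as $\la\alpha\ra\pi \equiv \neg[\alpha]\neg\pi$. Fix $X\in\TVar$ and, without loss of generality, write $X = x\cat Y$. The two sequents to prove are
\begin{enumerate}
\item[(xii.a)] $x\cat Y : \ofst\phi\wedge\neg[\alpha]\neg\psi\Rightarrow x\cat Y : \neg[\alpha]\neg(\ofst\phi\wedge\psi)$;
\item[(xii.b)] $x\cat Y : \neg[\alpha]\neg(\ofst\phi\wedge\psi)\Rightarrow x\cat Y : \ofst\phi\wedge\neg[\alpha]\neg\psi$.
\end{enumerate}
After $(\wedge L)$, $(\neg L)$, $(\neg R)$ and $(\wedge R)$, these reduce to three sequents:
\begin{center}
$x\cat Y:\ofst\phi,\ x\cat Y:[\alpha]\neg(\ofst\phi\wedge\psi)\Rightarrow x\cat Y:[\alpha]\neg\psi$,
\end{center}
\begin{center}
$x\cat Y:[\alpha]\neg\psi\Rightarrow x\cat Y:[\alpha]\neg(\ofst\phi\wedge\psi)$,
\end{center}
\begin{center}
$\cdot\Rightarrow x\cat Y:\ofst\phi,\ x\cat Y:[\alpha]\neg(\ofst\phi\wedge\psi)$.
\end{center}

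The second sequent is routine. Apply $(\Nec)$ (Lemma~\ref{lemma:Necessitation}) to reduce it to $X:\neg\psi\Rightarrow X:\neg(\ofst\phi\wedge\psi)$, which closes with the propositional rules.

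The other two sequents genuinely need the invariance of the first state: $\ofst\phi$ holds of every extension of the current trace, because every trace of $\alpha$ is appended after the last world of $X$. The labelled setting makes this syntactic. Every rule for programs ($([a]R)$, $([\phi?]R)$, $([\seq])$, $([\cho])$, $([\lup])$) only appends to the \emph{right} of a label, so the head world $x$ is preserved. I would isolate this as an auxiliary generalised necessitation lemma: for any sequent $x\cat Y:\ofst\phi,\ x\cat Y:[\alpha]\Gamma\Rightarrow x\cat Y:[\alpha]\Delta$, there is an ``almost cyclic'' derivation whose open leaves have the form $x\cat Y:\ofst\phi,\ x\cat Y:\Gamma\Rightarrow x\cat Y:\Delta$, and symmetrically with $x\cat Y:\ofst\phi$ on the right. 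I would prove it exactly as Lemma~\ref{lemma:Necessitation}, by induction on $\alpha$, carrying the side formula $x\cat Y:\ofst\phi$ along.

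The only delicate point in that induction is the $(\Sub)$ steps that rename $Y\cat x\cat y$ back to $Y\cat x$. Before substituting, I would first apply $(\ofst)$ to turn $x\cat Y:\ofst\phi$ into $x:\phi$. Since $x:\phi$ mentions only the head variable $x$, which the substitutions never touch, and since the substitutions act on the tail trace variable, $x:\phi$ is stable under them. I would then re-introduce $x\cat Y:\ofst\phi$ with $(\ofst)$ backwards at the leaf. The catch is that $(\ofst)$ is stated for a label of the form $x\cat\sigma$. When the label is a single world variable, I would use the instance with $\sigma=\varepsilon$, relying on $x\cat\varepsilon=x$.

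With the lemma available, both remaining sequents reduce to propositional leaves at the end of the program, and these close by $(\wedge R)$, $(\neg)$ rules and $(ax)$.
\begin{itemize}
\item For the first sequent, the leaf is $x\cat Y:\ofst\phi,\ x\cat Y:\neg(\ofst\phi\wedge\psi)\Rightarrow x\cat Y:\neg\psi$.
\item For the third sequent, first apply $(\ofst)$ to the right-hand $x\cat Y:\ofst\phi$ to obtain $x:\phi$. Then run the lemma on $[\alpha]$ with $x:\phi$ on the right. The leaf is $\cdot\Rightarrow x:\phi,\ x\cat Y:\neg(\ofst\phi\wedge\psi)$, which closes after $(\neg R)$, $(\wedge L)$, $(\ofst)$ and $(ax)$.
\end{itemize}

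I expect the main obstacle to be the $\alpha^*$ case of the auxiliary lemma, where the cycle must be shown progressive with the extra side formula present. I would follow the $\alpha_1^*$ case of Lemma~\ref{lemma:Necessitation} verbatim and rely on the induction hypothesis for $\alpha_1$ to supply a progressive $([a]R)$ or $([\phi?]R)$ step on the right-hand trace between companion and bud.
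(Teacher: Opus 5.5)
Your proof is correct at the paper's level of rigour, but it takes a different route from the paper's.

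\textbf{The paper's route.} The paper proves (xii.a) and (xii.b) by structural induction on $\alpha$ directly in diamond form. It uses derived rules such as $(\la a\ra L)$, $(\la a\ra R)$ and $(\la\lup\ra)$. In the compound cases it instantiates the induction hypothesis at formulas like $\la\alpha_2\ra\psi$ or $\la\alpha_1^\lup\ra\psi$, glues the pieces together with $(\textit{cut})$ and the unmodified Necessitation Lemma, and closes the iteration cases by back-links.

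\textbf{Your route.} You dualise everything to boxes and discharge the one sequent that plain $(\Nec)$ handles. You then replace the whole induction by a single contextual necessitation lemma that carries the head-world formula $x:\phi$; its proof copies the induction of the Necessitation Lemma.

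\textbf{What your route buys.}
\begin{itemize}
\item It needs no derived diamond rules and no $(\textit{cut})$ on modal formulas.
\item It isolates the actual reason (xii) holds: program rules only append on the right, and the renaming $(\Sub)$ steps never touch the head variable.
\item Its only cycles are Necessitation cycles, so the progressive trace along the right-hand boxed formula is inherited verbatim.
\end{itemize}
By contrast, in the paper's iteration case for (xii.a) the back-link runs through the right premise of a $(\textit{cut})$. The progressive $([a]R)$ steps produced by the inner Necessitation act on the cut formula. Because $(\textit{cut})$ has no target pairs, no derivation trace starting at the companion can reach that formula, so it is not evident that the cyclic condition holds there. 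Your route sidesteps this. The price is that you must re-prove a variant of the Necessitation Lemma rather than reuse it as a black box.

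\textbf{One small repair.} $(\ofst)$ cannot literally be applied ``backwards''. You can simulate that step with $(\textit{cut})$ on $x\cat Y:\ofst\phi$ together with $(\ofst)$, $(\textit{ax})$ and $(\textit{wk})$. Simpler still, carry $x:\phi$ as the side formula throughout. Companions and buds then match syntactically, and your final leaves close by $(\ofst)$ on the right and $(\textit{ax})$.
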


\begin{proof}

Let $X\in\TVar$.
We prove both following directions by structural induction on $\alpha$. Throughout, recall that $\la\alpha\ra\theta\dddef \neg[\alpha]\neg\theta$.
\begin{enumerate}
\item[$(xii.a)$] $X : \ofst \phi\wedge \la\alpha\ra\psi\Rightarrow X : \la\alpha\ra(\ofst\phi\wedge \psi)$.
\item[$(xii.b)$] $X : \la\alpha\ra(\ofst\phi\wedge \psi)\Rightarrow X : \ofst \phi\wedge \la\alpha\ra\psi$.
\end{enumerate}

\textbf{Case for proving $(xii.a)$}: 
By $(\wedge L)$ on the left, we need to show:
$$
X : \ofst\phi, X : \la\alpha\ra\psi\Rightarrow X : \la\alpha\ra(\ofst\phi\wedge\psi)
$$

\textbf{Case $\alpha = a$ (atomic action).} Let $X = Y\cat x$. We need:
$$
(xii.a.a)\ \ Y\cat x : \ofst\phi, Y\cat x : \la a\ra\psi\Rightarrow Y\cat x : \la a\ra(\ofst\phi\wedge\psi)
$$
By $(\la a\ra R)$ and $(\la a\ra L)$, it suffices to show, for a new $y$:
$$
Y\cat x : \ofst\phi, Y\cat x\cat y : \psi\Rightarrow Y\cat x\cat y : \ofst\phi\wedge\psi
$$
By $(\wedge R)$ on the right, we need $Y\cat x\cat y : \ofst\phi$ and $Y\cat x\cat y : \psi$. The latter is on the left. For the former, let $Y=z\cat Z$, by $(\ofst)$ on $Y\cat x : \ofst\phi$ on the left we get $z : \phi$, and by $(\ofst)$ on $Y\cat x\cat y : \ofst\phi$ on the right we need $z : \phi$, which is available. This case is closed.

\textbf{Case $\alpha = \chi?$ (test).} Let $X = Y\cat x$. We need:
$$
(xii.a.\chi?)\ \ Y\cat x : \ofst\phi, Y\cat x : \la\chi?\ra\psi\Rightarrow Y\cat x : \la\chi?\ra(\ofst\phi\wedge\psi)
$$
By $(\la\phi?\ra L)$ on the left, we get $x\cat x : \chi$ and $Y\cat x\cat x : \psi$. By $(\la\phi?\ra R)$ on the right, we need:
\begin{enumerate}
    \item[(xii.a.$\chi?$.1)] $Y\cat x : \ofst\phi, x\cat x : \chi, Y\cat x\cat x : \psi\Rightarrow x\cat x : \chi$, which is closed by $(\textit{ax})$.
    \item[(xii.a.$\chi?$.2)] $Y\cat x : \ofst\phi, x\cat x : \chi, Y\cat x\cat x : \psi\Rightarrow Y\cat x\cat x : (\ofst\phi\wedge\psi)$. By $(\wedge R)$, we need $Y\cat x\cat x : \ofst\phi$ and $Y\cat x\cat x : \psi$. The latter is on the left. For the former, let $Y=z\cat Z$, by $(\ofst)$ on $Y\cat x : \ofst\phi$ on the left we get $z : \phi$, and by $(\ofst)$ on $Y\cat x\cat x : \ofst\phi$ on the right we need $z : \phi$, which is available. This case is closed.
\end{enumerate}

\textbf{Case $\alpha = \alpha_1\seq\alpha_2$ (sequential composition).} 
We need to prove:
$$
(xii.a.;)\ \ X : \ofst\phi, X : \la\alpha_1\seq \alpha_2\ra\psi\Rightarrow X : \la\alpha_1\seq \alpha_2\ra(\ofst\phi\wedge\psi)
$$
By $\la\seq\ra$, we need:
$$
(xii.a.;.1)\ \ X : \ofst\phi, X : \la\alpha_1\ra\la\alpha_2\ra\psi\Rightarrow X : \la\alpha_1\ra\la\alpha_2\ra(\ofst\phi\wedge\psi)
$$
By the induction hypothesis for $\alpha_1$:
$$
(xii.;.11)\ \ X : \ofst\phi, X : \la\alpha_1\ra\la\alpha_2\ra\psi\Rightarrow X : \la\alpha_1\ra(\ofst\phi\wedge\la\alpha_2\ra\psi)
$$
is provable. 
So from (xii.;.1), by applying $(\textit{cut})$, we need to prove (xii.;.11) and :
$$
(xii.;.12)\ \ X : \la\alpha_1\ra(\ofst\phi\wedge\la\alpha_2\ra\psi)\Rightarrow X : \la\alpha_1\ra\la\alpha_2\ra(\ofst \phi\wedge \psi)
$$
By the Necessitation Lemma we then need: 
$$
X : (\ofst\phi\wedge\la\alpha_2\ra\psi)\Rightarrow X : \la\alpha_2\ra(\ofst \phi\wedge \psi)
$$
But this is straightforward by induction hypothesis. 

\textbf{Case $\alpha = \alpha_1\cho\alpha_2$ (choice).} 
We need to prove 
$$
(xii.a.\cup)\ \ X : \ofst \phi, X : \la\alpha_1\ra\psi \vee \la\alpha_2\ra\psi\Rightarrow X : \la\alpha_1\ra(\ofst\phi\wedge\psi) \vee \la\alpha_2\ra(\ofst\phi\wedge\psi)
$$
By $(\vee L)$ and $(\vee R)$, we need to prove both: 
\begin{enumerate}
\item[(xii.a.$\cup$.1)] $X : \ofst \phi, X : \la\alpha_1\ra\psi \Rightarrow X : \la\alpha_1\ra(\ofst\phi\wedge\psi), X : \la\alpha_2\ra(\ofst\phi\wedge\psi)$
\item[(xii.a.$\cup$.1)] $X : \ofst \ phi, X : \la\alpha_2\ra\psi \Rightarrow X : \la\alpha_1\ra(\ofst\phi\wedge\psi), X : \la\alpha_2\ra(\ofst\phi\wedge\psi)$
\end{enumerate}

Both are direct by weakening and induction hypothesis. 
For example, from (xii.a.$\cup$.1), by $(wk)$, we need:
$$
X : \ofst \phi, X : \la\alpha_1\ra\psi \Rightarrow X : \la\alpha_1\ra(\ofst\phi\wedge\psi), 
$$
which is provable by induction hypothesis. 

\textbf{Case $\alpha = \alpha_1^*$ (iteration).} 
We need to prove:
$$
(xii.a.*)\ \ X : \ofst\phi, X : \la\alpha_1^*\ra\psi\Rightarrow X : \la\alpha_1^*\ra(\ofst\phi\wedge\psi)
$$

By $\la\lup\ra$, $X : \la\alpha_1^*\ra\psi$ gives $X : \la\true?\cho\alpha_1\seq\alpha_1^*\ra\psi$, i.e., $X : \la\true?\ra\psi$ or $X : \la\alpha_1\ra\la\alpha_1^*\ra\psi$. 
So, we split (xii.a.$*$) into:
\begin{enumerate}
\item[(xii.a.$*$.1)] $X : \la\true?\ra\psi\Rightarrow X:\la\alpha^*_1\ra(\ofst\phi\wedge \psi)$
\item[(xii.a.$*$.2)] $X : \la\alpha_1\ra\la\alpha_1^*\ra\psi\Rightarrow X : \la\alpha^*_1\ra(\ofst\phi\wedge \psi)$
\end{enumerate}

Case (xii.a.$*$.1): By $\la\lup\ra$ on the right, $X : \la\alpha_1^*\ra(\ofst\phi\wedge\psi)$ gives 
$X : \la\true?\cup \alpha_1\seq \alpha_1^*\ra(\ofst\phi\wedge\psi)$. 
It suffices to show 
$$
X : \la\true?\ra\psi\Rightarrow X:\la\true?\ra(\ofst\phi\wedge \psi). 
$$
This is straightforward. Let $X = Y\cat x$, then by rule $(\chi?)$ on the right we have 
$$
X : \la\true?\ra\psi\Rightarrow x \cat x : \true, Y\cat x\cat x:(\ofst\phi\wedge \psi). 
$$
This closes since we have $x\cat x : \true$ on the right. 

Case (xii.a.$*$.2): 
By $\la\lup\ra$ on the right, it suffices to show 
$$(xii.a.*.21)\ \ X : \la\alpha_1\ra\la\alpha^*_1\ra\psi\Rightarrow X : \la\alpha_1\ra\la\alpha_1^*\ra(\ofst\phi\wedge\psi). $$

By the induction hypothesis for $\alpha_1$ (with $\la\alpha_1^*\ra\psi$ in place of $\psi$), we know that
$$
(xii.a.*.211) \ \ X : \ofst\phi, X : \la\alpha_1\ra\la\alpha_1^*\ra\psi\Rightarrow X : \la\alpha_1\ra(\ofst\phi\wedge\la\alpha_1^*\ra\psi)
$$ is provable. 
So, from (xii.a.$*$.21), by $(cut)$, 
we need to prove (xii.a.*.211) and:
$$
(xii.a.*.212)\ \ X : \la\alpha_1\ra(\ofst\phi\wedge\la\alpha_1^*\ra\psi)\Rightarrow X : \la\alpha_1\ra\la\alpha_1^*\ra(\ofst\phi\wedge\psi)
$$
From (xii.a.$*$.212), by Necessitation Lemma, we obtain:
$$
(xii.a.*.2121)\ \ X : (\ofst\phi\wedge\la\alpha_1^*\ra\psi)\Rightarrow X : \la\alpha_1^*\ra(\ofst\phi\wedge\psi)
$$
which is exactly (xii.a.$*$) (after applying rule $(\wedge L)$).
And it is a back-link since from (xii.a.$*$) to (xii.a.*.2121), Necessitation Lemma has been applied.

\textbf{Case for proving $(xii.b)$}: 

\textbf{Case $\alpha = a$ (atomic action).} Let $X = Y\cat x$. We need:
$$
(xii.b.a)\ \ Y\cat x : \la a\ra(\ofst\phi\wedge\psi)\Rightarrow Y\cat x : \ofst\phi\wedge \la a\ra\psi
$$
By $(\wedge R)$ on the right, we split into:
\begin{enumerate}
    \item[(xii.b.a.1)] $Y\cat x : \la a\ra(\ofst\phi\wedge\psi)\Rightarrow Y\cat x : \ofst\phi$. 
    
    By $(\la\alpha\ra) L$, it suffices to show $x\R_a y, Y\cat x\cat y : (\ofst\phi\wedge\psi)\Rightarrow Y\cat x : \ofst\phi$ with a new $y$. 
    By $(\wedge L)$, we have $Y\cat x\cat y : \ofst\phi$. Let $Y = z\cat Z$. By $(\ofst)$ on $Y\cat x\cat y : \ofst\phi$ on the left, we get $z : \phi$. By $(\ofst)$ on $Y\cat x : \ofst\phi$ on the right, we need $z : \phi$, which is available. This case is closed.
    
    \item[(xii.b.a.2)] $Y\cat x : \la a\ra(\ofst\phi\wedge\psi)\Rightarrow Y\cat x : \la a\ra\psi$. 
    
    This follows from the Necessitation Lemma applied to $X : \ofst\phi\wedge\psi\Rightarrow X : \psi$ (trivially derivable by $(\wedge L)$ and $(\textit{ax})$).
\end{enumerate}

\textbf{Case $\alpha = \chi?$ (test).} Let $X = Y\cat x$. We need:
$$
(xii.b.\chi?)\ \ Y\cat x : \la\chi?\ra(\ofst\phi\wedge\psi)\Rightarrow Y\cat x : \ofst\phi\wedge \la\chi?\ra\psi
$$
By $(\la\phi?\ra L)$ on the left, we get $x\cat x : \chi$ and $Y\cat x\cat x : \ofst\phi\wedge\psi$. By $(\wedge R)$ on the right:
\begin{enumerate}
    \item[(xii.b.$\chi?$.1)] $x\cat x : \chi, Y\cat x\cat x : \ofst\phi\wedge\psi\Rightarrow Y\cat x : \ofst\phi$. 
    
    By $(\wedge L)$, we have $Y\cat x\cat x : \ofst\phi$. Let $Y = z\cat Z$. By $(\ofst)$ on $Y\cat x\cat x : \ofst\phi$ on the left, we get $z : \phi$. By $(\ofst)$ on $Y\cat x : \ofst\phi$ on the right, we need $z : \phi$, which is available. This case is closed.
    
    \item[(xii.b.$\chi?$.2)] $x\cat x : \chi, Y\cat x\cat x : \ofst\phi\wedge\psi\Rightarrow Y\cat x : \la\chi?\ra\psi$.
    
    By $(\la\phi?\ra R)$ on the right, we need $x\cat x : \chi$ (available on the left) and $Y\cat x\cat x : \psi$ (obtained from $Y\cat x\cat x : \ofst\phi\wedge\psi$ by $(\wedge L)$ and $(\textit{ax})$). This case is closed.
\end{enumerate}

\textbf{Case $\alpha = \alpha_1\seq\alpha_2$ (sequential composition).} We need:
$$
(xii.b.;)\ \ X : \la\alpha_1\seq\alpha_2\ra(\ofst\phi\wedge\psi)\Rightarrow X : \ofst\phi\wedge \la\alpha_1\seq\alpha_2\ra\psi
$$
By $\la\seq\ra$, this becomes:
$$
X : \la\alpha_1\ra\la\alpha_2\ra(\ofst\phi\wedge\psi)\Rightarrow X : \ofst\phi\wedge \la\alpha_1\ra\la\alpha_2\ra\psi
$$
By the induction hypothesis for $\alpha_2$, $X : \la\alpha_2\ra(\ofst\phi\wedge\psi)\Rightarrow X : \ofst\phi\wedge\la\alpha_2\ra\psi$ is provable. By the Necessitation Lemma:
$$
(xii.b.;.1)\ \ X : \la\alpha_1\ra\la\alpha_2\ra(\ofst\phi\wedge\psi)\Rightarrow X : \la\alpha_1\ra(\ofst\phi\wedge\la\alpha_2\ra\psi)
$$
is provable.
By the induction hypothesis for $\alpha_1$ (with $\la\alpha_2\ra\psi$ in place of $\psi$):
$$
(xii.b.;.2)\ \ X : \la\alpha_1\ra(\ofst\phi\wedge\la\alpha_2\ra\psi)\Rightarrow X : \ofst\phi\wedge\la\alpha_1\ra\la\alpha_2\ra\psi
$$
is provable. 
The result follows by $(\textit{cut})$ on (xii.b.;.1) and (xii.b.;.2).

\textbf{Case $\alpha = \alpha_1\cho\alpha_2$ (choice).} We need:
$$
(xii.b.\cup)\ \ X : \la\alpha_1\cho\alpha_2\ra(\ofst\phi\wedge\psi)\Rightarrow X : \ofst\phi\wedge\la\alpha_1\cho\alpha_2\ra\psi
$$
By $\la\cho\ra$, $X : \la\alpha_1\cho\alpha_2\ra(\ofst\phi\wedge\psi)$ gives $X : \la\alpha_1\ra(\ofst\phi\wedge\psi)$ or $X : \la\alpha_2\ra(\ofst\phi\wedge\psi)$. Consider the sub-case $X : \la\alpha_1\ra(\ofst\phi\wedge\psi)$: by the induction hypothesis for $\alpha_1$:
$$
X : \la\alpha_1\ra(\ofst\phi\wedge\psi)\Rightarrow X : \ofst\phi\wedge\la\alpha_1\ra\psi
$$
is provable. 
Since $\la\alpha_1\ra\psi$ implies $\la\alpha_1\cho\alpha_2\ra\psi$ (by $([\cho])$), we obtain $X : \ofst\phi\wedge\la\alpha_1\cho\alpha_2\ra\psi$ by $(\textit{cut})$. The sub-case for $\alpha_2$ is symmetric.

\textbf{Case $\alpha = \alpha_1^*$ (iteration).} We need:
$$
(xii.b.*)\ \ X : \la\alpha_1^*\ra(\ofst\phi\wedge\psi)\Rightarrow X : \ofst\phi\wedge\la\alpha_1^*\ra\psi
$$
By $(\la\lup\ra)$, $X : \la\alpha_1^*\ra(\ofst\phi\wedge\psi)$ gives $X : \la\true?\ra(\ofst\phi\wedge\psi)$ or $X : \la\alpha_1\ra\la\alpha_1^*\ra(\ofst\phi\wedge\psi)$. Consider two cases:
\begin{enumerate}
\item[(vii.b.$*$.1)] $X : \la\true?\ra(\ofst\phi\wedge\psi)\Rightarrow X : \ofst \phi\wedge \la\alpha^*_1\ra\psi$
\item[(vii.b.$*$.2)] $X : \la\alpha_1\ra\la\alpha_1^*\ra(\ofst\phi\wedge\psi)\Rightarrow X : \ofst \phi\wedge \la\alpha^*_1\ra\psi$
\end{enumerate}

For (vii.b.$*$.1), 
let $X = Y\cat x$, then by $(\la\phi?L\ra)$, $Y : \la\true?\ra(\ofst\phi\wedge\psi)$ gives 
$x \cat x : \true, X\cat x\cat x : (\ofst \phi\wedge \psi)$. 
By $(\wedge R)$, we need to prove: 
\begin{enumerate}
\item[(vii.b.$*$.11)] $Y\cat x\cat x : (\ofst \phi\wedge \psi)\Rightarrow Y\cat x : \ofst \phi$
\item[(vii.b.$*$.12)]  $Y\cat x\cat x : (\ofst \phi\wedge \psi)\Rightarrow Y\cat x : \la\alpha^*_1\ra\psi$
\end{enumerate}
(vii.b.$*$.11) can be obtained by applying $(\ofst)$ on both sides since they both contain $\ofst \phi$. 
(vii.b.$*$.12) can be obtained by applying $(\la\lup\ra)$ on the right side to match $x\cat x : \true$.

For (xii.b.$*$.2),  by the induction hypothesis for $\alpha_1$, we know that
$$
(xii.b.*.21)\ \ X : \la\alpha_1\ra(\ofst\phi\wedge\la\alpha_1^*\ra\psi)\Rightarrow X : \ofst\phi\wedge\la\alpha_1\ra\la\alpha_1^*\ra\psi
$$ is provable. 
So by $(cut)$ and that fact that $\la\alpha_1\ra\la\alpha^*_1\ra\psi$ actually implies $\la\alpha^*_1\ra\psi$, to prove (vii.b.$*$.2) it is sufficient to prove both (xii.b.$*$.21) (which has been proved by the induction hypothesis) and:
$$
(xii.b.*.22)\ \  X : \la\alpha_1\ra\la\alpha_1^*\ra(\ofst\phi\wedge\psi)\Rightarrow X : \la\alpha_1\ra(\ofst\phi\wedge\la\alpha_1^*\ra\psi). 
$$
By Necessitation Lemma, (xii.b.*.22) gives 
$$
(xii.b.*.221)\ \ X : \la\alpha_1^*\ra(\ofst\phi\wedge\psi)\Rightarrow X : (\ofst\phi\wedge\la\alpha_1^*\ra\psi), 
$$ 
which is exactly (xii.b.*). 
We thus obtain a cyclic derivation from (xii.b.$*$) to (xii.b.$*$.211) since during which the Necessitation Lemma is applied. 

\end{proof}

\begin{prop}[Rule (xiii)]
Rule (xiii): 
$\nxt \la\alpha\ra \phi\leftrightarrow \nxt \true \wedge \la\alpha\ra \bnxt \phi$ 
is derived in \GiiiPPLcyc. 
\end{prop}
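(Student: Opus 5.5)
Following the pattern of rule (xii), we fix $X\in\TVar$, unfold $\nxt\theta\dddef\false\osuf\theta$, and read $\bnxt\phi$ as the weak next $\neg\nxt\neg\phi$. We then prove both directions
\begin{enumerate}
\item[(xiii.a)] $X : \nxt\la\alpha\ra\phi\Rightarrow X : \nxt\true\wedge\la\alpha\ra\bnxt\phi$,
\item[(xiii.b)] $X : \nxt\true\wedge\la\alpha\ra\bnxt\phi\Rightarrow X : \nxt\la\alpha\ra\phi$
\end{enumerate}
by structural induction on $\alpha$. In each case we first write $X = x\cat Y$ and apply $(\osuf)$ to every $\false\osuf(\cdot)$ occurring at the head label $x\cat Y$. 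Each such application yields $Y:\theta\vee(\false\wedge\false\osuf\theta)$. The $\false$-disjunct closes immediately on the left, and on the right it is removed by $(wk)$, exactly as in rules (vi)--(viii). After this step, (xiii.a) becomes $Y:\la\alpha\ra\phi\Rightarrow Y:\true$ together with $Y:\la\alpha\ra\phi\Rightarrow x\cat Y:\la\alpha\ra\bnxt\phi$. The first is trivial. The second says that prefixing a world $x$ to the label commutes with the program modality, once $\bnxt$ is peeled on the right.

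For the atomic and test cases, write $Y = Z\cat z$, so that the program acts at the last world $z$. For $\alpha=a$, apply $(\la a\ra L)$ on the left, i.e.\ $(\neg L)$, $([a]R)$, $(\neg R)$, to obtain a fresh $y$ with $z\R_a y$ and $Z\cat z\cat y:\phi$. On the right, use $([a]L)$ with the same $y$ to reach $x\cat Z\cat z\cat y:\bnxt\phi$. Unfolding $\bnxt$ with $(\neg R)$ and $(\osuf)$ produces $Z\cat z\cat y:\neg\phi$ together with a $\false$-branch. That branch closes, and the remaining one closes by $(\neg L)$ and $(ax)$. Tests are handled in the same way with $([\phi?]R)$ and $([\phi?]L)$ on the stuttering label $z\cat z$. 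Direction (xiii.b) is symmetric: from $x\cat Y:\la\alpha\ra\bnxt\phi$ we strip the leading $x$ inside $\bnxt$ using $(\osuf)$, and the hypothesis $x\cat Y:\nxt\true$ is what guarantees that $Y$ is nonempty, so that the split $X=x\cat Y$ is legitimate.

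The compound cases reuse the schema of (xii). For $\alpha_1\seq\alpha_2$, apply $([\seq])$ and then the induction hypothesis for $\alpha_1$, instantiating $\phi$ by $\la\alpha_2\ra\phi$. Combine this by $(cut)$ with the Necessitation Lemma (Lemma~\ref{lemma:Necessitation}) applied to the induction hypothesis for $\alpha_2$ under $\la\alpha_1\ra$; we may need an auxiliary equivalence $\bnxt\la\alpha_2\ra\phi\leftrightarrow\la\alpha_2\ra\bnxt\phi$ under the assumption $\nxt\true$, which is the inner content of the induction hypothesis. Choice follows from $([\cho])$, $(\vee L)$, $(\vee R)$, $(wk)$ and the induction hypotheses. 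For $\alpha_1^*$, unfold with $([\lup])$. The $\true?$ branch is settled by the test case. The $\alpha_1\seq\alpha_1^*$ branch is reduced, via the induction hypothesis for $\alpha_1$, $(cut)$ and $(nec)$, back to the original sequent for $\alpha_1^*$, which gives a bud. This is progressive because the path passes through $(nec)$, whose derivation contains a progressive trace.

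We expect two obstacles. The first is the bookkeeping of the leading world $x$ under the program modality. The labelled rules act at the end of the label while $\osuf$ acts at the front, so we must check that $(\Sub)$ can rename $x\cat Y$ versus $Y$ so that the $(nec)$ leaves and the induction hypothesis instances match syntactically; this will likely require the generalised form of the induction hypothesis with an arbitrary prefix label. The second is the progressiveness of the $\alpha_1^*$ back-link, which we would argue exactly as in the iteration cases of (xii) and of rule (7).
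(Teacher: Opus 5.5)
Your proposal follows the paper's proof essentially step for step. It uses the same $(\osuf)$-reduction of (xiii.a) to $Y:\la\alpha\ra\phi\Rightarrow x\cat Y:\la\alpha\ra\bnxt\phi$ and of (xiii.b) to $x\cat Y:\la\alpha\ra\bnxt\phi\Rightarrow Y:\la\alpha\ra\phi$, the same induction on $\alpha$ with the same atomic and test cases, and the same combination of induction hypothesis, $(cut)$ and the Necessitation Lemma for $\alpha_1\seq\alpha_2$ and $\alpha_1^\lup$. Your auxiliary implication $\bnxt\la\alpha_2\ra\phi\to\la\alpha_2\ra\bnxt\phi$ is the paper's cut through $Y:\la\alpha_2\ra\phi$. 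The problem is the step you listed as your second obstacle. In direction (xiii.a) the $\alpha_1^\lup$ back-link is not progressive, and the paper's justification (that the Necessitation Lemma was applied on the way) fails there too.

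The bud lies above the second cut premise, $x\cat Y:\la\alpha_1\ra\bnxt\la\alpha_1^\lup\ra\phi\Rightarrow x\cat Y:\la\alpha_1\ra\la\alpha_1^\lup\ra\bnxt\phi$. Since $\la\alpha_1\ra=\neg[\alpha_1]\neg$, the lemma is really applied to $x\cat Y:[\alpha_1]\neg\la\alpha_1^\lup\ra\bnxt\phi\Rightarrow x\cat Y:[\alpha_1]\neg\bnxt\la\alpha_1^\lup\ra\phi$. All its progressive steps, $([a]R)$ and $([\phi?]R)$, lie on the right-hand box, i.e.\ on the lineage of the cut formula. No trace along the looping path can enter that formula, because $(cut)$ has no target pairs. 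What it becomes after the $(\osuf)$ step, the companion's left formula $Y:\la\alpha_1^\lup\ra\phi$, is unfolded and then weakened away in the second cut premise of the next lap, since its label is $Y$ rather than $x\cat Y$. The only formula surviving every lap is $x\cat Y:\la\alpha_1^\lup\ra\bnxt\phi$. Inside the lemma it is a box on the left, touched only by the non-progressive $([a]L)$ and $([\phi?]L)$, and the later $(\osuf)$ step acts on the cut-formula lineage. Hence the infinite path through this bud has no progressive trace, and the preproof violates Definition~\ref{def:Cyclic Proofs of GiiiPPL}. Passing through a sub-derivation that contains \emph{some} progressive trace is not the same as having \emph{one} infinite trace with infinitely many progressive steps. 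Rule (7) is not a model here, since its looping formula is a box on the right; the iteration case of (xii.a) has the same defect. In (xiii.b) your argument does work: the bud lies above the first cut premise, and the persisting formula $x\cat Y:\la\alpha_1\ra\la\alpha_1^\lup\ra\bnxt\phi$ on its left receives $([a]R)$ inside the lemma. To repair (xiii.a), avoid the cut. Turn the goal by $(\neg L)$, $(\neg R)$ into $x\cat Y:[\alpha]\neg\bnxt\phi\Rightarrow Y:[\alpha]\neg\phi$, and run the construction of Lemma~\ref{lemma:Necessitation} on the two boxes in lockstep. This is possible because both labels end in the same world, which is precisely your generalised-prefix bookkeeping. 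Then every progressive step lies on the lineage of $Y:\la\alpha\ra\phi$, and all back-links are progressive. The leaves $Y:\phi\Rightarrow x\cat Y:\bnxt\phi$ close as in your atomic case, and no $(cut)$ or induction hypothesis of (xiii) is needed.
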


\begin{proof}

Let $X\in \TVar$. 
Recall that $\nxt\theta\dddef \false\osuf\theta$ and $\bnxt\theta\dddef \neg\nxt\neg\theta = \neg(\false\osuf\neg\theta)$. We prove both following directions by structural induction on $\alpha$.
\begin{enumerate}
\item[$(xiii.a)$]  $X : \nxt\la\alpha\ra\phi\Rightarrow X : \nxt\true\wedge \la\alpha\ra\bnxt\phi$.
\item[$(xiii.b)$] $X : \nxt\true\wedge \la\alpha\ra\bnxt\phi\Rightarrow X : \nxt\la\alpha\ra\phi$.
\end{enumerate}

\textbf{Case for proving $(xiii.a)$}:
That is, $X : \false\osuf\la\alpha\ra\phi\Rightarrow X : (\false\osuf\true)\wedge \la\alpha\ra\neg(\false\osuf\neg\phi)$.

Without loss of generality, let $X = x\cat Y$. By $(\osuf)$ on $x\cat Y : \false\osuf\la\alpha\ra\phi$ on the left, since $\false\wedge(\cdot)$ is trivially false, we get:
$$
Y : \la\alpha\ra\phi\Rightarrow x\cat Y : (\false\osuf\true)\wedge \la\alpha\ra\neg(\false\osuf\neg\phi)
$$
By $(\wedge R)$ on the right, we split into:
\begin{enumerate}
    \item[(xiii.a.1)] $Y : \la\alpha\ra\phi\Rightarrow x\cat Y : \false\osuf\true$. 

    By $(\osuf)$ on $x\cat Y : \false\osuf\true$ on the right, since $\false\wedge(\cdot)$ is trivially false, we need $Y : \true$, which is always provable.
    
    \item[(xiii.a.2)] $Y : \la\alpha\ra\phi\Rightarrow x\cat Y : \la\alpha\ra\neg(\false\osuf\neg\phi)$
\end{enumerate}

For case (xiii.a.2), we prove by structural induction on $\alpha$.

\textbf{Case $\alpha = a$ (atomic action).} Let $Y = Z\cat y$. We need:
$$
Z\cat y : \la a\ra\phi\Rightarrow x\cat Z\cat y : \la a\ra\neg(\false\osuf\neg\phi)
$$
By $(\la a\ra L)$ on the left, we get $y\R_a z'$ and $Z\cat y\cat z' : \phi$. By $(\la a\ra R)$ on the right (with the same $z'$), we need $x\cat Z\cat y\cat z' : \neg(\false\osuf\neg\phi)$. By $(\neg R)$:
$$
Z\cat y\cat z' : \phi, x\cat Z\cat y\cat z' : \false\osuf\neg\phi\Rightarrow \cdot
$$
By $(\osuf)$ on $x\cat Z\cat y\cat z' : \false\osuf\neg\phi$ on the left, since $\false\wedge(\cdot)$ is trivially false, we get $Z\cat y\cat z' : \neg\phi$. Then $Z\cat y\cat z' : \phi$ and $Z\cat y\cat z' : \neg\phi$ gives a contradiction, closing by $(\neg L)$ and $(\textit{ax})$.

\textbf{Case $\alpha = \chi?$ (test).} Let $Y = Z\cat y$. We need:
$$
Z\cat y : \la\chi?\ra\phi\Rightarrow x\cat Z\cat y : \la\chi?\ra\neg(\false\osuf\neg\phi)
$$
By $(\la\phi?\ra L)$ on the left, we get $y\cat y : \chi$ and $Z\cat y\cat y : \phi$. By $(\la\phi?\ra R)$ on the right, we need:
\begin{enumerate}
    \item[(xiii.a.$\chi?$.1)] $y\cat y : \chi, Z\cat y\cat y : \phi\Rightarrow y\cat y : \chi$, closed by $(\textit{ax})$.
    \item[(xiii.a.$\chi?$.2)] $y\cat y : \chi, Z\cat y\cat y : \phi\Rightarrow x\cat Z\cat y\cat y : \neg(\false\osuf\neg\phi)$. By $(\neg R)$, we need $Z\cat y\cat y : \phi, x\cat Z\cat y\cat y : \false\osuf\neg\phi\Rightarrow \cdot$. By $(\osuf)$, since $\false\wedge(\cdot)$ is trivially false, we get $Z\cat y\cat y : \neg\phi$. Contradiction with $Z\cat y\cat y : \phi$. Closed.
\end{enumerate}

\textbf{Case $\alpha = \alpha_1\seq\alpha_2$ (sequential composition).} By $(\la\seq\ra)$:
$$
Y : \la\alpha_1\ra\la\alpha_2\ra\phi\Rightarrow x\cat Y : \la\alpha_1\ra\la\alpha_2\ra\neg(\false\osuf\neg\phi)
$$
By the induction hypothesis for $\alpha_1$ (with $\la\alpha_2\ra\phi$ in place of $\phi$):
$$
(xiii.a.;.1)\ \ Y : \la\alpha_1\ra\la\alpha_2\ra\phi\Rightarrow x\cat Y : \la\alpha_1\ra\neg(\false\osuf\neg\la\alpha_2\ra\phi)
$$
is provable. We also need:
$$
(xiii.a.;.2)\ \ x\cat Y : \la\alpha_1\ra\neg(\false\osuf\neg\la\alpha_2\ra\phi)\Rightarrow x\cat Y : \la\alpha_1\ra\la\alpha_2\ra\neg(\false\osuf\neg\phi)
$$
so that the result can follow by $(\textit{cut})$. 
By the Necessitation Lemma, it suffices to show:
$$
(xiii.a.;.21)\ \ x\cat Y : \neg(\false\osuf\neg\la\alpha_2\ra\phi)\Rightarrow x\cat Y : \la\alpha_2\ra\neg(\false\osuf\neg\phi)
$$
By $(cut)$, it is equivalent to show
\begin{enumerate}
\item[(xiii.a.;.211)] $x\cat Y : \neg(\false\osuf\neg\la\alpha_2\ra\phi)\Rightarrow Y : \la\alpha_2\ra\phi$.
\item[(xiii.a.;.212)] $Y : \la\alpha_2\ra\phi\Rightarrow x\cat Y : \la\alpha_2\ra\neg(\false\osuf\neg\phi)$.
\end{enumerate}
(xiii.a.;.211) is not hard to archieve by applying $(\neg R)$ and $(\osuf)$ on the left to get both $Y : \neg\la\alpha_2\ra\phi$ and $Y : \la\alpha_2\ra\phi$ on the left, which closes the branch. 

(xiii.a.;.212) is exactly provable by induction hypothesis on $\alpha_2$. 


\textbf{Case $\alpha = \alpha_1\cho\alpha_2$ (choice).} 
We need 
$$
Y : \la\alpha_1\cup \alpha_2\ra\phi\Rightarrow x\cat Y : \la\alpha_1\cup \alpha_2\ra\neg(\false\osuf\neg\phi).
$$
By $(\la\cho\ra)$, $Y : \la\alpha_1\cho\alpha_2\ra\phi$ gives $Y : \la\alpha_1\ra\phi$ or $Y : \la\alpha_2\ra\phi$. In the sub-case $Y : \la\alpha_1\ra\phi$: by the induction hypothesis for $\alpha_1$, we get 
$x\cat Y : \la\alpha_1\ra\neg(\false\osuf\neg\phi)$.  
Since $Y : \la\alpha_1\ra\theta$ implies $Y : \la\alpha_1\cho\alpha_2\ra\theta$ (by $(\la\cho\ra)$), we obtain $x\cat Y : \la\alpha_1\cho\alpha_2\ra\neg(\false\osuf\neg\phi)$. The sub-case for $\alpha_2$ is symmetric.

\textbf{Case $\alpha = \alpha_1^*$ (iteration).} 
We need to prove
$$
(xiii.a.*)\ \ Y : \la\alpha^*_1\ra\phi\Rightarrow x\cat Y : \la\alpha^*_1\ra\neg(\false\osuf\neg\phi).
$$
By $(\la\lup\ra)$, $Y : \la\alpha_1^*\ra\phi$ gives $Y : \la\true?\ra\phi$ or $Y : \la\alpha_1\ra\la\alpha_1^*\ra\phi$. We need
\begin{enumerate}
\item[(xiii.a.*.1)] $Y : \la\true?\ra\phi\Rightarrow x\cat Y : \la\alpha^*_1\ra\neg(\false\osuf\neg\phi)$. 
\item[(xiii.a.*.2)]  $Y : \la\alpha_1\ra\la\alpha_1^*\ra\phi\Rightarrow x\cat Y : \la\alpha^*_1\ra\neg(\false\osuf\neg\phi)$
\end{enumerate}

For (xiii.a.*.1): It gives $Z\cat y\cat y : \phi$ by letting $Y = Z\cat y$ and applying $\la\phi?\ra L$. By $(\la\lup\ra)$ on the right, it suffices to show $x\cat Y : \la\true?\ra\neg(\false\osuf\neg\phi)$, i.e., $x\cat Z\cat y\cat y : \neg(\false\osuf\neg\phi)$. By $(\neg R)$: 
$$Z\cat y\cat y : \phi, x\cat Z\cat y\cat y : \false\osuf\neg\phi\Rightarrow \cdot.$$ 
By $(\osuf)$, since $\false\wedge(\cdot)$ is trivially false, we get $Z\cat y\cat y : \neg\phi$. Contradiction with $Z\cat y\cat y : \phi$. Closed.

For (xiii.a.*.2): By $(\la\lup\ra)$ on the right, it suffices to show $x\cat Y : \la\alpha_1\ra\la\alpha_1^*\ra\neg(\false\osuf\neg\phi)$.

By the induction hypothesis for $\alpha_1$:
$$
(xiii.a.*.21)\ \ Y : \la\alpha_1\ra\la\alpha_1^*\ra\phi\Rightarrow x\cat Y : \la\alpha_1\ra\neg(\false\osuf\neg\la\alpha_1^*\ra\phi)
$$
is provable. We also need:
$$
(xiii.a.*.22)\ \ x\cat Y : \la\alpha_1\ra\neg(\false\osuf\neg\la\alpha_1^*\ra\phi)\Rightarrow x\cat Y : \la\alpha_1\ra\la\alpha_1^*\ra\neg(\false\osuf\neg\phi)
$$
in order to prove (xiii.a.*.2) by (cut). 
By the Necessitation Lemma, it suffices to show:
$$
x\cat Y : \neg(\false\osuf\neg\la\alpha_1^*\ra\phi)\Rightarrow x\cat Y : \la\alpha_1^*\ra\neg(\false\osuf\neg\phi)
$$
By $(\neg R)$ and $(\neg L)$: $\cdot\Rightarrow x\cat Y : \false\osuf\neg\la\alpha_1^*\ra\phi, x\cat Y : \la\alpha_1^*\ra\neg(\false\osuf\neg\phi)$. By $(\osuf)$ on $x\cat Y : \false\osuf\neg\la\alpha_1^*\ra\phi$, since $\false\wedge(\cdot)$ is trivially false, we need $Y : \neg\la\alpha_1^*\ra\phi$, i.e., by $(\neg R)$: $Y : \la\alpha_1^*\ra\phi\Rightarrow x\cat Y : \la\alpha_1^*\ra\neg(\false\osuf\neg\phi)$, which is exactly (xiii.a.*), forming a back-link. The derivation path is progressive since the Necessitation Lemma has been applied. 

\textbf{Case for proving $(xiii.b)$}: 
That is, $X : (\false\osuf\true)\wedge \la\alpha\ra\neg(\false\osuf\neg\phi)\Rightarrow X : \false\osuf\la\alpha\ra\phi$.

By $(\wedge L)$ on the left:
$$
X : \false\osuf\true, X : \la\alpha\ra\neg(\false\osuf\neg\phi)\Rightarrow X : \false\osuf\la\alpha\ra\phi
$$
Without loss of generality, let $X = x\cat Y$. By $(\osuf)$ on $x\cat Y : \false\osuf\true$ on the left, since $\false\wedge(\cdot)$ is trivially false, we get $Y : \true$ (trivially satisfied). By $(\osuf)$ on $x\cat Y : \false\osuf\la\alpha\ra\phi$ on the right, since $\false\wedge(\cdot)$ is trivially false, we need $Y : \la\alpha\ra\phi$. So it suffices to show:
$$
x\cat Y : \la\alpha\ra\neg(\false\osuf\neg\phi)\Rightarrow Y : \la\alpha\ra\phi
$$

We prove this by structural induction on $\alpha$.

\textbf{Case $\alpha = a$ (atomic action).} Let $Y = Z\cat y$. We need:
$$
x\cat Z\cat y : \la a\ra\neg(\false\osuf\neg\phi)\Rightarrow Z\cat y : \la a\ra\phi
$$
By $(\la a\ra L)$ on the left, we get $y\R_a z'$ and $x\cat Z\cat y\cat z' : \neg(\false\osuf\neg\phi)$. By $(\la a\ra R)$ on the right (with the same $z'$), we need $Z\cat y\cat z' : \phi$. By $(\neg L)$ on $x\cat Z\cat y\cat z' : \neg(\false\osuf\neg\phi)$, we need to show $\Rightarrow x\cat Z\cat y\cat z' : \false\osuf\neg\phi, Z\cat y\cat z' : \phi$. By $(\osuf)$ on $x\cat Z\cat y\cat z' : \false\osuf\neg\phi$, since $\false\wedge(\cdot)$ is trivially false, we need $Z\cat y\cat z' : \neg\phi$. So we need $\Rightarrow Z\cat y\cat z' : \neg\phi, Z\cat y\cat z' : \phi$, which closes by $(\neg R)$ and $(\textit{ax})$.

\textbf{Case $\alpha = \chi?$ (test).} Let $Y = Z\cat y$. We need:
$$
x\cat Z\cat y : \la\chi?\ra\neg(\false\osuf\neg\phi)\Rightarrow Z\cat y : \la\chi?\ra\phi
$$
By $(\la\phi?\ra L)$ on the left, we get $y\cat y : \chi$ and $x\cat Z\cat y\cat y : \neg(\false\osuf\neg\phi)$. By $(\la\phi?\ra R)$ on the right, we need:
\begin{enumerate}
    \item[(xiii.b.$\chi?$.1)] $y\cat y : \chi, x\cat Z\cat y\cat y : \neg(\false\osuf\neg\phi)\Rightarrow y\cat y : \chi$, closed by $(\textit{ax})$.
    \item[(xiii.b.$\chi?$.2)] $y\cat y : \chi, x\cat Z\cat y\cat y : \neg(\false\osuf\neg\phi)\Rightarrow Z\cat y\cat y : \phi$. By $(\neg L)$ on $x\cat Z\cat y\cat y : \neg(\false\osuf\neg\phi)$, then $(\osuf)$, this reduces to $\Rightarrow Z\cat y\cat y : \neg\phi, Z\cat y\cat y : \phi$, which closes by $(\neg R)$ and $(\textit{ax})$.
\end{enumerate}

\textbf{Case $\alpha = \alpha_1\seq\alpha_2$ (sequential composition).} By $(\la\seq\ra)$:
$$
(xiii.b.;.1)\ \ x\cat Y : \la\alpha_1\ra\la\alpha_2\ra\neg(\false\osuf\neg\phi)\Rightarrow Y : \la\alpha_1\ra\la\alpha_2\ra\phi
$$
By induction hypothesis, we have that 
$$
(xiii.b.;.11)\ \ x\cat Y : \la\alpha_1\ra\neg(\false\osuf\neg\la\alpha_2\ra\phi)\Rightarrow Y : \la\alpha_1\ra\la\alpha_2\ra\phi
$$
is provable. 
So by (cut), to prove (xiii.b.;.1), we still need
$$
x\cat Y : \la\alpha_1\ra\la\alpha_2\ra\neg(\false\osuf\neg\phi)\Rightarrow x\cat Y : \la\alpha_1\ra\neg(\false\osuf\neg\la\alpha_2\ra\phi),  
$$
which is
$$
(xiii.b.;.12)\ \ x\cat Y : \la\alpha_2\ra\neg(\false\osuf\neg\phi)\Rightarrow x\cat Y : \neg(\false\osuf\neg\la\alpha_2\ra\phi)
$$
by Necessitation Lemma. 
By (cut), (xiii.b.;.12) yields 
\begin{enumerate}
\item[(xiii.b.;.121)] $x\cat Y : \la\alpha_2\ra\neg(\false\osuf\neg\phi)\Rightarrow Y : \la\alpha_2\ra\phi$, which is provable by induction hypothesis. 
\item[(xiii.b.;.122)] $Y : \la\alpha_2\ra\phi\Rightarrow x\cat Y : \neg(\false\osuf\neg\la\alpha_2\ra\phi)$, which can be proved by further extend $x\cat Y : \neg(\false\osuf\neg\la\alpha_2\ra\phi)$ and obtain both $Y : \la\alpha_2\ra\phi$ and $Y : \neg\la\alpha_2\ra\phi$ on the left. 
\end{enumerate}

\textbf{Case $\alpha = \alpha_1\cho\alpha_2$ (choice).} 
We need to prove
$$
x\cat Y : \la\alpha_1\cup\alpha_2\ra\neg(\false\osuf\neg\phi)\Rightarrow Y : \la\alpha_1\cup\alpha_2\ra\phi. 
$$
By $(\la\cho\ra)$, $x\cat Y : \la\alpha_1\cho\alpha_2\ra\neg(\false\osuf\neg\phi)$ gives $x\cat Y : \la\alpha_1\ra\neg(\false\osuf\neg\phi)$ or $x\cat Y : \la\alpha_2\ra\neg(\false\osuf\neg\phi)$. In the sub-case for $\alpha_1$: by the induction hypothesis, we get $Y : \la\alpha_1\ra\phi$. Since $Y : \la\alpha_1\ra\phi$ implies $Y : \la\alpha_1\cho\alpha_2\ra\phi$ (by $(\la\cho\ra)$), we can obtain $Y : \la\alpha_1\cho\alpha_2\ra\phi$ by $(cut)$. The sub-case for $\alpha_2$ is symmetric.

\textbf{Case $\alpha = \alpha_1^*$ (iteration).} 
We need to prove
$$
(viii.b.*)\ \ x\cat Y : \la\alpha^*_1\ra\neg(\false\osuf\neg\phi)\Rightarrow Y : \la\alpha^*_1\ra\phi.
$$
By $(\la\lup\ra)$, $x\cat Y : \la\alpha_1^*\ra\neg(\false\osuf\neg\phi)$ gives $x\cat Y : \la\true?\ra\neg(\false\osuf\neg\phi)$ or $x\cat Y : \la\alpha_1\ra\la\alpha_1^*\ra\neg(\false\osuf\neg\phi)$. We need $Y : \la\alpha_1^*\ra\phi$. 
We need to prove:
\begin{enumerate}
\item[(viii.b.*.1)] $x\cat Y : \la\true?\ra\neg(\false\osuf\neg\phi)\Rightarrow Y:\la\alpha^*_1\ra\phi$
\item[(viii.b.*.2)] $x\cat Y : \la\alpha_1\ra\la\alpha_1^*\ra\neg(\false\osuf\neg\phi)\Rightarrow Y:\la\alpha^*_1\ra\phi$
\end{enumerate}

Sub-case (viii.b.*.1): Let $Y = Z\cat y$, by $(\la\phi?\ra L)$ we can replace $x\cat Y : \la\true?\ra\neg(\false\osuf\neg\phi)$ with $y\cat y : true$ and $x\cat Z\cat y\cat y : \neg(\false\osuf\neg\phi)$. 
While on the left side, by $\la *\ra$, we obtain $Z\cat y : \la\true?\ra\phi$ and $Z\cat y: \la\alpha_1\ra\la\alpha^*_1\ra\phi$. 
By $(\la\phi?\ra R)$ we can replace $Z\cat y : \la\true?\ra\phi$ with $Z\cat y\cat y : \phi$. 
The branch is then closed by extend $x\cat Z\cat y\cat y : \neg(\false\osuf\neg\phi)$ and obtain $Z\cat y\cat y: \neg\phi$ on the right. 

Sub-case (viii.b.*.2): 
By $(\la\lup\ra)$ on the right, it suffices to show $Y : \la\alpha_1\ra\la\alpha_1^*\ra\phi$. 
By $(cut)$, to prove (viii.b.*.2), it is equivalent to prove:
\begin{enumerate}
\item[(viii.b.*.21)] $x\cat Y : \la\alpha_1\ra\la\alpha_1^*\ra\neg(\false\osuf\neg\phi)\Rightarrow x\cat Y : \la\alpha_1\ra\neg(\false\osuf\neg\la\alpha_1^*\ra\phi)$
\item[(viii.b.*.22)] $x\cat Y : \la\alpha_1\ra\neg(\false\osuf\neg\la\alpha_1^*\ra\phi)\Rightarrow Y : \la\alpha_1\ra\la\alpha_1^*\ra\phi$
\end{enumerate}
note that $Y : \la\alpha_1\ra\la\alpha_1^*\ra\phi$ implies $Y : \la\alpha^*_1\ra\phi$ (by $(\la *\ra)$). 

(viii.b.*.22) is provable by inductive hypothesis for $\alpha_1$. 
From (viii.b.*.21), by Necessitation Lemma, we have
$$
x\cat Y : \la\alpha_1^*\ra\neg(\false\osuf\neg\phi)\Rightarrow x\cat Y : \neg(\false\osuf\neg\la\alpha_1^*\ra\phi)
$$
By $(\neg R)$: $x\cat Y : \la\alpha_1^*\ra\neg(\false\osuf\neg\phi), x\cat Y : \false\osuf\neg\la\alpha_1^*\ra\phi\Rightarrow \cdot$. By $(\osuf)$, since $\false\wedge(\cdot)$ is trivially false, we get 
$Y : \neg\la\alpha_1^*\ra\phi$. By $(\neg L)$: 
$$(viii.b.*.221) \ \ x\cat Y : \la\alpha_1^*\ra\neg(\false\osuf\neg\phi)\Rightarrow Y : \la\alpha_1^*\ra\phi,$$ 
which is exactly (viii.b.*), forming a back-link. The derivation path from (viii.b.*) to (viii.b.*.221) is progressive since the Necessitation Lemma has been applied.

\end{proof}

\begin{prop}[Rule (xiv)]
Rule (xiv): 
$\la\alpha\ra\true \rightarrow \fin$
is derived in \GiiiPPLcyc, 
where we define $\fin\dddef \true\osuf(\true\wedge\neg\nxt\true)$, i.e., $$\fin\dddef \true\osuf(\true\wedge\neg(\false\osuf\true)),$$ which means there exists a finite suffix where the trace ends. 
\end{prop}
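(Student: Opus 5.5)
The plan is to prove $X : \la\alpha\ra\true\Rightarrow X : \fin$ for $X\in\TVar$ without induction on $\alpha$. The right-hand side $\fin$ does not mention $\alpha$, so I would use the antecedent only to fix the shape of the label. Following the convention of the earlier propositions, write $X = x\cat Y$. Unfolding $\fin = \true\osuf(\true\wedge\neg(\false\osuf\true))$ on the right with $(\osuf)$ and $(\vee R)$ gives
$$
x\cat Y : \la\alpha\ra\true\Rightarrow Y : \true\wedge\neg(\false\osuf\true),\ Y : \true\wedge(\true\osuf(\true\wedge\neg(\false\osuf\true))).
$$
Applying $(\wedge R)$ to both right formulas, the branches asking for $Y:\true$ close immediately. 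The remaining branch has $Y : \neg(\false\osuf\true)$ and $Y : \fin$ on the right.

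The core step is a cyclic argument on $Y$. By $(\neg R)$, move $Y : \false\osuf\true$ to the left, and then split on the shape of $Y$.
\begin{enumerate}
\item If $Y$ is a single world variable $y$, the left formula $y:\false\osuf\true$ is closed by the axiom $(\osuf\ x)$.
\item Otherwise write $Y = y\cat Z$. Weaken away everything except $Y : \fin$ on the right and the left context, obtaining a sequent of the form $\Gamma\Rightarrow y\cat Z : \fin$.
\end{enumerate}
In case 2, the sequent $\Gamma\Rightarrow y\cat Z:\fin$ is an instance of the goal $\Gamma\Rightarrow x\cat Y:\fin$ after an application of $(\Sub)$ renaming $y\cat Z$ to $x\cat Y$, so I would close it by a back-link to that node. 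The path from the companion to the bud uses $(\osuf)$ on the derivation trace of the right-hand formula $x\cat Y:\fin \leadsto Y:\true\wedge\fin\leadsto Y:\fin$, so every infinite path through the cycle carries a progressive trace. By Definition~\ref{def:Cyclic Proofs of GiiiPPL} the preproof is then a cyclic proof, and Theorem~\ref{theo:Soundness of GiiiPPLcyc} is not needed here.

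The main obstacle I expect is the base case, where the trace has length one and no proper suffix exists. For $X = x$ the right-hand side $x:\fin$ cannot be unfolded, and $(\osuf\ x)$ closes only left-hand occurrences. The antecedent $x:\la\alpha\ra\true$ must supply the missing information. Concretely, I would first check against Definition~\ref{Semantics of PPL} whether the intended reading of $\la\alpha\ra\true$ forces the evaluated trace to have at least two worlds; the concatenation in the $[\alpha]$ clause suggests this when the appended $\alpha$-trace has length at least two. If so, I would prove an auxiliary sequent $x:\la\alpha\ra\true\Rightarrow x:\fin$ by structural induction on $\alpha$.
\begin{itemize}
\item For an atomic $a$, apply $(\neg L)$ and $([a]R)$ to the hidden box $[a]\neg\true$, which produces a label $x\cat y$ of length two on which the first argument above runs directly.
\item For $\seq$, $\cho$ and $^*$, use the rules $([\seq])$, $([\cho])$, $([\lup])$ together with the Necessitation Lemma~\ref{lemma:Necessitation}, in the same style as the proofs of rules (xii) and (xiii).
\end{itemize}
The test case $\alpha=\psi?$ is the delicate one. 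It appends the stuttering world $x$, so the label becomes $x\cat x$. I would need to confirm that this still yields a proper suffix, namely the final $x$, so that $\fin$ holds. If it does not, the statement has to be restricted to labels of length at least two.
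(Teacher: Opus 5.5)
For labels of the form $x\cat Y$, your cycle is essentially the paper's proof. The paper weakens away $X:\la\alpha\ra\true$ at the root, writes $X=x\cat Y$, and unfolds $\fin$ once by $(\osuf)$, $(\vee R)$ and $(\wedge R)$. It then weakens down to $\cdot\Rightarrow Y:\fin$ and back-links, with $(\osuf)$ as the progressive step. Your extra split of $Y$ into $y$ and $y\cat Z$ is not a listed rule, but it is a sound case analysis. For the back-link you must also weaken away the antecedent, because its label $x\cat y\cat Z$ no longer matches the renamed succedent.

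The genuine gap is the single-world case, and your proposed repair cannot close it: the auxiliary sequent $x:\la\alpha\ra\true\Rightarrow x:\fin$ is invalid. The concatenation in the $[\alpha]$ clause only builds the trace on which the body $\true$ is checked. It does not lengthen the trace at which $\la\alpha\ra\true$ itself, or $\fin$, is evaluated. So $\lm\models x:\la\alpha\ra\true$ holds as soon as $\alpha$ can run from $\wm(x)$; for $\alpha=\true?$ it always holds. By contrast, $x:\fin$ is never true, since a single world has no proper suffix, which is exactly what the axiom $(\osuf\ x)$ records. The failure is visible in your atomic case. There $(\neg L)$ and $([a]R)$ give $x\R_a y\Rightarrow x\cat y:\neg\true,\ x:\fin$. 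The longer label goes to $\neg\true$ under the modality, while $\fin$ stays on $x$. The test case fails the same way.

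Your fallback is therefore the right conclusion. With this definition of $\fin$, the claim is valid for labels $x\cat Y$, and your cycle derives it without using the antecedent. It is not valid for a bare trace variable $X$. The paper's proof does not handle this case either. After weakening, it reduces the goal to $\cdot\Rightarrow X:\fin$, which fails whenever $\lm(X)$ is a single world. Its ``without loss of generality $X=x\cat Y$'' is exactly where that case is silently dropped. If that derivation were legitimate, $(\Sub)$ with $X\mapsto x$ followed by $(\textit{cut})$ against the axiom $(\osuf\ x)$ would derive the empty sequent.
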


\begin{proof}
Let $X\in \TVar$. 
Since $X$ is finite, 
the premise $X : \la\alpha\ra\true$ is in fact not used. 
By $(wk)$ on the left, it suffices to prove the premise-free sequent
$$
(xiv.1)\ \ \cdot\Rightarrow X : \true\osuf(\true\wedge\neg(\false\osuf\true)).
$$
We give a cyclic derivation of (xiv.1).

Without loss of generality, let $X = x\cat Y$. By $(\osuf)$ on $x\cat Y : \true\osuf(\true\wedge\neg(\false\osuf\true))$ on the right, we obtain:
$$
\cdot\Rightarrow Y : (\true\wedge\neg(\false\osuf\true)) \vee (\true\wedge \true\osuf(\true\wedge\neg(\false\osuf\true)))
$$
By $(\vee R)$, this becomes:
$$
\cdot\Rightarrow Y : (\true\wedge\neg(\false\osuf\true)), Y : \true\wedge \true\osuf(\true\wedge\neg(\false\osuf\true))
$$
By $(\wedge R)$ on $Y : \true\wedge \true\osuf(\true\wedge\neg(\false\osuf\true))$ on the right, we split into two sub-cases:
\begin{enumerate}
    \item[(xiv.11)] $\cdot\Rightarrow Y : (\true\wedge\neg(\false\osuf\true)), Y : \true$, which is closed since we have $Y : \true$ on the right.
    \item[(xiv.12)] $\cdot\Rightarrow Y : (\true\wedge\neg(\false\osuf\true)), Y : \true\osuf(\true\wedge\neg(\false\osuf\true))$.
\end{enumerate}

For (xiv.12), by $(wk)$ we obtain:
$$
(xiv.121)\ \ \cdot\Rightarrow Y : \true\osuf(\true\wedge\neg(\false\osuf\true)), 
$$
which is exactly (xiv.12) with $Y$ in place of $X$, forming a back-link. The derivation path from (xiv.1) to (xiv.121) is progressive since $(\osuf)$ has been applied.

\ifx
\textbf{Situation A: $Y$ is a single label variable $w$.} We prove the first formula on the right, i.e., $w : (\true\wedge\neg(\false\osuf\true))$. By $(\wedge R)$:
\begin{enumerate}
    \item[(xiv.a'.2.A.1)] $\cdot\Rightarrow w : \true, \ldots$, trivially closed.
    \item[(xiv.a'.2.A.2)] $\cdot\Rightarrow w : \neg(\false\osuf\true), \ldots$. By $(\neg R)$, it suffices to show $w : \false\osuf\true\Rightarrow \cdot$. Since $w$ is a single label variable, by $(\osuf\ x)$ (as in the proof of (ix.a)), the rule $(\osuf)$ cannot further decompose $w$ on the left, and the sequent is closed.
\end{enumerate}

\textbf{Situation B: $Y$ has more than one element.} By $(wk)$ we drop $Y : (\true\wedge\neg(\false\osuf\true))$ from the right, reducing (xiv.a'.2) to
$$
\cdot\Rightarrow Y : \true\osuf(\true\wedge\neg(\false\osuf\true)),
$$
which is exactly (xiv.a') with $Y$ in place of $X$, forming a back-link. The derivation path from (xiv.a') to this sequent is progressive since $(\osuf)$ has been applied.
\fi

\end{proof}

\begin{prop}[Rule (xv)]
    \label{prop:Rule (xv)}
Rule (xv): 
$((\nxt \phi\to \phi)\osuf \phi) \rightarrow \nxt \phi$
is derived in \GiiiPPLcyc. 
\end{prop}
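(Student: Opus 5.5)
We will derive, for $X\in\TVar$, the sequent
$$
(xv.a)\ \ X : (\nxt\phi\to\phi)\osuf\phi\Rightarrow X : \nxt\phi ,
$$
where, as in the previous propositions, $\nxt\phi\dddef\false\osuf\phi$. Without loss of generality let $X = x\cat Y$. Unfolding both sides by $(\osuf)$, and discarding the disjunct $Y:\false\wedge\false\osuf\phi$ on the right as in the proofs of rules (vi)--(viii), we reach
$$
Y : \phi\vee\big((\nxt\phi\to\phi)\wedge(\nxt\phi\to\phi)\osuf\phi\big)\Rightarrow Y : \phi .
$$
Applying $(\vee L)$ splits this into two branches. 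The first branch, $Y:\phi\Rightarrow Y:\phi$, closes by $(ax)$. After $(\wedge L)$, the second branch becomes
$$
(xv.a2)\ \ Y : \nxt\phi\to\phi,\ Y : (\nxt\phi\to\phi)\osuf\phi\Rightarrow Y:\phi .
$$
We apply $(\to L)$ to $Y:\nxt\phi\to\phi$. The branch with $Y:\phi$ on the left closes by $(ax)$. The remaining branch is
$$
(xv.a21)\ \ Y : (\nxt\phi\to\phi)\osuf\phi\Rightarrow Y:\phi,\ Y:\nxt\phi .
$$
We then weaken away $Y:\phi$ on the right to get $Y:(\nxt\phi\to\phi)\osuf\phi\Rightarrow Y:\nxt\phi$. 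This is $(xv.a)$ with $Y$ in place of $X$, so it is a bud backlinking to $(xv.a)$ modulo the $(\Sub)$ renaming, as in the earlier propositions.

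The progressiveness check is the delicate point. The left-hand derivation trace follows $X:(\nxt\phi\to\phi)\osuf\phi$, passes through the $(\osuf)$-unfolded disjunction and $(\vee L)$, $(\wedge L)$, and ends at $Y:(\nxt\phi\to\phi)\osuf\phi$. It contains a progressive $(\osuf)$ step in every cycle, so every infinite path through the back-link carries a progressive trace, as Definition~\ref{def:Cyclic Proofs of GiiiPPL} requires. All other leaves are terminal, so the preproof is a cyclic proof.

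The main obstacle is semantic rather than syntactic. The cycle only ever weakens away $Y:\phi$, so the derivation must never need to close a branch at the end of the trace. Semantically, $(\nxt\phi\to\phi)\osuf\phi$ supplies a proper suffix satisfying $\phi$, with $\nxt\phi\to\phi$ holding in between, and a backward induction from that witness yields $\nxt\phi$. In the cyclic proof this induction is replaced by the infinite descent certified by Theorem~\ref{theo:Soundness of GiiiPPLcyc}, so no explicit base case is needed: a would-be infinite unfolding is excluded because the label strictly shrinks.

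If the bud instead turned out to require the intermediate conjunct explicitly, we would keep $Y:\nxt\phi\to\phi$ in the context. We would then close the step with a $(cut)$ on $Y:\nxt\phi$, using the same renaming pattern as in the proof of rule (viii).
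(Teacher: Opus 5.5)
Your proposal is correct and matches the paper's own derivation: unfold both sides with $(\osuf)$, drop the $\false\wedge\false\osuf\phi$ disjunct, split with $(\vee L)$, $(\wedge L)$ and $(\to L)$, weaken away $Y:\phi$ in the surviving branch, and backlink $Y:(\nxt\phi\to\phi)\osuf\phi\Rightarrow Y:\nxt\phi$ to the root. You identify the left-hand $(\osuf)$-formula explicitly as the progressive trace, which is slightly more precise than the paper's remark; the $(cut)$ fallback in your last paragraph is not needed.
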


\begin{proof}

Recall $\nxt\phi\dddef \false\osuf\phi$.
So we need to prove:
\begin{center}
$(xv.1)$ \ \ $X : ((\false\osuf\phi)\to \phi)\osuf\phi\Rightarrow X : \false\osuf\phi$.
\end{center}

Without loss of generality, let $X = x\cat Y$. By $(\osuf)$ on $x\cat Y : ((\false\osuf\phi)\to\phi)\osuf\phi$ on the left:
$$
Y : \phi\vee (((\false\osuf\phi)\to\phi)\wedge ((\false\osuf\phi)\to\phi)\osuf\phi)\Rightarrow x\cat Y : \false\osuf\phi
$$
By $(\osuf)$ on $x\cat Y : \false\osuf\phi$ on the right, since $\false\wedge(\cdot)$ is trivially false, we need $Y : \phi$. By $(\vee L)$ on the left, we split into 2 cases:

\begin{enumerate}
    \item[(xv.11)] $Y : \phi\Rightarrow Y : \phi$, which is closed by $(\textit{ax})$.
    \item[(xv.12)] $Y : ((\false\osuf\phi)\to\phi)\wedge ((\false\osuf\phi)\to\phi)\osuf\phi\Rightarrow Y : \phi$
\end{enumerate}

For case (xv.12), by $(\wedge L)$:
$$
Y : (\false\osuf\phi)\to\phi, Y : ((\false\osuf\phi)\to\phi)\osuf\phi\Rightarrow Y : \phi
$$
By $(\to L)$ on $Y : (\false\osuf\phi)\to\phi$, we split into:
\begin{enumerate}
    \item[(xv.121)] $Y : ((\false\osuf\phi)\to\phi)\osuf\phi\Rightarrow Y : \false\osuf\phi, Y : \phi$. 
    
    By weakening on $Y : \phi$, this reduces to:
    $$
    Y : ((\false\osuf\phi)\to\phi)\osuf\phi\Rightarrow Y : \false\osuf\phi
    $$
    which is exactly (xv.1) (with $Y$ in place of $X$), forming a back-link. The derivation path is progressive since $(\osuf)$ has been applied.
    
    \item[(xv.122)] $Y : \phi, Y : ((\false\osuf\phi)\to\phi)\osuf\phi\Rightarrow Y : \phi$, which is closed by $(\textit{ax})$.
\end{enumerate}

\end{proof}

\subsection{Completeness Proofs of \GiiiFOPLcyc}
\label{section:About the Completeness of GiiiFOPLcyc}

The proof of Lemma~\ref{lemma:completeness lemma 2} requires two derived rules $([\lup'])$ and $(\la\lup'\ra)$, whose unlabelled versions can be derived from the rules of \FODL~\cite{Harel00}.
Below we show that they are derivable in \GiiiFOPLcyc. 

\begin{lemma}
    \label{lemma:derived rules related to loops}
The following rules can be derived from \GiiiFOPLcyc: 
$$
\infer[^{([\lup'])}]
{X : \Gamma\Rightarrow X : [\alpha^\lup]\phi, X : \Delta}
{X : \Gamma\Rightarrow X : \varphi, X : \Delta
&
X : \varphi\Rightarrow X : [\alpha]\varphi
&
X : \varphi\Rightarrow X : \phi
}
$$

$$
\infer[^{(\la\lup'\ra)}]
{X : \Gamma\Rightarrow X : \la\alpha^\lup\ra\phi, X : \Delta}
{X : \Gamma\Rightarrow X : \exists n\ge 0. \varphi(n), X : \Delta
&
X : \varphi(n+1)\Rightarrow X : \la\alpha_1\ra\varphi(n)
&
X : \varphi(0)\Rightarrow X : \phi
}
$$
\end{lemma}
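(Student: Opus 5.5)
For $([\lup'])$ I plan to build a cyclic preproof in \GiiiFOPLcyc\ whose premises are the three given sequents. The invariant is carried in a companion node. From the conclusion $X:\Gamma\Rightarrow X:[\alpha^\lup]\phi, X:\Delta$ I first apply $(\textit{cut})$ on $X:\varphi$. The left premise is exactly the first hypothesis. On the right branch, $(\textit{wk})$ removes $X:\Gamma$ and $X:\Delta$, leaving the companion
$$C:\ X:\varphi\Rightarrow X:[\alpha^\lup]\phi.$$
I unfold $C$ with $([\lup])$, $([\cho])$ and $(\wedge R)$, which gives two branches: $X:\varphi\Rightarrow X:[\true?]\phi$ and $X:\varphi\Rightarrow X:[\alpha\seq\alpha^\lup]\phi$.

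The test branch should be routine. Write $X=Y\cat\upd$ and apply $([\phi?]R)$ to reach $X:\varphi\Rightarrow Y\cat\upd\cat\upd:\phi$. A cut with the third hypothesis reduces this to $X:\phi\Rightarrow Y\cat\upd\cat\upd:\phi$, which is exactly the stuttering equivalence of Lemma~\ref{lemma:lemma for proving rule (inv)}; that lemma carries over to the update labels because its proof uses only $(\osuf)$, $(\Sub)$ and the atomic rules. For the step branch I apply $([\seq])$ and then cut with the second hypothesis $X:\varphi\Rightarrow X:[\alpha]\varphi$. What remains is $X:[\alpha]\varphi\Rightarrow X:[\alpha][\alpha^\lup]\phi$, and the Necessitation Lemma (Lemma~\ref{lemma:Necessitation}, adapted to \GiiiFOPLcyc) reduces it to a leaf $X:\varphi\Rightarrow X:[\alpha^\lup]\phi$. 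This leaf is a bud with companion $C$.

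The main obstacle is the cyclic condition on this back-link. Progress must come from the Necessitation step, since that lemma supplies a progressive trace from $X:[\alpha]\,\cdot$ down to $X:\cdot$. The trouble is that the trace passes through a cut, and cut has no target pairs. I would handle this by following the trace on the right-hand formula: $X:[\alpha^\lup]\phi$ becomes $X:[\alpha][\alpha^\lup]\phi$, and then, by Lemma~\ref{lemma:Necessitation}, $X:[\alpha^\lup]\phi$. The right formula is untouched by the cut, so the trace survives it. It is progressive provided $\alpha$ contains at least one atomic assignment or test; this holds in every nondegenerate case, because the Necessitation derivation always applies $([u:=e])$ or $([\phi?]R)$ somewhere. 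If $\alpha$ were only nested stars of empty content, I would first normalise such loops away.

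For $(\la\lup'\ra)$ the plan is the convergence rule. First I cut on $X:\exists n\ge0.\varphi(n)$, using the first hypothesis. On the right branch I eliminate the existential with $\neg$ and $(\forall R)$, introducing a fresh variable $n$. The goal is then $X:n\ge0, X:\varphi(n)\Rightarrow X:\la\alpha^\lup\ra\phi$. I unfold with $(\la\lup\ra)$ and split cases on $n=0$ versus $n>0$ using arithmetic via $(\textit{ter})$ and $(\textit{cut})$.

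In the case $n=0$ I take the $\true?$ branch and close with the third hypothesis and Lemma~\ref{lemma:lemma for proving rule (inv)}. In the case $n>0$ I take the $\alpha\seq\alpha^\lup$ branch. I instantiate the second hypothesis at $n-1$ via $(\forall L)$ and $(\Sub)$, then apply Necessitation for $\la\alpha\ra$ to reach $X:\varphi(n-1), n-1\ge0\Rightarrow X:\la\alpha^\lup\ra\phi$, which back-links to the goal after renaming $n-1$ to $n$.

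Here the cyclic condition on the right formula is not the right measure. The back-link is justified instead by the decrease of $n$ along the cycle, which I would record as a progressive trace on the left formula $X:\varphi(n)$. This requires extending Lemma~\ref{lemma:infinite descent sequence} with a well-founded component on the value of $n$. I expect that extension to be the delicate point of the whole proof.
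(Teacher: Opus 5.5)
Your treatment of $([\lup'])$ is correct and is essentially the paper's argument. The paper cuts on $X:\varphi$ and then on $X:[\alpha^\lup]\varphi$. It closes $X:\varphi\Rightarrow X:[\alpha^\lup]\varphi$ by $(\textit{inv})$, a cycle closed through Necessitation, and $X:[\alpha^\lup]\varphi\Rightarrow X:[\alpha^\lup]\phi$ by Necessitation. You instead use the third premise at the $\true?$ leaf and run a single cycle on $X:\varphi\Rightarrow X:[\alpha^\lup]\phi$. Tracking the right-hand box through the cut is the right way to see progress. The degenerate case you worry about cannot occur: every root-to-leaf path of the Necessitation derivation applies $([u:=e])$ or $([\phi?]R)$ to the right-hand formula.

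The gap is in $(\la\lup'\ra)$. You rightly note that the back-link is not progressive in the sense of Definition~\ref{def:Progressive Step/Progressive Derivation Trace of GiiiPPLcyc}. But your remedy changes the system: what you build is not a cyclic proof of \GiiiFOPLcyc, so the rule is not shown derivable there, and the deferred extension is the whole difficulty. It is also more than a new component in Lemma~\ref{lemma:infinite descent sequence}:
\begin{enumerate}
\item No derivation trace connects $X:\varphi(n)$ in the companion to $X:\varphi(n-1)$ in the bud. The latter descends, through Necessitation, from the cut formula $X:\la\alpha\ra\varphi(n-1)$, and $(\textit{cut})$ has no target pairs.
\item $X:\varphi(n)$, like the side formula $n-1\ge 0$ you want to carry, must be weakened away before Necessitation applies.
\item Buds must be syntactically identical to their companions, while $(\Sub)$ only substitutes label variables. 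So instantiating the open premise at $n-1$ and ``renaming $n-1$ to $n$'' need back-links up to first-order substitution.
\end{enumerate}
You would have to redefine traces, back-links and progress, and then re-prove soundness.

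The paper's proof does not close this either. It reduces $(\la\lup'\ra)$ by cuts to $(\textit{con})$ and Necessitation, and its derivation of $(\textit{con})$ is your cycle. Progress is tacitly credited to the Necessitation step, as the paper does explicitly for the diamond cycles in rules (xii) and (xiii). For diamonds this fails:
\begin{itemize}
\item The progressive trace of Necessitation for $\la\alpha\ra$ runs along the left formula, which here is the cut formula $X:\la\alpha\ra\varphi(n)$.
\item The only companion-connected trace runs along $X:\la\alpha\ra\la\alpha^\lup\ra\varphi(0)$, and it is processed only by left-hand box steps, which are not progressive.
\end{itemize}
Nor could Necessitation be counted as progress there. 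Take the same cycle with invariant $\true$, weaken away the $\true?$ disjunct, and cut on $X:\la x:=x\ra\true$. This would derive the invalid $X:\true\Rightarrow X:\la(x:=x)^\lup\ra\false$ from the valid $X:\true\Rightarrow X:\la x:=x\ra\true$.

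So your diagnosis is right and the defect is shared with the paper, but the second rule remains unproved. Closing it needs either an arithmetic progress condition built into the notion of cyclic proof, with a new soundness argument, or a primitive convergence rule.
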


The proof of Lemma~\ref{lemma:derived rules related to loops} depends on Lemma~\ref{lemma:inv and con rules} below, which is itself established using Lemma~\ref{lemma:lemma for proving rule (inv)} from Section~\ref{section:Completeness of GiiiPPLcyc}.  We therefore state and prove Lemma~\ref{lemma:inv and con rules} first.

\begin{lemma}
\label{lemma:inv and con rules}
The following rules can be derived from \GiiiFOPLcyc:
$$
\infer[^{(\textit{inv})}]
{X : \phi\Rightarrow X : [\alpha^\lup]\phi}
{X : \phi\Rightarrow X : [\alpha]\phi}
$$

$$
\infer[^{(\textit{con})}]
{X : \phi(n)\Rightarrow X : \la\alpha^\lup\ra\phi(0)}
{X : \phi(n+1)\Rightarrow X : \la\alpha\ra\phi(n)}
$$
\end{lemma}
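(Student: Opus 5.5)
For each rule I will build a cyclic derivation whose non-bud leaves are either closed or copies of the rule's premise. Back-links will go to the root, and the cyclic condition (Definition~\ref{def:Cyclic Proofs of GiiiPPL}) will be met because each cycle passes through a Necessitation step, which the paper treats as progressive (Lemma~\ref{lemma:Necessitation}).

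\emph{Rule (inv).} Starting from $X:\phi\Rightarrow X:[\alpha^\lup]\phi$, I apply $([\lup])$, $([\cho])$ and $(\wedge R)$, which gives two goals. The first is $X:\phi\Rightarrow X:[\true?]\phi$, closed directly by Lemma~\ref{lemma:lemma for proving rule (inv)}. The second is $X:\phi\Rightarrow X:[\alpha\seq\alpha^\lup]\phi$. I reduce it by $([\seq])$ to $X:\phi\Rightarrow X:[\alpha][\alpha^\lup]\phi$ and then cut on $X:[\alpha]\phi$. The left cut premise $X:\phi\Rightarrow X:[\alpha]\phi$ is exactly the premise of (inv). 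The right cut premise is $X:[\alpha]\phi\Rightarrow X:[\alpha][\alpha^\lup]\phi$. Lemma~\ref{lemma:Necessitation} turns it into a derivation with leaves $X:\phi\Rightarrow X:[\alpha^\lup]\phi$, which are buds of the root. The path from the root to each bud contains the progressive trace provided by Lemma~\ref{lemma:Necessitation}, which threads the right-hand formula $X:[\alpha^\lup]\phi$. Every infinite path eventually cycles through these buds, so it is progressive.

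\emph{Rule (con).} Here $n$ is a first-order variable and $\phi(n)$ an arithmetic formula, so I work in \GiiiFOPLcyc. The plan mirrors the PDL/FODL convergence argument, using case analysis on $n$ instead of an induction rule. Starting from $X:\phi(n)\Rightarrow X:\la\alpha^\lup\ra\phi(0)$, I first cut with the arithmetic fact $n=0\vee n>0$; its labelled version is closed by $(\textit{ter})$ after $(\textit{at})$. The $n>0$ case needs extra side conditions (e.g.\ $n\ge 0$, or restricting to natural-number $n$). In practice I will carry the hypothesis $X:n\ge 0$ in the context and treat it as an implicit side condition of the rule, as in FODL.

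In the case $n=0$, I unfold $\la\alpha^\lup\ra$ on the right by $([\lup])$ (via the dual $\neg[\cdot]\neg$) and choose the $\true?$ branch. This leaves $X:\phi(0)\Rightarrow X:\la\true?\ra\phi(0)$, which is the dual form of Lemma~\ref{lemma:lemma for proving rule (inv)} and is obtained by $(\neg L)$ and $(\neg R)$.

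In the case $n>0$, I write $n=m+1$; this uses $(\forall L)$ on an arithmetic instance and substitution, justified by $(\textit{ter})$. I unfold the right side to the $\alpha\seq\alpha^\lup$ branch, giving $X:\la\alpha\ra\la\alpha^\lup\ra\phi(0)$. I then cut with the premise $X:\phi(m+1)\Rightarrow X:\la\alpha\ra\phi(m)$. The remaining goal $X:\la\alpha\ra\phi(m)\Rightarrow X:\la\alpha\ra\la\alpha^\lup\ra\phi(0)$ goes by Necessitation, applied to the $\la\cdot\ra$ form via negation duality, to the leaf $X:\phi(m)\Rightarrow X:\la\alpha^\lup\ra\phi(0)$. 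Renaming $m$ to $n$ by the first-order analogue of $(\Sub)$ makes this a bud of the root.

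I expect the main obstacle to be the cyclic condition for (con). For the diamond, the trace lives on the right as $\la\cdot\ra\phi(0)=\neg[\cdot]\neg\phi(0)$, so it moves to the left box, where $([a]L)$ steps are not progressive. The soundness argument cannot descend on counter-example traces there. The real descent is the arithmetic decrease of $n$, which the cyclic condition of the paper does not track. My first attempt will be to argue that the Necessitation step through $\alpha$ still provides a progressive trace on the left formula $\neg\phi(0)$-side under the paper's convention. If that fails, I will instead strengthen the rule to $X:\exists n\ge 0.\phi(n)$ and reduce (con) to the derived $(\la\lup'\ra)$ style argument. That reduction would use the well-foundedness of $n$ in \Arith\ directly, by induction on the value of $n$ in a fixed counter-model, yielding a finite proof for each instance rather than a cycle.
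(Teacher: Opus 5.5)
Your $(\textit{inv})$ derivation is the paper's, and it is fine. The right-hand formula $X:[\alpha^\lup]\phi$ unfolds to $X:[\alpha][\alpha^\lup]\phi$, which occurs in the conclusion of the cut. So the trace continues into the Necessitation derivation and picks up its progress on the way to the bud.

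Your main construction for $(\textit{con})$ is also exactly the paper's: unfold $\la\alpha^\lup\ra$, split on $n=0$ versus $n\ge 1$ under the tacit side condition $n\ge 0$, close the $\true?$ branch by the dual of Lemma~\ref{lemma:lemma for proving rule (inv)}, cut with the premise, apply Necessitation, and close at a bud. But the obstacle you anticipate is a genuine gap, and your proposal leaves it open. The paper's proof uses the same cycle and merely marks the leaf as a bud.

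By Definition~\ref{def:Derivation Traces}, a trace is constant across $(\textit{cut})$ and $(\textit{wk})$. Two consequences follow:
\begin{itemize}
\item The trace cannot be followed backwards onto the cut formula $X:\la\alpha\ra\phi(n)$. Yet that is where the Necessitation progress lives: after $(\neg L)$ and $(\neg R)$ the sequent is $X:[\alpha]\neg\la\alpha^\lup\ra\phi(0)\Rightarrow X:[\alpha]\neg\phi(n)$, and the progressive trace of Lemma~\ref{lemma:Necessitation} runs through $X:[\alpha]\neg\phi(n)$.
\item The context formula $X:\phi(n+1)$ cannot be followed forwards to the bud, since it must be weakened before Necessitation applies.
\end{itemize}
The only formula that can be followed round the cycle is therefore $X:\la\alpha^\lup\ra\phi(0)$. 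After $(\neg R)$ this is a left-hand box. Along it, only left-hand program rules (including left-hand uses of $([u:=e])$ and $([\phi?]L)$), propositional rules and $(\Sub)$ occur, and none of these is progressive. So your first fallback fails.

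Moreover, no criterion that only inspects which rules a traced formula passes through can accept this cycle soundly. Take the same shape without the case split: root $X:\true\Rightarrow X:\la(u:=u)^\lup\ra\false$, cut with the derivable $X:\true\Rightarrow X:\la u:=u\ra\true$, and Necessitation back to the root. This has the same trace structure and an invalid conclusion. As you say, the real descent is in the value of $n$.

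Your remaining fallbacks do not close the gap either.
\begin{itemize}
\item Reducing $(\textit{con})$ to $(\la\lup'\ra)$ is circular, since the paper obtains $(\la\lup'\ra)$ from $(\textit{con})$ (Lemma~\ref{lemma:derived rules related to loops}).
\item Induction on the value of $n$ in a countermodel shows that $(\textit{con})$ is \emph{sound}, not that it is derivable. Derivability needs a single finite cyclic derivation of the open sequent with $n$ free. A family of finite derivations of the numeral instances $X:\phi(k)\Rightarrow X:\la\alpha^\lup\ra\phi(0)$ is not one.
\end{itemize}

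A genuinely new ingredient is needed. One option is a progress notion that registers a strict decrease of a natural-number measure (here the value of $n\ge 0$) between companion and bud, in the style of cyclic arithmetic, with a corresponding extra component in the descent argument of Lemma~\ref{lemma:infinite descent sequence}. The other is to take the FODL convergence rule, with $n\ge 0$ made explicit, as a primitive locally sound rule of \GiiiFOPLcyc.
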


In Lemma~\ref{lemma:inv and con rules}, rule $(\textit{inv})$ corresponds to the \emph{loop invariance rule} of~\cite{Harel00}, while rule $(\textit{con})$ corresponds to the \emph{convergence rule} of~\cite{Harel00}. 

\begin{proof}[Proof of Lemma~\ref{lemma:inv and con rules}]
The derivations for rules $(\textit{inv})$ and $(\textit{con})$ are given below; both rely on rule $([\lup])$ to unfold the iteration and on the Necessitation Lemma to handle modalities. 

The derivation for rule $(inv)$:
$$
\infer[^{([\lup])}]
{X : \phi \Rightarrow X : [\alpha^*]\phi}
{
    \infer[^{([\cup])}]
    {X : \phi\Rightarrow X : [\true?\cho \alpha\seq\alpha^*]\phi}
    {
        \infer[]
        {X : \phi\Rightarrow X : [\true?]\phi}
        {
            \mbox{Lemma~\ref{lemma:lemma for proving rule (inv)}}
        }
        &
        \infer[^{(cut)}]
        {X : \phi\Rightarrow X : [\alpha][\alpha^*]\phi}
        {
            X : \phi\Rightarrow X : [\alpha]\phi
            &
            \infer[^{(\Nec)}]
            {X : [\alpha]\phi\Rightarrow X : [\alpha][\alpha^*]\phi}
            {
                X : \phi\Rightarrow X : [\alpha^*]\phi\mbox{ (bud)}
            }
        }
    }
}
$$

The derivation for rule $(\textit{con})$:
$$
    \infer[^{([\lup])}]
    {X : \phi(n)\Rightarrow X : \la\alpha^*\ra\phi(0)}
    {
        \infer[^{[\cho]}]
        {X : \phi(n)\Rightarrow X : \la\true?\cup \alpha\seq\alpha^*\ra\phi(0)}
        {
            \infer[]
            {X : \phi(n)\Rightarrow X : \la\alpha\ra\la\alpha^*\ra\phi(0), X : \la\true?\ra\phi(0)}
            {
                \infer[]
                {X : n=0, X : \phi(0)\Rightarrow X : \la\true?\ra\phi(0)}
                {\mbox{Lemma~\ref{lemma:lemma for proving rule (inv)}}}
                &
                \infer[]
                {X : n\ge 1, X : \phi(n)\Rightarrow X : \la\alpha\ra\la\alpha^\lup\ra\phi(0)}
                {\mbox{Cont.}}
            }
        }
    }, 
$$

$$
\infer[^{(cut)}]
{\mbox{Cont. : }X : \phi(n+1)\Rightarrow X : \la\alpha\ra\la\alpha^*\ra\phi(0)}
{
    X : \phi(n+1)\Rightarrow X : \la\alpha\ra\phi(n)
    &
    \infer[^{(\Nec)}]
    {X : \la\alpha\ra\phi(n)\Rightarrow X : \la\alpha\ra\la\alpha^*\ra\phi(0)}
    {
        X : \phi(n)\Rightarrow X : \la\alpha^*\ra\phi(0)\mbox{ (bud)}
    }
}, 
$$
where note that $X : n\ge 1\wedge \phi(n)$ is logically equivalent to $X : \phi(n+1)$ (where we assume $n\ge 0$); from $X : \phi(0)\Rightarrow X : \la\true?\ra\phi(0)$ one can also make use of Lemma~\ref{lemma:lemma for proving rule (inv)} by firstly applying rule $(\neg R)$ and $(\neg L)$ to transform the sequent into the required form. 


\end{proof}

 We now give the derivations of the two rules stated in Lemma~\ref{lemma:derived rules related to loops}. 

\begin{proof}[Proof of Lemma~\ref{lemma:derived rules related to loops}]
    \textbf{Derivation of rule \mbox{$([\lup'])$} in Lemma~\ref{lemma:derived rules related to loops}}. 
    Starting from 
    $$
    (*'.a)\ \ X : \Gamma\Rightarrow X : [\alpha^\lup]\phi, X : \Delta,
    $$
    by applying $(cut)$, we need:
    \begin{enumerate}
    \item[($*'$.a.1)] $X : \Gamma\Rightarrow X : \varphi, X : [\alpha^*]\phi, X : \Delta$, from which by $(wk)$ we obtain
    $$
    (*'.a.11)\ \ X : \Gamma\Rightarrow X : \varphi, X : \Delta
    $$
    \item[($*'$.a.2)] $X : \Gamma,  X : \varphi\Rightarrow X : [\alpha^\lup]\phi, X : \Delta$. 
    \end{enumerate}
    From ($*'$.a.2), by $(cut)$ and $(wk)$, we need:
    \begin{enumerate}
    \item[($*'$.a.21)] $X : \varphi\Rightarrow X : [\alpha^\lup]\varphi$, from which by rule $(inv)$ in Lemma~\ref{lemma:inv and con rules}, we have
    $$
        (*'.a.211)\ \ X : \varphi\Rightarrow X : [\alpha]\varphi.
    $$
    \item[($*'$.a.22)] $X : [\alpha^\lup]\varphi\Rightarrow X : [\alpha^\lup]\phi$, from which by Necessitation Lemma (Lemma~\ref{lemma:Necessitation}), we have
    $$
    (*'.a.221)\ \ X : \varphi\Rightarrow X : \phi
    $$
    \end{enumerate}
    We have obtained the premisses of the rule $([*'])$ as: ($*'$.a.11), ($*'$.a.211), ($*'$.a.221). 

    \textbf{Derivation of rule $(\la\lup'\ra)$ in Lemma~\ref{lemma:derived rules related to loops}}. 
    Starting from 
    $$
    (*'.b)\ \ X : \Gamma\Rightarrow X : \la\alpha^\lup\ra\phi, X : \Delta,
    $$
    By applying $(cut)$, we need:
    \begin{enumerate}
    \item[($*'$.b.1)] $X : \Gamma\Rightarrow X : \exists n. \varphi(n)\wedge n\ge 0, X : \la\alpha^*\ra\phi, X : \Delta$, from which by $(wk)$ we obtain
    $$
    (*'.b.11)\ \ X : \Gamma\Rightarrow X : \exists n. \varphi(n)\wedge n\ge 0, X : \Delta
    $$
    \item[($*'$.b.2)] $X : \Gamma,  X : \exists n. \varphi(n)\wedge n\ge 0\Rightarrow X : \la\alpha^*\ra\phi, X : \Delta$
    \end{enumerate}
    From ($*'$.b.2), by $(\exists L)$, $(\wedge L)$ and $(wk)$, we have
    $$
    (*'.b.21)\ \ X : \varphi(n)\Rightarrow X : \la\alpha^*\ra\phi. 
    $$
    By $(cut)$, we need:
    \begin{enumerate}
    \item[($*'$.b.211)] $X : \varphi(n)\Rightarrow X : \la\alpha^*\ra\varphi(0)$, from which by rule $(\textit{con})$ in Lemma~\ref{lemma:inv and con rules}, we obtain
    $$
    (*'.b.2111)\ \ X : \varphi(n+1)\Rightarrow X : \la\alpha\ra\varphi(n)
    $$
    \item[($*'$.b.212)] $X : \la\alpha^*\ra\varphi(0)\Rightarrow X : \la\alpha^*\ra\phi$, from which, by Necessitation Lemma (Lemma~\ref{lemma:Necessitation}), we need
    $$
    (*'.b.2121)\ \ X : \varphi(0)\Rightarrow X : \phi
    $$
    \end{enumerate}
    We have obtained the premisses of the rule $(\la *'\ra)$ as: ($*'$.b.11), ($*'$.b.2111), ($*'$.b.2121). 
\end{proof}

\begin{proof}[Proof of Lemma~\ref{lemma:completeness lemma 2}]
    We proceed by induction on the structure of the program $\alpha$. 
    We first consider the case when $\op$ is $[\alpha]$, the case when $\op$ is $\la\alpha\ra$ is similar except for the case of star programs. 

    The base case is when $\alpha$ is an atomic program, namely $a$. 
    Without loss of generality, let $X = Y\cat x$. 
    To prove $Y\cat x: \phi^\flat\Rightarrow Y\cat x : [a]\psi^\flat$, 
    by rule $([\alpha] R)$, it is sufficient to prove
    \begin{equation}
    \label{equ:lemma7.2-1}
        Y\cat x: \phi^\flat, x R_a y\Rightarrow Y\cat x\cat y : \psi^\flat
    \end{equation}
    By $\models (\phi^\flat\to [a]\psi^\flat)$, it is not hard to see that (\ref{equ:lemma7.2-1}) is valid. 
    So by $(\textit{ter})$, we close the derivation. 

    The case when $\alpha$ is a sequential program, namely $\alpha_1\seq\alpha_2$. 
    By the soundness of rule $([\seq])$, we obtain $\models X : \phi^\flat\Rightarrow X : [\alpha_1][\alpha_2]\psi^\flat$. 
    By Lemma~\ref{lemma:completeness lemma 1}, there exists a pure arithmetical FOL formula $\varphi^\flat$ such that $\models [\alpha_2]\psi^\flat\leftrightarrow \varphi^\flat$. 
    Hence $\models X : \varphi^\flat\Rightarrow X : [\alpha_2]\psi^\flat$ and $\models X : \phi^\flat\Rightarrow X : [\alpha_1]\varphi^\flat$. 
    By induction hypothesis, we have
    $\vdash X : \varphi^\flat\Rightarrow X : [\alpha_2]\psi^\flat$
    and 
    $\vdash X : \phi^\flat\Rightarrow X : [\alpha_1]\varphi^\flat$. 
    By Necessitation Lemma, since $\vdash X : \varphi^\flat\Rightarrow X : [\alpha_2]\psi^\flat$, 
    $\vdash X : [\alpha_1]\varphi^\flat\Rightarrow X : [\alpha_1][\alpha_2]\psi^\flat$. 
    By it and $\vdash X : \phi^\flat\Rightarrow X : [\alpha_1]\varphi^\flat$, using rule $(\textit{cut})$, we obtain 
    $\vdash X : \phi^\flat\Rightarrow X : [\alpha_1][\alpha_2]\psi^\flat$. 
    The result is direct by rule $([\seq])$. 

    The case when $\alpha$ is a choice program, namely $\alpha_1\cup \alpha_2$. 
    By the soundness of rule $([\cup])$, $\models X : \phi^\flat\Rightarrow X : ([\alpha_1]\psi^\flat\wedge [\alpha_2]\psi^\flat)$, which is equivalent to 
    $\models X : \phi^\flat\Rightarrow X :  [\alpha_1]\psi^\flat$ and $\models X : \phi^\flat\Rightarrow X : [\alpha_2]\psi^\flat$. 
    By induction hypothesis, we have 
    $\vdash X : \phi^\flat\Rightarrow X :  [\alpha_1]\psi^\flat$ and 
    $\vdash X : \phi^\flat\Rightarrow X : [\alpha_2]\psi^\flat$. 
    Therefore $\vdash X : \phi^\flat\Rightarrow X : [\alpha_1\cup \alpha_2]\psi^\flat$ by rule $([\cup])$. 

    The case when $\alpha$ is a star program, namely $\alpha^\lup_1$.
    By Lemma~\ref{lemma:completeness lemma 1}, there exists an arithmetical FOL formula $\varphi^\flat$ such that $\models [\alpha^\lup_1]\psi^\flat\leftrightarrow \varphi^\flat$. 
    Then from $\models X : \phi^\flat\Rightarrow X : [\alpha^\lup_1]\psi^\flat$, we also have $\models X : \phi^\flat \Rightarrow X : \varphi^\flat$. 
    From $\models [\alpha^\lup_1]\psi^\flat\leftrightarrow \varphi^\flat$, by the soundness of the rules $([\lup])$, $([\cup])$, $([\phi?])$ and $([\seq])$, it is not hard to see that 
    $\models \varphi^\flat\leftrightarrow [\alpha^\lup_1]\psi^\flat\leftrightarrow [\true?\cup \alpha_1\seq\alpha^\lup_1]\psi^\flat\leftrightarrow \psi^\flat\wedge [\alpha_1][\alpha^\lup_1]\psi^\flat\leftrightarrow \psi^\flat\wedge [\alpha_1]\varphi^\flat$. 
    From these logical equivalences we see that $\models \varphi^\flat \to [\alpha_1]\varphi^\flat$ and $\models\varphi^\flat\to \psi^\flat$. 
    By induction hypothesis, from $\models X : \phi^\flat\Rightarrow  X : \varphi^\flat$, $\models X : \varphi^\flat\Rightarrow X : [\alpha_1]\varphi^\flat$, 
    and $\models X : \varphi^\flat\Rightarrow X : \psi^\flat$, 
    we have 
    $\vdash X : \phi^\flat\Rightarrow X : \varphi^\flat$, 
    $\vdash X : \varphi^\flat\Rightarrow X : [\alpha_1]\varphi^\flat$ and 
    $\vdash X : \varphi^\flat\Rightarrow X : \psi^\flat$. 
    By rule $([\lup'])$, we have 
    $\vdash X : \phi^\flat\Rightarrow X : [\alpha^\lup_1]\psi^\flat$. 

    Now we consider the case when $\op$ is $\la\alpha\ra$, where the situations for $\alpha$ is an atomic, sequential and choice program are similar to those for the case when $\op$ is $[\alpha]$ by applying the corresponding rules of the rules used above. 
    The only non-trivial case is when $\alpha$ is a star program $\alpha^\lup_1$. 

    By Lemma~\ref{lemma:completeness lemma 1} and the way of expressing regular programs in arithmetical FOL in~\cite{Harel00}, 
    we know that for any $n\ge 0$, $\la\alpha^n_1\ra\psi^\flat$ can be expressed as a formula $\varphi^\flat(n)$, i.e., $\models \la\alpha^n_1\ra\psi^\flat\leftrightarrow \varphi^\flat(n)$. 
    From the semantics of $\la\alpha^\lup_1\ra\psi^\flat$, it is easy to see that $\models \la\alpha^\lup_1\ra\psi^\flat\leftrightarrow \exists n\ge 0.\varphi^\flat(n)$. 
    From $\models X : \phi^\flat\Rightarrow X : \la\alpha^\lup_1\ra\psi^\flat$, we have $\models X : \phi^\flat\Rightarrow X : \exists n\ge 0.\varphi^\flat(n)$. 
    On the other hand, by the soundness of the rule $(\la\seq\ra)$, there is
    $\varphi^\flat(n+1)\leftrightarrow \la\alpha^{n+1}_1\ra\psi^\flat\leftrightarrow \la\alpha_1\seq\alpha^{n}_1\ra\psi^\flat\leftrightarrow \la\alpha_1\ra\la\alpha^n_1\ra\psi^\flat\leftrightarrow \la \alpha_1\ra\varphi^\flat(n)$. 
    From these logical equivalences we get that $\models \varphi^\flat(n+1)\to \la\alpha_1\ra\varphi^\flat(n)$.  
    When $n=0$, from $\models \la\alpha^n_1\ra\psi^\flat\leftrightarrow \varphi^\flat(n)$, by the soundness of rule $(\la\phi?\ra)$, we have $\models \la q^0\ra\psi^\flat\leftrightarrow \la\true?\ra\psi^\flat\leftrightarrow \psi^\flat\leftrightarrow \varphi^\flat(0)$. 
    So $\models \varphi^\flat(0)\to \psi^\flat$. 
    By induction hypothesis, from 
    $\models X : \phi^\flat\Rightarrow X : \exists n\ge 0.\varphi^\flat(n)$, 
    $\models X : \varphi^\flat(n+1)\Rightarrow X : \la\alpha_1\ra\varphi^\flat(n)$ and 
    $\models X : \varphi^\flat(0)\Rightarrow X :\psi^\flat$, 
    we have 
    $\vdash X : \phi^\flat\Rightarrow X : \exists n\ge 0.\varphi^\flat(n)$, 
    $\vdash X : \varphi^\flat(n+1)\Rightarrow X : \la\alpha_1\ra\varphi^\flat(n)$ and 
    $\vdash X : \varphi^\flat(0)\Rightarrow X :\psi^\flat$. 
    By rule $\la\lup'\ra$, we obtain 
    $\vdash X : \phi^\flat\Rightarrow X : \la\alpha^\lup_1\ra\psi^\flat$. 
\end{proof}

\end{document}